\documentclass[12pt]{amsart}
\usepackage{dirtytalk}
\usepackage[utf8]{inputenc}
\usepackage[english]{babel}
\usepackage[margin=2.5cm]{geometry}	
\usepackage{color}
\usepackage{xcolor}
\usepackage{enumerate}
\usepackage{enumitem}
\usepackage{amsmath}
\usepackage{amsthm}
\usepackage{amssymb}
\usepackage{dsfont}
\usepackage{mathtools}
\mathtoolsset{showonlyrefs}
\usepackage{rotating}
\usepackage{float}
\usepackage{comment}
\usepackage{longtable}
\usepackage{bbm}
\usepackage{caption}
\usepackage{multirow}
\usepackage{tabularx}

\newcommand{\sub}[1]{_{\mathrm{#1}}}

\newcommand{\Id}{\mathds{1}}

\newcommand{\N}{\mathbb{N}}
\newcommand{\Z}{\mathbb{Z}}

\newcommand{\R}{\mathbb{R}}
\newcommand{\C}{\mathbb{C}}
\newcommand{\T}{\mathbb{T}}

\newcommand{\K}{\mathcal{K}}
\newcommand{\Hi}{\mathcal{H}}

\newcommand{\U}{\mathcal{U}}

\newcommand{\UZ}{\mathcal{U}\sub{BFZ}}

\newcommand{\W}{\operatorname{I}^{S}}
\newcommand{\PI}{\operatorname{I}^{C}}
\newcommand{\G}{\operatorname{I}^{DIII}}
\newcommand{\I}{{\operatorname{I}^{\textup{Zak}}}}

\newcommand{\inn}[2]{\left\langle #1, #2 \right\rangle}

\numberwithin{equation}{section}

\DeclareMathOperator{\tr}{tr}

\DeclareMathOperator{\Proj}{Proj}
\DeclareMathOperator{\Imm}{Range}
\DeclareMathOperator{\Pf}{Pf}

\DeclareMathOperator{\Ch}{Ch}
\DeclareMathOperator{\FKM}{FKM}
\DeclareMathOperator{\TK}{TK}

\theoremstyle{plain}
  \newtheorem{theorem}{Theorem}[section]
  \newtheorem{lemma}{Lemma}[section]
  \newtheorem{proposition}{Proposition}[section]

\theoremstyle{definition}
  \newtheorem{definition}{Definition}[section]
  
  \newtheorem{principle}{Principle}[section]

\theoremstyle{remark}
  \newtheorem{remark}{Remark}[section]

\title[Absolute and relative invariants for low-$d$ topological insulators]{Absolute and relative invariants for low-dimensional topological insulators}
\author[F.~Manzoni, D.~Monaco, G.~Peluso]{Federico Manzoni, Domenico Monaco, Gabriele Peluso}
\date{\today}

\usepackage{hyperref}
\hypersetup{colorlinks=true, linkcolor=black, citecolor=black, filecolor=black, urlcolor=black}

\begin{document}

\begin{abstract}
Motivated by the theory of topological insulators, we examine several classification schemes for maps on the $d$-dimensional torus, \(d \leq 2\), with values in the space of projections on some ambient Hilbert space. The (pairs of) projections are required to be compatible with time-reversal, particle-hole and chiral symmetries, as prescribed by the ``ten-fold way'' leading to the Altland--Zirnbauer--Cartan (AZC) symmetry classes. The main purpose of the paper is to compare different notions of equivalence for such maps, namely Murray--von Neumann equivalence, unitary equivalence, and homotopy equivalence.

The analysis and comparison is based on the explicit construction of symmetric bases for the ranges of the projections which satisfy a pseudo-periodicity constraint, namely some vectors in the basis are allowed to pick up a phase when looped around the torus, while all the others have the same periodicity of the projections themselves. In dimension~$1$, symmetric, fully periodic bases can be constructed in all AZC classes. In dimension~$2$, topological obstructions appear and are encoded in the topological non-triviality of the phase picked up by the pseudo-periodic vector(s), being labeled by an integer (Chern number) or a \(\mathbb{Z}_{2}\)-valued index (Fu--Kane--Mele invariant). These invariants characterize Murray--von Neumann and unitary equivalence of projection-valued maps, and are interpreted as absolute invariants.

A finer classification may be needed when considering homotopy equivalence. If the ambient Hilbert space is non-minimal, i.e., if the pair of projections together do not span the whole ambient space, homotopy classification reduces to unitary equivalence. By contrast, in ambient spaces of minimal dimension, additional homotopy invariants may arise. These are relative invariants, as their precise value may change under a unitary conjugation: within the same unitary representation, only the relative topological phase between different (pairs of) projection-valued maps is well-defined.

The results provide a unified geometric framework for understanding topological phases of gapped quantum systems in low dimension, clarifying how different equivalence relations result in different classification schemes and, correspondingly, topological invariants.
\end{abstract}

\maketitle

\setcounter{tocdepth}{1}
\tableofcontents

\section{Introduction}

The mathematical description of topological phases of matter is one of the central meeting points between geometry, topology, and quantum theory of matter~\cite{bernevig2013,chiu2016classification}. As briefly reviewed in Appendix~\ref{app:A} (see references therein), in lattice-periodic quantum systems the Bloch--Floquet--Zak representation fibers the Hamiltonian into a family of self-adjoint operators depending continuously on quasi-momentum \(k \in \mathbb{T}^{d}\), where \(\mathbb{T}^{d}\) is the Brillouin torus%
\footnote{It will be often convenient to identify $\T^d$ with the hypercube $[-\pi,\pi]^d \subset \mathbb{R}^{d}$ with periodic boundary conditions, i.e.\ with opposite sides identified.}%
, and acting on a fiber Hilbert space $\mathcal{H}$, accounting for the degrees of freedom in the fundamental cell of the crystal. The Hamiltonian could exhibit further symmetries, like time-reversal, particle-hole or chiral symmetry: their possible combinations lead to the division in ten Altland--Zirnbauer--Cartan (AZC) symmetry classes~\cite{tab,heinzner2005symmetry,Kitaev_2009,ryu2010topological} (cf.\ Table~\ref{tabular:AZC_classes} below). The presence of spectral gaps allows to select the relevant eigenprojections of the fiber Hamiltonians, leading to a continuous%
\footnote{Continuity of projection-valued maps is meant in the operator-norm topology of $\mathcal{B}(\mathcal{H})$, and will always be assumed, even if not stated explicitly.} %
projection-valued map $P \colon \mathbb{T}^{d} \to \mathcal{B}(\mathcal{H})$, or possibly, in presence of a chiral or particle-hole symmetry, to a pair of continuous projection-valued maps $P^{\pm} \colon \mathbb{T}^{d} \to \mathcal{B}(\mathcal{H})$.

The topology of these maps encodes stable global information which cannot be detected by local perturbative arguments~\cite{chiu2016classification, Kohmoto1985, thouless1982quantized}. These robust properties are mathematically described by homotopy invariants: indeed, as long as the spectral gap remains open, continuous deformations of the Hamiltonian induce homotopies of the corresponding projection-valued maps, and topological invariants remain unchanged. A change of such topological labels is therefore interpreted physically as the signature of a transition between distinct topological phases necessarily accompanied by the closing of the gap, as in truncated systems~\cite{prodan2016bulk} or junctions~\cite{gontier2022symmetric, gontier2025topological}.

The purpose of this work is to develop a unified and explicit framework for the classification of projection-valued maps on tori of dimension \(d \leq 2\) in all AZC classes. It turns out that classification under homotopy equivalence is deeply intertwined with unitary equivalence (which, in the low-dimensional cases we consider, is in turn the same as the {\it a priori} weaker notion of Murray--von Neumann equivalence; see below for a discussion on this point). The relation between these notions of equivalence, which in our viewpoint is often disregarded or at least left implicit in the literature, rests on a subtle interplay between the rank of the projection-valued maps and the dimension of the ambient Hilbert space. In a nutshell, the presence of ``extra'' dimensions in the Hilbert space can be exploited to continuously deform pairs of projections which would be homotopically distinct if their ranges spanned the whole ambient. As will be explained below, this helps to clarify the distinction between \emph{absolute} and \emph{relative} topological phases: absolute invariants do not change under homotopies \emph{and} unitary equivalences, while relative invariants are constant under homotopic deformations but may change under a unitary equivalence. It is worth noting that the addition of extra topologically trivial bands is exactly what leads to the (reduced) $K$-theoretic classification under so-called stable equivalence of projections, proposed in the pioneering work by Kitaev~\cite{Kitaev_2009} and leading to the celebrated ``periodic table'' of topological phases of matter. As is well known, homotopy leads to a finer, and thus more complete, classification scheme, in line with the physical lore that ``topological properties of the underlying system are stable under small (continuous) deformations''.

\subsection{Symmetric projection-valued maps} \label{sec:Symm_proj}

Before discussing in detail our results, we formulate the mathematical setting more precisely. The objects under investigation will always be projection-valued maps $P:\T^d \to \Proj_n(\Hi)$ or (ordered) pairs of projection-valued maps $P^\pm \equiv (P^{-}, P^{+}) :\T^d\to \Proj_n(\Hi)$, where $d \le 2$ and where 
\[ \Proj_n(\Hi) := \left\{ P \in \mathcal{B}(\mathcal{H}) : P^2=P^*=P, \; \operatorname{rank}(P) := \tr_{\mathcal{H}}(P)=n \right\} \]
denotes the set of rank-$n$ projections on the Hilbert space $\mathcal{H}$, to which we will refer as the \emph{ambient Hilbert space}. The rank could be finite or infinite, provided of course $\mathcal{H}$ has itself infinite dimension. In the case of pairs of projection-valued maps, it is always assumed that they span orthogonal subspaces:
\begin{equation} \label{standing_orthogonality}
P^{-}(k) \, P^{+}(k) = 0 \quad \forall \: k \in \mathbb{T}^{d}.
\end{equation}

For future reference, we set the following
\begin{definition}[(Non-)minimal ambient Hilbert space] \label{def:minimal}
Let $P^{\pm} \colon \mathbb{T}^{d} \to \Proj_n(\mathcal{H})$ be a pair of projection-valued maps. If together they span the whole ambient Hilbert space, that is, if $P^{-}(k) + P^{+}(k) = \Id_{\mathcal{H}}$ for all $k \in \mathbb{T}^{d}$, we say that the ambient Hilbert space $\mathcal{H}$ is \emph{minimal}; otherwise, $\mathcal{H}$ is said to be \emph{non-minimal}, and $P^{-}(k) + P^{+}(k)$ is an orthogonal projection, due to~\eqref{standing_orthogonality}, onto a proper ($k$-dependent) subspace of $\mathcal{H}$. 

For pairs of finite-rank projection-valued maps, i.e.\ if $n < \infty$, we will also say that the ambient Hilbert space has \emph{minimal dimension} if $\dim(\mathcal{H}) = 2n$, and that it has \emph{non-minimal dimension} if $\dim(\mathcal{H}) > 2n$.
\end{definition}

The projection-valued maps under consideration will possibly satisfy further symmetries, as detailed in the following Definition; their physical origin and relevance is discussed in Appendix~\ref{app:A}.
\begin{definition}[Symmetries] \label{def:Symmetries}
A \emph{antiunitary symmetry operator} (or simply \emph{antiunitary symmetry}) is an antiunitary operator on $\mathcal{H}$ which squares to $\pm \Id_{\mathcal{H}}$; if it squares to $+\Id_{\mathcal{H}}$, the symmetry is called \emph{even}, while, if it squares to $-\Id_{\mathcal{H}}$, it is called \emph{odd}. A \emph{unitary symmetry operator} (or simply \emph{unitary symmetry}) is a unitary operator on $\mathcal{H}$ which squares to $+\Id_{\mathcal{H}}$.

\begin{itemize}[leftmargin=*]
    \item A projection-valued map $P:\T^d \to \Proj_n(\Hi)$ is said to be \emph{time-reversal symmetric} if the ambient Hilbert space is endowed with an antiunitary symmetry $T$ and
    \[ T P(k) = P(-k) T \quad \forall \: k \in \mathbb{T}^{d}. \]
    In presence of the same symmetry operator, a pair of projection-valued maps $P^\pm:\T^d\to \Proj_n(\Hi)$ is said to be \emph{time-reversal symmetric} if both $P^{+}$ and $P^{-}$ are separately time-reversal symmetric.
    \item A projection-valued map $P:\T^d\to \Proj_n(\Hi)$ is said to be \emph{particle-hole covariant} if the ambient Hilbert space is endowed with an antiunitary symmetry $C$ and
    \[ C P(k) = P(-k) C \quad \forall \: k \in \mathbb{T}^{d}. \]
    In presence of the same symmetry operator, a pair of projection-valued maps $P^\pm:\T^d\to \Proj_n(\Hi)$ is said to be \emph{particle-hole symmetric} if
    \[ C P^{+}(k) = P^{-}(-k) C \quad \forall \: k \in \mathbb{T}^{d}. \]
    \item A projection-valued map $P:\T^d\to \Proj_n(\Hi)$ is said to be \emph{chiral covariant} if the ambient Hilbert space is endowed with a unitary symmetry $S$ and
    \[ S P(k) = P(k) S \quad \forall \: k \in \mathbb{T}^{d}. \]
    In presence of the same symmetry operator, a pair of projection-valued maps $P^\pm:\T^d\to \Proj_n(\Hi)$ is said to be \emph{chiral symmetric} if
    \[ S P^{+}(k) = P^{-}(k) S \quad \forall \: k \in \mathbb{T}^{d}. \]
\end{itemize}
\end{definition}

\begin{remark}[The involution on the torus and its fixed points] \label{rmk:Fixed_points}
The presence of antiunitary symmetries leads to consider the involution $k \mapsto -k$ on the torus $\mathbb{T}^{d}$. For later reference, we mention here that we will adopt the notation $k_{\star}$ to denote any of the fixed points under this involution: these are the $2^{d}$ points which, under the identification $\mathbb{T}^{d} \simeq [-\pi, \pi]^{d} / (\pi \sim -\pi)$, have coordinates equal to either $0$ or $\pm \pi$. 
\end{remark}

\begin{remark}[Covariance vs symmetry] \label{rmk:Covariance}
It is worth noting that the notions of particle-hole covariance and time-reversal symmetry for a projection-valued map are formally the same. Moreover, if $P \colon \mathbb{T}^{d} \to \Proj_n(\mathcal{H})$ is (particle-hole or chiral) covariant, then so is $P^\perp := \Id_{\mathcal{H}} - P \colon \mathbb{T}^{d} \to \Proj_{\dim(\mathcal{H})-n}(\mathcal{H})$. Notice how the presence of particle-hole or chiral symmetry forces the two elements of a symmetric pair of projection-valued maps to have the same rank.
\end{remark}

If more than one symmetry is present, a certain compatibility condition needs to be imposed, in the form of a commutation or anticommutation relation among the corresponding symmetry operators. First of all, as it is commonly done, we exclude the case in which the (pair of) projection-valued map under scrutiny is symmetric under more than one symmetry of the same type (time-reversal, particle-hole, or chiral). Then, as soon as two different (commuting or anticommuting) symmetries are present, their product is close to a symmetry of the third type. Let us focus on the case in which a time-reversal symmetry operator $T$ and a particle-hole symmetry operator $C$ are both present. It is easily seen that $S:= T C$ has the correct commutation relations with the projection-valued map(s) so that it makes for a good candidate as a chiral symmetry operator. We need only check whether it squares to the identity operator.

Denoting the sign of the squares of the antiunitary symmetries as $T^2=:\epsilon_T \Id$ and $ C^2=:\epsilon_C\Id$, we have the following identity:
\[
    (TC)^2=TCTC=
    \begin{cases}
        T^2C^2=\epsilon_T\epsilon_C\Id & \text{if } \quad [T,C]=0, \\ 
        -T^2C^2=-\epsilon_T\epsilon_C \Id &\text{if} \quad \{T,C\}=0.
    \end{cases}
\]
Let us set $\epsilon := \epsilon_T\epsilon_C \in \{\pm1\}$. 
If $(TC)^2=\Id$, that is, if $[T,C]=0$ and $\epsilon=1$ or if $\{T,C\}=0$ and $\epsilon = -1$, then we conclude that we can consider $S=TC$ as a {\it bona fide} chiral symmetry. 
Instead, if $(TC)^2=-\Id$, we set $\Tilde{T}:=iT$. Notice that $\Tilde{T}$ is still a time-reversal symmetry for the (pair of) projection-valued maps, has $\epsilon_{\Tilde{T}} = \epsilon_{T}$, but is such that $\Tilde{T}C \pm C\Tilde{T} = TC\mp CT$. Therefore, upon replacing $T$ with $\Tilde{T}$, we can consider $S = TC$ as a chiral symmetry under the same constraint as before, namely
\[
    \epsilon = \epsilon_T \epsilon_C = \begin{cases}
        1 & \mbox{if }\quad [T,C]=0; \\
        -1 & \mbox{if } \quad \{T,C\}=0.
    \end{cases}
\]
The sign $\epsilon$ also controls if $S$ commutes or anticommutes with $T$ and $C$, since one easily checks the following relations:
\begin{align*}
\epsilon=1 & \iff  [T,C] = [T,S]=[S,C]=0; \\ 
\epsilon=-1 & \iff \{T,C\}=\{T,S\}=\{S,C\}=0. 
\end{align*}

It is also possible to prove that one can reach the same conclusions starting with independent symmetries $S$ and $T$ as well as with $S$ and $C$. These arguments imply that there are ten AZC symmetry classes with at most two independent symmetries, as summarized in the following Table~\ref{tabular:AZC_classes}.

\begin{table}[!ht]
\centering
\caption{AZC classification into symmetry classes. Each entry in the columns labeled $T$, $C$ or $S$ corresponds to the sign of the square of the symmetry, and is otherwise $0$ if the symmetry is broken; the last column explictly states, in multi-symmetric classes, whether the three symmetry operators $T$, $C$ and $S = TC$ commute or anticommute with each other.} 
\label{tabular:AZC_classes}
\begin{tabular}{||c||c|c|c||c||}
     \hline
     Symmetry class & $T$ & $C$ & $S$ & Commutation/Anticommutation Relation  \\
     \hline
     A    &  $0$ &  $0$ & $0$ & \textit{irrelevant} \\
     AIII &  $0$ &  $0$ & $1$ & \textit{irrelevant} \\
     AI   &  $1$ &  $0$ & $0$ & \textit{irrelevant} \\
     BDI  &  $1$ &  $1$ & $1$ & Commutation \\
     D    &  $0$ &  $1$ & $0$ & \textit{irrelevant} \\
     DIII & $-1$ &  $1$ & $1$ & Anticommutation \\
     AII  & $-1$ &  $0$ & $0$ & \textit{irrelevant} \\
     CII  & $-1$ & $-1$ & $1$ & Commutation \\
     C    &  $0$ & $-1$ & $0$ & \textit{irrelevant} \\
     CI   &  $1$ & $-1$ & $1$ & Anticommutation\\
     \hline
\end{tabular}
\end{table}

Notice that other (equivalent) conventions are available and adopted in the literature \cite{Bernard_2012,Shiozaki_2014, haake2018quantum}
for example, one can always impose that all symmetries commute with one another provided one relaxes the constraint on the sign of the square of the unitary symmetry, or chooses a different combination of the antiunitary symmetries as the unitary one rather than forcing $S=TC$. With our choice of conventions, it is also possible to choose specific normal forms for the symmetry operators, which will be discussed separately for each symmetry class when needed.

\subsection{Main results} \label{sec:Questions}

We are now able to state the questions addressed in this paper, and present our results at least at an informal level.

\subsubsection{Question 1: Bases and Murray--von Neumann equivalence} \label{pro:frames}

We first address the existence of orthonormal bases spanning the range of the projection-valued map(s) under consideration made out of continuous $\mathcal{H}$-valued functions: for pairs of projection-valued maps $P^{\pm} \colon \mathbb{T}^{d} \to \Proj_n(\mathcal{H})$, we consider a collection of $2n$ such $\mathcal{H}$-valued functions, where we require the first $n$ vectors to span the range of $P^-$ and the second $n$ vectors to span instead the range of $P^+$. Such a basis is called \emph{symmetric} if the collection of vectors further satisfies natural compatibility conditions with the symmetries in the AZC class to which the projection-valued map(s) belong, specified as follows (cf.~\cite[Definition 3.2]{manzonimonacopeluso2026zak}).

\begin{definition}[Symmetric basis] \label{def:Symmetric_basis}
Let $P \colon \mathbb{T}^{d} \to \Proj_n(\mathcal{H})$ be a time-reversal symmetric projection-valued map, under a time-reversal symmetry $T$. A basis $\{v_1(k), \ldots, v_n(k)\}$ for the range of $P(k)$ is called (\emph{time-reversal}) \emph{symmetric} if one of the following conditions hold:
\begin{itemize}
    \item if $T^2 = +\Id_{\mathcal{H}}$, then
    \[ T v_j(k) = v_j(-k) \quad \forall\: j \in \{1, \ldots, n\}; \]
    \item if instead $T^2 = - \Id_{\mathcal{H}}$, then there exists a unitary and anti-symmetric matrix $\varepsilon = [\varepsilon_{i,j}]_{1 \le i,j \le n}$ such that
    \[ T v_j(k) = \sum_{i=1}^{n} \varepsilon_{j,i} v_i(-k) \quad \forall\: j \in \{1, \ldots, n\}. \]
\end{itemize}

Let $P^{\pm} \colon \mathbb{T}^{d} \to \Proj_n(\mathcal{H})$ be a particle-hole symmetric pair of projection-valued maps, under a particle-hole symmetry $C$. A basis $\{v_1(k), \ldots, v_n(k), v_{n+1}(k), \ldots, v_{2n}(k)\}$ for the range of $P^{-}(k) + P^{+}(k)$ is called (\emph{particle-hole}) \emph{symmetric} if
\[ C v_j(k) = v_{2n-j+1}(-k) \quad \forall\: j \in \{1, \ldots, n\}. \]

Let $P^{\pm} \colon \mathbb{T}^{d} \to \Proj_n(\mathcal{H})$ be a chiral symmetric pair of projection-valued maps, under a chiral symmetry $S$. A basis $\{v_1(k), \ldots, v_n(k), v_{n+1}(k), \ldots, v_{2n}(k)\}$ for the range of $P^{-}(k) + P^{+}(k)$ is called (\emph{chiral}) \emph{symmetric} if
\[ S v_j(k) = v_{2n-j+1}(k) \quad \forall\: j \in \{1, \ldots, n\}. \]
\end{definition}

The question of whether one can find symmetric bases in the sense above has been extensively studied in the literature~\cite{kohn1959analytic, des1964analytical,  panati2007triviality, marzari2012maximally, cornean2016construction, Cornean_2017, fiorenza2016construction, fiorenza2016z, graf2013bulk, monaco2023topology, marrazzo2024wannier, monaco2023z2, manzonimonacopeluso2026zak}, in view both of their geometric interpretation and of the practical physical consequences that the use of such a basis yields (compare also Appendix~\ref{app:A}). For the time being, it suffices to stress that, if one also wants to enforce the same periodicity of the projection-valued maps on a symmetric basis (i.e., if one wants the $\mathcal{H}$-valued functions constituting the basis to be defined on the torus $\mathbb{T}^{d}$), then one runs in general in a \emph{topological obstruction}. At most, one can rather impose a \emph{pseudo-periodicity constraint}, where the restriction of the basis on one face of the cube and the corresponding restriction on the opposite face must be matched by an appropriate unitary-matrix-valued function, the so-called \emph{matching matrix}. More precisely, while in dimension~$1$ periodic symmetric bases can be constructed in all AZC symmetry classes~\cite[Theorem~3.2]{manzonimonacopeluso2026zak}, genuine topological obstructions appear in dimension~$2$ and are encoded in the homotopy class of the matching matrices: for example, in class A (no simmetries), labeling this homotopy obstruction via a topological index gives rise to the Chern number~\cite{thouless1982quantized, avron1983homotopy, simon1983holonomy, Kohmoto1985,  vonKlitzing1986quantized, BrouderPanatiCalandraMourouganeMarzari2007, panati2007triviality}, while in class AII (odd time-reversal symmetry) it yields the Fu--Kane--Mele $\mathbb{Z}_2$-invariant~\cite{kane2005z,FuKane2006}. In particular, by choosing an appropriate representative in the homotopy class of the matching matrices, it is possible to ``squeeze'' all the topological information (and thus, all the lack of periodicity) in just a few vectors of the basis. Henceforth, by a \emph{pseudo-periodic basis} for a $2$-dimensional projection-valued map we mean one where all vectors are periodic in $k$ except for the some, which are periodic only in the variable $k_2$ and pick up a ($k_2$-dependent) phase%
\footnote{This fails to hold in higher dimensions and more complex, non-Abelian pseudo-periodic conditions could be needed to encode the topological constraints, cf.~\cite{monaco2023topology}.} %
along the loop in the torus in the $k_1$-direction:
\begin{equation} \label{pseudoper} 
v(k_1,\pi) = v(k_1,-\pi), \quad v(\pi,k_2) = e^{i \theta(k_2)} v(-\pi,k_2) \quad \forall \: k = (k_1,k_2) \in \mathbb{T}^{2}.
\end{equation}
We'll see that in all symmetry classes the number of such non-periodic vectors is at most one or two, with the latter case corresponding to one symmetry-related pair. For example, for a pair of $2$-dimensional projection-valued maps, the pseudo-periodic constraint is required to hold on the first vector in the basis for $P^{-}$ and on the last vector in the basis for $P^{+}$, while all other vectors are fully periodic over the torus.

The first main outcome of the paper is the construction of symmetric pseudo-periodic bases in all symmetry classes in dimension \(d = 2\). In particular we show that, in chiral and/or particle-hole symmetric classes, topological obstructions to enforce full periodicity can be quantified by analogous invariants, obtained by focusing on one of the two members of the pair, say \(P^{-}\), or by some further reduction by symmetry. All the resulting $\mathbb{Z}$-valued invariants are thus in the form of Chern numbers, and all $\mathbb{Z}_2$-valued invariants are in the form of Fu--Kane--Mele invariants, see Table~\ref{tabular:AZC_classes_MvN}. All these invariants are meaningful only when the rank of the involved projections is finite, as infinite-rank projections do not retain any topological information~\cite{kuiper1965homotopy}.

\begin{table}[!ht]
\centering
\caption{Invariants characterizing Murray--von Neumann equivalence classes of symmetric (pairs of) projection-valued maps (as specified by the second-to-last column) $P^{(\pm)} \colon \mathbb{T}^{2} \to \Proj_n(\mathcal{H})$, where $n = \operatorname{rank}(P^{(-)}) < \infty$. In each class, the invariants are complete, i.e., two (pairs of) projection-valued maps are Murray--von Neumann equivalent if and only if their invariants, as listed in the last column, agree. Here, $\Ch(\cdot) \in \Z$ is the Chern number (Definition~\ref{def:Chern_number}) and $\FKM(\cdot) \in \Z_2$ the Fu--Kane--Mele invariant (Definition~\ref{def:FMK}) of a projection-valued map; these invariants lose meaning if the projections have infinite rank. Corresponding results for covariant (rather than symmetric) pairs of projection-valued maps can be inferred by this Table and the content of Section~\ref{sec:Covariant_cases}.} 
\label{tabular:AZC_classes_MvN}
\renewcommand{\arraystretch}{1.15}
\begin{tabularx}{\textwidth}{||>{\centering}m{4.5em}||>{\centering}m{1.5em}|>{\centering}m{1.5em}|>{\centering}m{1.5em}||>{\centering\arraybackslash}m{9em}|>{\centering\arraybackslash}X||}
     \hline
     AZC class & $T$ & $C$ & $S$ & Symmetric (pairs of) projection(s) & Murray--von Neumann-equivalence invariants, $d=2$  \\
     \hline
     A    &  $0$ &  $0$ & $0$ & $P$       & \(\operatorname{rank}(P)\in\mathbb N,\quad \operatorname{Ch}(P)\in\mathbb Z\) \\
     AIII &  $0$ &  $0$ & $1$ & $P^{\pm}$ & \(\operatorname{rank}(P^-)\in\mathbb N,\quad \operatorname{Ch}(P^-)\in\mathbb Z\) \\
     AI   &  $1$ &  $0$ & $0$ & $P$       & \(\operatorname{rank}(P)\in\mathbb N\) \\
     BDI  &  $1$ &  $1$ & $1$ & $P^{\pm}$ & \(\operatorname{rank}(P^-)\in\mathbb N\) \\
     D    &  $0$ &  $1$ & $0$ & $P^{\pm}$ & \(\operatorname{rank}(P^-)\in\mathbb N,\quad \operatorname{Ch}(P^-)\in\mathbb Z\) \\
     DIII & $-1$ &  $1$ &  $1$ & $P^{\pm}$ &  \(\operatorname{rank}(P^-)\in 2\mathbb N,\quad \FKM(P^-)\in\mathbb Z_2\) \\
     AII  & $-1$ &  $0$ & $0$ & $P$       & \(\operatorname{rank}(P)\in 2\mathbb N,\quad \FKM(P)\in\mathbb Z_2\) \\
     CII  & $-1$ & $-1$ & $1$ & $P^{\pm}$ & \(\operatorname{rank}(P^-)\in 2\mathbb N,\quad \FKM(P^-)\in\mathbb Z_2\) \\
     C    &  $0$ & $-1$ & $0$ & $P^{\pm}$ & \(\operatorname{rank}(P^-)\in\mathbb N,\quad \operatorname{Ch}(P^-)\in\mathbb Z\) \\
     CI   &  $1$ & $-1$ & $1$ & $P^{\pm}$ &  \(\operatorname{rank}(P^-)\in\mathbb N\) \\
     \hline
\end{tabularx}
\end{table}

The analysis of symmetric pseudo-periodic bases can be rephrased in terms of Murray--von Neumann equivalence of projections. Recall~\cite[Definition~2.2.1]{rordam2000introduction} that two projection-valued maps $P_{0}, P_{1} \colon \mathbb{T}^{d} \to \Proj_n(\mathcal{H})$ are \emph{Murray--von Neumann equivalent} if there exists a partial-isometry-valued map $V \colon \mathbb{T}^{d} \to \mathcal{B}(\mathcal{H})$ such that
\[ P_{0}(k) = V(k)^* \, V(k) \quad \text{and} \quad P_{1}(k) = V(k) \, V(k)^* \quad \forall \: k \in \mathbb{T}^{d}. \]
Observe that, by cyclicity of the trace, two Murray--von Neumann equivalent projection-valued maps necessarily have the same rank. Notice also how the above relation implies
\[ V(k) \, P_{0}(k) = V(k) \, V(k)^* \, V(k) = P_{1}(k) \, V(k) \]
so that the partial isometry $V$ intertwines the ranges of $P_{0}$ and $P_{1}$. An analogous definition can be given for pairs of projection-valued maps $P_0^{\pm}, P_1^{\pm} \colon \mathbb{T}^{d} \to \Proj_n(\mathcal{H})$: the two elements of the two pairs are required to be separately intertwined by partial-isometry-valued maps $V^{\pm} \colon \mathbb{T}^{d} \to \mathcal{B}(\mathcal{H})$, so that in particular $V^{-} + V^{+}$ is a partial isometry between the ranges of $P_0^{-} + P_0^{+}$ and $P_1^{-} + P_1^{+}$. Moreover, one can easily formulate natural symmetry conditions on these partial isometries that make the notion of Murray--von Neumann equivalence compatible with the AZC symmetry classification: explicitly, these read
\begin{equation} \label{MvNsymmetry}
    T\,V(k)=V(-k)\,T, \quad C\,V(k)=V(-k)\,C, \quad S\,V(k)=V(k)\,S
\end{equation}
depending on which of the symmetry operators (if any) are present. Notice how the first two constraints on $V$, those of commutation with an antiunitary operator up to the involution $k \mapsto -k$, are mathematically equivalent (compare Remark~\ref{rmk:Covariance}).

It turns out that periodic partial-isometry-valued intertwiners between two (pairs of) pro\-jec\-tion-valued maps can be constructed by mapping a symmetric pseudo-periodic basis for one projection to a corresponding basis for the other, provided the two bases satisfy the same pseudo-periodic constraint: these two conditions will be shown to be equivalent in Principle~\ref{pri:Periodicity_from_pseudo-periodicity} below. The two classification schemes (admitting symmetric bases with prescribed pseudo-periodicity constraints, and Murray--von Neumann equivalence) thus lead to the same topological phases, which are labeled by what we will call \emph{absolute invariants}. These invariants measure the intrinsic topological content carried by the projection-valued maps themselves. 

The construction of symmetric pseudo-periodic bases, and therefore of Murray--von Neumann equivalences, will be carried out in Section~\ref{sec:Bases}.

\subsubsection{Question 2: Unitary equivalence}\label{pro:unitary_equivalences}

Two (pairs of) projection valued maps $P_{0}, P_{1} \colon \mathbb{T}^{d} \to \Proj_n(\mathcal{H})$ (respectively $P_0^{\pm}, P_{1}^{\pm} \colon \mathbb{T}^{d} \to \Proj_n(\mathcal{H})$) are said to be \emph{unitarily equivalent} if there exists a continuous unitary-valued map $V:\T^d \to \U(\Hi)$ such that
\[
   V(k)P_0(k)=P_1(k)V(k) \quad \text{(respectively } V(k)P_0^\pm(k)=P_1^\pm(k)V(k) \text{)}.
\]
Again, the unitary-valued map is assumed to satisfy the appropriate symmetry constraints dictated by the specific AZC symmetry class, as in~\eqref{MvNsymmetry}. The difference with Murray--von Neumann equivalence is that the intertwiner is assumed to act isometrically on the whole ambient space, and not just on the ranges of the relevant projections. As detailed in Appendix~\ref{app:A}, unitary equivalence is related to a change of representation for the physical system under scrutiny. 

On the one hand, it is clear that two (pairs of) projection-valued maps are unitarily equivalent if and only if they are Murray--von Neumann equivalent and the projections on their orthogonal complements are Murray--von Neumann equivalent as well (compare~\cite[Proposition~2.2.2]{rordam2000introduction}). The question of unitary equivalence is thus solved by looking at the absolute invariants from Table~\ref{tabular:AZC_classes_MvN}, provided we identify the relevant topological information contained in these orthogonal complements: this analysis will be carried out in Section~\ref{sec:unitary_equivalences}. It turns out that the possible additional invariants coming from the orthogonal complements either vanish trivially in case the latter are infinite-dimensional, or they are actually completely determined from the ones for the projection-valued maps themselves. Therefore, the same topological data which classify Murray--von Neumann equivalence presented in Table~\ref{tabular:AZC_classes_MvN} also control unitary equivalence: see Theorem~\ref{thm:unitary_iff_MvN}. This is one of the conceptual simplifications of the paper: although the unitary equivalence condition is {\it a priori} stronger, no genuinely new independent invariants appear at this stage.

On the other hand, the unitary-valued map used to conjugate two equivalent (pairs of) projection-valued maps could carry itself some topological information: this will have some consequences, discussed in Section~\ref{section:Dimerization_ambiguity}, on the labeling of homotopy classes of projection-valued maps (see Question 3 below), and leading to the fact that some homotopy invariants only label \emph{relative topological phases}.

\subsubsection{Question 3: Homotopy equivalence} \label{pro:homotopies}

As discussed above, our main aim is a complete homotopy classification of (pairs of) projection-valued maps in all AZC symmetry classes. Recall that $P_0, P_1 \colon \mathbb{T}^{d} \to \Proj_n(\mathcal{H})$ (respectively $P_{0}^{\pm}, P_{1}^{\pm} \colon \mathbb{T}^{d} \to \Proj_n(\mathcal{H})$) are said to be \emph{homotopically equivalent} if there exists continuous maps $P \colon [0,1] \times \mathbb{T}^{d} \to \Proj_n(\mathcal{H})$ (respectively $P^{\pm} \colon [0,1] \times \mathbb{T}^{d} \to \Proj_n(\mathcal{H})$) such that $P^{(\pm)}(0, \cdot) \equiv P^{(\pm)}_0$, $P^{(\pm)}(1, \cdot) \equiv P^{(\pm)}_1$, and $P^{(\pm)}_t \colon \mathbb{T}^{d} \to \Proj_n(\mathcal{H})$ lies in the same AZC class for all $t \in [0,1]$.

At a very general level, Principle~\ref{pri:unitary_eq_to_homotopy} states that being homotopically equivalent is the same as being unitarily equivalent through a unitary-valued map which can be contracted to one commuting with the first of the two (pairs) of projection-valued maps (compare~\cite[Proposition~2.2.6]{rordam2000introduction}). On the other hand, as was remarked above, the unitary equivalence itself could be topologically non-trivial and thus not connected to a map of this type, leading to a distinction between the classifications by unitary equivalence and homotopy equivalence.

As our main contribution with respect to this question, we highlight a subtle point regarding the dimensionality of the ambient Hilbert space. As anticipated above, when one considers a pair of projection-valued maps whose ranges do not exhaust the whole ambient space, then unitary equivalence and homotopy equivalence coincide in \(d \leq 2\). The reason is that the part of the Hilbert space not spanned by the projections provides enough ``room'' to deform one range into the other. In this case, the classification is completely governed by the same absolute invariants already detected by symmetric pseudo-periodic bases and Murray--von Neumann equivalence.

The situation changes in ambient spaces of minimal dimension (see Definition~\ref{def:minimal}). There, one can define finer homotopy invariants for the pair, depending on the symmetry class: a complete analysis is provided in Section~\ref{sec:Homotopy}, and the results are summarized in Tables~\ref{tabular:AZC_classes_Homotopy_minimal} and~\ref{tabular:AZC_classes_Homotopy_minimal2}. Once again, however, the precise value of a topological label attached to a certain homotopy class can change under unitary equivalence, as prescribed by the statements in Section~\ref{section:Dimerization_ambiguity}. One can only then \emph{compare} different topological phases, and identify univocally the \emph{relative} content of topological information.

\begin{table}[!ht]
\centering
\caption{Invariants characterizing homotopy equivalence classes of symmetric (pairs of) projection-valued maps $P^{(\pm)} \colon \mathbb{T}^{d} \to \Proj_n(\mathcal{H})$, where $d \le 2$, $n = \operatorname{rank}(P^{(-)}) < \infty$, and the ambient Hilbert space $\mathcal{H}$ is minimal, i.e.\ $\dim(\mathcal{H})=2n$. In each class, the invariants are complete, i.e., two pairs of projection-valued maps are homotopically equivalent if and only if their invariants, as listed in the last column, agree; the listed set of invariants may be overcomplete, with all possible redundancies discussed in Section~\ref{sec:Properties_of_the_invariants} (compare Table~\ref{tabular:AZC_classes_Homotopy_minimal2}). Here, $\W(\cdot)$ is the chiral invariant (Definition~\ref{def:W_invariant}), $\PI(\cdot)$ is the even-particle-hole invariant (Definition~\ref{def:P_invariant}), and $\G(\cdot)$ is the DIII invariant (Definition~\ref{def:DIII_invariant}).} 
\label{tabular:AZC_classes_Homotopy_minimal}
\renewcommand{\arraystretch}{1.15}
\begin{tabularx}{\textwidth}{||>{\centering}m{4.5em}||>{\centering\arraybackslash}m{6em}||>{\centering\arraybackslash}m{10em}||>{\centering\arraybackslash}X||}
     \hline
     AZC class & $d=0$ & $d=1$ & $d=2$  \\
     \hline
     A    & $n\in\mathbb N$ & $n\in\mathbb N$ &  $n\in\mathbb N$, $\operatorname{Ch}(P)\in\mathbb Z$ \\
     \vspace{.5em}AIII & $n \in \mathbb{N}$ & $n\in\mathbb N$, $\W(P^{\pm})\in \mathbb Z$ & $n \in \mathbb{N}$, $\W(P^{\pm})\in \mathbb{Z}^{2}$ \\
     AI   & $n\in\mathbb N$ & $n\in\mathbb N$ & $n\in\mathbb N$ \\
     BDI  & $n\in\mathbb N$, $\PI(P^{\pm}) \in \Z_2$ & $n\in\mathbb N$, $\W(P^{\pm}) \in 2\Z$, $\PI(P^{\pm}) \in (\Z_2)^2$ & $n\in\mathbb N$, $\W(P^{\pm}) \in (2\Z)^2$, $\PI(P^{\pm}) \in (\Z_2)^4$ \\
     D    & $n\in\mathbb N$, $\PI(P^{\pm}) \in \Z_2$ & $n\in\mathbb N$, $\PI(P^{\pm}) \in (\Z_2)^{2}$ & $n\in\mathbb N$, $\PI(P^{\pm}) \in (\Z_2)^{4}$, $\operatorname{Ch}(P^-)\in\mathbb Z$ \\
     DIII &  $n\in 2\mathbb N$ & $n\in 2\mathbb N$, $\G(P^{\pm}) \in \Z_2 $ & $n\in 2\mathbb N$, $\G(P^{\pm}) \in (\Z_2)^{3} $ \\
     AII  & $n \in 2\mathbb N$ & $n \in 2\mathbb N$ & $n \in 2\mathbb N$, $\FKM(P)\in\mathbb Z_2$ \\
     CII  & $n\in 2\mathbb N$ & $n\in 2\mathbb N$, $\W(P^{\pm}) \in 2\Z$ & $n\in 2\mathbb N$, $\W(P^{\pm}) \in (2\Z)^2$ \\
     C    & $n\in\mathbb N$ & $n\in\mathbb N$ & $n\in\mathbb N$, $\operatorname{Ch}(P^-)\in\mathbb Z$ \\
     CI   &  $n\in\mathbb N$ & $n\in\mathbb N$ & $n\in\mathbb N$ \\
     \hline
\end{tabularx}
\end{table}

In summary, this work provides a systematic review of several classification schemes for symmetric (pairs of) projection-valued maps over \(\mathbb{T}^{d}\), \(d \leq 2\), across all AZC symmetry classes. The comparison relies on the explicit construction of symmetric pseudo-periodic bases, and achieves a complete characterization of symmetry-preserving Murray--von Neumann equivalences, unitary equivalences and homotopy equivalences. The main conceptual distinction arising from this comparison is between absolute invariants, which identify the obstructions to constructing symmetric periodic bases and intertwiners, and relative invariants, which appear in minimal ambient spaces in chiral and/or particle-hole classes and obstruct homotopies even among unitarily equivalent pairs of projection-valued maps. The analysis gives a geometric interpretation to all these topological indices, makes them accessible via computable data, and clarifies their physical meaning in the study of gapped quantum systems and topological phases of matter. While we study here projection-valued maps on the $d$-dimensional torus with $d\le2$, some of the arguments can be extended to higher dimensions in a clear way, at the expense of further significant effort, which we postpone to future work.

\section{Preliminaries and general Principles}

As highlighted in the Introduction, the goal of the paper is to develop a formalism that allows us to tackle each of the Questions raised in Section~\ref{sec:Questions}, by relating the different equivalence notions formulated above. These relations will be made explicit by relying on the following ``Principles'', which are in fact provable mathematical tools that will be applied as ``black boxes'' throughout the paper. These hold in any dimension, so we formulate them in wide generality without relying on the assumption $d \le 2$.
 
Let us start from Question 1, as stated in Section~\ref{pro:frames}. Assume that $P_0^{(\pm)}, P_1^{(\pm)} \colon \mathbb{T}^{d} \to \Proj_n(\mathcal{H})$ are two (pairs of) projection-valued maps which admit symmetric pseudo-periodic bases $\{v^{0}_1(k), \ldots, v^{0}_{(2)n}(k)\}$ and $\{v^{1}_{1}(k), \ldots, v^{1}_{(2)n}(k)\}$, respectively. (If $d \le 2$, their construction will be carried out in Section~\ref{sec:Bases}.) Assume that the two bases have the \emph{same} pseudo-periodicity constraints (cf.~\eqref{pseudoper}), which we formulate as follows: For each $k_{+}, k_{-} \in [-\pi,\pi]^{d}$ which are identified in the torus $\mathbb{T}^{d}$ (i.e., which lie on opposite faces of the hypercube), there exists a matching unitary matrix $\alpha(k_-)$ such that
\begin{equation} \label{matching}
    v_j^0(k_+)= \sum_{i=1}^{(2)n} \alpha(k_-)_{j,i} v_i^0(k_-) \quad \text{and} \quad v_j^1(k_+)= \sum_{i=1}^{(2)n} \alpha(k_-)_{j,i} v_i^1(k_-) \quad \forall \: j\in\{1,\ldots ,(2)n\}.
\end{equation}
(When $d=2$, the matching matrix $\alpha(-\pi,k_2)$ in~\eqref{pseudoper} is a diagonal matrix with entries $\alpha(-\pi,k_2)_{i,i} = e^{i \theta_i(k_2)}$ corresponding to the non-periodic vectors, and $\alpha(-\pi,k_2)_{j,j}=1$ for all the other $j$'s.) We show that the above condition is equivalent to the two (pairs of) projection-valued maps being Murray--von Neumann equivalent. This is established by the following
\begin{principle}[Periodicity from pseudo-periodicity]\label{pri:Periodicity_from_pseudo-periodicity}
Two symmetric bases satisfying the pseudo-periodic constraint~\eqref{matching} exist if and only if there exists a \emph{periodic} partial-isometry-valued map $V \colon \mathbb{T}^{d} \to \mathcal{B}(\mathcal{H})$ intertwining the ranges of the two (pairs of) projections, which moreover satisfies the symmetry constraints~\eqref{MvNsymmetry}.
\end{principle}
\begin{proof}
Given the two symmetric pseudo-periodic bases, define $V(k)$ as the linear operator such that 
\[ V(k)v_j^0(k) := v_j^1(k), \quad j \in \{1,\ldots,(2)n\}, \]
and $V(k)w:=0$ for any $w$ which is orthogonal to the range of $P_0^{(\pm)}(k)$. By definition $V \colon [-\pi,\pi]^{d} \to \mathcal{B}(\mathcal{H})$ is a partial isometry intertwining the ranges of $P_0^{(\pm)}$ and $P_1^{(\pm)}$; since the bases used to define it are only pseudo-periodic, it remains to be shown that this construction yields a \emph{periodic} partial-isometry valued map. To this end, let $k_-$ and $k_+$ be as in~\eqref{matching}; then we have, for $j \in \{1, \ldots, (2)n\}$,
\begin{align*}
V(k_-)v_j^0(k_+) &= V(k_-) \sum_{i=1}^{(2)n} \alpha_{j,i}(k_-) v_i^0(k_-) = \sum_{i=1}^{(2)n} \alpha_{j,i}(k_-) V(k_-) v_i^0(k_-)= \\ 
& = \sum_{i=1}^{(2)n} \alpha_{j,i}(k_-) v_i^1(k_-) = v_j^1(k_+)= V(k_+) v_j^0(k_+)
\end{align*}
by linearity. Since the two partial isometries $V(k_-)$ and $V(k_+)$ coincide both on the vectors spanning the range of $P_0^{(\pm)}(k_+)$ and on the orthogonal complement, where they are both zero, they are the same linear operator.

It remains to show that the constructed $V$ also satisfies the symmetry constraints~\eqref{MvNsymmetry}. The symmetry conditions on the two bases from Definition~\ref{def:Symmetric_basis} read, in compact form,
\begin{equation}\label{eq:reshuffling_matrix}
    O v_j^a(k) = \sum_{i=1}^{n} \varepsilon_{j,i} v_i^a(\pm k) \quad \forall\: a \in \{0,1\}, \; j\in\{1,\cdots ,n\},
\end{equation}
where $O\in\{T,C,S\}$ is a symmetry operator, $\varepsilon$ is an appropriate unitary matrix, and $\pm$ is an appropriate sign (depending on the unitarity or antiunitarity of $O$). Consequently
\begin{align*} 
    O V(k) v_j^0(k) &= O v_j^1(k) = \sum_{i=1}^{n} \varepsilon_{j,i} v_i^1(\pm k) = \sum_{i=1}^{n} \varepsilon_{j,i} V(\pm k) v_i^0(\pm k) = \\ 
    &= V(\pm k) \sum_{i=1}^{n} \varepsilon_{j,i} v_i^0(\pm k) = V(\pm k) O v_j^0(k).  
\end{align*}
Once again, this identity allows to conclude that $O V(k)$ and $V(\pm k) O$ are the same operator, which is exactly~\eqref{MvNsymmetry}.

Conversely, similar computations yield that the image of a symmetric pseudo-periodic basis for one (pair of) projection-valued map(s) via a symmetric periodic partial-isometry-valued intertwiner gives a symmetric pseudo-periodic basis for the second (pair of) projection-valued map(s) with the same pseudo-periodicity constraint.
\end{proof}

As we move to Questions 2 and 3 from Sections~\ref{pro:unitary_equivalences} and~\ref{pro:homotopies}, namely unitary and homotopy equivalences, we first relate these two notions by means of the following result, which holds both for covariant projection-valued maps and for symmetric (pairs of) projection-valued maps.

\begin{principle}[From unitary equivalence to homotopy, and back] \label{pri:unitary_eq_to_homotopy}
Two (pairs of) projection-valued maps $P_0^{(\pm)}, P_1^{(\pm)} \colon \mathbb{T}^{d} \to \Proj_n(\mathcal{H})$ are homotopically equivalent if and only if they are unitarily equivalent via a symmetric unitary-valued map $V \colon \mathbb{T}^{d} \to \U(\mathcal{H})$, and there exists a continuous unitary-valued map $V \colon [0,1] \times \mathbb{T}^{d} \to \U(\mathcal{H})$ such that, for all $k \in \mathbb{T}^{d}$, $\left[V_0(k),P_0^{(\pm)}(k)\right] \equiv 0$, $V(1,k) \equiv V(k)$, and $V(t,\cdot)$ satisfies the appropriate symmetry constraints~\eqref{MvNsymmetry} for all $t \in [0,1]$.
\end{principle}
\begin{proof}
Assuming that a unitary equivalence as in the statement exists among the two (pairs of) projection-valued maps, then a homotopy between them is easily constructed by setting
\[ P^{(\pm)}(t,k) := V(t,k) P_0^{(\pm)}(k) V(t,k)^{-1}. \]

Conversely, let $P^{(\pm)} \colon [0,1] \times \mathbb{T}^{d} \to \Proj_n(\mathcal{H})$ be a homotopy between the two (pairs of) projection-valued maps. By continuity and compactness of $\mathbb{T}^{d}$, the interval $[0,1]$ can be partitioned in sub-intervals with endpoints $0=:t_0 < t_1 < \ldots < t_{N} < t_{N+1} := 1$ so that
\[ \sup_{t \in [t_{i}, t_{i+1}]} \sup_{k \in \mathbb{T}^{d}} \left\| P^{(\pm)}(t, k) - P^{(\pm)}(t_{i}, k) \right\| < 1 \quad \forall\: i \in \{0, \ldots, N\}. \]
Then, there exists a continuous Kato--Nagy unitary~\cite{kato2013perturbation} intertwining $P^{(\pm)}(t_{i}, k)$ and $P^{(\pm)}(t, k)$ over the interval $t \in [t_i, t_{i+1}]$:
\[ P^{(\pm)}(t, k) = U_i(t, k) \, P^{(\pm)}(t_{i}, k) \, U_i(t, k)^{-1}\,, \quad U_i(t, k) \in \U(\mathcal{H}). \]
The explicit construction of $U_i(t,k)$ from the reference has $U_i(t_i,k) \equiv \Id_{\mathcal{H}}$, but it can be easily modified to accommodate that $\left[U_0(0,k), P_0^{(\pm)}(k)\right] \equiv 0$ for all $k \in \mathbb{T}^{d}$. The Kato--Nagy unitary also inherits the appropriate symmetry constraints from those of the projection-valued maps. Define now, for $k \in \mathbb{T}^{d}$,
\[ V(t, k) := U_i(t, k) \, U_{i-1}(t_{i},k) \, \cdots \, U_0(t_1,k) \quad \text{if } t \in [t_{i}, t_{i+1}]. \]
Then $V \colon [0,1] \times \mathbb{T}^{d} \to \U(\mathcal{H})$ is a continuous unitary-valued map satisfying all the properties claimed in the statement.
\end{proof}

Finally, we characterize a useful decomposition for chiral symmetric pairs of projection-valued maps spanning the entire ambient Hilbert space, that is, such that $P^{-}(k) + P^{+}(k) \equiv \Id_{\mathcal{H}}$: in this situation, the ambient space $\mathcal{H}$ is called minimal, according to Definition~\ref{def:minimal}. The following statement shows how the study of chiral symmetric pairs of projection-valued maps is equivalent to the study of unitary-valued maps.

\begin{principle}[Chiral projections on minimal ambient spaces] \label{pri:chiral_classes_min_dim}
A pair of projection-valued maps $P^{\pm} \colon \mathbb{T}^{d} \to \Proj_n(\mathcal{H})$ over a minimal ambient Hilbert space $\mathcal{H}$ is chiral symmetric if and only if there exists an appropriate ($k$-independent) orthogonal decomposition $\mathcal{H} = \mathcal{H}^{\uparrow} \oplus \mathcal{H}^{\downarrow}$ of $\mathcal{H}$ in which both subspaces $\mathcal{H}^{\uparrow}$ and $\mathcal{H}^{\downarrow}$ have the same dimension, and in which the operator $P^+(k)-P^-(k)$ admits the off-diagonal block decomposition
\begin{equation} \label{chiral_symm_P+-P-} 
P^+(k)-P^-(k)=\begin{pmatrix}
        0 & W(k) \\
        W(k)^* & 0
    \end{pmatrix}  , 
\end{equation}
where $W:\T^d \to \U(\mathcal{H}^{\downarrow} \to \mathcal{H}^{\uparrow})$ is a continuous unitary-valued map.
\end{principle}
\begin{proof}
Let $P^\pm:\T^d\to \Proj_n(\Hi)$ be a pair of projection-valued maps such that $P^-(k)+P^+(k) \equiv \Id_{\mathcal{H}}$, and recall the standing orthogonality assumption~\eqref{standing_orthogonality}. Call $R(k) := P^+(k)-P^-(k)$. Then $R(k)^* = R(k)$ is self-adjoint and
\[ R(k)^2 = P^+(k)^2-P^+(k)\,P^-(k)-P^-(k)\,P^+(k) +P^-(k)^2 = P^+(k)+P^-(k) = \Id_{\mathcal{H}}, \]
so that $R(k)^* R(k) = \Id_{\mathcal{H}}$ and $R(k)$ is also unitary.

Assume now that $P^\pm:\T^d\to \Proj_n(\Hi)$ is a chiral symmetric pair, under a chiral symmetry~$S$. Then
\[ S R(k) = S \left( P^+(k)-P^-(k) \right) = \left( P^-(k)-P^+(k) \right) S = - R(k) S.  
\]
Moreover, since $S^2 = \Id_{\mathcal{H}}$, the chiral symmetry operator can only have eigenvalues $\{-1,+1\}$, and by the above identity $R(k)$ establishes a unitary intertwiner among the corresponding eigenspaces. It follows that
\[ \mathcal{H}^{\uparrow} := \ker(S - \Id_{\mathcal{H}}) \quad \text{and} \quad \mathcal{H}^{\downarrow} := \ker(S + \Id_{\mathcal{H}}) \]
are orthogonal subspaces of $\mathcal{H}$ of the same dimension, and that $R(k)$ is off-diagonal in the decomposition $\mathcal{H} = \mathcal{H}^{\uparrow} \oplus \mathcal{H}^{\downarrow}$, admitting the form~\eqref{chiral_symm_P+-P-}. A simple computation now shows that the unitarity of $R(k)$ reflects in the unitarity of the block $W(k)$, as wanted.

Conversely, given the orthogonal decomposition $\mathcal{H} = \mathcal{H}^{\uparrow} \oplus \mathcal{H}^{\downarrow}$ and a unitary-valued map $W:\T^d \to \U(\mathcal{H}^{\downarrow} \to \mathcal{H}^{\uparrow})$ as in the statement, set
\[ S:=
\begin{pmatrix}
    \Id_{\mathcal{H}^{\uparrow}} & 0 \\ 
    0 & -\Id_{\mathcal{H}^{\downarrow}}
\end{pmatrix} 
\quad \text{and} \quad 
P^\pm(k):=\frac{1}{2}
\begin{pmatrix}
    \Id_{\mathcal{H}^{\uparrow}} & \pm W(k) \\ 
    \pm W^*(k) & \Id_{\mathcal{H}^{\downarrow}}
\end{pmatrix}, \quad k \in \mathbb{T}^{d}. \]
It is easy to check then that $S$ is a chiral symmetry and $P^{\pm} \colon \mathbb{T}^{d} \to \Proj_{n}(\mathcal{H})$ is chiral symmetric pair of projection-valued maps.
\end{proof}

In $K$-theoretic approaches to topological insulators, it is essentially the above Principle that is invoked in chiral symmetric classes to justify the passage from the reduced $K_0$-group of projection-valued maps up to stable equivalence to the reduced $K_1$-group of unitary-valued maps up to homotopy equivalence~\cite{prodan2016bulk}; the fact is used also in functional-analytic approaches to topological quantum matter, in the analysis of the so-called ``spectrally-flattened Hamiltonian''~\cite{graf2018bulk, chung2025topological, chung2024topological, chung2026topological}. For us, the Principle will be useful to address Question 3 from Section~\ref{pro:homotopies} in chiral symmetric classes. Notice that additional symmetries on the pair of projection-valued maps will reflect on additional symmetry constraints on the unitary-valued map $W$ in the above statement: these will be discussed class by class.

\section{Symmetric bases and Murray--von Neumann equivalence} \label{sec:Bases}

We tackle now Question 1 raised in Section~\ref{pro:frames}, concerning the construction of symmetric pseudo-periodic bases for (pairs of) projection-valued maps. We address this question class by class.

\subsection{Class A} \label{sec:Bases_class_A}

The analysis of projection-valued maps in absence of any further symmetry is obviously the first that has been treated in the literature, and led to the realization that they host interesting topological phases: see e.g.~\cite{thouless1982quantized, Kohmoto1985, avron1983homotopy, avron1985quantization, simon1983holonomy}. We sketch the relevant constructions here for the sake of completeness, referring the reader to~\cite{monaco2023topology, Peluso2026} and references therein for more details.

Let $P \colon \mathbb{T}^{d} \to \Proj_n(\mathcal{H})$ be a projection-valued map. If $d=0$, any orthonormal basis $\{v_1, \ldots, v_n\}$ of the range of $P\equiv P(0)$ will do. If $d=1$, this basis can be extended to $\mathbb{T}^{1}$ as follows: the Kato--Nagy construction sketched in the proof of Principle~\ref{pri:unitary_eq_to_homotopy} yields a unitary-valued map $U \colon [-\pi,\pi] \to \U(\mathcal{H})$ intertwining the ranges of $P(0)$ and $P(k)$ for any $k \in [-\pi,\pi]$. In turn, this unitary allows to define 
\[ \tilde{v}_j(k) := U(k) \, v_j \quad \forall\: j \in \{1,\ldots,n\} \]
as an orthonormal basis for the range of $P(k)$. However, since this unitary-valued map fails in general to be periodic, this definition leads to a \emph{holonomy unitary} $U(-\pi)^{-1} \, U(\pi)$ comparing the bases at $k=-\pi$ and at $k=\pi$, and in turn to the matching matrix $\alpha = [\alpha_{i,j}]_{1 \le i,j \le n} \in U(n)$ given by
\begin{equation}\label{eq:standard_alpha_d=1} \alpha_{i,j} := \inn{\tilde{v}_i(-\pi)}{\tilde{v}_j(\pi)} = \inn{v_i}{U(-\pi)^{-1} \, U(\pi) \, v_j}, \quad i,j \in \{1,\ldots, n\}. \end{equation}
Since the holonomy unitary commutes with $P(0)$, it can be written as $e^{i L/2\pi}$ with $L^*=L$ also commuting with $P(0)$, e.g.\ using an appropriate spectral decomposition. The basis
\begin{equation} \label{eq:modified_parallel}
v_j(k) := U(k) \, e^{-ikL} v_j \quad \forall\: j \in \{1,\ldots,n\}
\end{equation}
then yields the desired periodic basis.

If $d=2$, one can start from a periodic basis for the restriction $P \big|_{\{k_1=0\}} \colon \mathbb{T}^{1} \subset \mathbb{T}^{2} \to \Proj_n(\mathcal{H})$ and perform a similar construction. This time, a $k_2$-dependent matching matrix $\alpha \colon \mathbb{T}^{1} \to U(n)$ emerges:
\begin{equation}\label{eq:standard_alpha}
    \left[\alpha(k_2)\right]_{i,j}:=\inn{v_i}{U(-\pi,k_2)^{-1} U(\pi,k_2)v_j}, \quad i,j \in \{1,\ldots, n\}, \; k_2 \in \mathbb{T}^1,
\end{equation}
measuring the lack of periodicity of an extended basis in the $k_1$-direction. The possibility to deform these matrices to the identity, that is, ultimately, the study of homotopy classes of such unitary-valued maps, is what hinders the existence of periodic bases, and leads to introduce the following notion.

\begin{definition}[Chern number]\label{def:Chern_number}
Given a \emph{finite-rank} projection-valued map $P:\T^2 \to \Proj_n(\Hi)$, $n < \infty$, its \emph{Chern number} is defined as 
\[ \Ch(P) := [\det (\alpha(\cdot))] \in \Z \]
where $[f]$ denotes the winding number of the function $f \colon \mathbb{T}^{1} \simeq S^1 \to U(1) \simeq S^1$.
\end{definition}

\begin{remark}[On the definition and properties of the Chern number] \label{rmk:Chern}
That this definition of the Chern number reproduces the well-known one given in terms of the Berry curvature associated with the projection-valued map $P$ is the content of~\cite[Proposition 5.3]{cornean2019parseval}, which shows in particular that $\Ch(P)$ is independent of the choices of the basis for the range of $P \big|_{\{k_1=0\}}$ and of the particular unitary used to extend the basis to the whole torus. Moreover, this integer remains constant throughout Murray--von Neumann equivalences, unitary equivalences and homotopy equivalences of the underlying projection-valued map, and is additive, meaning that if $P(k)Q(k)\equiv0$ then $\Ch(P+Q)=\Ch(P)+\Ch(Q)$: we refer the reader e.g.\ to~\cite{Peluso2026} for proofs of these statements.
\end{remark}

\begin{proposition}[Class-A symmetric pseudo-periodic bases, $d=2$] \label{prop:Class_A_frames}
Given a finite-rank projection-valued map $P:\T^2 \to \Proj_n(\Hi)$, $n<\infty$, it is always possible to construct a pseudo-periodic basis in which 
\[ v_1(\pi,k_2) = e^{i \, \Ch(P) \, k_2} \, v_1(-\pi,k_2) \quad \forall\: k_2 \in \mathbb{T}^{1} \]
(cf.~\eqref{pseudoper}).

If instead the projection-valued map has infinite rank, then a periodic basis always exists.
\end{proposition}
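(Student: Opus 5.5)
The plan is to start from the construction described just before Definition~\ref{def:Chern_number}. First, take a periodic basis $\{v_1(k_2),\ldots,v_n(k_2)\}$ for the restriction $P\big|_{\{k_1=0\}}$; it exists by the $d=1$ construction via the Kato--Nagy unitary and the logarithm of the holonomy in~\eqref{eq:modified_parallel}. Second, transport it in the $k_1$-direction with a Kato--Nagy (parallel-transport) unitary $U(k_1,k_2)$, continuous in both variables, with $U(0,k_2)=\Id$. This gives $\tilde v_j(k):=U(k)v_j(k_2)$, which is periodic in $k_2$ and satisfies $\tilde v_j(\pi,k_2)=\sum_i \alpha(k_2)_{j,i}\tilde v_i(-\pi,k_2)$ up to transposition conventions. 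Here $\alpha\colon\T^1\to U(n)$ is the continuous matching matrix~\eqref{eq:standard_alpha}, and $[\det\alpha]=\Ch(P)$ by Definition~\ref{def:Chern_number}.

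The key topological input is that $\pi_1(U(n))\cong\Z$ via the winding of $\det$. Hence $\alpha$ is homotopic, through loops in $U(n)$, to $\beta(k_2):=\operatorname{diag}(e^{i\Ch(P)k_2},1,\ldots,1)$. To see this, write $\alpha=\beta\cdot\gamma$ with $\gamma:=\beta^{-1}\alpha$. Then $\det\gamma$ has winding zero, so $\gamma$ is null-homotopic: $SU(n)$ is simply connected and $U(n)\cong SU(n)\times U(1)$ as spaces. Choose a homotopy $\gamma_s$, $s\in[0,1]$, of loops in $U(n)$ with $\gamma_1=\gamma$ and $\gamma_0\equiv\Id$.

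Next, correct the basis. I would use the homotopy to build a continuous map $M\colon[-\pi,\pi]\times\T^1\to U(n)$, periodic in $k_2$, with $M(-\pi,k_2)=\Id$ and $M(\pi,k_2)$ chosen so that the new matching matrix is $\beta$. For instance, take $M(k_1,k_2):=\gamma_{s(k_1)}(k_2)^{-1}$ with $s$ increasing from $0$ at $k_1=-\pi$ to $1$ at $k_1=\pi$, with the order of multiplication adjusted to the convention in~\eqref{matching}. Define $v_j(k):=\sum_i M(k)_{j,i}\tilde v_i(k)$. These vectors remain orthonormal, continuous, periodic in $k_2$, and span $\operatorname{Ran}P(k)$. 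A one-line computation then shows that their matching matrix is $\beta$, that is, $v_1(\pi,k_2)=e^{i\Ch(P)k_2}v_1(-\pi,k_2)$ and $v_j(\pi,k_2)=v_j(-\pi,k_2)$ for $j\ge2$. The main obstacle I expect is purely bookkeeping: the left/right multiplication conventions, so that $\alpha M(\pi)^{\pm1}$ or $M(\pi)\alpha$ reduces exactly to $\beta$. If the naive ordering fails, I would instead factor $\alpha=\gamma\beta$.

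For infinite rank, the same construction applies verbatim with $n=\infty$: an orthonormal basis of the infinite-dimensional range, and a matching map $\alpha\colon\T^1\to\U(\ell^2)$, norm-continuous because it is built from the norm-continuous Kato--Nagy unitaries. By Kuiper's theorem~\cite{kuiper1965homotopy} the unitary group of an infinite-dimensional Hilbert space is contractible, so $\alpha$ is null-homotopic. Taking $\beta\equiv\Id$ in the previous step then yields a fully periodic basis.
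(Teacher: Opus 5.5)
Your proposal is correct and follows the same route the paper sketches before Definition~\ref{def:Chern_number}, for which it defers the details to~\cite{Peluso2026}: a periodic basis on $\{k_1=0\}$, a Kato--Nagy extension in $k_1$, the matching loop $\alpha$ with $[\det\alpha]=\Ch(P)$ deformed via $\pi_1(U(n))\cong\Z$ to $\operatorname{diag}(e^{i\Ch(P)k_2},1,\ldots,1)$ and absorbed into a $k_1$-interpolated gauge, and, in infinite rank, Kuiper's theorem applied to the norm-continuous matching loop. On the bookkeeping you flagged: with $v_j=\sum_i M_{ji}\tilde v_i$ and $M(-\pi,\cdot)=\Id$, the new matching matrix in your row convention is $M(\pi,k_2)\,\alpha(k_2)^{\mathrm t}$, so the factorization that works is your fallback one, $\alpha^{\mathrm t}=\gamma\beta$ with $M(\pi,\cdot)=\gamma^{-1}$.
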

For proofs and further considerations, we refer again to~\cite{Peluso2026} and references therein. The case of infinite-rank projections is ultimately a consequence of Kuiper's theorem~\cite{kuiper1965homotopy} on the contractibility of the group of unitaries on an infinite-dimensional Hilbert space.

\bigskip

Invoking Principle~\ref{pri:Periodicity_from_pseudo-periodicity}, the previous results and considerations imply that two projection-valued maps with equal finite rank in dimension $d=1$ are always Murray--von Neumann equivalent, while in dimension $d=2$ they are equivalent if and only if their Chern numbers coincide. If the rank is infinite, then periodic bases always exist, and any two such maps are Murray--von Neumann equivalent, both in $d=1$ and in $d=2$.

\subsection{Classes AI and AII} \label{sec:Bases_class_AIAII}

Time-reversal symmetric classes AI and AII were the first AZC classes to be thoroughly investigated in relation to the construction of symmetric bases and the topological obstructions to enforcing periodicity~\cite{panati2007triviality, fiorenza2016construction, fiorenza2016z, monaco2015symmetry, graf2013bulk, de2014classification, de2015classification, Cornean_2017}. We refer the reader to these references for more details.

Given an even time-reversal symmetry operator $T$ on $\mathcal{H}$, it makes sense to consider the invariant subspace $\mathcal{H}_{\mathbb{R}}$ of vectors fixed by the action of $T$. This is clearly a real Hilbert space, and induces a splitting of $\mathcal{H}$ as $\mathcal{H} = \mathcal{H}_{\mathbb{R}} \otimes_{\mathbb{R}} \mathbb{C} = \mathcal{H}_{\mathbb{R}} \oplus i \mathcal{H}_{\mathbb{R}}$. In particular, given a real orthonormal basis $\{w_j\}_{j \in \{1, \ldots, \dim(\mathcal{H})\}}$ for $\mathcal{H}_{\mathbb{R}}$, it can be regarded as a complex orthonormal basis of $\mathcal{H}$, on which  the time-reversal symmetry operator acts as the standard complex conjugation operator:
\begin{equation} \label{eq:TRS_normal_form_even}
T = \mathcal{K}, \quad T^2 = \Id_{\mathcal{H}}.
\end{equation}
A similar scheme holds for any time-reversal invariant subspace of $\mathcal{H}$, which can be thus regarded as the complexification of the real Hilbert subspace of fixed vectors under the action of $T$.

In the case of class-AI projection-valued maps, with the previous considerations one can always enforce time-reversal symmetry on a periodic basis. This is because, in dimensions $d\in\{0,1,2\}$, one can reproduce the construction from class A but restricting to vectors within the range of the projection $P(0)$ which are left invariant by the action of $T$. No topological obstruction arises, not even in dimension $2$: if the projection-valued map has finite rank, then time-reversal symmetry (of even or odd type) annihilates the Chern number~\cite{panati2007triviality, monaco2015symmetry}, while, if its rank is infinite, one can invoke again Kuiper's theorem~\cite{kuiper1965homotopy}, which holds on real Hilbert spaces as well, to contract the time-reversal symmetric loop of holonomy unitaries to the identity. This well-known construction is presented in some detail in Appendix~\ref{app:XYZ} for the infinite-rank situation. Ultimately, one has the following result (cf.~\cite{cornean2016construction, fiorenza2016construction}).

\begin{proposition}[Class-AI symmetric pseudo-periodic bases, $d=2$] \label{prop:Class_AI_frames}
Given an even-time-reversal symmetric projection-valued map $P:\T^2 \to \Proj_n(\Hi)$, it is always possible to construct a time-reversal symmetric periodic basis for $P$.
\end{proposition}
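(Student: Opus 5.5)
We work in the normal form~\eqref{eq:TRS_normal_form_even}: $T=\mathcal{K}$ with respect to a real orthonormal basis of $\mathcal{H}_{\mathbb{R}}$. We then follow the class-A construction from Section~\ref{sec:Bases_class_A} step by step, checking at each stage that the objects built are compatible with $T$, i.e.\ real in the appropriate sense.

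\emph{Step 1 ($d=0$ and $d=1$).} At each fixed point $k_\star$, $P(k_\star)$ commutes with $T$, so its range is the complexification of a real subspace. We pick a real orthonormal basis there, so $Tv_j=v_j$.

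For $d=1$, fix the basis at $k=0$ and use the Kato--Nagy unitary $U(k)$ built from the segment $[0,k]$. Since $TP(k)T^{-1}=P(-k)$, uniqueness of the construction gives $TU(k)T^{-1}=U(-k)$. Hence $\tilde v_j(k)=U(k)v_j$ satisfies $T\tilde v_j(k)=\tilde v_j(-k)$ on $[-\pi,\pi]$.

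The holonomy $U(-\pi)^{-1}U(\pi)$ is unitary and commutes with $P(\pi)$ after conjugation. Its matrix $\alpha=[\langle\tilde v_i(-\pi),\tilde v_j(\pi)\rangle]$ satisfies $\overline{\alpha}=\alpha^{-1}=\alpha^{*}$, so $\alpha$ is symmetric unitary. A symmetric unitary admits a real orthogonal diagonalization, $\alpha=Oe^{iD}O^{T}$ with $D$ real diagonal. Equivalently, one should use a midpoint construction, transporting from $0$ to $\pm\pi$ and correcting at $\pi$ by a square root $\alpha^{1/2}$ chosen symmetric. A symmetric $\pm k$-interpolation $e^{-ikL}$ with $L$ real symmetric then keeps $T$-symmetry and yields periodicity.

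I would instead follow the standard route of Fiorenza--Monaco--Panati and argue on the half-circle $[0,\pi]$:
\begin{itemize}
\item choose real bases at $0$ and at $\pi$;
\item connect them along $[0,\pi]$ by parallel transport followed by a path in $U(n)$ from the transported frame to the real frame at $\pi$;
\item extend to $[-\pi,0]$ by $v(-k):=Tv(k)$.
\end{itemize}
The only obstruction is connecting the two frames within $U(n)$, which is path-connected, so no obstruction arises. Continuity at $0$ and at $\pi$ holds because the frames there are real.

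\emph{Step 2 ($d=2$).} Use Step 1 on the two invariant circles $\{k_1=0\}$ and $\{k_1=\pi\}$, giving symmetric periodic frames $v^{(0)}(k_2)$ and $v^{(\pi)}(k_2)$. On the half-cylinder $[0,\pi]\times\mathbb{T}^1$, transport $v^{(0)}$ in $k_1$ to $k_1=\pi$. This produces a loop $\alpha\colon\mathbb{T}^1\to U(n)$ comparing the transported frame with $v^{(\pi)}$. The loop inherits the relation $\overline{\alpha(k_2)}=\alpha(-k_2)$ up to the real frames.

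We must find a homotopy from $\alpha$ to the constant $\Id$ that respects this relation. The homotopy is then used to modify the frame on the half-cylinder, and we extend to $[-\pi,0]\times\mathbb{T}^1$ via $v(-k)=Tv(k)$.

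\emph{Main obstacle.} The hard step is contracting $\alpha$ equivariantly. I would reduce this to two pieces:
\begin{itemize}
\item The non-equivariant winding: $[\det\alpha]$ equals the relevant Chern-type number, which vanishes by time-reversal symmetry.
\item The equivariant extension problem on $[0,\pi]$ in $k_2$: at $k_2\in\{0,\pi\}$, $\alpha$ is real-valued, i.e.\ lies in $O(n)$. The obstruction is a $\pi_0(O(n))=\mathbb{Z}_2$ sign $\det\alpha(k_\star)=\pm1$, together with $\pi_1(U(n))$ data on the half-loop.
\end{itemize}
The $\mathbb{Z}_2$ signs can be killed by flipping the sign of one real vector in $v^{(\pi)}$, which is allowed since the frame on $\{k_1=\pi\}$ was a free choice. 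The half-loop winding can be absorbed by multiplying a vector of $v^{(\pi)}$ by an equivariant phase $e^{imk_2}$ combined with its conjugate. Checking that these adjustments are compatible is the delicate part, and here I would rely on the arguments in~\cite{fiorenza2016construction,cornean2016construction}.

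For infinite rank, the same scheme works with the real Kuiper theorem contracting the loop directly (Appendix~\ref{app:XYZ}).
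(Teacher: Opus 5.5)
Your treatment of $d\le 1$ is fine. The first route you sketch, taking a real symmetric logarithm of the symmetric unitary holonomy, already settles $d=1$ on its own. The gap is in Step~2: the obstruction is misidentified, and the step you flag as delicate is deferred to the literature.

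For the loop produced by your half-cylinder construction, $\alpha(k_2)_{ij}=\langle v^{(\pi)}_i(k_2),U(\pi,k_2)v^{(0)}_j(k_2)\rangle$, the relation $\overline{\alpha(k_2)}=\alpha(-k_2)$ fails in general. Since $TU(t,k_2)=U(-t,-k_2)T$, conjugation turns $\alpha(k_2)$ into the overlap of $v^{(\pi)}(-k_2)$ with the frame transported to $k_1=-\pi$, not to $k_1=\pi$, and the two differ by the $k_1$-holonomy. This has three consequences:
\begin{itemize}
\item $\alpha(0)$ and $\alpha(\pi)$ need not be real, so your $\pi_0(O(n))$ signs are spurious.
\item $[\det\alpha]$ is not a Chern number that vanishes by symmetry. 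It depends on the choice of $v^{(\pi)}$: replacing $v^{(\pi)}_1(k_2)$ by $e^{imk_2}v^{(\pi)}_1(k_2)$ shifts it by $-m$.
\item Even for the full-square matching matrix of the class-A construction, the inherited relation is $\alpha(k_2)^{\mathrm t}=\alpha(-k_2)$, not complex conjugation. Its values at $k_2\in\{0,\pi\}$ are symmetric unitaries, which form a connected set, so no $\mathbb{Z}_2$ appears there either.
\end{itemize}

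The missing observation is that no equivariant contraction is needed at all. The open half-cylinder $(0,\pi)\times\mathbb{T}^1$ is disjoint from its image under $k\mapsto-k$. Hence any continuous frame on $[0,\pi]\times\mathbb{T}^1$ that restricts to symmetric frames on the circles $k_1=0$ and $k_1=\pi$ extends by $v(-k):=Tv(k)$ to a continuous, periodic, symmetric frame.

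It therefore suffices that $\alpha$ be null-homotopic in $U(n)$, i.e.\ $[\det\alpha]=0$. The rephasing above with $m:=[\det\alpha]$ achieves this while keeping $v^{(\pi)}$ symmetric, since $T(e^{imk_2}v)=e^{-imk_2}Tv$. No conjugate partner is needed in class AI. Then interpolate in a collar $[\pi-\varepsilon,\pi]\times\mathbb{T}^1$ and reflect.

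Completed this way, your argument is a genuinely different route from the paper's. The paper runs the class-A construction with a $T$-invariant frame at $k_1=0$, observes that time-reversal symmetry kills the Chern number, and refers to the literature (and to Kuiper's theorem in infinite rank) for the rest. The half-cylinder argument is self-contained and yields $\Ch(P)=0$ as a byproduct rather than using it. It also pinpoints the difference with class AII: there, symmetric rephasings shift $[\det\alpha]$ only by even integers, so a $\mathbb{Z}_2$ obstruction survives.
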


Furthermore, as a consequence of Principle~\ref{pri:Periodicity_from_pseudo-periodicity}, any two projection-valued maps in class~AI are Murray--von Neumann equivalent.

\bigskip

The situation is considerably more interesting in class AII, where an odd time-reversal symmetry is present. The basic observation is that an antiunitary symmetry $T$ such that $T^2 = - \Id_{\mathcal{H}}$ maps a vector $v \in \mathcal{H}$ into an \emph{orthogonal} vector $Tv \perp v$: the pair $\{v, Tv\}$ is then often referred to as a \emph{Kramers pair}. As a consequence, when endowed with $T$, the Hilbert space $\mathcal{H}$ can be regarded as a quaternionic Hilbert space, and the operator $J := \mathcal{K} T$, which is unitary and squares to $- \Id_{\mathcal{H}}$, can be represented in an appropriate basis as the standard symplectic ``matrix'': in other words, in the orthogonal splitting $\mathcal{H} = \ker(J + i \Id_{\mathcal{H}}) \oplus \ker(J - i \Id_{\mathcal{H}})$, the time-reversal symmetry operator acts as
\begin{equation} \label{eq:TRS_normal_form_odd}
T = \mathcal{K} J \quad \text{with} \quad J = \begin{pmatrix}
    0 & \Id \\
    -\Id & 0
\end{pmatrix}, \quad T^{2} = - \Id_{\mathcal{H}}. 
\end{equation}
Similar considerations lead to Definition~\ref{def:Symmetric_basis} of a symmetric basis for a time-reversal symmetric projection-valued map $P \colon \mathbb{T}^{d} \to \Proj_n(\mathcal{H})$, where one can take $\varepsilon := \begin{pmatrix} 0 & 1 \\ -1 & 0 \end{pmatrix}^{\oplus n}$ to be also a symplectic $(n\times n)$-matrix; notice in particular that, due to Kramers pairing, the rank $n$ of the projections is necessarily even (or infinite) in this case.

Concerning symmetric bases, the modified Kato--Nagy construction from class A can be again replicated for class AII in $d=1$, as the unitary used to extend a symmetric basis of the range of $P(0)$ can be required to be compatible with time-reversal symmetry in the sense of~\eqref{MvNsymmetry}: see e.g.~\cite{Peluso2026} and references therein. In $d=2$, again one can invoke the quaternionic Hilbert space version of Kuiper's theorem~\cite{kuiper1965homotopy} to get rid of the holonomy unitary of infinite-rank projection valued maps, and produce a time-reversal symmetric periodic basis of the form $\{v_{2j+1}(k), v_{2j+2}(k) := - T v_{2j+1}(-k)\}_{j \in \mathbb{N}}$ (compare again Appendix~\ref{app:XYZ}). Instead, for finite-rank projection-valued maps, while there is no Chern-number obstruction as was previously mentioned, a finer topological constraint appears. To identify it, we first note that the matching matrix $\alpha \colon \mathbb{T}^{1} \to U(n)$, $n < \infty$, now satisfies a further compatibility condition with time-reversal symmetry given by 
\begin{equation}\label{eq:matching_matrices_symmetric_constraint}
\alpha(k_2)^{\mathrm{t}} J = J\alpha(-k_2) \iff [J \alpha(k_2)]^{\mathrm{t}} = - J\alpha(-k_2) \quad \forall\: k_2 \in \mathbb{T}^{1}.
\end{equation}
This is indeed the condition which forces $\Ch(P) = 0$, as $k_2 \mapsto \det(\alpha(k_2)) = \det(\alpha(-k_2))$ is an even function and thus has no winding. In order to extract the finer topological information, we make the following general considerations, also for future reference.

\begin{remark}[Teo--Kane invariant]\label{def:gamma_invariant}
Consider a unitary-valued map $W:\T^1 \to U(n)$, with $n \in 2\mathbb{N}$ even, such that $W(k)^{\mathrm{t}} = -W(-k)$ for all $k \in \mathbb{T}^{1}$. In particular, the constraint forces $W(0)$ and $W(\pi) = W(-\pi)$ to be skew-symmetric, so they admit a Pfaffian $\Pf(W(k_{\star}))$ such that $\Pf(W(k_{\star}))^2 = \det(W(k_{\star}))$, $k_{\star} \in \{0,\pi\}$. Choose two real numbers $\lambda(0), \lambda(\pi) \in \R$ such that $\Pf (W(k_{\star})) = e^{i\lambda(k_{\star})}$, $k_{\star} \in \{0,\pi\}$. Let $\mu \colon [-\pi,\pi] \to \mathbb{R}$ be a continuous function%
\footnote{The condition $W(k)^{\mathrm{t}} = -W(-k)$ implies that $\det(W(k)) = (-1)^n \det(W(-k)) = \det(W(-k))$, in view of the parity of $n \in 2 \mathbb{N}$. Consequently, the winding number of the map $\mathbb{T}^{1} \simeq S^1 \to U(1) \simeq S^1$, $k \mapsto \det(W(k))$, vanishes. Thus, the argument of the phase $\det(W(k)) = e^{i \mu(k)}$ can actually be chosen to be a continuous \emph{and periodic} function of $k \in \mathbb{T}^{1}$, i.e., $\mu(-\pi) = \mu(\pi)$.} %
such that $\det(W(k))=e^{i\mu(k)}$ for all $k \in [-\pi,\pi]$; in view of the defining property of the Pfaffian, we must have $e^{i \mu(k_{\star})} = e^{i 2 \lambda(k_{\star})}$, or $\mu(k_{\star}) \equiv 2 \lambda(k_{\star}) \bmod 2\pi\mathbb{Z}$, $k_{\star} \in \{0, \pi\}$.

We define the \emph{Teo--Kane invariant} of such $W$ (see~\cite[Equation~(4.27)]{teo2010topological},~\cite[Equation~(3.70)] {chiu2016classification} and~\cite[Theorem~3.5]{de2022cohomology}) as 
\begin{equation}\label{eq:delta_invariant}
    \TK(W(\cdot)):= e^{i [\mu(0)-2\lambda(0)]/2} e^{i [\mu(\pi)-2\lambda(\pi)]/2} = \prod_{k_{\star} \in \{0,\pi\}} \frac{\sqrt{\det(W(k_{\star}))}}{\Pf(W(k_{\star}))} \in \{1,-1\} \simeq \Z_2,
\end{equation}
where the choice of the square-root on the right-hand side is performed continuously by setting $\sqrt{\det(W(k))}:=e^{i \mu(k)/2}$, $k \in [-\pi,\pi]$.
\end{remark}

\begin{definition}[Fu--Kane--Mele invariant]\label{def:FMK}
Given an odd-time-reversal symmetric \emph{finite-rank} projection-valued map $P:\T^2 \to \Proj_n(\Hi)$, $n < \infty$, its \emph{Fu--Kane--Mele invariant} is defined as 
\[ \FKM(P) := \TK(J \alpha(\cdot)) \in \Z_2. \]
\end{definition}

\begin{remark}[On the definition and properties of the Fu--Kane--Mele invariant] \label{rmk:FKM}
Section 5 in~\cite{Peluso2026} discusses how this definition of the Fu--Kane--Mele invariant reproduces the original one proposed by~\cite{KaneMele2005, FuKane2006}, as well as how it compares with many other equivalent reformulations appearing in the mathematical literature (cf.\ also~\cite{Cornean_2017}). This shows in particular that $\FKM(P)$ is independent of the choices made to define the matching matrix and the associated Teo--Kane invariant, and actually characterizes the homotopy class of the matching matrix within deformations which preserve the constraint~\eqref{eq:matching_matrices_symmetric_constraint}. Moreover, this $\Z_2$-valued quantity remains constant throughout Murray--von Neumann equivalences, unitary equivalences and homotopy equivalences of the underlying projection-valued map, and is multiplicative, meaning that if $P(k)Q(k)\equiv0$ then $\FKM(P+Q)=\FKM(P)\cdot\FKM(Q)$: we refer the reader again to~\cite{Peluso2026} and references therein for proofs of these statements.
\end{remark}

\begin{proposition}[Class-AII symmetric pseudo-periodic bases, $d=2$] \label{prop:Class_AII_frames}
Given an odd-time-reversal symmetric finite-rank projection-valued map $P:\T^2 \to \Proj_n(\Hi)$, it is always possible to construct a time-reversal symmetric pseudo-periodic basis in which 
\[ v_1(\pi,k_2) = e^{i \, \ell \, k_2} \, v_1(-\pi,k_2) \quad \text{and} \quad v_2(\pi,k_2) = e^{-i \, \ell \, k_2} \, v_2(-\pi,k_2) \quad \forall\: k_2 \in \mathbb{T}^{1}, \]
where $\ell \in \mathbb{Z}$ is any integer such that $\ell \equiv \FKM(P) \bmod 2$.

If instead the projection-valued map has infinite rank, then a symmetric periodic basis always exists.
\end{proposition}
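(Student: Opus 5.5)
We follow the class-A construction from Section~\ref{sec:Bases_class_A}, now keeping track of time-reversal symmetry at every step. First, we build a time-reversal symmetric periodic basis on the invariant circle $\{k_1=0\}$ using the one-dimensional class-AII result mentioned above (the symmetric modified Kato--Nagy construction). Next, we extend it in the $k_1$-direction by parallel transport $U(k_1,k_2)$ from $(0,k_2)$. The Kato--Nagy unitary satisfies $TU(k)=U(-k)T$, so the extended basis $\tilde v_j(k)=U(k)v_j(0,k_2)$ stays symmetric, with the fixed symplectic $\varepsilon$, on all of $[-\pi,\pi]^2$. The lack of periodicity in $k_1$ is then encoded by the matching matrix $\alpha\colon\mathbb{T}^1\to U(n)$ of~\eqref{eq:standard_alpha}, which satisfies the constraint~\eqref{eq:matching_matrices_symmetric_constraint}.

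Any modification of the basis by a unitary matrix-valued gauge $G(k)$ acting as $v_j\mapsto\sum_i G(k)_{ij}v_i$ preserves the symmetry provided $G$ obeys the compatibility $\overline{G(k)}J=JG(-k)$. Under such a gauge change, the matching matrix transforms as $\alpha\mapsto G(\pi,\cdot)^{-1}\alpha\,G(-\pi,\cdot)$, up to transposition conventions. Conversely, any symmetric homotopy of $k_2\mapsto\alpha(k_2)$ within the constraint can be realised by a symmetric gauge $G$ on the cylinder $[-\pi,\pi]\times\mathbb{T}^1$. The idea is to interpolate in $k_1\in[0,\pi]$ along the homotopy and define the part for $k_1<0$ through the symmetry. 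We then use Remark~\ref{rmk:FKM}: $\FKM(P)$ classifies the homotopy classes of such constrained loops. It therefore suffices to exhibit, for each value of the invariant, an explicit representative of the required form. The natural choice is
\[
\alpha_\ell(k_2)=\operatorname{diag}\bigl(e^{i\ell k_2},e^{-i\ell k_2}\bigr)\oplus\Id_{n-2}.
\]
We would check that it satisfies~\eqref{eq:matching_matrices_symmetric_constraint}, since $J$ swaps the Kramers partners and so maps $e^{i\ell k_2}\leftrightarrow e^{-i\ell(-k_2)}$. We would then compute $\TK(J\alpha_\ell)$ directly. The determinant of $J\alpha_\ell$ is constant, while $\Pf(J\alpha_\ell(k_\star))=\pm e^{\pm i\ell k_\star}$ evaluated at $k_\star\in\{0,\pi\}$. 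This gives $\TK(J\alpha_\ell)=(-1)^\ell$, so $\alpha_\ell$ has invariant $\ell\bmod 2$, matching the statement. Once $\alpha$ is deformed to $\alpha_\ell$ and the corresponding gauge is applied, the resulting basis is symmetric, periodic in $k_2$, and has exactly the pseudo-periodicity claimed for $v_1$ and $v_2=\pm Tv_1(-\cdot)$.

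The main obstacle is the step that turns a homotopy of matching matrices into an actual symmetric gauge transformation that is continuous across the lines $k_2=\pm\pi$ and at the fixed points. We would first try the approach of~\cite{fiorenza2016z,Cornean_2017}. Work on the effective half-cylinder $k_1\in[0,\pi]$ and use the homotopy parameter as $k_1/\pi$. The symmetry constraints on $G$ at $k_1=0$ and $k_1=\pi$ are exactly the constraints defining the homotopy class, and the Pfaffian data at the four points $k_\star$ are what make the extension possible. This is where the completeness of $\TK$ from Remark~\ref{rmk:FKM} is used.

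For infinite rank there is no $\FKM$ obstruction to track. The constrained loop $\alpha$ lives in the unitary group of an infinite-dimensional quaternionic Hilbert space, and the quaternionic Kuiper theorem~\cite{kuiper1965homotopy} contracts it symmetrically to the identity, as in Appendix~\ref{app:XYZ}. The same gauge argument then yields a symmetric periodic basis.
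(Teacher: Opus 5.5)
Your proposal is correct and follows essentially the route the paper delegates to \cite{Peluso2026}. The ingredients are the same: symmetric parallel transport off the invariant line $k_1=0$, the constrained matching matrix, completeness of $\FKM$ on constrained loops (Remark~\ref{rmk:FKM}), and the representative $\alpha_\ell=\operatorname{diag}(e^{i\ell k_2},e^{-i\ell k_2})\oplus\Id_{n-2}$, for which you correctly get $\TK(J\alpha_\ell)=(-1)^\ell$. The step you flag can be closed directly, without lifting a homotopy. With $g:=G(\pi,\cdot)$, a symmetric gauge acts by $J\alpha(k_2)\mapsto g(-k_2)^{\mathrm{t}}J\alpha(k_2)\,g(k_2)$, and a loop $g$ carrying $J\alpha$ to $J\alpha_\ell$ always exists: fix $g$ at $k_2\in\{0,\pi\}$ by transitivity of congruence on skew-symmetric unitaries, choose it freely on $[-\pi,0]$, and solve for it on $[0,\pi]$. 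Exactly as in Proposition~\ref{prop:gamma_dimerization}, one has $(-1)^{[\det g(\cdot)]}=\TK(J\alpha)\,\TK(J\alpha_\ell)$, so equal invariants force an even winding. That evenness is what lets you join $g$ over $[0,\pi]\times\T^1$ to a symmetric loop at $k_1=0$, such as $e^{imk_2}\Id_2\oplus\Id_{n-2}$, after which $G$ extends to $k_1<0$ by the symmetry.
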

For the proof, we refer again to~\cite{Peluso2026} and references therein.

\bigskip

Invoking Principle~\ref{pri:Periodicity_from_pseudo-periodicity}, the previous results and considerations imply that two odd-time-reversal symmetric projection-valued maps with equal finite rank in dimension $d=1$ are always Murray--von Neumann equivalent, while in dimension $d=2$ they are equivalent if and only if their Fu--Kane--Mele invariants coincide. If the rank is infinite, then symmetric periodic bases always exist, and any two such maps are Murray--von Neumann equivalent, both in $d=1$ and in $d=2$.

\subsection{Classes AIII, C, and D} \label{sec:Bases_class_AIIICD}

In classes AIII, C, and D, only a single chiral symmetry, odd particle-hole symmetry, or even particle-hole symmetry, respectively, is present. The relevant object to study is a symmetric pair of projection-valued maps $P^{\pm} \colon \mathbb{T}^{d} \to \Proj_n(\mathcal{H})$. The presence of a particle-hole (respectively chiral) symmetry allows to reconstruct one element of the pair from the other: given $P^-$, the projection $P^+$ is recovered as $P^+(k) = C^{-1} P^{-}(-k) C$ (respectively $P^+(k) = S^{-1} P^{-}(k) S$). Consequently, we can focus on $P^-$ and treat it as an independent projection-valued map, which in these classes carries no other symmetry constraint: as such, $P^- \colon \mathbb{T}^{d} \to \Proj_n(\mathcal{H})$ can be regarded as an element in class A. If a pseudo-periodic basis $\{v_1(k), \ldots, v_n(k)\}$ for the range of $P^{-}(k)$ can be found, then applying the relevant symmetry $C$ or $S$ to each vector allows to define a pseudo-periodic basis for the range of $P^{+}(k)$, given by
\[ v_{j}(k) := C v_{2n-j+1}(-k) \quad \text{(respectively } v_{j}(k) := S v_{2n-j+1}(k) \text{)} \quad \forall\: j \in \{n+1, \ldots, 2n\}. \]
By construction, the full collection $\{v_{j}(k)\}_{j \in \{1,\ldots,2n\}}$ constitutes a symmetric pseudo-periodic basis for the pair $P^{\pm}$, in the sense of Definition~\ref{def:Symmetric_basis}.

In view of the results from Section~\ref{sec:Bases_class_A} above, we can conclude that in $d=1$ all pairs of projection-valued maps in classes AIII, C and D admit a symmetric periodic basis, while in $d=2$ the following statement holds.

\begin{proposition}[Class-AIII, class-C and class-D symmetric pseudo-periodic bases, $d=2$] \label{prop:Class_AIIICD_frames}
Given a particle-hole symmetric pair of finite-rank projection-valued maps $P^{\pm}:\T^2 \to \Proj_n(\Hi)$, it is always possible to construct a particle-hole symmetric pseudo-periodic basis in which 
\[ v_1(\pi,k_2) = e^{i \, \Ch(P^-) \, k_2} \, v_1(-\pi,k_2) \quad \text{and} \quad   v_{2n}(\pi,k_2) = e^{-i \, \Ch(P^-) \, k_2} \, v_{2n}(-\pi,k_2)\quad \forall\: k_2 \in \mathbb{T}^{1}. \]

Given a chiral symmetric pair of finite-rank projection-valued maps $P^{\pm}:\T^2 \to \Proj_n(\Hi)$, it is always possible to construct a chiral symmetric pseudo-periodic basis in which 
\[ v_1(\pi,k_2) = e^{i \, \Ch(P^-) \, k_2} \, v_1(-\pi,k_2) \quad \text{and} \quad   v_{2n}(\pi,k_2) = e^{i \, \Ch(P^-) \, k_2} \, v_{2n}(-\pi,k_2)\quad \forall\: k_2 \in \mathbb{T}^{1}. \]

If instead the projection-valued maps have infinite rank, then a symmetric periodic basis always exists.
\end{proposition}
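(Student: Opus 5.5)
The plan is to reduce everything to Proposition~\ref{prop:Class_A_frames} applied to $P^-$ alone, and then transport the basis to $P^+$ with the symmetry operator, as sketched just before the statement. Since $P^-$ carries no symmetry constraint in classes AIII, C and D, Proposition~\ref{prop:Class_A_frames} gives a pseudo-periodic basis $\{v_1(k),\ldots,v_n(k)\}$ for $\operatorname{Range}P^-(k)$. In this basis $v_2,\ldots,v_n$ are periodic on $\mathbb{T}^2$, while $v_1$ is periodic in $k_2$ and satisfies $v_1(\pi,k_2)=e^{i\Ch(P^-)k_2}v_1(-\pi,k_2)$. For infinite rank, the same proposition instead yields a fully periodic basis.

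Next we set $v_{2n-j+1}(k):=Cv_j(-k)$ in the particle-hole case, or $v_{2n-j+1}(k):=Sv_j(k)$ in the chiral case, for $j\in\{1,\ldots,n\}$. Several facts then need checking.
\begin{itemize}
\item \emph{Ranges.} Since $CP^-(-k)=P^+(k)C$ and $SP^-(k)=P^+(k)S$, these vectors lie in $\operatorname{Range}P^+(k)$.
\item \emph{Orthonormality.} They are orthonormal because $C$ is antiunitary and $S$ is unitary. Orthogonality to $\operatorname{Range}P^-(k)$ follows from~\eqref{standing_orthogonality}, and they span $\operatorname{Range}P^+(k)$ by a rank count.
\item \emph{Symmetry relations.} The relation $Cv_j(k)=v_{2n-j+1}(-k)$ holds by definition. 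The chiral relation $Sv_j(k)=v_{2n-j+1}(k)$ also holds by definition, and its counterpart for $j>n$ follows from $S^2=\Id$.
\item \emph{Periodicity in $k_2$.} The vectors $v_{n+1},\ldots,v_{2n}$ are periodic in $k_2$, since the involution $k\mapsto -k$ preserves the identification of opposite faces.
\end{itemize}
This gives a symmetric basis in the sense of Definition~\ref{def:Symmetric_basis}.

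The only computation that requires care is the pseudo-periodic phase of $v_{2n}$, because antilinearity conjugates the phase. In the chiral case,
\[ v_{2n}(\pi,k_2)=Sv_1(\pi,k_2)=e^{i\Ch(P^-)k_2}Sv_1(-\pi,k_2)=e^{i\Ch(P^-)k_2}v_{2n}(-\pi,k_2). \]
In the particle-hole case, $v_{2n}(\pi,k_2)=Cv_1(-\pi,-k_2)$, and $(-\pi,-k_2)\sim(\pi,-k_2)$. Applying the pseudo-periodicity of $v_1$ at $k_2'=-k_2$ in the form $v_1(-\pi,-k_2)=e^{i\Ch(P^-)k_2}v_1(\pi,-k_2)$ and then using antilinearity gives
\[ v_{2n}(\pi,k_2)=e^{-i\Ch(P^-)k_2}Cv_1(\pi,-k_2)=e^{-i\Ch(P^-)k_2}v_{2n}(-\pi,k_2). \]
This matches the statement. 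The point to watch is that the two sign flips, one from $k\mapsto-k$ and one from complex conjugation, combine correctly; one must also keep track that $v_1$ is regarded as a function on $[-\pi,\pi]\times\mathbb{T}^1$ rather than on the torus. The same computation with trivial phases covers the infinite-rank case, and with it both the chiral and particle-hole cases (and hence classes AIII, C and D) are proved.
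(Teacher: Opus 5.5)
Your proof is correct and follows the paper's own argument: apply Proposition~\ref{prop:Class_A_frames} to $P^-$, which carries no symmetry constraint in these classes, and then transport that basis to $P^+$ via $v_{2n-j+1}(k) := C v_j(-k)$ (respectively $S v_j(k)$), exactly as in the paragraph preceding the statement. You also write out the one computation the paper leaves implicit, namely that the inversion $k_2 \mapsto -k_2$ and the antilinearity of $C$ combine to give the phase $e^{-i\Ch(P^-)k_2}$ for $v_{2n}$, and it is done correctly.
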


\bigskip

Invoking Principle~\ref{pri:Periodicity_from_pseudo-periodicity}, the previous results and considerations imply that, in the classes under consideration, two symmetric pairs of finite-rank projection-valued maps $P_0^{\pm}, P_1^{\pm} \colon \mathbb{T}^{d} \to \Proj_n(\mathcal{H})$ in dimension $d=1$ are always Murray--von Neumann equivalent, while in dimension $d=2$ they are equivalent if and only if $\Ch(P_0^{-}) = \Ch(P_1^{-})$. If the rank is infinite, then symmetric periodic bases always exist, and any two such pairs of maps are Murray--von Neumann equivalent, both in $d=1$ and in $d=2$.

\subsection{Classes BDI, CI, CII, and DIII} \label{sec:Bases_class_BDIetal}

In these classes, all three symmetries $T$, $C$ and $S$ are present, and commute or anticommute as prescribed by Table~\ref{tabular:AZC_classes}. In particular, the presence of particle-hole and chiral symmetries imply that the considerations at the start of Section~\ref{sec:Bases_class_AIIICD} apply: given a symmetric pair of projection-valued maps $P^{\pm} \colon \mathbb{T}^{d} \to \Proj_n(\mathcal{H})$, one can recover $P^+$ (and its symmetric pseudo-periodic bases) from $P^-$ (and its symmetric pseudo-periodic bases). As before, we therefore focus on $P^-$: this time, further constraints come from time-reversal symmetry, as the latter projection-valued map is an element of class AI if the original pair is in classes BDI or CI (with an even time-reversal symmetry), respectively an element of class AII if the original pair is in classes CII or DIII (with an odd time-reversal symmetry). In particular, in all classes under consideration, the Chern number of $P^{-}$ vanishes if the latter has finite rank.

Combining the results from the previous Sections~\ref{sec:Bases_class_AIAII} and~\ref{sec:Bases_class_AIIICD}, we conclude that all pairs of projection-valued maps $P^{\pm} \colon \mathbb{T}^{d} \to \Proj_n(\mathcal{H})$ admit a symmetric pseudo-periodic basis if $d=1$, while if $d=2$ then the following result holds.

\begin{proposition}[Class-BDI, class-CI, class-CII and class-DIII symmetric pseudo-periodic bases, $d=2$] \label{prop:Class_BDIetal_frames}
Given a symmetric pair of projection-valued maps $P^{\pm}:\T^2 \to \Proj_n(\Hi)$ in classes BDI or CI, it is always possible to construct a symmetric periodic basis for $P^{\pm}$.

Given a symmetric pair of finite-rank projection-valued maps $P^{\pm}:\T^2 \to \Proj_n(\Hi)$ in classes CII or DIII, it is always possible to construct a symmetric pseudo-periodic basis in which 
\[\begin{matrix} v_1(\pi,k_2) = e^{i \, \ell \, k_2} \, v_1(-\pi,k_2), \quad & v_2(\pi,k_2) = e^{-i \, \ell \, k_2}v_2(-\pi,k_2),  \\ v_{2n}(\pi,k_2) = e^{i \, \ell \, k_2} \, v_{2n}(-\pi,k_2), & \mbox{and} \quad v_{2n-1}(\pi,k_2) = e^{-i \, \ell \, k_2} \, v_{2n-1}(-\pi,k_2) \end{matrix}\quad \forall\: k_2 \in \mathbb{T}^{1},\]
where $\ell \in \mathbb{Z}$ is any integer such that $\ell \equiv \FKM(P^{-}) \bmod 2$.

If instead the projection-valued maps have infinite rank, then a symmetric periodic basis always exists.
\end{proposition}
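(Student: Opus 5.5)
The plan is to reduce to $P^-$ alone, as in Section~\ref{sec:Bases_class_AIIICD}, and then transport its basis to $P^+$ with a symmetry. First I would check that $P^-$ is itself time-reversal symmetric, i.e.\ $TP^-(k)=P^-(-k)T$. This holds by definition for a symmetric pair, since $T$ acts on each member separately. So $P^-$ is a class-AI map in classes BDI and CI ($T^2=\Id$), and a class-AII map in classes CII and DIII ($T^2=-\Id$).

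Proposition~\ref{prop:Class_AI_frames} then gives a $T$-symmetric periodic basis $\{v_1,\dots,v_n\}$ for $P^-$. Proposition~\ref{prop:Class_AII_frames} gives a $T$-symmetric pseudo-periodic basis with $v_1,v_2$ picking up the phases $e^{\pm i\ell k_2}$, where $\ell\equiv\FKM(P^-)\bmod 2$. In the infinite-rank case, both propositions give periodic symmetric bases.

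Next I would complete the basis with the chiral symmetry, setting $v_{2n-j+1}(k):=S v_j(k)$ for $j\le n$. Since $S$ is unitary and $SP^-(k)=P^+(k)S$ (as $S^2=\Id$), these vectors form an orthonormal basis of $\operatorname{Ran}P^+(k)$. They are chiral symmetric by construction, and also $S v_{2n-j+1}=v_j$.

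Because $S$ is $k$-independent and linear, the pseudo-periodicity carries over unchanged: $v_{2n}$ and $v_{2n-1}$ acquire the same phases $e^{i\ell k_2}$ and $e^{-i\ell k_2}$ as $v_1$ and $v_2$. This gives exactly the constraints in the statement, and no phases at all in classes BDI and CI. Particle-hole symmetry is then automatic. With $S=TC$ (after the rephasing discussed in Section~\ref{sec:Symm_proj}), we have $C=T^{-1}S$ up to sign, and $T^{-1}=\epsilon_T T$. Hence
\[ C v_j(k)=\pm T v_{2n-j+1}(k), \]
and the time-reversal covariance of the $P^-$ basis moves this to $-k$.

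The main obstacle is making Definition~\ref{def:Symmetric_basis} hold literally: it requires $Cv_j(k)=v_{2n-j+1}(-k)$ with no sign, and time-reversal symmetry on $P^+$ in the form $Tv_{2n-j+1}(k)=\sum_i\varepsilon_{j,i}v_{2n-i+1}(-k)$. I would track the commutation relations case by case.

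When $\epsilon=1$, $[T,S]=0$. Then
\[ Tv_{2n-j+1}(k)=STv_j(k)=\sum_i\varepsilon_{j,i}\,Sv_i(-k), \]
so the $P^+$ basis is symmetric with the matrix $\varepsilon$ conjugated by the index reversal.

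When $\epsilon=-1$, $\{T,S\}=0$ and an extra minus sign appears. I would absorb it by redefining $v_{2n-j+1}:=iSv_j$ or $-Sv_j$. Multiplication by a constant phase preserves chirality up to that phase, so I would verify that a uniform phase choice can be made compatible with $C$, $S$ and $T$ simultaneously. In particular, the even or odd nature of $T$ must match the form of $\varepsilon$, which is symplectic in classes CII and DIII.

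If a sign conflict remains, a fallback is available. One can reorder each Kramers pair within the $P^+$ block, which is why the statement pairs $v_{2n}$ with $v_1$ and $v_{2n-1}$ with $v_2$. Alternatively, one can use the normal forms $T=\mathcal K$ or $T=\mathcal K J$ to choose $S$ concretely.

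Finally, the matching phases $e^{\pm i\ell k_2}$ and their symmetry under $k_2\mapsto -k_2$ are consistent with the $C$-relation $v_{2n}(k)\sim Cv_1(-k)$. This is because $C$ is antiunitary and the two reflections compensate each other.
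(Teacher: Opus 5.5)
Your main line is the paper's argument. $P^-$ is a class-AI map (BDI, CI) or a class-AII map (CII, DIII), so Propositions~\ref{prop:Class_AI_frames} and~\ref{prop:Class_AII_frames} give a $T$-symmetric (pseudo-)periodic basis for it. Setting $v_{2n-j+1}:=Sv_j$ then produces the $P^+$ block with exactly the stated phases. At that point you are done in the sense the paper actually uses. In the proof of Principle~\ref{pri:Periodicity_from_pseudo-periodicity}, ``symmetric'' only means that each of $T$, $C$, $S$ acts on the basis through a fixed constant unitary matrix. Your computation already gives this: $Cv_j(k)=\pm\sum_i\varepsilon_{j,i}v_{2n-i+1}(-k)$ and $Tv_{2n-j+1}(k)=\pm\sum_i\varepsilon_{j,i}v_{2n-i+1}(-k)$.

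The part you call the main obstacle is not just unnecessary; it cannot be carried out. With $S=TC$, the three literal conditions of Definition~\ref{def:Symmetric_basis} cannot hold simultaneously for \emph{any} basis in classes CI, CII and DIII. The literal $T$- and $S$-conditions already force $Cv_j(k)=T^{-1}Sv_j(k)=\pm\sum_i\varepsilon_{j,i}v_{2n-i+1}(-k)$.
In class CI we have $\varepsilon=\Id$, $C=TS$ and $\{T,S\}=0$, so the sign is $-$. Imposing $Cv_j(k)=v_{2n-j+1}(-k)$ as well would then give $v_{2n-j+1}=0$.
In classes CII and DIII, the literal particle-hole condition would require $\varepsilon=\pm\Id$, which is not antisymmetric.
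So no constant rephasing ($iSv_j$, $-Sv_j$, or phases on the $P^-$ block) can rescue it. Nor can a choice of normal forms, since the obstruction is pure, basis-independent operator algebra.

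Your fallback and your closing consistency check also contain a concrete error. An antiunitary symmetry composed with $k\mapsto -k$ \emph{reverses} the pseudo-periodicity phase. If $v_1(\pi,k_2)=e^{i\ell k_2}v_1(-\pi,k_2)$, then $w(k):=Cv_1(-k)$ satisfies $w(\pi,k_2)=e^{-i\ell k_2}w(-\pi,k_2)$. This is why particle-hole partners carry opposite phases in Proposition~\ref{prop:Class_AIIICD_frames}, and why $v_2\propto Tv_1(-\cdot)$ carries $e^{-i\ell k_2}$. Hence $Cv_1(-\cdot)$ is proportional to $v_{2n-1}$, not to $v_{2n}$, in agreement with $Cv_1(k)=\pm v_{2n-1}(-k)$ from the computation above.

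The phase pattern in the statement, with $v_{2n}$ sharing the phase of $v_1$, is precisely the chiral pairing $v_{2n}=Sv_1$. Reordering the Kramers pairs inside the $P^+$ block would break $Sv_j=v_{2n-j+1}$. It would also swap the phases of $v_{2n-1}$ and $v_{2n}$, contradicting the statement whenever $\ell\neq 0$. The fix is to stop after the $S$-transport and read ``symmetric'' as in Principle~\ref{pri:Periodicity_from_pseudo-periodicity}, i.e.\ each symmetry acts by a fixed constant unitary reshuffling matrix.
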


\bigskip

Invoking Principle~\ref{pri:Periodicity_from_pseudo-periodicity}, the previous results and considerations imply that any two symmetric pairs of projection-valued maps $P_0^{\pm}, P_1^{\pm} \colon \mathbb{T}^{d} \to \Proj_n(\mathcal{H})$ are Murray--von Neumann equivalent, both in $d=1$ and in $d=2$, if they belong to classes BDI or CI. Instead, in classes CII or DIII, two such pairs are always Murray--von Neumann equivalent in $d=1$, while in dimension $d=2$ they are equivalent:
\begin{itemize}
    \item always, if their rank is infinite;
    \item if and only if $\FKM(P_0^{-}) = \FKM(P_1^{-})$, if their rank is finite.
\end{itemize}

\subsection{Covariant maps} \label{sec:Covariant_cases}

The previous arguments allow to discuss the construction of symmetric pseudo-periodic bases and Murray--von Neumann equivalence also in the context of particle-hole and/or chiral covariant projection-valued maps, introduced in Definition~\ref{def:Symmetries}. This discussion will be instrumental to the study of unitary equivalences in the next Section~\ref{sec:unitary_equivalences}.

We first look at particle-hole covariant projection-valued maps. In view of Remark~\ref{rmk:Covariance}, the conclusions reached in Section~\ref{sec:Bases_class_AIAII} apply also to this situation, as the particle-hole covariance constraint is formally equivalent to that of time-reversal symmetry. We obtain the following statement.

\begin{proposition}[Particle-hole covariant projection-valued maps] \label{prop:PHCovariant}
Let $P \colon \mathbb{T}^{1} \to \Proj_n(\mathcal{H})$ be a $1$-dimensional particle-hole covariant projection-valued map. Then, it admits a (time-reversal) symmetric periodic basis, in the sense of Definition~\ref{def:Symmetric_basis}. Moreover, any two such maps are Murray--von Neumann equivalent.

Let now $P \colon \mathbb{T}^{2} \to \Proj_n(\mathcal{H})$ be a $2$-dimensional particle-hole covariant projection-valued map. 
\begin{itemize}
    \item If the particle-hole symmetry is even, then $P$ admits a (time-reversal) symmetric periodic basis. Moreover, any two such maps are Murray--von Neumann equivalent.
    \item If the particle-hole symmetry is odd, then $P$ admits a (time-reversal) symmetric (pseudo-)periodic basis as in Proposition~\ref{prop:Class_AII_frames}. Moreover, if they have finite rank, two such maps are Murray--von Neumann equivalent if and only if their Fu--Kane--Mele indices coincide; if they have infinite rank, any two such maps are Murray--von Neumann equivalent.
\end{itemize}    
\end{proposition}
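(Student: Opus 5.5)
The plan is to reduce everything to the time-reversal symmetric results of Section~\ref{sec:Bases_class_AIAII}, using Remark~\ref{rmk:Covariance}. A particle-hole covariant map satisfies $C P(k) = P(-k) C$, where $C$ is antiunitary with $C^2 = \pm \Id_{\mathcal{H}}$. This is literally the defining relation of a time-reversal symmetric map with $T := C$. So $P$ may be regarded as a class-AI map if $C^2 = \Id_{\mathcal{H}}$, and as a class-AII map if $C^2 = -\Id_{\mathcal{H}}$. No covariance with respect to the complement, and no pairing, enters the statement. Hence a ``(time-reversal) symmetric basis'' for $P$ simply means a basis satisfying Definition~\ref{def:Symmetric_basis} with $T$ replaced by $C$.

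In $d=1$ I will invoke the one-dimensional constructions recalled in Section~\ref{sec:Bases_class_AIAII}, which cover both parities of the antiunitary symmetry and all ranks. These are the modified Kato--Nagy construction \eqref{eq:modified_parallel}, with the extending unitary chosen compatible with $C$ in the sense of \eqref{MvNsymmetry}, and the corresponding result \cite[Theorem~3.2]{manzonimonacopeluso2026zak}. Both give a $C$-symmetric periodic basis. Given two such maps of equal rank $n$, their symmetric periodic bases share the trivial matching matrix $\alpha \equiv \Id$. In the odd case the reshuffling matrix $\varepsilon$ can be taken to be the same standard symplectic matrix for both. Principle~\ref{pri:Periodicity_from_pseudo-periodicity} then yields a periodic partial isometry $V$ with $C V(k) = V(-k) C$, i.e.\ a symmetric Murray--von Neumann equivalence.

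In $d=2$, the even case follows from Proposition~\ref{prop:Class_AI_frames} (with $T=C$), which gives a symmetric periodic basis in every rank; Principle~\ref{pri:Periodicity_from_pseudo-periodicity} again gives the equivalence of any two such maps. In the odd case, Proposition~\ref{prop:Class_AII_frames} gives a symmetric pseudo-periodic basis with phases $e^{\pm i\ell k_2}$ on $v_1, v_2$, where $\ell \equiv \FKM(P) \bmod 2$, and a symmetric periodic basis if the rank is infinite. The ``if'' direction for finite rank goes as follows. If $\FKM(P_0)=\FKM(P_1)$, choose the \emph{same} integer $\ell$ for both maps. Then both bases satisfy the same constraint \eqref{matching}, and Principle~\ref{pri:Periodicity_from_pseudo-periodicity} applies. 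For infinite rank, both bases are periodic, so the equivalence is immediate.

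The main obstacle is the ``only if'' direction in the odd finite-rank case. The converse part of Principle~\ref{pri:Periodicity_from_pseudo-periodicity} lets us push the pseudo-periodic basis of $P_0$ through a symmetric intertwiner $V$. This produces a symmetric pseudo-periodic basis of $P_1$ with the same matching matrix, hence the same $\ell$. I would then appeal to Remark~\ref{rmk:FKM}: the invariant does not depend on the chosen basis, and the Teo--Kane invariant of $J\alpha$ with $\alpha = \operatorname{diag}(e^{i\ell k_2}, e^{-i\ell k_2}, 1, \ldots)$ equals $(-1)^{\ell}$ (under the identification $\{1,-1\} \simeq \Z_2$). Together these give $\FKM(P_1)=\FKM(P_0)$. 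The care needed here is to check that the pushed-forward basis really is of the admissible form used in Definition~\ref{def:FMK}, namely that it comes from a periodic basis on $\{k_1=0\}$ extended by a symmetric unitary. If this proves delicate, the fallback is the invariance of $\FKM$ under Murray--von Neumann equivalence, which Remark~\ref{rmk:FKM} already asserts, applied verbatim with $T=C$.
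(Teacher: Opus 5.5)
Your proposal is correct and is essentially the paper's argument. The paper obtains this Proposition in one line from Remark~\ref{rmk:Covariance}: particle-hole covariance is formally time-reversal symmetry with $T:=C$, so Propositions~\ref{prop:Class_AI_frames} and~\ref{prop:Class_AII_frames}, Principle~\ref{pri:Periodicity_from_pseudo-periodicity}, and the invariance of $\FKM$ under Murray--von Neumann equivalence (Remark~\ref{rmk:FKM}) carry over verbatim. Your explicit check that $\TK$ of $J\operatorname{diag}(e^{i\ell k_2},e^{-i\ell k_2},1,\ldots,1)$ equals $(-1)^{\ell}$ is correct but not needed, since the paper relies directly on what you call your fallback.
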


Coming to chiral covariant maps $P \colon \mathbb{T}^{d} \to \Proj_n(\mathcal{H})$, the covariant constraint $S P(k) = P(k) S$, $k \in \mathbb{T}^{d}$, implies that $P$ can be decomposed onto the eigenspaces of the symmetry operator $S$ as $P(k) = P^{\uparrow}(k) + P^{\downarrow}(k)$, where
\begin{equation}\label{eq:decomposition_chiral_covariant}
\begin{aligned}
P^\uparrow(k) & := \frac{\Id_{\mathcal{H}} + S}{2} P(k) = \frac{\Id_{\mathcal{H}} + S}{2} P(k) \frac{\Id_{\mathcal{H}} + S}{2}, \\
P^\downarrow(k) & :=\frac{\Id_{\mathcal{H}} - S }{2} P(k) = \frac{\Id_{\mathcal{H}} - S}{2} P(k) \frac{\Id_{\mathcal{H}} - S}{2}.
\end{aligned}
\end{equation}
Indeed, $(\Id_{\mathcal{H}} + S)/2$ (respectively $(\Id_{\mathcal{H}} - S)/2$)  is the spectral eigenprojection of $S$ onto $\mathcal{H}^{\uparrow} := \ker(S - \Id_{\mathcal{H}})$ (respectively $\mathcal{H}^{\downarrow} := \ker(S + \Id_{\mathcal{H}})$); note that these eigenspaces need not have the same dimension. Since the two eigenprojections $(\Id_{\mathcal{H}} \pm S)/2$ correspond to distinct eigenvalues of the unitary operator $S$, the projections $P^{\uparrow}$ and $P^{\downarrow}$ are orthogonal to one another. This means that the datum of $P$ is equivalent to the datum of two unrelated projection-valued maps $P^\uparrow \colon \T^d \to \Proj_{n^\uparrow}(\mathcal{H}^{\uparrow})$ and $P^\downarrow \colon \T^d \to \Proj_{n^\downarrow}(\mathcal{H}^{\downarrow})$, with $n^\uparrow+n^\downarrow=n$: in absence of other symmetries, they can be treated as two elements of class A. In this case, a notion of ``symmetric basis'' for the chiral covariant basis could be formulated as a basis which respects the decomposition~\eqref{eq:decomposition_chiral_covariant}, i.e., as a juxtaposition of a basis for $P^{\uparrow}$ (on which $S$ acts as $\Id_{\mathcal{H}}$) and one for $P^{\downarrow}$ (on which $S$ acts as $-\Id_{\mathcal{H}}$). This leads to the following

\begin{proposition}[Chiral covariant projection-valued maps] \label{prop:SCovariant}
Let $P \colon \mathbb{T}^{d} \to \Proj_n(\mathcal{H})$, $d \le 2$, be a chiral covariant projection-valued map. Then, it admits a symmetric (pseudo-)periodic basis if and only if $P^{\uparrow} \colon \mathbb{T}^{d} \to \Proj_{n^{\uparrow}}(\mathcal{H}^{\uparrow})$ and $P^{\downarrow} \colon \mathbb{T}^{d} \to \Proj_{n^{\downarrow}}(\mathcal{H}^{\downarrow})$, defined as in~\eqref{eq:decomposition_chiral_covariant}, admit (pseudo-)periodic bases, as described in Section~\ref{sec:Bases_class_A}. Moreover, two such maps $P_0, P_1$ are Murray--von Neumann equivalent if and only if $P_0^{\uparrow}$ is Murray--von Neumann equivalent to $P_1^{\uparrow}$ and $P_0^{\downarrow}$ is Murray--von Neumann equivalent to $P_1^{\downarrow}$.
\end{proposition}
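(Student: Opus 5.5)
The plan is to reduce everything to the eigenspace decomposition~\eqref{eq:decomposition_chiral_covariant} and to the class-A results of Section~\ref{sec:Bases_class_A}, combined with Principle~\ref{pri:Periodicity_from_pseudo-periodicity}. First I will record that, since $S$ is a $k$-independent unitary with $S^2=\Id_{\mathcal{H}}$ and $[S,P(k)]=0$, the maps $P^{\uparrow}(k)=\tfrac{1}{2}(\Id_{\mathcal{H}}+S)P(k)$ and $P^{\downarrow}(k)=\tfrac{1}{2}(\Id_{\mathcal{H}}-S)P(k)$ are continuous, mutually orthogonal projection-valued maps with ranges in $\mathcal{H}^{\uparrow}$ and $\mathcal{H}^{\downarrow}$ respectively, summing to $P(k)$. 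Their ranks $n^{\uparrow},n^{\downarrow}$ are constant in $k$ by continuity, and $\operatorname{Range}P(k)=\operatorname{Range}P^{\uparrow}(k)\oplus\operatorname{Range}P^{\downarrow}(k)$.

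For the first claim, ``if'': given (pseudo-)periodic bases $\{v^{\uparrow}_j\}$ and $\{v^{\downarrow}_j\}$ of $P^{\uparrow}$ and $P^{\downarrow}$ (obtained from Proposition~\ref{prop:Class_A_frames}, or the $d\le1$ constructions, applied in the ambient spaces $\mathcal{H}^{\uparrow}$, $\mathcal{H}^{\downarrow}$), their juxtaposition is an orthonormal basis of the range of $P(k)$ on which $S$ acts diagonally as $\pm\Id$. This is exactly the notion of symmetric basis adopted for chiral covariant maps. The matching matrix is then the block-diagonal sum of the two matching matrices, hence of the pseudo-periodic form~\eqref{pseudoper} with at most one non-periodic vector in each block. ``Only if'': by definition, a symmetric basis respects the decomposition, so its $S=+1$ and $S=-1$ vectors form (pseudo-)periodic bases for $P^{\uparrow}$ and $P^{\downarrow}$ separately. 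The one point requiring care is that the matching matrix of a symmetric basis must commute with $S$, and is therefore block diagonal. This holds because both $v(k_+)$ and $v(k_-)$ lie in fixed eigenspaces of $S$, and the matching relation~\eqref{matching} can only mix vectors with the same $S$-eigenvalue.

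For Murray--von Neumann equivalence, I use the chiral constraint $SV(k)=V(k)S$ from~\eqref{MvNsymmetry}. If $V$ is a symmetric partial-isometry intertwiner from $P_0$ to $P_1$, then $V^{\uparrow/\downarrow}:=\tfrac{1}{2}(\Id_{\mathcal{H}}\pm S)V$ commute with the eigenprojections of $S$ and restrict to $\mathcal{H}^{\uparrow/\downarrow}$. A direct computation gives $(V^{\uparrow})^*V^{\uparrow}=\tfrac{1}{2}(\Id_{\mathcal{H}}+S)V^*V=P_0^{\uparrow}$ and $V^{\uparrow}(V^{\uparrow})^*=P_1^{\uparrow}$, and likewise for $\downarrow$. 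Conversely, given class-A intertwiners $V^{\uparrow}$ on $\mathcal{H}^{\uparrow}$ and $V^{\downarrow}$ on $\mathcal{H}^{\downarrow}$, extended by zero, the sum $V:=V^{\uparrow}\oplus V^{\downarrow}$ is a continuous periodic partial isometry commuting with $S$. It satisfies $V^*V=P_0$ and $VV^*=P_1$, since the cross terms vanish by orthogonality of $\mathcal{H}^{\uparrow}$ and $\mathcal{H}^{\downarrow}$.

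The main, though mild, obstacle is making the meaning of ``symmetric basis'' and its pseudo-periodicity precise, in particular ensuring that the matching matrix is block diagonal so that the counting of non-periodic vectors matches the class-A statements. Once that is fixed, everything else is block-diagonal bookkeeping. As a consistency check, combined with Section~\ref{sec:Bases_class_A} this yields that MvN equivalence holds if and only if $n_0^{\uparrow}=n_1^{\uparrow}$, $n_0^{\downarrow}=n_1^{\downarrow}$ and, for $d=2$ with finite ranks, $\Ch(P_0^{\uparrow/\downarrow})=\Ch(P_1^{\uparrow/\downarrow})$.
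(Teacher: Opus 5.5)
Your proposal is correct and follows essentially the same route as the paper, which likewise splits $P=P^{\uparrow}+P^{\downarrow}$ along the eigenspaces of $S$ as in \eqref{eq:decomposition_chiral_covariant}, takes a symmetric basis to be a juxtaposition of class-A bases for the two blocks, and treats $P^{\uparrow}$ and $P^{\downarrow}$ as unrelated class-A maps. Your two explicit checks, that the matching matrix is block diagonal and that an $S$-commuting intertwiner splits as $V^{\uparrow}\oplus V^{\downarrow}$, spell out what the paper leaves implicit in that splitting (it would obtain the Murray--von Neumann part via Principle~\ref{pri:Periodicity_from_pseudo-periodicity}).
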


In the statement, we decided not to spell out the conditions for $P^{\uparrow}$ and $P^{\downarrow}$ to admit (pseudo-)periodic bases, as they depend on the rank of the two projections and on the dimensionality of the eigenspaces of $S$. The reader may easily fill in these details and work out in particular the possible topological obstructions arising in $d=2$.

\bigskip

Finally, the situation becomes more involved if the projection-valued map under scrutiny is both particle-hole covariant and chiral covariant. As discussed in Section~\ref{sec:Symm_proj}, the presence of symmetry operators of the two kinds implies also the presence of a symmetry operator of the third kind, namely time-reversal symmetry. The three symmetry operators can commute or anticommute, depending on the sign of their squares, as prescribed in Table~\ref{tabular:AZC_classes}. In what follows, we focus on the chiral covariance and time-reversal symmetry constraint imposed on a projection-valued map $P \colon \mathbb{T}^{d} \to \Proj_n(\mathcal{H})$.

If the time-reversal symmetry operator $T$ commutes with the chiral symmetry operator $S$, then $T$ acts diagonally with respect to the Hilbert decomposition $\Hi = \mathcal{H}^{\uparrow} \oplus \mathcal{H}^{\downarrow}$ into eigenspace of $S$ presented above. This easily implies that the time-reversal symmetry constraints descends to the two components $P^{\uparrow}$ and $P^{\downarrow}$ in~\eqref{eq:decomposition_chiral_covariant}, which are then two time-reversal symmetric projection-valued maps, i.e., elements of class AI or AII according to the sign of $T^2$. This leads to the following result: once again, the careful reader can spell out the possible topological obstructions arising in the construction of symmetric periodic bases or Murray--von Neumann equivalences.

\begin{proposition}[Particle-hole and chiral covariant projection-valued maps, commuting symmetries] \label{prop:PHSCovariant_commute}
Let $P \colon \mathbb{T}^{d} \to \Proj_n(\mathcal{H})$, $d \le 2$, be a particle-hole and chiral covariant projection-valued map. Assume the particle-hole and chiral symmetry operators commute. Then, $P$ admits a symmetric (pseudo-)periodic basis if and only if $P^{\uparrow} \colon \mathbb{T}^{d} \to \Proj_{n^{\uparrow}}(\mathcal{H}^{\uparrow})$ and $P^{\downarrow} \colon \mathbb{T}^{d} \to \Proj_{n^{\downarrow}}(\mathcal{H}^{\downarrow})$, defined as in~\eqref{eq:decomposition_chiral_covariant}, admit time-reversal symmetric (pseudo-)periodic bases, as described in Section~\ref{sec:Bases_class_AIAII}. Moreover, two such maps $P_0, P_1$ are Murray--von Neumann equivalent if and only if $P_0^{\uparrow}$ is Murray--von Neumann equivalent to $P_1^{\uparrow}$ and $P_0^{\downarrow}$ is Murray--von Neumann equivalent to $P_1^{\downarrow}$.
\end{proposition}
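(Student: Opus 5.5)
The plan is to reduce everything to the block decomposition $\mathcal{H} = \mathcal{H}^{\uparrow} \oplus \mathcal{H}^{\downarrow}$ into eigenspaces of $S$, and then apply the class AI/AII results of Section~\ref{sec:Bases_class_AIAII} to each block separately.

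First I fix the algebra. With $S = TC$, Table~\ref{tabular:AZC_classes} shows that $[C,S]=0$ forces $\epsilon=1$, hence $[T,S]=0$ as well. Then $T$ commutes with the spectral projections $(\Id_{\mathcal{H}} \pm S)/2$, so $T\mathcal{H}^{\uparrow} \subseteq \mathcal{H}^{\uparrow}$ and $T\mathcal{H}^{\downarrow} \subseteq \mathcal{H}^{\downarrow}$. The restrictions $T^{\uparrow}$ and $T^{\downarrow}$ are antiunitary operators with the same square as $T$.

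Next I check that $T$ is a symmetry for the covariant map. Since $P$ is particle-hole covariant, $CP(k) = P(-k)C$, and since it is chiral covariant, $SP(k) = P(k)S$. Combining these gives $TP(k) = TCS\,\cdot$, more precisely $T = SC$ up to the convention $S^2 = \Id$, so
\[ TP(k) = S C P(k) = S P(-k) C = P(-k) S C = P(-k) T. \]
Composing with $(\Id_{\mathcal{H}} \pm S)/2$, which commutes with both $T$ and $P(k)$, yields
\[ T^{\uparrow} P^{\uparrow}(k) = P^{\uparrow}(-k)\, T^{\uparrow}, \qquad T^{\downarrow} P^{\downarrow}(k) = P^{\downarrow}(-k)\, T^{\downarrow}. \]
So $P^{\uparrow}$ and $P^{\downarrow}$ are time-reversal symmetric maps on $\mathcal{H}^{\uparrow}$ and $\mathcal{H}^{\downarrow}$, of class AI or AII according to the sign of $T^2$. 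Conversely, $P = P^{\uparrow} + P^{\downarrow}$ recovers the full constraints: chiral covariance holds by block-diagonality, and particle-hole covariance follows from $C = TS$ up to sign.

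\textbf{Bases.} I would take a symmetric basis for $P$ to mean a basis compatible with the splitting (each vector lying in $\mathcal{H}^{\uparrow}$ or $\mathcal{H}^{\downarrow}$, as in Proposition~\ref{prop:SCovariant}) and time-reversal symmetric in the sense of Definition~\ref{def:Symmetric_basis}. Particle-hole compatibility is then automatic, since $C$ acts as $\pm T$ on each block.

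For the "if" direction, juxtapose time-reversal symmetric (pseudo-)periodic bases of $P^{\uparrow}$ and $P^{\downarrow}$. When $T^2 = -\Id$, the antisymmetric matrix $\varepsilon$ is block-diagonal, and the pseudo-periodicity constraints simply concatenate.

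For the "only if" direction, project each vector with $(\Id_{\mathcal{H}} \pm S)/2$. Since $T$ commutes with these projections, the resulting pieces are again symmetric bases of the blocks.

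\textbf{Murray--von Neumann equivalence.} Suppose $V$ is a partial isometry satisfying~\eqref{MvNsymmetry}. Because $[V,S]=0$, $V$ is block-diagonal, $V = V^{\uparrow} \oplus V^{\downarrow}$. Each block is a symmetric partial isometry intertwining $P_0^{\uparrow}$ with $P_1^{\uparrow}$ and $P_0^{\downarrow}$ with $P_1^{\downarrow}$, respectively, and commuting with $T$ up to $k \mapsto -k$. Conversely, given block intertwiners, their direct sum commutes with $S$ and with $T$, hence also with $C = TS$. Alternatively, the whole statement follows from Principle~\ref{pri:Periodicity_from_pseudo-periodicity} applied blockwise.

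\textbf{Main obstacle.} The delicate point is bookkeeping rather than topology. One must verify that the operator $T$ obtained from $C$ and $S$ is the correct antiunitary, in particular handling the possible replacement $T \mapsto iT$ made in Section~\ref{sec:Symm_proj}. One must also check that $T^2$ restricted to each block keeps its sign, so that the blockwise class (AI versus AII) is well defined. Finally, when $T^2 = -\Id$, each nonzero block rank must be even, which is automatic because of Kramers pairs within each block.
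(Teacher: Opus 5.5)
Your proposal is correct and follows essentially the same route as the paper. Since $C$ and $S$ commute, the induced time-reversal operator $T=\pm SC$ commutes with $S$ and acts block-diagonally on $\mathcal{H}^{\uparrow}\oplus\mathcal{H}^{\downarrow}$. Hence $P^{\uparrow}$ and $P^{\downarrow}$ are separately time-reversal symmetric (class AI or AII according to the sign of $T^2$), and the results of Section~\ref{sec:Bases_class_AIAII} apply blockwise. You supply more detail than the paper's brief argument, namely the block-diagonality of $\varepsilon$, of the matching matrix and of the symmetric intertwiner $V$, and all of it checks out.
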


If instead the time-reversal symmetry operator anticommutes with the chiral symmetry operator, we observe first of all that the presence of (anti)unitary operators anticommuting with $S$ imply that its eigenspaces are (anti)unitarily related, and therefore they have the same dimension and can be identified as Hilbert spaces (compare the proof of Principle~\ref{pri:chiral_classes_min_dim}). This time $T$ is off-diagonal in the Hilbert space decomposition $\Hi = \mathcal{H}^{\uparrow} \oplus \mathcal{H}^{\downarrow}$: this implies that 
\[ T P^\uparrow(k) = P^\downarrow(-k)T \quad \forall\: k \in \mathbb{T}^{d}, \]
that is, that $P^{\uparrow}$ and $P^{\downarrow}$ are also related by a antiunitary symmetry, and in particular they must have the same rank. More importantly, the above identity exhibits $\{P^\uparrow, P^\downarrow\}$ as a \emph{particle-hole symmetric} pair of projection-valued maps belonging to class C or D, depending on the sign of $T^2$: the time-reversal symmetry operator plays the role, from a mathematical viewpoint, of a particle-hole symmetry operator. The conclusion is the following.

\begin{proposition}[Particle-hole and chiral covariant projection-valued maps, anticommuting symmetries] \label{prop:PHSCovariant_anticommute}
Let $P \colon \mathbb{T}^{d} \to \Proj_n(\mathcal{H})$, $d \le 2$, be a particle-hole and chiral covariant projection-valued map. Assume the particle-hole and chiral symmetry operators anticommute. Then, $P$ admits a symmetric (pseudo-)periodic basis if and only if the pair of projection-valued maps $\{P^{\uparrow}, P^{\downarrow} \}$, defined as in~\eqref{eq:decomposition_chiral_covariant}, admit (pseudo-)periodic bases which are particle-hole symmetric with respect to $T$, as described in Section~\ref{sec:Bases_class_AIIICD}. Moreover, two such maps $P_0, P_1$ are Murray--von Neumann equivalent if and only if the pair $\{P_0^{\uparrow}, P_0^{\downarrow}\}$ is Murray--von Neumann equivalent to the pair $\{P_1^{\uparrow},P_1^{\downarrow}\}$.
\end{proposition}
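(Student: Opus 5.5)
The plan is to reduce the statement to Proposition~\ref{prop:Class_AIIICD_frames} (classes C and D), with $T$ playing the role of the particle-hole operator for the pair $\{P^{\uparrow},P^{\downarrow}\}$. I would first make the notion of ``symmetric basis'' for $P$ precise in this setting. It is a basis for the range of $P(k)$ that respects the chiral decomposition~\eqref{eq:decomposition_chiral_covariant}: its first $n^{\uparrow}$ vectors lie in $\mathcal{H}^{\uparrow}$ and its last $n^{\downarrow}$ vectors lie in $\mathcal{H}^{\downarrow}$, so that $S$ acts diagonally on it. In addition, $T$ must act on it as in Definition~\ref{def:Symmetric_basis}. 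Since $T$ is off-diagonal in $\mathcal{H}^{\uparrow}\oplus\mathcal{H}^{\downarrow}$, the condition $Tv_j(k)=\sum_i\varepsilon_{j,i}v_i(-k)$ forces $\varepsilon$ to map the $\uparrow$-block to the $\downarrow$-block. After reordering the $\downarrow$-vectors, this is exactly the particle-hole condition $Tv_j(k)=v_{2m-j+1}(-k)$ for the pair $\{P^{\uparrow},P^{\downarrow}\}$, where $m=n^{\uparrow}=n^{\downarrow}$ (equal ranks were shown just before the statement).

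With this dictionary, the first claim is essentially a tautology. Suppose $P$ admits a symmetric (pseudo-)periodic basis. Projecting it with $(\Id_{\mathcal{H}}\pm S)/2$ (it is already split) gives bases for $P^{\uparrow}$ and $P^{\downarrow}$ that are particle-hole symmetric with respect to $T$ and have the same matching matrices, now block-diagonal. Conversely, suppose the pair admits such bases. Their juxtaposition spans $\operatorname{Ran}P^{\uparrow}(k)\oplus\operatorname{Ran}P^{\downarrow}(k)=\operatorname{Ran}P(k)$, is orthonormal because $\mathcal{H}^{\uparrow}\perp\mathcal{H}^{\downarrow}$, and is compatible with both $S$ and $T$. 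The only care needed is bookkeeping: any sign change when replacing $T$ by $iT$, and the orientation of the pseudo-periodic phases given in Proposition~\ref{prop:Class_AIIICD_frames}.

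For Murray--von Neumann equivalence, I would argue directly on intertwiners rather than through bases. Let $V$ be a partial-isometry map with $V^*V=P_0$, $VV^*=P_1$, $SV=VS$ and $TV(k)=V(-k)T$. Commutation with $S$ makes $V$ block-diagonal, $V=V^{\uparrow}\oplus V^{\downarrow}$, with $V^{\uparrow *}V^{\uparrow}=P_0^{\uparrow}$ and similarly for the other blocks. Writing $T$ in off-diagonal form shows that the condition $TV(k)=V(-k)T$ becomes $TV^{\uparrow}(k)=V^{\downarrow}(-k)T$. This is precisely the particle-hole constraint~\eqref{MvNsymmetry} for the pair of partial isometries $V^{\pm}$ intertwining $\{P_0^{\uparrow},P_0^{\downarrow}\}$ and $\{P_1^{\uparrow},P_1^{\downarrow}\}$. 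Conversely, $V:=V^{\uparrow}+V^{\downarrow}$ reverses each step. Alternatively, one could combine the first part with Principle~\ref{pri:Periodicity_from_pseudo-periodicity}.

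The main obstacle I expect is the first step: checking that every $T$-symmetric basis of $P$, with a general unitary (or antisymmetric) $\varepsilon$ as in Definition~\ref{def:Symmetric_basis}, can be brought to the particle-hole normal form $v_j\mapsto v_{2m-j+1}$ by a constant unitary change of basis within each chiral block. I would attempt this by redefining the $\downarrow$-block as $w_j(k):=Tv_j^{\uparrow}(-k)$. These vectors stay in $\mathcal{H}^{\downarrow}$ and preserve pseudo-periodicity, with conjugated phases. When $T^2=-\Id_{\mathcal{H}}$, the relation $T w_j(k)=-v_j^{\uparrow}(k)$ leaves a sign, which I would absorb by the $T\to iT$ adjustment or into the convention for class C.
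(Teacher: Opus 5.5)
Your proposal is correct and follows the paper's own reasoning. Anticommutation makes $T$ off-diagonal in $\mathcal{H}^{\uparrow}\oplus\mathcal{H}^{\downarrow}$, so $TP^{\uparrow}(k)=P^{\downarrow}(-k)T$ and $\{P^{\uparrow},P^{\downarrow}\}$ is a class C/D pair with $T$ playing the particle-hole operator. The paper simply states the conclusion, while you make explicit the basis normalization $w_j(k):=Tv_j^{\uparrow}(-k)$ (a constant unitary recombination of the $\downarrow$-block) and the block-diagonal form of the intertwiners.

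One small correction. The leftover sign in $Tw_j(k)=T^2v_j^{\uparrow}(k)$ needs no absorbing, because Definition~\ref{def:Symmetric_basis} only constrains the first half of a particle-hole symmetric basis. Moreover, the replacement $T\to iT$ could not remove that sign anyway, since $(iT)^2=T^2$.
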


\subsection{Parseval frames} \label{sec:Parseval}

The previous analysis has identified symmetric pseudo-periodic bases for (pairs of) projection-valued maps on the torus $\mathbb{T}^{d}$, $d \le 2$, in all AZC symmetry classes. We have shown how topological obstructions to the existence of periodic bases may arise in $d=2$, are quantified by Chern numbers of Fu--Kane--Mele invariants, and appear explicitly in the pseudo-periodicity constraints that one may enforce on one, one pair, or two pairs of vectors in the basis. We prove now that the topological obstructions can be avoided provided one relaxes the orthonormality constraint of the vectors of the basis into the condition of forming a \emph{Parseval frame} for the range of the corresponding projections~\cite{christensen2003introduction, freeman2014parseval}.

To this end, consider a pseudo-periodic vector $v(k)$ in the basis: its pseudo-periodicity condition reads
\[ v(\pi, k_2) = e^{i m k_2} v(-\pi,k_2) \quad \forall\: k_2 \in \mathbb{T}^{1} \]
with $m \in \mathbb{Z}$, cf.~\eqref{pseudoper} and the statements in the previous Sections. Define the functions $f, g \colon [-\pi,\pi]^2 \to \mathbb{C}$ as follows:
\[ f(k_1,k_2) := \frac{1-\sin(k_1/2)}{2} + \frac{1+\sin(k_1/2)}{2} \, e^{-i m k_2}, \quad g(k_1,k_2) := f(k_1, k_2) + \cos(k_1/2). \]
It is immediate to check the following properties:
\begin{itemize}
    \item both $f$ and $g$ are continuous with respect to $(k_1,k_2) \in [-\pi,\pi]^2$ and $2\pi$-periodic in $k_2$;
    \item $f(-\pi,k_2) = 1 = g(-\pi,k_2)$ and $f(\pi,k_2) = e^{-i m k_2} = g(\pi,k_2)$ for all $k_2 \in [-\pi,\pi]$, since $\cos(\pm \pi/2) = 0$;
    \item for the same reason, $f$ and $g$ cannot vanish simultaneously on $[-\pi,\pi]^2$: if they were both zero for some $(k_1,k_2) \in [-\pi,\pi]^2$, then $\cos(k_1/2) = 0$ or $k_1 = \pm \pi$, but there $|f(\pm \pi, k_2)|=1$, a contradiction.
\end{itemize}
Define now the continuous $\mathcal{H}$-valued functions
\[ u(k):= f(k) \, v(k) \quad \text{and} \quad w(k) := g(k) \, v(k) \quad \forall\: k \in [-\pi,\pi]^2. \]
The collection $\{u(k), w(k)\}$ spans the same ray as $v(k)$, as $f$ and $g$ are never simultaneously zero; moreover, they are both periodic in both directions, since
\[ u(\pi,k_2) = e^{-im k_2} \, v(\pi,k_2) = v(-\pi,k_2) = u(-\pi,k_2) \] 
and a similar consideration applies to $w(k)$. This means that every pseudo-periodic vector in a symmetric basis can be replaced by a pair of \emph{periodic} vectors, which together with the rest of the vectors in the basis form a Parseval frame. This procedure realizes a symmetric periodic Parseval frame for (pairs of) projection-valued maps over the $d$-dimensional torus with $d \le 2$, which moreover has the minimal possible number of redundant vectors, generalizing the results of~\cite{auckley2018parseval,cornean2019parseval, monaco2023topology, Peluso2026} to all AZC classes in this low-dimensional setting.

\section{Unitary equivalence}\label{sec:unitary_equivalences}

We now come to Question 2 raised in Section~\ref{pro:unitary_equivalences}, namely that of unitary equivalence between (pairs of) projection-valued maps. As discussed there, in order to construct unitary equivalences, it suffices to combine two Murray--von Neumann equivalences, one for the projection-valued maps and a second one for the projection on the orthogonal complement of the ranges of the projections. More precisely, we set the following

\begin{definition}[Complementary projection-valued map] \label{def:Complementary}
Let $P \colon \mathbb{T}^{d} \to \Proj_n(\mathcal{H})$ be a single projection-valued map. The \emph{complementary projection-valued map} is defined as $Q := P^\perp \colon \mathbb{T}^{d} \to \Proj_{\dim(\mathcal{H})-n}(\mathcal{H})$, i.e.
\[ Q(k) := \Id_{\mathcal{H}} - P(k) \quad  \forall\:k \in \mathbb{T}^{d}. \]

Let $P^{\pm} \colon \mathbb{T}^{d} \to \Proj_n(\mathcal{H})$ be a pair of projection-valued maps (recall in particular the orthogonality assumption~\eqref{standing_orthogonality}). The \emph{complementary pro\-jec\-tion-valued map} is defined as $Q := (P^{-} + P^{+})^\perp \colon \mathbb{T}^{d} \to \Proj_{\dim(\mathcal{H})-2n}(\mathcal{H})$, i.e.
\[ Q(k) := \Id_{\mathcal{H}} - P^{-}(k) - P^{+}(k)\quad  \forall\: k \in \mathbb{T}^{d}. \]
\end{definition}

Two (pairs of) projection valued-maps $P_0^{(\pm)}, P_1^{(\pm)}$ are thus unitarily equivalent if and only if they are Murray--von Neumann equivalent and their complementary projection-valued maps $Q_0, Q_1$ are Murray--von Neumann equivalent as well. For each AZC class, we need to study the symmetry (or rather covariance, compare Definition~\ref{def:Symmetries}) constraints satisfied by these complementary projection-valued maps, and check if they produce additional topological obstructions. 

Let us start with some general considerations.
\begin{itemize}
    \item As we have shown in Section~\ref{sec:Bases} (cf.\ also~\cite[Theorem~3.2]{manzonimonacopeluso2026zak}), in all AZC classes any two $0$-dimensional or $1$-dimensional (pairs of) projection-valued maps, be it symmetric or covariant, are Murray--von Neumann equivalent. In this case, there is no topological information to record, neither in the (pair of) projection-valued map(s) nor in the corresponding complementary projection-valued map, and two such (pairs of) projection-valued maps are unitarily equivalent if and only they have the same rank.
    \item Similarly, we have seen how Kuiper's theorem shows that any two (pairs of) infinite-rank projection-valued maps on the $d$-dimensional torus, $d \le 2$, are Murray--von Neumann equivalent, with equivalences which respect any symmetry or covariance constraint coming from the AZC classification. In particular, if the ambient Hilbert space $\mathcal{H}$ has infinite dimension, then the complementary projection-valued map always has infinite rank, and thus carries no extra topological information to the classification under unitary equivalence.
\end{itemize}

In view of the previous considerations, we obtain the following result dealing with the infinite-rank situation.

\begin{theorem} \label{thm:unitary_eq_infinite-rank}
Any two particle-hole and/or chiral symmetric pairs of infinite-rank projection-valued maps over the $d$-dimensional torus, $d \le 2$, are unitarily equivalent.
\end{theorem}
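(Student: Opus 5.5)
The plan is to combine the two general considerations listed just before the statement with the criterion that unitary equivalence is the same as Murray--von Neumann equivalence of the (pairs of) projection-valued maps together with Murray--von Neumann equivalence of their complementary projection-valued maps $Q_0, Q_1$ from Definition~\ref{def:Complementary}. Fix two symmetric pairs $P_0^{\pm}, P_1^{\pm} \colon \mathbb{T}^{d} \to \Proj_n(\mathcal{H})$ with $n = \infty$, in the same AZC class and under the same symmetry operators. Throughout, all intertwiners must satisfy the symmetry constraints~\eqref{MvNsymmetry}, and this requirement drives each step below.

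\textbf{Step 1: the pairs themselves.} By Propositions~\ref{prop:Class_AIIICD_frames} and~\ref{prop:Class_BDIetal_frames}, and by the one-dimensional discussion in Sections~\ref{sec:Bases_class_AIIICD} and~\ref{sec:Bases_class_BDIetal}, infinite-rank symmetric pairs admit symmetric \emph{periodic} bases. All such bases share the trivial matching matrix $\alpha \equiv \Id$. Principle~\ref{pri:Periodicity_from_pseudo-periodicity} then produces a symmetric periodic partial isometry $V_P$ with $V_P P_0^{\pm} = P_1^{\pm} V_P$, separately on $P^-$ and $P^+$.

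\textbf{Step 2: the complements.} The complementary maps $Q_a = \Id - P_a^- - P_a^+$ have infinite rank, or vanish identically when $\mathcal{H}$ is minimal. In the minimal case both $Q_a$ are $0$ and nothing needs to be done. Next, I will check that $Q_a$ is covariant under whichever of $C$ and $S$ are present, and hence under $T = \tilde T$ as well. For example, $C Q(k) = (\Id - P^+(-k) - P^-(-k)) C = Q(-k) C$, and likewise $S Q(k) = Q(k) S$. Covariance of $Q_a$ and the rank of its summands are the same for $a=0,1$, since both live in the same ambient space with the same symmetry operators.

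I then invoke the covariant results of Section~\ref{sec:Covariant_cases} (Propositions~\ref{prop:PHCovariant}, \ref{prop:SCovariant}, \ref{prop:PHSCovariant_commute}, \ref{prop:PHSCovariant_anticommute}). In each case one reduces $Q_a$ to class A, AI/AII, or C/D pieces: the $S$-eigenspace components $Q_a^{\uparrow}, Q_a^{\downarrow}$, possibly paired by $T$. Each piece is then Murray--von Neumann equivalent across $a=0,1$ through Kuiper's theorem, in its complex, real or quaternionic version, as in Appendix~\ref{app:XYZ}.

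\textbf{Step 3: assembling the unitary.} The unitary is $V := V_P + V_Q$, which is symmetric, periodic and continuous, and intertwines $P_0^{\pm}$ with $P_1^{\pm}$.

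\textbf{Main obstacle.} The delicate point is Step 2 in the chiral cases. The components $Q_a^{\uparrow}, Q_a^{\downarrow}$ need not have infinite rank merely because $Q_a$ does, and their ranks could a priori differ between $a=0$ and $a=1$. To control this, I would use that $P_a^-$ and $P_a^+$ are exchanged by $S$, so $P_a^- + P_a^+$ has equal-rank components in $\mathcal{H}^{\uparrow}$ and $\mathcal{H}^{\downarrow}$. Each of these components has infinite rank, so when $\dim \mathcal{H}^{\uparrow}$ is finite there is a contradiction; hence $\dim \mathcal{H}^{\uparrow} = \dim \mathcal{H}^{\downarrow} = \infty$.

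This still leaves the ranks of $Q_a^{\uparrow}, Q_a^{\downarrow}$ possibly finite and different for the two pairs. I expect the statement is really meant for the situation in which these complements are either infinite-rank or trivial. I would try to handle the finite-rank components by a rank-count argument, namely an index identity in the $S$-graded space showing that $\operatorname{rank} Q_0^{\uparrow} - \operatorname{rank} Q_0^{\downarrow}$ matches that of $Q_1$. If that argument fails, I would add this mild hypothesis explicitly.
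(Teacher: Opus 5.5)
The gap is in Step~2, and it cannot be closed by a better argument. You assert that ``$Q_a$ has infinite rank, or vanishes identically when $\mathcal{H}$ is minimal''; this dichotomy is false. So is the claim that the ranks of $Q_a$ and of its summands agree for $a=0,1$ ``since both live in the same ambient space with the same symmetry operators''. For an infinite-rank pair, $\operatorname{rank} Q_a(k)$ is constant in $k$, but it can be any element of $\{0,1,2,\ldots\}\cup\{\infty\}$, and it is not fixed by $\mathcal{H}$ or by the symmetry operators. Yet any unitary intertwiner satisfies $V(k)Q_0(k)V(k)^{*}=Q_1(k)$. In chiral classes, since $[V(k),S]=0$, it also maps $Q_0^{\uparrow}$ onto $Q_1^{\uparrow}$ and $Q_0^{\downarrow}$ onto $Q_1^{\downarrow}$. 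These ranks are therefore genuine obstructions, and the statement as written has counterexamples.

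In class AIII, take $\mathcal{H}^{\uparrow}=\mathcal{H}^{\downarrow}=\ell^2(\mathbb{N})$ and $S=\Id\oplus(-\Id)$. For an isometry $W$ of $\ell^2(\mathbb{N})$, consider the constant maps
\[ P_W^{\pm} := \frac{1}{2}\begin{pmatrix} WW^* & \pm W \\ \pm W^* & W^*W \end{pmatrix}, \qquad Q_W = \begin{pmatrix} \Id - WW^* & 0 \\ 0 & \Id - W^*W \end{pmatrix}. \]
These are chiral symmetric pairs of mutually orthogonal infinite-rank projections. For $W=\Id$ one gets $Q_W=0$. For $W$ the unilateral shift, $Q_W^{\uparrow}$ has rank one and $Q_W^{\downarrow}=0$. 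Hence the two pairs are not unitarily equivalent. The same example defeats the index identity you hoped for, since $\operatorname{rank}Q^{\uparrow}-\operatorname{rank}Q^{\downarrow}$ equals $0$ and $1$ in the two cases.

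The problem is not specific to chiral classes. Take $C=\mathcal{K}$ on $(\ell^2(\mathbb{N})\otimes\mathbb{C}^2)\oplus\mathbb{C}^m$ and the constant maps $P^{\pm}:=\big(\Id\otimes\tfrac12\begin{pmatrix}1&\pm i\\ \mp i&1\end{pmatrix}\big)\oplus 0$. They form a class-D pair whose complement has rank $m$. Identifying these spaces for different $m$ through a unitary commuting with $\mathcal{K}$ gives inequivalent class-D pairs on the same ambient space.

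The paper's own justification has the same flaw. It rests on the remark preceding the statement that, when $\mathcal{H}$ is infinite-dimensional, the complementary projection-valued map always has infinite rank. That remark is true for finite-rank pairs, not for infinite-rank ones.

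Your closing instinct to add a hypothesis is right, but the hypothesis is needed in every class, not only the chiral ones. Your Steps~1 and~3 are sound. With Step~2 restricted accordingly, they prove a corrected statement: two infinite-rank symmetric pairs are unitarily equivalent provided $Q_0$ and $Q_1$ are both zero or both of infinite rank. In classes AIII, BDI and CII this must hold componentwise for $Q_a^{\uparrow}$ and $Q_a^{\downarrow}$. In CI and DIII the two components are exchanged by $T$ and so automatically have equal rank. If the complements have equal finite nonzero rank, then for $d=2$ you must also match the Chern or Fu--Kane--Mele data of their finite-rank pieces, since additivity no longer ties these to $P^{\pm}$.
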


In what follows we will focus our attention on the potentially more interesting case in which the projection-valued maps under scrutiny are defined on the $2$-dimensional torus and act on an ambient Hilbert space $\mathcal{H}$ which is finite-dimensional.

\subsection{Class A} \label{sec:Unitary_A}

For a projection-valued map $P:\T^2 \to \Proj_{n}(\Hi)$ in class A, Proposition~\ref{prop:Class_A_frames} states that there is a topological invariant $\Ch(P) \in \mathbb{Z}$, its Chern number, characterizing its Murray--von Neumann equivalence class. The complementary projection-valued map $Q$ is still in class A, and therefore has its own Chern number $\Ch(Q) \in \mathbb{Z}$. However, since $\Id_{\mathcal{H}} = P(k)+ Q (k)$ for all $k \in \mathbb{T}^{2}$ and since the left-hand side of this equality corresponds to a constant projection-valued map with vanishing Chern number, the additive property discussed in Remark~\ref{rmk:Chern} ensures that $\Ch(Q) = -\Ch(P)$. So, the topological information carried by the complementary projection-valued map is completely determined by the one carried by the original projection-valued map, and thus the rank $n \in \mathbb{N}$ and the Chern number $\Ch(\cdot) \in \mathbb{Z}$ are the only topological invariants that characterize unitary equivalences (as well as Murray--von Neumann equivalences) in class A.

\subsection{Classes AI and AII} \label{sec:Unitary_AIAII}

For a time-reversal symmetric projection-valued map $P:\T^2 \to \Proj_n(\Hi)$, the complementary map $Q$ is again time-reversal symmetric (with respect to the same time-reversal symmetry operator $T$). In class AI, Proposition~\ref{prop:Class_AI_frames} states that any two such projection-valued maps are Murray--von Neumann equivalent, and thus unitarily equivalent. In class AII, Proposition~\ref{prop:Class_AII_frames} states that the Murray--von Neumann equivalence class of $P$ (respectively $Q$) is characterized by its Fu--Kane--Mele invariant $\FKM(P) \in \mathbb{Z}_2$ (respectively $\FKM(Q) \in \mathbb{Z}_2)$. Arguing as above, the multiplicative property discussed in Remark~\ref{rmk:FKM} ensures that $\FKM(Q)=\FKM(P) \in \mathbb{Z}_2$, so that the topological content of the complementary map is completely determined by that of the original projection-valued map. We conclude that the rank $n \in 2\mathbb{N}$ and the Fu--Kane--Mele invariant $\FKM(\cdot) \in \mathbb{Z}_2$ are the only topological invariants that characterizes unitary equivalences (as well as Murray--von Neumann equivalences) in class AII.

\subsection{Classes C and D} \label{sec:Unitary_CD}

For a pair of particle-hole symmetric projection-valued maps $P^\pm :\T^2 \to \Proj_n(\Hi)$, its Murray--von Neumann equivalence class is characterized by the Chern number $\Ch(P^{-}) \in \mathbb{Z}$. It is easily realized that the complementary projection-valued map $Q$ is particle-hole covariant, i.e., time-reversal symmetric (compare Remark~\ref{rmk:Covariance}). In view of Propositions~\ref{prop:Class_AII_frames} and~\ref{prop:PHCovariant}, we conclude that the complementary projection-valued maps of any two projection-valued maps in class D are always Murray--von Neumann equivalent (as maps in class AI), while in class C the complementary maps (which lie in class AII) are Murray--von Neumann equivalent if and only if their Fu--Kane--Mele invariants agree. We can still invoke the multiplicativity of this invariant to compute $\FKM(Q)$ as $\FKM(P^{-} + P^{+})$. It turns out that the latter is still related to the Chern number of $P^{-}$: indeed,~\cite[Theorem~6.1]{Peluso2026} shows that 
\[ \FKM(P^-+P^+) = (-1)^{\Ch(P^-)} \in \mathbb{Z}_2. \]
We conclude that, in both classes C and D, unitary equivalence classes (as well as Murray--von Neumann equivalence classes) of pairs of particle-hole symmetric rank-$n$ projection-valued maps $P^{\pm}$ are characterized by the Chern number $\Ch(P^-) \in \mathbb{Z}$.

\subsection{Class AIII} \label{sec:Unitary_AIII}

For a pair of chiral symmetric projection-valued maps $P^\pm:\T^2 \to \Proj_n(\Hi)$, the two elements of the pair are unitarily related by the chiral symmetry $S$, leading to the fact that they have equal ranks and Chern numbers (cf.\ Remark~\ref{rmk:Chern}). Moreover, we find in this case that the projection-valued maps $Q$ complementary to the pair is chiral covariant. Let us split $Q(k) = Q^\uparrow(k)+Q^\downarrow(k)$, $k \in \mathbb{T}^{2}$, into the eigenspaces $\mathcal{H}^{\uparrow} := \ker(S - \Id_{\mathcal{H}})$ and $\mathcal{H}^{\downarrow} := \ker(S + \Id_{\mathcal{H}})$ of $S$, as in~\eqref{eq:decomposition_chiral_covariant}. Proposition~\ref{prop:SCovariant} yields that there are in principle six topological invariants involved in the characterization of unitary equivalence classes, namely the ranks and Chern numbers of $P^{-}$, $Q^{\uparrow}$ and $Q^{\downarrow}$. Let us show, however, that the data from the complementary projection can be reconstructed from the knowledge of $P^{-}$ and of the chiral symmetry operator $S$.

\begin{lemma} \label{lemma:AIII_unitary_equivalence}
With the notation above, the following identities hold:
\begin{align*}
\operatorname{rank}(Q^\uparrow(k)) & = \dim(\mathcal{H}^{\uparrow}) - \operatorname{rank}(P^{-}(k)) \equiv \dim(\mathcal{H}^{\uparrow}) - n, & \Ch(Q^\uparrow) & = -\Ch(P^-), \\
\operatorname{rank}(Q^\downarrow(k)) & = \dim(\mathcal{H}^{\downarrow}) - \operatorname{rank}(P^{-}(k)) \equiv \dim(\mathcal{H}^{\downarrow}) - n, &
\Ch(Q^\downarrow) & = -\Ch(P^-). 
\end{align*}
\end{lemma}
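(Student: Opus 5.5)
The plan is to transfer the chiral symmetric pair $P^\pm$ onto the eigenspaces of $S$, in the same way the complement $Q$ is split. Define
\[ \Pi^{\uparrow} := \frac{\Id_{\mathcal{H}} + S}{2}, \qquad \Pi^{\downarrow} := \frac{\Id_{\mathcal{H}} - S}{2}. \]
The operator $P^-(k)+P^+(k)$ commutes with $S$, since $S$ swaps $P^+$ and $P^-$. Hence
\[ \Pi^{\uparrow} = \Pi^{\uparrow}(P^-+P^+)\Pi^{\uparrow} + Q^{\uparrow}(k), \]
where $R^\uparrow(k) := \Pi^\uparrow (P^-(k)+P^+(k)) \Pi^\uparrow$ is a projection-valued map on $\mathcal{H}^\uparrow$, orthogonal to $Q^\uparrow(k)$. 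The same decomposition holds for $\downarrow$. Since $\Pi^\uparrow$ is constant, it has vanishing Chern number, and additivity (Remark~\ref{rmk:Chern}) gives
\[ \Ch(Q^\uparrow) = -\Ch(R^\uparrow), \qquad \operatorname{rank} Q^\uparrow = \dim\mathcal{H}^\uparrow - \operatorname{rank} R^\uparrow. \]
It therefore suffices to show that $\operatorname{rank} R^{\uparrow} = n$ and $\Ch(R^{\uparrow}) = \Ch(P^-)$, and likewise for $\downarrow$.

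For the rank and the Chern number, I will exhibit an explicit continuous isometric identification between $\operatorname{Ran} P^-(k)$ and $\operatorname{Ran} R^{\uparrow}(k)$. Take the map $v \mapsto \sqrt{2}\,\Pi^{\uparrow} v$ for $v \in \operatorname{Ran}P^-(k)$. Since $\langle v, S v\rangle = \langle v, P^-(k) S v\rangle = \langle v, S P^+(k) v\rangle = 0$, we get $\|\Pi^\uparrow v\|^2 = \tfrac12\|v\|^2$, and the same polarization argument shows the map is isometric. Its image lies in $\operatorname{Ran} R^\uparrow(k)$, because $\Pi^\uparrow v = \Pi^\uparrow (P^-+P^+) v$ and $\Pi^\uparrow$ commutes with $P^-+P^+$.

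For surjectivity, take $w \in \operatorname{Ran}R^\uparrow(k)$, so that $Sw = w$ and $w = P^- w + P^+ w$. Then $P^+ w = P^+ S w = S P^- w$, so $w = P^- w + S P^- w = 2\Pi^\uparrow P^- w$, which lies in the image. Consequently $V(k) := \sqrt 2\,\Pi^{\uparrow} P^-(k)$ is a continuous periodic partial isometry with $V^*V = P^-$ and $VV^* = R^\uparrow$. This is a Murray--von Neumann equivalence, so $\operatorname{rank} R^\uparrow = n$ and $\Ch(R^\uparrow) = \Ch(P^-)$, since the Chern number is invariant under Murray--von Neumann equivalence (Remark~\ref{rmk:Chern}, or Principle~\ref{pri:Periodicity_from_pseudo-periodicity} with Proposition~\ref{prop:Class_A_frames}). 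The $\downarrow$ case is identical with $\sqrt2\,\Pi^\downarrow P^-(k)$, now using $P^+w = -SP^-w$.

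The main point to check carefully is the orthogonality $\langle v, S u\rangle = 0$ for $u,v\in\operatorname{Ran}P^-(k)$, which makes $\sqrt2\,\Pi^{\uparrow}P^-$ an exact partial isometry rather than merely injective. This follows directly from $S P^- = P^+ S$ together with the standing orthogonality~\eqref{standing_orthogonality}. A secondary technicality is that additivity of the Chern number is being applied to maps on the sub-Hilbert space $\mathcal{H}^\uparrow$, which is harmless since the Chern number does not depend on the ambient space.
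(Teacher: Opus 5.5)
Your argument is correct, and it reaches the key identification by a different route than the paper.

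\textbf{What is shared.} The paper also introduces $P^{\uparrow}=\frac{\Id_{\mathcal{H}}+S}{2}(P^-+P^+)$ and $P^{\downarrow}$, which are your $R^{\uparrow}$ and $R^{\downarrow}$. It also concludes with additivity on $\frac{\Id_{\mathcal{H}}\pm S}{2}=P^{\uparrow/\downarrow}+Q^{\uparrow/\downarrow}$, exactly as you do.

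\textbf{How the paper identifies $P^{\uparrow}$.} It uses the reflection $P^+-P^-$, which anticommutes with $S$, to show $P^{\uparrow}\sim P^{\downarrow}$. It then halves: $2\,\Ch(P^{\uparrow})=\Ch(P^{+})+\Ch(P^{-})=2\,\Ch(P^-)$. This needs the auxiliary fact $\Ch(P^+)=\Ch(P^-)$. The rank $n$ is likewise obtained, implicitly, by halving $2n$.

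\textbf{How you do it.} You build a direct Murray--von Neumann equivalence $P^-\sim R^{\uparrow}$ through the explicit partial isometry $\sqrt2\,\Pi^{\uparrow}P^-$. Its only input is $P^-SP^-=P^-P^+S=0$. This gives the rank and the Chern number at once, with no division by $2$.

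\textbf{Why the difference matters.} In the class-CII adaptation the paper invokes right after this Lemma, the halving step in the multiplicative group $\mathbb{Z}_2$ only gives $\FKM(P^{\uparrow})^2=1$, which says nothing. Your intertwiner, by contrast, is time-reversal symmetric in that class, because $[T,S]=0$ and $P^-$ is itself time-reversal symmetric. It therefore transfers $\FKM(P^-)$ to $P^{\uparrow}$ directly.

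\textbf{What the paper's route buys.} It reuses the same operator $P^+-P^-$ as Principle~\ref{pri:chiral_classes_min_dim}. It also does not require checking the isometry property by hand.
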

\begin{proof}
Let us consider the decomposition~\eqref{eq:decomposition_chiral_covariant} into eigenspaces of $S$ for the chiral covariant projection-valued map $P := P^{-} + P^{+} \colon \mathbb{T}^{2} \to \Proj_{2n}(\mathcal{H})$:
\[ P^\uparrow(k):=\frac{\Id_{\mathcal{H}} + S}{2}(P^+(k)+P^-(k)), \quad P^\downarrow(k):=\frac{\Id_{\mathcal{H}} - S}{2}(P^+(k)+P^-(k)), \quad k \in \mathbb{T}^{2}. \]
We observe that these two projection-valued maps are Murray--von Neumann equivalent via the partial-isometry-valued map
\[ V \colon \mathbb{T}^{2} \to \mathcal{B}(\mathcal{H}), \quad V(k) := P^+(k)-P^-(k). \]
Indeed, one can argue as in the proof of Principle~\ref{pri:chiral_classes_min_dim} that $V(k)$ is the partial isometry onto the range of $P^{+}(k) + P^{-}(k)$ (recall our standing assumption~\eqref{standing_orthogonality}), that it anticommutes with $S$, and therefore that
\[ V(k) \, P^\uparrow(k) = P^\downarrow(k) \, V(k), \quad k \in \mathbb{T}^{2}, \]
yielding the desired Murray--von Neumann equivalence. The results of Section~\ref{sec:Bases_class_AIIICD} imply that the ranks of $P^\uparrow$ and $P^\downarrow$ are equal, as well as their Chern numbers.

Notice that 
\[ P^\uparrow(k) + P^\downarrow(k) = P^+(k) + P^-(k) \quad \forall \: k \in \mathbb{T}^{2}. \]
The identity $\Ch(P^{\uparrow}) = \Ch(P^{\downarrow})$, together with the additivity of the Chern number and the already-noted identity $\Ch(P^{+}) = \Ch(P^{-})$, allow then to conclude that
\[ \Ch(P^\uparrow) = \Ch(P^\downarrow) = \Ch(P^-) = \Ch(P^+). \]
Let us further notice the identities
\[ \frac{\Id_{\mathcal{H}} + S}{2} = P^\uparrow(k) + Q^\uparrow(k) \quad \text{and} \quad \frac{\Id_{\mathcal{H}} - S}{2} = P^\downarrow(k) + Q^\downarrow(k) \quad \forall\: k \in \mathbb{T}^{2}. \]
Since in both cases the left-hand side is a constant projection-valued map, and since $P^{\uparrow} Q^{\uparrow} \equiv 0 \equiv P^{\downarrow} Q^{\downarrow}$, the above imply that 
\[ \operatorname{rank}(Q^\uparrow(k)) \equiv \dim(\mathcal{H}^{\uparrow})-n, \quad \operatorname{rank}(Q^\downarrow(k)) \equiv \dim(\mathcal{H}^{\downarrow})-n, \]
as well as
\[ \Ch(Q^\uparrow) = - \Ch(P^\uparrow) = -\Ch(P^-), \quad \Ch(Q^\downarrow) = - \Ch(P^\downarrow) = -\Ch(P^-), \]
as desired.
\end{proof}

We conclude that, in class AIII, the unitary equivalence class (as well as the Murray--von Neumann equivalence class) of a chiral symmetric pair of projection-valued maps $P^{\pm} \colon \mathbb{T}^{2} \to \Proj_n(\mathcal{H})$ is characterized by the rank $n \in \mathbb{N}$ and the Chern number $\Ch(P^-) \in \mathbb{Z}$ of one of its elements.

\subsection{Classes BDI and CII} \label{sec:Unitary_BDICII}

In these classes, all three types of symmetries are present and commute with each other. The analysis from class AIII on the projection-valued map $Q$ complementary to a symmetric pair $P^{\pm} \colon \mathbb{T}^{2} \to \Proj_n(\mathcal{H})$ carries over, with the difference that now the components of the splitting $Q = Q^{\uparrow} + Q^{\downarrow}$ are also separately time-reversal symmetric. In particular, Proposition~\ref{prop:PHSCovariant_commute} can be invoked to identify the Murray--von Neumann equivalence classes of such complementary projection-valued maps.

In class BDI, where the time-reversal symmetry operator is even, all the involved projection-valued maps have trivial Chern numbers, resulting in any two pairs of projection-valued maps with the same rank being unitarily equivalent (as well as Murray--von Neumann equivalent). 

In class CII, where instead there is an odd time-reversal symmetry, the relevant topological quantities are Fu--Kane--Mele invariants rather than Chern numbers. Nonetheless, the arguments in the proof of Lemma~\ref{lemma:AIII_unitary_equivalence} can be easily adapted to the present setting, using multiplicativity of the Fu--Kane--Mele invariant rather than additivity of the Chern number. The end result is that 
\begin{align*}
\operatorname{rank}(Q^\uparrow(k)) & = \dim(\mathcal{H}^{\uparrow}) - \operatorname{rank}(P^{-}(k)) \equiv \dim(\mathcal{H}^{\uparrow}) - n, & \FKM(Q^\uparrow) & = -\FKM(P^-), \\
\operatorname{rank}(Q^\downarrow(k)) & = \dim(\mathcal{H}^{\downarrow}) - \operatorname{rank}(P^{-}(k)) \equiv \dim(\mathcal{H}^{\downarrow}) - n, &
\FKM(Q^\downarrow) & = -\FKM(P^-). 
\end{align*}
so that as before $P^-$ contains all the topological information, through its rank and Fu--Kane--Mele invariant, to characterize unitary equivalence classes (as well as Murray--von Neumann equivalence classes) of symmetric pairs of projection-valued maps in class CII.

\subsection{Classes CI and DIII} \label{sec:Unitary_CIDIII}

Lastly, in classes CI and DIII, all three symmetries are present, but this time they anticommute with each other. The Murray--von Neumann equivalence class of the projection-valued map $Q$ complementary to a pair $P^{\pm} \colon \mathbb{T}^{2} \to \Proj_n(\mathcal{H})$ can be analyzed via Proposition~\ref{prop:PHSCovariant_anticommute}. The pair $\{Q^{\uparrow}, Q^{\downarrow}\}$ from~\eqref{eq:decomposition_chiral_covariant} behaves as a particle-hole symmetric pair, where the role of the particle-hole symmetry is played by the time-reversal symmetry operator $T$: in particular, the pair $\{Q^{\uparrow}, Q^{\downarrow}\}$ is in class D if the original pair $P^{\pm}$ is in class CI (even~$T$), while it is in class C if the original pair is in class DIII (odd~$T$). With this point of view, we can deduce from Proposition~\ref{prop:Class_AIIICD_frames} that the Murray--von Neumann equivalence class of the pair $\{Q^{\uparrow}, Q^{\downarrow}\}$ is characterized by the rank of $Q^{\downarrow}$ together with $\Ch(Q^{\downarrow}) = - \Ch(Q^{\uparrow})$. On the other hand, the argument from Lemma~\ref{lemma:AIII_unitary_equivalence} still holds, yielding that 
\[ \Ch(Q^{\downarrow}) = \Ch(Q^{\uparrow}) = - \Ch(P^{-}) = 0, \]
with the latter equality following from the time-reversal symmetry of $P^{-}$. The discussion before Proposition~\ref{prop:PHSCovariant_anticommute} also tells that the eigenspaces of $S$, namely $\mathcal{H}^{\uparrow}$ and $\mathcal{H}^{\downarrow}$, have the same dimension, leading to $Q^{\uparrow}$ and $Q^{\downarrow}$ also having the same rank equal to $\dim(\mathcal{H}^{\uparrow}) - n = \dim(\mathcal{H}^{\downarrow})-n$. In conclusion, the knowledge of $n = \operatorname{rank}(P^{-})$ completely characterizes the Murray--von Neumann equivalence class of $Q$; notice that, while in class CI the rank $n \in \mathbb{N}$ also characterizes the Murray--von Neumann equivalence class of $P^{\pm}$, in class DIII this positive integer (which is necessarily even) must be complemented with $\FKM(P^{-}) \in \mathbb{Z}_2$ to recover a complete set of invariants.

\bigskip

To sum up the previous discussion, we have proved the following

\begin{theorem} \label{thm:unitary_iff_MvN}
In all AZC classes, two symmetric (pairs of) projection-valued maps over the $d$-dimensional torus, $d \le 2$, are unitarily equivalent if and only if they are Murray--von Neumann equivalent.
\end{theorem}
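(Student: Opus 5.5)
The plan is to derive the Theorem from the characterization recalled at the start of Section~\ref{sec:unitary_equivalences}: $P_0^{(\pm)}$ and $P_1^{(\pm)}$ are unitarily equivalent if and only if they are Murray--von Neumann equivalent and their complementary maps $Q_0, Q_1$ (Definition~\ref{def:Complementary}) are Murray--von Neumann equivalent as well.

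One direction is immediate. A symmetric unitary intertwiner $V$ restricts to symmetric partial isometries $V(k)P_0^{(\pm)}(k)$, which satisfy the constraints~\eqref{MvNsymmetry}.

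For the converse, assume $P_0^{(\pm)}$ and $P_1^{(\pm)}$ are Murray--von Neumann equivalent, with partial isometries $V_P$. The goal is to produce a symmetric Murray--von Neumann equivalence $V_Q$ between $Q_0$ and $Q_1$ and then set $V := V_P + V_Q$. This sum is a unitary because the initial and final projections add up to $\Id_{\mathcal{H}}$. It is periodic, and it inherits~\eqref{MvNsymmetry} since each summand satisfies it. One point needs care: for pairs under particle-hole symmetry, $Q$ is only \emph{covariant}, so the symmetry constraint on $V_Q$ must be the covariant one, $C V_Q(k) = V_Q(-k) C$. This agrees with~\eqref{MvNsymmetry}, so the sum is consistent.

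The existence of $V_Q$ is handled by cases.

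\begin{itemize}
\item \emph{Trivial regimes.} If $d \le 1$, or if $\dim\mathcal{H} = \infty$ so that $Q$ has infinite rank, Section~\ref{sec:Bases} (including the covariant Propositions~\ref{prop:PHCovariant}--\ref{prop:PHSCovariant_anticommute}) shows that complementary maps of equal rank are always Murray--von Neumann equivalent. Equality of ranks follows from $\operatorname{rank} Q = \dim\mathcal{H} - (2)n$.
\item \emph{Finite-dimensional ambient space with $d=2$.} Here we go class by class, using the computations of Sections~\ref{sec:Unitary_A}--\ref{sec:Unitary_CIDIII}.
  \begin{itemize}
  \item Class A: $\Ch(Q) = -\Ch(P)$.
  \item Classes AI and AII: $\FKM(Q) = \FKM(P)$.
  \item Classes C and D: in class D, $Q$ falls in class AI, where no obstruction appears. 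In class C, $\FKM(Q) = (-1)^{\Ch(P^-)}$ by~\cite[Theorem~6.1]{Peluso2026}.
  \item Class AIII: Lemma~\ref{lemma:AIII_unitary_equivalence} gives the ranks and Chern numbers of $Q^{\uparrow}$ and $Q^{\downarrow}$.
  \item Classes BDI and CII: the same argument with time-reversal, via Proposition~\ref{prop:PHSCovariant_commute}.
  \item Classes CI and DIII: Proposition~\ref{prop:PHSCovariant_anticommute} together with the vanishing Chern numbers.
  \end{itemize}
  In each class, the complete Murray--von Neumann invariants of $Q$ are functions of those of $P^{(-)}$ and of the fixed symmetry data ($\dim\mathcal{H}^{\uparrow}$, $\dim\mathcal{H}^{\downarrow}$). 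Equal invariants for $P_0$ and $P_1$ therefore force equal invariants for $Q_0$ and $Q_1$. Principle~\ref{pri:Periodicity_from_pseudo-periodicity} then supplies the symmetric periodic intertwiner $V_Q$.
\end{itemize}

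The main obstacle is the chiral classes. There $V_Q$ must respect the splitting $Q = Q^{\uparrow} + Q^{\downarrow}$ into eigenspaces of $S$ rather than a pairing. This requires checking that the symmetry data used in the Lemma, namely the dimensions of the eigenspaces of $S$, are the same for both pairs. This holds because both pairs live in the same $\mathcal{H}$ with the same $S$.

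One more point must be verified in the chiral classes: the covariance condition $S V_Q = V_Q S$ is the correct constraint from~\eqref{MvNsymmetry}. This is the case because $V_Q$ acts block-diagonally on the splitting $\mathcal{H} = \mathcal{H}^{\uparrow} \oplus \mathcal{H}^{\downarrow}$.
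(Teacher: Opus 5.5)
Your route is the paper's own: you reduce unitary equivalence to Murray--von Neumann equivalence of the maps plus that of the complementary maps, and your class-by-class treatment of the case $d=2$, $\dim\mathcal{H}<\infty$ reproduces Sections~\ref{sec:Unitary_A}--\ref{sec:Unitary_CIDIII}. The gap is in your ``trivial regimes'' step, which copies the general remarks opening Section~\ref{sec:unitary_equivalences}.

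\textbf{Main gap: infinite-rank projections.} The inference ``$\dim\mathcal{H}=\infty$, so $Q$ has infinite rank'' is valid only when the rank $n$ is finite. For $n=\infty$ the formula $\operatorname{rank}Q=\dim\mathcal{H}-(2)n$ is meaningless, and $Q$ can have any rank. In that regime Murray--von Neumann equivalence of $P_0,P_1$ is automatic and says nothing about $Q_0,Q_1$, and the claimed equivalence actually fails. Take $\mathcal{H}=\ell^2(\N)$, the constant map $P_0\equiv\Id_{\mathcal{H}}$, and the constant map $P_1$ equal to the projection onto $\{e_1\}^{\perp}$. The constant unilateral shift $Ve_j=e_{j+1}$ satisfies $V^*V=P_0$ and $VV^*=P_1$; it is real, so the example also works in class AI with $T=\mathcal{K}$. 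Yet no unitary intertwines $P_0$ and $P_1$, because their complements have ranks $0$ and $1$. In $d=2$ you can even keep both complements of rank $1$, with Chern numbers $0$ and $1$. The same remark underlies Theorem~\ref{thm:unitary_eq_infinite-rank}, which fails for the same reason; for instance, a chiral pair on $\ell^2\oplus\ell^2$ built with the shift has $Q$ of rank $1$ instead of $0$. So both your argument and the statement need the extra hypothesis $n<\infty$, or more generally that $Q_0,Q_1$ have infinite rank. Under that hypothesis your proof is fine.

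\textbf{Smaller, fixable point.} Even with $n<\infty$ and $\dim\mathcal{H}=\infty$, in classes AIII, BDI and CII the eigenspaces of $S$ are forced to be balanced only on the range of $P^-+P^+$, so $\dim\mathcal{H}^{\uparrow}<\infty$ is possible. Then $Q$ has infinite rank, but $Q^{\uparrow}$ has finite rank $\dim\mathcal{H}^{\uparrow}-n$. By Proposition~\ref{prop:SCovariant} (respectively~\ref{prop:PHSCovariant_commute}), its Chern number (respectively Fu--Kane--Mele invariant) enters the Murray--von Neumann classification of $Q$. That invariant is still determined by $P^-$, through the argument of Lemma~\ref{lemma:AIII_unitary_equivalence}. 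It is not handled by Kuiper's theorem, as your first bullet asserts.
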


\subsection{Covariant maps} \label{sec:Unitary_covariant}

The fact that unitary equivalence of projection-valued maps is the same as Murray--von Neumann equivalence of the maps themselves as well as of their complementary maps holds for covariant maps, too. Remark~\ref{rmk:Covariance} observes that the complementary map to a covariant map is covariant as well (with respect to the same symmetries); besides, the results of Section~\ref{sec:Bases}, and in particular of Section~\ref{sec:Covariant_cases}, have reduced the condition for covariant projection-valued maps to be Murray--von Neumann equivalent to the equality of certain topological quantities, in the form of Chern numbers or Fu--Kane--Mele invariants (compare Table~\ref{tabular:AZC_classes_MvN}). By the additivity (respectively multiplicativity) of these invariants, a moment's thought allows to realize that the conditions to be checked for Murray--von Neumann equivalence of the complementary projection-valued maps coincide with the ones for the equivalence of the maps themselves. We can therefore complement Theorem~\ref{thm:unitary_iff_MvN} on the classification of symmetric (pairs of) projection-valued maps up to unitary equivalence with the following corresponding statement for covariant maps.

\begin{theorem} \label{thm:unitary_iff_MvN_covariant}
In all symmetry classes, two covariant projection-valued maps over the $d$-dimensional torus, $d \le 2$, are unitarily equivalent if and only if they are Murray--von Neumann equivalent.
\end{theorem}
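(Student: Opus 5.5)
The plan is to reduce unitary equivalence to two Murray--von Neumann equivalences, exactly as in the symmetric case. As recalled before Definition~\ref{def:Complementary}, two covariant maps $P_0,P_1$ are unitarily equivalent through a symmetry-compatible unitary if and only if $P_0\sim P_1$ and $Q_0 := P_0^\perp \sim Q_1 := P_1^\perp$ in the Murray--von Neumann sense, with intertwiners compatible with the symmetries. Indeed, $V := V_P + V_Q$ is unitary, and it inherits the constraints~\eqref{MvNsymmetry} from its two summands. By Remark~\ref{rmk:Covariance}, each $Q_a$ is covariant under the same symmetry operators, so the criteria of Section~\ref{sec:Covariant_cases} apply to it. The trivial direction (unitary $\Rightarrow$ MvN) is immediate by restricting $V$ to the range of $P_0$. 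So it remains to show that $P_0\sim P_1$ forces $Q_0\sim Q_1$.

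First I would dispose of the cases with no topology. If $d\le 1$, or if $\dim\mathcal H=\infty$, then the ranks of $Q_0$ and $Q_1$ agree: they are either infinite, or both equal to $\dim\mathcal H - n$. By the general considerations opening Section~\ref{sec:unitary_equivalences} (Kuiper's theorem, or the $d\le1$ results), the complements are then MvN equivalent. Hence we may assume $d=2$ and $\dim\mathcal H<\infty$.

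The remaining argument runs case by case through the four covariance types.

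\emph{Particle-hole covariant.} Proposition~\ref{prop:PHCovariant} applies. For even $C$ there is nothing to check. For odd $C$, multiplicativity (Remark~\ref{rmk:FKM}) applied to $P_a+Q_a=\Id$ gives $\FKM(Q_a)=\FKM(P_a)$.

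\emph{Chiral covariant.} Proposition~\ref{prop:SCovariant} reduces the question to the pieces $P^{\uparrow\downarrow}$ and $Q^{\uparrow\downarrow}$. Since $P^\uparrow+Q^\uparrow=(\Id+S)/2$ is constant, additivity of $\Ch$ yields $\Ch(Q^\uparrow)=-\Ch(P^\uparrow)$ and $\operatorname{rank}Q^\uparrow=\dim\mathcal H^\uparrow-\operatorname{rank}P^\uparrow$, and likewise for $\downarrow$. Thus equality of the data of $P_0^{\uparrow\downarrow}$ and $P_1^{\uparrow\downarrow}$ transfers to $Q$.

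\emph{Particle-hole and chiral, commuting symmetries.} The same argument works via Proposition~\ref{prop:PHSCovariant_commute}, replacing Chern additivity by FKM multiplicativity on each $S$-eigenspace when $T$ is odd. There is nothing to check when $T$ is even.

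\emph{Particle-hole and chiral, anticommuting symmetries.} Proposition~\ref{prop:PHSCovariant_anticommute} identifies $\{P^\uparrow,P^\downarrow\}$ as a class C/D pair with respect to $T$, whose invariant is $\Ch(P^\downarrow)$ by Proposition~\ref{prop:Class_AIIICD_frames}; the complementary pair $\{Q^\uparrow,Q^\downarrow\}$ is of the same type. Additivity inside $\mathcal H^\downarrow$ then gives $\Ch(Q^\downarrow)=-\Ch(P^\downarrow)$ and the rank relation.

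I expect the main obstacle to be this last case. One must check that $\{Q^\uparrow,Q^\downarrow\}$ is genuinely particle-hole symmetric under $T$, i.e.\ that $TQ^\uparrow(k)=Q^\downarrow(-k)T$. This follows because $T$ maps $(\Id+S)/2$ to $(\Id-S)/2$ when $\{T,S\}=0$, and because $TQ(k)=Q(-k)T$. In addition, the Chern number of each piece must be computed within its own eigenspace ambient, so that additivity against the constant projection $(\Id\mp S)/2$ applies legitimately. Once these points are settled, the invariants of $Q_a$ are functions of those of $P_a$ together with the fixed data $\dim\mathcal H^{\uparrow\downarrow}$, so the complements are MvN equivalent and summing the intertwiners gives the unitary equivalence.
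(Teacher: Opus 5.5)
Your argument is essentially the paper's own: complements of covariant maps are covariant under the same symmetries, the Murray--von Neumann criteria of Section~\ref{sec:Covariant_cases} are equalities of ranks, Chern numbers or Fu--Kane--Mele invariants, and additivity (respectively multiplicativity) against the constant projections $\Id$ and $(\Id\pm S)/2$ shows that the data of $Q_a$ are determined by those of $P_a$. The paper leaves this to ``a moment's thought''; your case-by-case version is correct. The worry you raise in the anticommuting case is harmless: the Chern number is an invariant of the range bundle and does not depend on whether $P^{\downarrow}$ is regarded in $\mathcal{H}^{\downarrow}$ or in $\mathcal{H}$.

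The step that does not hold as written is your preliminary reduction. For chiral-covariant maps (with or without a commuting $T$), equal total ranks of $Q_0,Q_1$ do not yield Murray--von Neumann equivalence. One needs $\operatorname{rank}Q_0^{\uparrow}=\operatorname{rank}Q_1^{\uparrow}$ and likewise for $\downarrow$, already in $d\le1$. Moreover, when $\dim\mathcal{H}=\infty$ one eigenspace of $S$ may still be finite-dimensional. In that case $Q_a^{\uparrow}$ (say) has finite rank and possibly a nonzero Chern or Fu--Kane--Mele invariant, and Kuiper's theorem says nothing about it. The repair is to drop the reduction and apply your eigenspace bookkeeping, $\operatorname{rank}Q_a^{\uparrow}=\dim\mathcal{H}^{\uparrow}-\operatorname{rank}P_a^{\uparrow}$ and $\Ch(Q_a^{\uparrow})=-\Ch(P_a^{\uparrow})$ (or the $\FKM$ analogue), in every dimension and without assuming $\dim\mathcal{H}<\infty$.

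Finally, the assertion that the complements have equal (infinite) rank fails when $P_a$ has infinite rank but finite co-rank. With $C=\mathcal{K}$, take the constant maps $P_0\equiv\Id$ and $P_1\equiv\Id-E$, where $E$ is the projection onto a real unit vector. They are Murray--von Neumann equivalent through a real unilateral shift, yet not unitarily equivalent. This tacit assumption is shared by the paper's general considerations at the start of Section~\ref{sec:unitary_equivalences}, so in that regime the statement should be read with equal (e.g.\ infinite) co-ranks.
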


\section{Homotopy equivalence}\label{sec:Homotopy}

We come now to Question 3 raised in Section~\ref{pro:homotopies}, namely that of homotopy equivalence among symmetric (pairs of) projection-valued maps on the $d$-dimensional torus, $d \le 2$. Principle~\ref{pri:unitary_eq_to_homotopy} reduced this Question to that of unitary equivalence, provided moreover the symmetric unitary-valued map intertwining two such maps can be deformed to one commuting with one of the projections: for most practical purposes, one may think that the unitary-valued map should be continuously deformable to the constant map equal to the identity on the ambient Hilbert space. We will thus start this Section by analyzing homotopy classes of unitary-valued maps which respect the symmetry constraints~\eqref{MvNsymmetry}, which could be of independent interest. We stress once again that, as for projection-valued maps, for us the notion of `homotopy' $V \colon [0,1] \times \mathbb{T}^{d} \to \U(\mathcal{H})$ requires that $V(t,\cdot)$ satisfies the same symmetry constraints as its endpoints for all $t \in [0,1]$.

\subsection{Homotopies of symmetric unitary-valued maps} \label{sec:Homotopies_unitary}

The symmetry constraints~\eqref{MvNsymmetry} on a map $V \colon \mathbb{T}^{d} \to \U(\mathcal{H})$ can be regarded as covariance conditions (or rather, equivariance conditions, in more mathematical terms) expressing the interplay between an action of the group $\Z_2$ on the torus, either via the involution $k \mapsto -k$ (in the case of antiunitary symmetries) or via the identity map (in the case of unitary symmetries), and the one on the space of unitaries via conjugation by the appropriate symmetry operator, the latter being a $\Z_2$-action since symmetry operators square to $\pm \Id_{\mathcal{H}}$. As was already remarked after~\eqref{MvNsymmetry}, time-reversal symmetry $T V(k) = V(-k) T$ and particle-hole symmetry $C V(k) = V(-k) C$ express the same type of condition, as $T$ and $C$ are both instances of antiunitary operators on $\mathcal{H}$ squaring to $\pm \Id_{\mathcal{H}}$. Moreover, many structural results presented in Section~\ref{sec:Covariant_cases} in the context of projection-valued maps apply {\it verbatim} to unitary-valued maps as well. For example, unitary-valued maps such that $S V(k) = V(k) S$, where $S$ is a chiral symmetry operator, can be block-decomposed into the eigenspaces of $S$: writing $\mathcal{H} = \mathcal{H}^{\uparrow} \oplus \mathcal{H}^{\downarrow}$ with $\mathcal{H}^{\uparrow} = \ker(S - \Id_{\mathcal{H}})$ and $\mathcal{H}^{\downarrow} = \ker(S + \Id_{\mathcal{H}})$, as before, any chiral covariant $V$ splits as $V = V^{\uparrow} \oplus V^{\downarrow}$, where $V^{\uparrow} \colon \mathbb{T}^{d} \to \U(\mathcal{H}^{\uparrow})$ and $V^{\downarrow} \colon \mathbb{T}^{d} \to \U(\mathcal{H}^{\downarrow})$ are defined by
\begin{equation} \label{eq:decomposition_chiral_unitary}
V^{\uparrow}(k) := \frac{\Id_{\mathcal{H}} + S}{2} \, V(k), \quad V^{\downarrow}(k) := \frac{\Id_{\mathcal{H}} - S}{2} \, V(k), \quad k \in \mathbb{T}^{d},
\end{equation}
compare~\eqref{eq:decomposition_chiral_covariant}. 

These considerations produce a significant simplification in the study of homotopy classes of symmetric unitary-valued maps. Indeed, in presence of chiral symmetry, a homotopy preserving the symmetry must preserve also the splitting above; therefore, chiral symmetric unitary-valued maps are homotopic if and only if the two blocks in~\eqref{eq:decomposition_chiral_unitary} are separately homotopic as maps with values in unitary operators on $\mathcal{H}^{\uparrow}$ and $\mathcal{H}^{\downarrow}$, respectively. In presence of a further time-reversal symmetry constraint, the time-reversal symmetry operator $T$ and the chiral symmetry operator $S$ may commute or anticommute. If they commute, then $T$ acts diagonally in the decomposition $\mathcal{H} = \mathcal{H}^{\uparrow} \oplus \mathcal{H}^{\downarrow}$, meaning that the blocks $V^{\uparrow}$ and $V^{\downarrow}$ will be separately time-reversal symmetric. Thus two symmetric unitary-valued maps will be homotopic if and only if the corresponding blocks are homotopic as time-reversal symmetric maps. If instead $T$ and $S$ anticommute, then $T$ exchanges the two blocks:
\[ T V^{\uparrow}(k) = V^{\downarrow}(-k) T, \quad k \in \mathbb{T}^{d}. \]
A homotopy of the block $V^{\uparrow}$ thus induces a corresponding homotopy of the block $V^{\downarrow}$ by conjugation with $T$ and inversion of $k$; these two homotopies may be combined in a way which produces a chiral symmetric homotopy of the overall $\U(\mathcal{H})$-valued maps.

We have thus reduced the homotopy classification of symmetric unitary-valued maps to the cases with no symmetries or with only time-reversal symmetry (of even or odd type). This question has been already investigated in the literature: see e.g.~\cite[Appendix~A]{Monaco_2017} and references therein. We state the results and highlight some relevant aspects of the proofs in the low-dimensional setting $d \le 2$ under consideration; we refer the reader to the previous references for more details. In the following statement, we adopt the same notation introduced in Definition~\ref{def:Chern_number} for which $[\cdot] \in \mathbb{Z}$ denotes the winding number of a map $\mathbb{T}^1 \simeq S^1 \to U(1) \simeq S^1$.

\begin{theorem}[Homotopy classes of (time-reversal symmetric) unitary-valued maps] \label{thm:TRS_unitary_homotopy}
Assume that $\mathcal{H}$ is a finite-dimensional Hilbert space. Let $V_0, V_1 \colon \mathbb{T}^{d} \to \U(\mathcal{H})$ be unitary-valued maps.

\begin{itemize}[leftmargin=*]
    \item Let $d=0$. Then $V_0$ and $V_1$ are homotopic. The same holds if they are time-reversal symmetric and the time-reversal symmetry operator is odd. If instead the time-reversal symmetry operator is even, then they are homotopic if and only if
    \[ \det(V_0) = \det(V_1) \in \{\pm 1\}. \]
    \item Let $d=1$. Then $V_0$ and $V_1$ are homotopic if and only if
    \[ [\det(V_0(\cdot))] = [\det(V_1(\cdot))] \in \Z. \]
    The same holds if the maps are time-reversal symmetric. If the time-reversal symmetry operator is odd, then the winding numbers above are even.
    \item Let $d=2$. Denote
    \begin{equation} \label{eq:T^1_1,2}
    \mathbb{T}^{1}_{1} := \{(k_1,k_2) \in \mathbb{T}^{2} : k_2 = 0\} \subset \mathbb{T}^{2} \quad \text{and} \quad \mathbb{T}^{1}_{2} := \{(k_1,k_2) \in \mathbb{T}^{2} : k_1 = 0\} \subset \mathbb{T}^{2}.
    \end{equation}
    Then $V_0$ and $V_1$ are homotopic as (time-reversal symmetric) unitary-valued maps on the $2$-dimensional torus if and only if $V_0 \big|_{\mathbb{T}^{1}_{1}}$ is homotopic to $V_1 \big|_{\mathbb{T}^{1}_{1}}$ and $V_0 \big|_{\mathbb{T}^{1}_{2}}$ is homotopic to $V_1 \big|_{\mathbb{T}^{1}_{2}}$ as (time-reversal symmetric) unitary-valued maps on the $1$-dimensional torus.
\end{itemize}

If instead $\mathcal{H}$ is infinite-dimensional, then any two (time-reversal symmetric) unitary-valued maps $V_0, V_1 \colon \mathbb{T}^{d} \to \U(\mathcal{H})$, $d \le 2$, are homotopic (with the homotopy preserving time-reversal symmetry).
\end{theorem}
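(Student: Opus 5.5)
We reduce every case to classical facts about the groups $U(N)$, $O(N)$ and $Sp(N/2)$ by using the normal forms~\eqref{eq:TRS_normal_form_even} and~\eqref{eq:TRS_normal_form_odd}. Write $N=\dim\mathcal{H}$ and let $T$ be the time-reversal symmetry. If $T=\mathcal{K}$, the constraint $TV(k)=V(-k)T$ reads $\overline{V(k)}=V(-k)$. If $T=\mathcal{K}J$, it reads $\overline{V(k)}=J^{-1}V(-k)J$.

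\emph{Case $d=0$.} At the single point, the constraint says that $V$ is a real orthogonal matrix (even $T$) or an element of $U(N)\cap Sp(N,\mathbb{C})\cong Sp(N/2)$ (odd $T$).
\begin{itemize}
\item $U(N)$ and $Sp(N/2)$ are connected, so any two such $V$ are homotopic.
\item $O(N)$ has exactly two components, labelled by $\det\in\{\pm1\}$, which gives the stated invariant.
\end{itemize}

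\emph{Case $d=1$.} Without symmetry, this is $\pi_1(U(N))\cong\mathbb{Z}$ via the determinant. Since $U(N)$ is connected, free homotopy classes of loops coincide with based ones up to conjugation, and conjugation is trivial here because $\pi_1$ is abelian.

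With symmetry, we cut $\mathbb{T}^1$ at the fixed points $k_\star\in\{0,\pi\}$. A symmetric loop is determined by its restriction to $[0,\pi]$: an arbitrary path in $U(N)$ whose endpoints lie in the fixed-point group $G=O(N)$ or $Sp(N/2)$. Deforming a symmetric map therefore amounts to deforming such a path relative to $G$. So the classes are those of $\pi_1(U(N),G)$ twisted by the components of $G$ at the two endpoints. We use the exact sequence $\pi_1(G)\to\pi_1(U(N))\to\pi_1(U(N),G)\to\pi_0(G)\to\pi_0(U(N))$.
\begin{itemize}
\item For $G=Sp$: we have $\pi_0(G)=\pi_1(G)=0$, so a path relative to $Sp$ is classified by an integer. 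This integer is the full winding of $\det V$. The determinant of the symmetric extension over $[-\pi,0]$ is given by $\det V(-k)=\det V(k)$ (using $\det J=1$ and $\overline{\det V(k)}=\det V(-k)$, together with unitarity). Hence the total winding is twice the winding over $[0,\pi]$, which is even.
\item For $G=O(N)$: $\pi_1(O(N))\to\pi_1(U(N))$ has image $2\mathbb{Z}$ (the determinant map sends $SO(N)$ to $1$). So $\pi_1(U,O)$ surjects onto $\mathbb{Z}/2$ for each component pairing. Combined with the endpoint determinants $\det V(0),\det V(\pi)\in\{\pm1\}$, the total winding of $\det V$ over $\mathbb{T}^1$ recovers all the data. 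Indeed, its parity equals the product of the endpoint signs (same sign forces even winding), and its value fixes the rest.
\end{itemize}
The delicate bookkeeping is to check that the winding number is a complete invariant in the even case. In particular, one must show that the endpoint signs are determined by it up to a simultaneous flip, and that a simultaneous flip can be undone. The flip is undone by continuously rotating $V$ by a symmetric family $e^{i\theta(k)}$ with $\theta$ odd, suitably chosen. We will verify this explicitly with $N=1$ representatives $e^{imk}$ and $\pm e^{imk}$.

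\emph{Case $d=2$.} Use the CW structure of $\mathbb{T}^2$: one $0$-cell, two $1$-cells $\mathbb{T}^1_1,\mathbb{T}^1_2$, and one $2$-cell. Necessity of the stated condition is clear by restriction. For sufficiency, first homotope $V_1$ on the $1$-skeleton $\mathbb{T}^1_1\vee\mathbb{T}^1_2$ (after translating the wedge point, which is harmless) to agree with $V_0$. The homotopy extension property, applied equivariantly for the $\mathbb{Z}_2$-action $k\mapsto -k$ (a $\mathbb{Z}_2$-CW structure with fixed points as $0$-cells), extends this to all of $\mathbb{T}^2$. The remaining obstruction to homotoping $V_1$ to $V_0$ rel the $1$-skeleton lies in $\pi_2(U(N))=0$, or in its equivariant counterpart. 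In the symmetric case, the $2$-cell is swapped with its image by the involution, so only a free $2$-cell appears, and again $\pi_2(U(N))=0$ kills the obstruction. The remaining $1$-cells of the equivariant structure, namely the segments $\{k_1=\pi\}$ and $\{k_2=\pi\}$, need care. We handle them by choosing the equivariant cell structure so that all $1$-cells are the loops through fixed points, and then applying the $d=1$ result on each of them. Identifying these loops with $\mathbb{T}^1_1,\mathbb{T}^1_2$ and their translates is the main obstacle; we expect it to follow because translates by $\pi$ are homotopic loops within the symmetric class.

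\emph{Infinite dimension.} Apply Kuiper's theorem in its complex, real and quaternionic versions: the relevant groups are contractible, so all the above obstruction groups vanish, and the equivariant extension argument then gives a symmetric homotopy between any two maps.
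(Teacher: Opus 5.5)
\paragraph{Main gap: the even time-reversal case in $d=1$.}
Your argument breaks in the even time-reversal case in $d=1$, and it cannot be repaired: the claim itself is false there.

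\begin{itemize}
\item Restrict any symmetric homotopy $V(t,\cdot)$ to a fixed point $k_\star\in\{0,\pi\}$. This gives a path $t\mapsto V(t,k_\star)$ in $O(N)$. By your own $d=0$ analysis, $\det V(t,k_\star)\in\{\pm1\}$ is therefore constant in $t$.
\item So both fixed-point signs are homotopy invariants, and a ``simultaneous flip'' can never be undone. Multiplying by $e^{i\theta(k)}$ with $\theta$ odd does not even change $V(0)$, since $\theta(0)=0$.
\item Counterexample: take $N=1$, $T=\mathcal{K}$, $V_0\equiv 1$ and $V_1\equiv -1$. The test pair $e^{imk}$, $-e^{imk}$ you propose to check works equally well. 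Both maps are symmetric with equal winding, but they are not symmetrically homotopic.
\item Your claim that $\pi_1(O(N))\to\pi_1(U(N))$ has image $2\mathbb{Z}$ is also wrong. $\det$ is constant along loops in $O(N)$, so the image is trivial. This is exactly why the endpoint components cannot be absorbed.
\end{itemize}

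Done correctly, your relative-path analysis shows that even-symmetric loops are classified by the pair $\bigl([\det V(\cdot)],\det V(0)\bigr)$, with $\det V(\pi)=(-1)^{[\det V(\cdot)]}\det V(0)$. This corrected statement is what you can actually prove.

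The paper's sketch only observes that $\det V$ on $[0,\pi]$ determines the winding, so it misses the same point. Its later uses are unaffected, because there the compared maps already have equal determinants at the fixed points: pointwise equal in class AI, and equal to $1$ in class D.

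\paragraph{Secondary issues in $d=2$.}
You rightly notice, unlike the paper's $1$-skeleton argument, that the $2$-cell contains the fixed point $(\pi,\pi)$ and is not free. Your fix, however, has two problems.

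First, $\pi$-translates are not homotopic within the even-symmetric class. For $N=1$, $V(k)=e^{ik_2}$ is $\equiv 1$ on $\{k_2=0\}$ and $\equiv -1$ on $\{k_2=\pi\}$. What you need is only a relative statement:
\begin{itemize}
\item Windings along $\{k_2=\pi\}$ and $\{k_1=\pi\}$ equal those along $\mathbb{T}^1_1$ and $\mathbb{T}^1_2$.
\item For any even-symmetric map, the product of $\det V$ over the four fixed points is $1$.
\item Hence, if $V_0,V_1$ are symmetrically homotopic on $\mathbb{T}^1_1\cup\mathbb{T}^1_2$, their fixed-point signs and windings also agree on the $\pi$-lines.
\end{itemize}

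Second, applying the $d=1$ result loop by loop gives homotopies that need not agree at shared fixed points. Work cell by cell instead:
\begin{enumerate}
\item Choose paths in $O(N)$ (or in the symplectic group, in the odd case) from $V_0(k_\star)$ to $V_1(k_\star)$ at the four fixed points.
\item Extend over one half-edge of each of the four free orbits of $1$-cells. The obstruction is a difference of half-windings of $\det$. It does not depend on the chosen paths, because loops in these groups have trivial image in $\pi_1(U(N))$.
\item Transport each extension to the partner half-edge by symmetry.
\item Extend over one quadrant of each of the two free orbits of $2$-cells using $\pi_2(U(N))=0$, and transport to the partner quadrant by symmetry.
\end{enumerate}

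\paragraph{Minor slip.}
In the odd case the relation is $\det V(-k)=\overline{\det V(k)}$, not $\det V(-k)=\det V(k)$. With the latter, the two halves would cancel instead of contributing equally.

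Your $d=0$, non-symmetric, odd-symmetric and infinite-dimensional arguments are otherwise fine.
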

\begin{proof}[Sketch of the proof.]
In $d=0$, the choice of two ``unitary-valued maps'' is just the choice of two unitaries $V_0, V_1 \in \U(\mathcal{H})$, and the question of homotopy reduces to the question of path-connectedness. The group of unitaries on any Hilbert space (be it finite- or infinite-dimensional) is path-connected. If one wants to enforce time-reversal symmetry, then, as discussed in Section~\ref{sec:Bases_class_AIAII}, the time-reversal symmetry operator endows the Hilbert space $\mathcal{H}$ with a real or quaternionic structure, depending on whether the symmetry is even or odd, respectively. If the Hilbert space is infinite-dimensional, then Kuiper's theorem guarantees path-connectedness of any (real, complex, or quaternionic) Hilbert space. If the Hilbert space is finite-dimensional, we can use the normal form~\eqref{eq:TRS_normal_form_even} for an even time-reversal symmetry operator, respectively~\eqref{eq:TRS_normal_form_odd} for an odd one, to realize that the time-reversal symmetry constraint $T V = V T$ is equivalent to the request that the matrix $V$ be orthogonal, respectively symplectic. While the group of symplectic matrices is path-connected (and all such matrices have determinant equal to 1), the group of orthogonal matrices is split into two path-connected components, corresponding to the value of their determinant, which can be either $+1$ or $-1$.

Moving to higher $d \ge 1$, still for finite-dimensional Hilbert spaces, we note that the above considerations also apply at the fixed points $k_{\star}$ for the involution $k \mapsto -k$ (compare Remark~\ref{rmk:Fixed_points}). 
In $d=1$, the fact that the winding number of the determinant characterizes the homotopy class of a unitary-valued map on $\mathbb{T}^{1} \simeq S^1$ is well known~\cite[Chapter~8, Section~12]{Husemoller1994}. 
Time-reversal symmetry imposes the constraint $\det(V(k)) = \overline{\det(V(-k))}$, $k \in \mathbb{T}^{1}$, at the level of the determinant, so that its values on the half-torus $k \in [0,\pi]$ fully determine the values over the whole torus as well as the winding. 
In the case of an odd time-reversal symmetry, the symplectic constraint on $V(k_{\star})$ forces $\det(V(k_{\star}))=1$, $k_{\star} \in \{0, \pm \pi\}$, so that the restriction $\det V(\cdot) \big|_{[0,\pi]}$ is already periodic and has an integer winding number. This establishes the evenness of the winding number of the determinant for an odd-time-reversal symmetric unitary-valued map on the $1$-dimensional torus.

Finally, constructing homotopies for unitary-valued maps on the $2$-dimensional torus reduces to constructing homotopies for their restrictions on the two $1$-dimensional sub-tori $\mathbb{T}^{1}_{1}$ and $\mathbb{T}^{1}_{2}$, as their union constitutes the $1$-skeleton of the CW-complex decomposition of the $2$-dimensional torus. Attaching the $2$-cell produces no further topological obstruction as the second homotopy group $\pi_2(\U(\mathcal{H}))$ is trivial, both for finite- and infinite-dimensional Hilbert spaces (see~\cite[Chapter 8, Section 12]{Husemoller1994} and~\cite{kuiper1965homotopy}, respectively).
\end{proof}

\subsection{Homotopies of projection-valued maps}

We return to the discussion of homotopies for projection-valued maps $P \colon \mathbb{T}^{d} \to \Proj_n(\mathcal{H})$, starting from the case of (possibly time-reversal symmetric) maps in classes A, AI, and AII. We claim that all unitary equivalences among projection-valued maps in these classes can be chosen to be of the type specified in the statement of Principle~\ref{pri:unitary_eq_to_homotopy}. We will thus prove the following

\begin{theorem}[Homotopy equivalence in class A, AI, and AII] \label{thm:Homotopies_AAIAII}
In classes A, AI, and AII, two (symmetric) projection-valued maps over the $d$-dimensional torus, $d \le 2$, are homotopically equivalent if and only if they are Murray--von Neumann equivalent.
\end{theorem}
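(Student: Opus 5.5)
The easy direction is that homotopy implies Murray--von Neumann equivalence. By Principle~\ref{pri:unitary_eq_to_homotopy}, a homotopy produces a symmetric unitary equivalence $V$. Compressing $V$ to $V(k)P_0(k)$ gives a symmetric partial-isometry-valued intertwiner.

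For the converse, start from Murray--von Neumann equivalent maps $P_0,P_1$. Theorem~\ref{thm:unitary_iff_MvN} gives a symmetric unitary-valued map $V\colon\T^d\to\U(\Hi)$ with $VP_0V^{-1}=P_1$. By Principle~\ref{pri:unitary_eq_to_homotopy}, it suffices to \emph{modify} $V$ so that it becomes symmetrically homotopic to a unitary-valued map commuting with $P_0$. The freedom I will use is replacement of $V$ by $V W$, where $W\colon\T^d\to\U(\Hi)$ is symmetric and commutes with $P_0$. Such a $W$ is block diagonal, $W=W_P\oplus W_Q$, with respect to the splitting $\Hi=\Imm P_0(k)\oplus\Imm Q_0(k)$. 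Since $VW$ still intertwines $P_0$ with $P_1$, the task reduces to choosing $W$ so that $VW$ is homotopic to a map commuting with $P_0$. The simplest target is the constant map $\Id$. If $\Hi$ is infinite-dimensional, Theorem~\ref{thm:TRS_unitary_homotopy} shows that every symmetric $V$ is already homotopic to $\Id$, and we are done. So assume $\dim\Hi<\infty$.

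In the finite-dimensional case, Theorem~\ref{thm:TRS_unitary_homotopy} says the homotopy class of $V$ is detected by a few quantities. For $d=0$ in class AI, it is $\det V\in\{\pm1\}$. For $d\ge1$, it is the windings $[\det V|_{\T^1_i}]$, $i=1,2$, together with $\det V(k_\star)$ at the fixed points in class AI. I will cancel each of these by an appropriate $W$, built from a periodic frame of $P_0$. Concretely, restrict $P_0$ to $\T^1_i$, where Sections~\ref{sec:Bases_class_A}--\ref{sec:Bases_class_AIAII} provide a periodic symmetric basis $\{v_j\}$. The dressing
\[ W(k)v_1(k)=e^{-im k_i}v_1(k),\qquad W=\Id \text{ elsewhere} \]
shifts the $i$-th winding of $\det$ by $-m$. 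In class AII, the Kramers partner $v_2$ has to be rotated by the conjugate phase to keep $W$ symmetric, and that cancels the winding. This is consistent with the evenness in Theorem~\ref{thm:TRS_unitary_homotopy}: there I will instead use $e^{-imk_i}$ on a Kramers pair placed in $P_0$ and its conjugate on a pair in $Q_0$, or simply use the $Q_0$ block when it is nonzero. In class AI, a reflection $v_1\mapsto -v_1$ fixes the sign of $\det$ at the fixed points.

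I expect the main obstacle to be extending these dressings from the $1$-skeleton to all of $\T^2$ while keeping them periodic. The natural choice is to dress with a basis of $P_0$ that is periodic along the relevant direction. For class A, I would use the pseudo-periodic basis of Proposition~\ref{prop:Class_A_frames}, where $v_1$ is periodic in $k_2$ and $v_2,\dots,v_n$ are fully periodic. Multiplying $v_2$ by $e^{-imk_1}$ is then globally periodic, and likewise for $k_2$; if $n=1$, I would use the $Q_0$ block instead. If both blocks are too small, as in degenerate low-rank cases such as $n=\dim\Hi$, then $P_0$ is constant and the claim is trivial. Finally I will check that in class AII the dressing is compatible with \eqref{MvNsymmetry}, using the symmetric periodic partners from Proposition~\ref{prop:Class_AII_frames}. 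Once all invariants of $VW$ vanish, Theorem~\ref{thm:TRS_unitary_homotopy} gives a symmetric homotopy $VW\simeq\Id$, and Principle~\ref{pri:unitary_eq_to_homotopy} concludes the proof.
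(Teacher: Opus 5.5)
Your reduction matches the paper's in substance. Finding a symmetric $W$ that commutes with $P_0$ and satisfies $VW\simeq\Id$ is the same as exhibiting $V_0=W^{-1}$, which commutes with $P_0$ and is symmetrically homotopic to $V$; the paper builds exactly such a $V_0$ by dressing basis vectors of $P_0$ with scalar phases. Your class~A and class~AI dressings are correct. The class~AII step, however, rests on a false claim.

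Take a Kramers pair with $Tv_1(k)=v_2(-k)$ and $Tv_2(k)=-v_1(-k)$, and a dressing $W(k)v_j(k)=f(k)v_j(k)$ for $j\in\{1,2\}$, identity elsewhere. Since $T$ is antilinear and sends $k\mapsto -k$, this $W$ satisfies $TW(k)=W(-k)T$ if and only if $\overline{f(k)}=f(-k)$. So $f(k)=e^{-imk}$ must be applied to \emph{both} partners. This gives $\det W=e^{-2imk}$ and shifts the winding by $-2m$.

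The assignment $v_1\mapsto e^{-imk}v_1$, $v_2\mapsto e^{imk}v_2$ that you describe is not symmetric at all. Conjugate phases for Kramers partners are the rule for the matching condition in Proposition~\ref{prop:Class_AII_frames}, not for a multiplication operator in $k$. Your workaround, putting $e^{-imk_i}$ on a pair in $P_0$ and the conjugate on a pair in $Q_0$, changes the total winding by $0$. It therefore cannot remove the winding $2\ell_i$ of $V$, so the AII case is not proved as written.

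The correct step is the paper's. Dress a single Kramers pair of $P_0$ with the common phase $e^{-i(\ell_1k_1+\ell_2k_2)}$. The paper uses the opposite sign in \eqref{eq:auxiliary_equation_31}--\eqref{eq:auxiliary_equation_32} because it matches $V$ rather than cancelling it. This works precisely because the windings of $V$ are even.

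The periodicity ``obstacle'' you anticipate does not exist, and your way around it leaves cases uncovered. A pseudo-periodic vector only picks up a phase across the boundary. Hence the orthogonal projection $\Pi_1(k)$ onto $\mathbb{C}v_1(k)$, or onto the span of a Kramers pair, is periodic. So $W(k)=\Id+(f(k)-1)\Pi_1(k)$ is periodic for every periodic scalar $f$; this is how the paper invokes Principle~\ref{pri:Periodicity_from_pseudo-periodicity}.

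Avoiding pseudo-periodic vectors fails in concrete cases where $P_0$ is not constant:
\begin{itemize}
\item class~A with $\dim\Hi=2$, $n=1$, $\Ch(P_0)\neq 0$, where neither $P_0$ nor $Q_0$ has a periodic basis vector;
\item class~AII with $\dim\Hi=4$, $n=2$ and nontrivial Fu--Kane--Mele invariant, where neither $P_0$ nor $Q_0$ has a periodic Kramers pair.
\end{itemize}

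In class~AI, a single global reflection fixes all fixed-point signs only because $\det(VW)$ takes the same value at all fixed points once the windings vanish. This holds since zero winding on each invariant circle forces equal values at its two fixed points, and you should state it.

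The paper avoids all this bookkeeping by dressing $v_1$ with $\det V(k)$ itself. That phase is periodic, satisfies $\overline{\det V(k)}=\det V(-k)$, and gives $\det V_0\equiv\det V$. Windings and fixed-point signs are thus matched in one stroke.
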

\begin{proof}
Principle~\ref{pri:unitary_eq_to_homotopy} states in general that homotopic maps are also unitarily equivalent, which in turn implies Murray--von Neumann equivalence. Conversely, by Theorem~\ref{thm:unitary_iff_MvN}, two Murray--von Neumann equivalent projection-valued maps $P_0, P_1 \colon \mathbb{T}^{d} \to \Proj_n(\mathcal{H})$ are also unitarily equivalent, say via a unitary-valued map $V \colon \mathbb{T}^{d} \to \U(\mathcal{H})$; in view of the Principle mentioned above, we need to show that this unitary equivalence can be continuously deformed to $V_0 \colon \mathbb{T}^{d} \to \U(\mathcal{H})$ such that $\left[ V_0(k), P_0(k) \right] = 0$ for all $k \in \mathbb{T}^{d}$.

According to Theorem~\ref{thm:TRS_unitary_homotopy}, there are instances in which any two (time-reversal symmetric) unitary-valued maps are homotopic to each other, and in particular $V$ can be continuously deformed to the map constantly equal to the identity, which certainly commutes with $P_0$: this occurs if the ambient Hilbert space $\mathcal{H}$ is infinite-dimensional, or if $\mathcal{H}$ is finite-dimensional, $d=0$, and either no symmetries or an odd time-reversal symmetry are present. In all other cases, the homotopy class of a unitary-valued map is specified by looking at its determinant. 

First of all, let us consider classes A and AI. Let $\{v_j(k)\}_{j \in \{1,\ldots,n\}}$ be a (time-reversal symmetric) orthonormal basis for $P_0(k)$, which exists by the results of Sections~\ref{sec:Bases_class_A} and~\ref{sec:Bases_class_AIAII}. Define $V_0 \colon \mathbb{T}^{d} \to \U(\mathcal{H})$ as the linear extension of the operator given by
\begin{equation}\label{eq:auxiliary_equation_30} V_0(k) \, v_1(k) := \det(V(k)) \, v_1(k), \quad V_0(k) \, w := w \text{ if } w \perp v_1(k), \quad k \in \mathbb{T}^{d}. \end{equation}
If the basis is pseudo-periodic, the operator $V_0$ is still periodic thanks to  Principle \ref{pri:Periodicity_from_pseudo-periodicity}.
Then by construction $V_0(k)$ commutes with $P_0(k)$; if time-reversal symmetry is present and the basis is chosen such that $T v_j(k) = v_j(-k)$, $j \in \{1, \ldots, n\}$, $k \in \mathbb{T}^{d}$, then
\begin{align*}
T \, V_0(k) \, v_1(k) & = T \left[ \det(V(k)) \, v_1(k) \right] = \overline{\det(V(k))} \, T \, v_1(k) = \det(V(-k)) \, v_1(-k) \\
& = V_0(-k) \, v_1(-k) = V_0(-k) \, T \, v_1(k)
\end{align*}
since, as already noted, the time-reversal symmetry constraint~\eqref{MvNsymmetry} on $V$ implies $\overline{\det(V(k))} = \det(V(-k))$. Together with the fact that $V_0(k)$ acts as the identity on the orthogonal complement of the ray spanned by $v_1(k)$, the above identity implies that $T V_0(k) = V_0(-k) T$, that is, that $V_0$ is time-reversal symmetric as well. Since clearly $\det(V_0(k)) = \det(V(k))$, the maps $V$ and $V_0$ are homotopic, in view of Theorem~\ref{thm:TRS_unitary_homotopy}. Invoking Principle~\ref{pri:unitary_eq_to_homotopy} allows to construct the corresponding homotopy between the projection-valued maps and conclude the proof.

In class AII, the previous construction needs to be modified as follows. Recall that, in presence of an odd time-reversal symmetry, the projection-valued map $P_0$ has even rank $n$, and vectors spanning its range come in Kramers pairs $\{v_{2j-1}(k), v_{2j}(k) = - T v_{2j-1}(k)\}_{j \in \{1, \ldots, n/2\}}$. Recall also that, in view of Theorem~\ref{thm:TRS_unitary_homotopy}, the winding number of the determinant of the unitary equivalence $V$ along the $1$-dimensional (sub-)torus $\mathbb{T}^{1}$ (respectively $\mathbb{T}^1_s \subset \mathbb{T}^{2}$) is necessarily even, say equal to $2\ell \in \mathbb{Z}$ (respectively $2\ell_s \in \mathbb{Z}$, $s \in \{1,2\}$). Define this time, for $k \in \mathbb{T}^{1}$,
\begin{equation}\label{eq:auxiliary_equation_31} V_0(k) \, v_j(k) := e^{i \ell k} \, v_j(k), \; j \in \{1,2\}, \quad  V_0(k) \, w := w \mbox{ if } w \perp \operatorname{Span} \{ v_1(k), v_2(k)\}, \end{equation}
respectively, for $k \in \mathbb{T}^{2}$,
\begin{equation} \label{eq:auxiliary_equation_32} V_0(k) \, v_j(k) := e^{i (\ell_1 k_1 + \ell_2 k_2)} \, v_j(k), \; j \in \{1,2\}, \quad V_0(k) \, w := w \mbox{ if } w \perp \operatorname{Span} \{ v_1(k), v_2(k)\}. \end{equation}
Arguing as above, one shows that $V_0$ is periodic even if the basis is pseudo-periodic, that $V_0(k)$ commutes with $P_0(k)$ and that $T V_0(k) = V_0(-k) T$. Moreover, $\det(V(k))$ and $\det(V_0(k))$ have the same winding number(s) by construction, and hence $V$ and $V_0$ are homotopic by virtue of Theorem~\ref{thm:TRS_unitary_homotopy}. Once again, the statement then follows from Principle~\ref{pri:unitary_eq_to_homotopy}. 
\end{proof}

The above Theorem shows that homotopy classes of projection-valued maps in classes A, AI and AII are fully characterized by the same invariants appearing in the classification up to Murray--von Neumann equivalences (cf.\ Table~\ref{tabular:AZC_classes_MvN}), in particular, for $2$-dimensional projection-valued maps, by the Chern number in class A and by the Fu--Kane--Mele invariant in class AII.

\bigskip

As a further byproduct, Theorem~\ref{thm:Homotopies_AAIAII} also allows to identify homotopy classes of covariant projection-valued maps, in the sense of Definition~\ref{def:Symmetries}. Indeed, the considerations preceding the statement of Theorem~\ref{thm:TRS_unitary_homotopy} do not rely on the maps being unitary-valued. One can therefore apply the same conclusions to covariant projection-valued maps, using the decomposition from~\eqref{eq:decomposition_chiral_covariant} in presence of chiral symmetry, and one is again reduced to study homotopies among projection-valued maps without symmetry constraints or which are time-reversal symmetric. Together with Theorem~\ref{thm:Homotopies_AAIAII}, these considerations then yield the following

\begin{theorem} \label{thm:homotopy_iff_MvN_covariant}
In all symmetry classes, two covariant projection-valued maps over the $d$-dimensional torus, $d \le 2$, are homotopically equivalent if and only if they are Murray--von Neumann equivalent.
\end{theorem}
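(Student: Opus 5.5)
The plan is to reduce the covariant case to Theorem~\ref{thm:Homotopies_AAIAII} through the block decompositions already used for covariant maps in Section~\ref{sec:Covariant_cases}. One direction is general: by Principle~\ref{pri:unitary_eq_to_homotopy}, homotopic covariant maps are unitarily equivalent via a covariant unitary (the Kato--Nagy unitaries inherit the covariance constraints), and unitary equivalence implies Murray--von Neumann equivalence. For the converse I will go class by class.

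\emph{Particle-hole covariance alone.} By Remark~\ref{rmk:Covariance}, the constraint $CP(k)=P(-k)C$ is formally a time-reversal symmetry constraint, even or odd according to $C^2$. Both Murray--von Neumann equivalence (with intertwiners satisfying~\eqref{MvNsymmetry}) and homotopy (through maps obeying the same constraint) are literally the notions treated in classes AI and AII. Theorem~\ref{thm:Homotopies_AAIAII} therefore applies verbatim with $T$ replaced by $C$.

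\emph{Chiral covariance.} Since $[S,P(k)]=0$, $P$ splits as $P^\uparrow\oplus P^\downarrow$ on the $k$-independent decomposition $\mathcal{H}=\mathcal{H}^\uparrow\oplus\mathcal{H}^\downarrow$, as in~\eqref{eq:decomposition_chiral_covariant}. A chiral-covariant homotopy is exactly a pair of unconstrained homotopies of $P^\uparrow$ in $\mathcal{H}^\uparrow$ and of $P^\downarrow$ in $\mathcal{H}^\downarrow$. By Proposition~\ref{prop:SCovariant}, Murray--von Neumann equivalence is exactly blockwise equivalence. Applying the class~A case of Theorem~\ref{thm:Homotopies_AAIAII} in each block and taking direct sums gives a homotopy that is automatically chiral covariant.

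\emph{Both particle-hole and chiral covariance.} Here time-reversal symmetry $T$ is also present. If $T$ commutes with $S$, the blocks are separately time-reversal symmetric. Proposition~\ref{prop:PHSCovariant_commute} together with the AI/AII case of Theorem~\ref{thm:Homotopies_AAIAII} then gives blockwise symmetric homotopies, whose direct sum respects all three symmetries. If $T$ anticommutes with $S$, then $TP^\uparrow(k)=P^\downarrow(-k)T$, so $P^\downarrow$ is determined by $P^\uparrow$. By Proposition~\ref{prop:PHSCovariant_anticommute}, Murray--von Neumann equivalence of the pairs reduces to class~A equivalence of $P^\uparrow_0$ and $P^\uparrow_1$ in $\mathcal{H}^\uparrow$, which carries no constraint of its own. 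I take the class~A homotopy $P^\uparrow_t$ and define $P^\downarrow_t(k):=TP^\uparrow_t(-k)T^{-1}$. Then $P_t=P^\uparrow_t\oplus P^\downarrow_t$ is a chiral-covariant homotopy with $TP_t(k)=P_t(-k)T$, hence also particle-hole covariant since $C$ is a product of $T$ and $S$. This mirrors the argument for unitaries given before Theorem~\ref{thm:TRS_unitary_homotopy}.

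The main obstacle is the bookkeeping in this last case: one must check that the induced homotopy of $P^\downarrow$ stays compatible with the anticommutation conventions and with $T^2=\pm\Id$. It must also have the right endpoints, which holds because $P^\downarrow_j$ is already determined by $P^\uparrow_j$. Infinite-dimensional blocks need no separate treatment, since Theorem~\ref{thm:Homotopies_AAIAII} already covers them via Kuiper's theorem.
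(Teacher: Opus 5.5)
Your proposal is correct and follows the same route as the paper. The paper reduces, via Remark~\ref{rmk:Covariance} and the $S$-eigenspace splitting~\eqref{eq:decomposition_chiral_covariant}, to unconstrained or time-reversal symmetric maps and then invokes Theorem~\ref{thm:Homotopies_AAIAII}; when $T$ and $S$ anticommute, it induces the homotopy of the $\downarrow$-block from that of the $\uparrow$-block via $T$ and $k\mapsto -k$. Your write-up spells out bookkeeping the paper leaves implicit, for instance that $T^2=\pm\Id$ makes $P^\downarrow_t(k):=TP^\uparrow_t(-k)T^{-1}$ consistent with both endpoints.
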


\subsection{Homotopies of pairs of projection-valued maps}

It remains to discuss homotopies of particle-hole and/or chiral symmetric pairs of rank-$n$ projection-valued maps $P^{\pm} \colon \mathbb{T}^{d} \to \Proj_n(\mathcal{H})$. We first analyze a simpler situation, namely that of projections of infinite rank.

\begin{theorem} \label{thm:homotopy_iff_MvN_infinite-rank}
Any two particle-hole and/or chiral symmetric pairs of infinite-rank projection-valued maps over the $d$-dimensional torus, $d \le 2$, are homotopically equivalent.
\end{theorem}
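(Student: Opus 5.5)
The plan is to combine Theorem~\ref{thm:unitary_eq_infinite-rank} with Principle~\ref{pri:unitary_eq_to_homotopy}. Let $P_0^{\pm}, P_1^{\pm} \colon \mathbb{T}^{d} \to \Proj_n(\mathcal{H})$, $n = \infty$, be two symmetric pairs in the same AZC class. By Theorem~\ref{thm:unitary_eq_infinite-rank} they are unitarily equivalent via a symmetric unitary-valued map $V \colon \mathbb{T}^{d} \to \U(\mathcal{H})$ satisfying~\eqref{MvNsymmetry}. The ambient space $\mathcal{H}$ is necessarily infinite-dimensional. By Principle~\ref{pri:unitary_eq_to_homotopy}, it then suffices to deform $V$, through unitary-valued maps obeying the same symmetry constraints, to a map commuting with $P_0^{\pm}$. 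The natural target is the constant map $\Id_{\mathcal{H}}$.

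To build this deformation, I will use the reduction of Section~\ref{sec:Homotopies_unitary}. If a chiral symmetry $S$ is present, the constraint $SV(k) = V(k)S$ forces $V = V^{\uparrow} \oplus V^{\downarrow}$ as in~\eqref{eq:decomposition_chiral_unitary}. If moreover $[T,S]=0$ (classes BDI, CII), each block is separately time-reversal symmetric on $\mathcal{H}^{\uparrow}$, respectively $\mathcal{H}^{\downarrow}$. If instead $\{T,S\}=0$ (classes CI, DIII), $T$ exchanges the blocks. In that case a time-reversal-free homotopy of $V^{\uparrow}$ to the identity is transported to $V^{\downarrow}$ via $V^{\downarrow}_t(k) := T V^{\uparrow}_t(-k) T^{-1}$, which yields a symmetric homotopy of $V$. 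In class D or C, only $C$ acts, and $CV(k)=V(-k)C$ is formally a time-reversal constraint. In class AIII one simply handles the two blocks independently.

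In every class the problem thus reduces to unitary-valued maps on $\mathcal{H}^{\uparrow}$, $\mathcal{H}^{\downarrow}$ or $\mathcal{H}$ with either no symmetry or a single time-reversal-type constraint. For these, the last assertion of Theorem~\ref{thm:TRS_unitary_homotopy} gives a symmetric contraction to the identity, provided the relevant Hilbert space is infinite-dimensional. Since $P^{-}$ has infinite rank and the chiral eigenprojections split $P^{-}+P^{+}$ into $P^{\uparrow}+P^{\downarrow}$, at least one of $\mathcal{H}^{\uparrow}$, $\mathcal{H}^{\downarrow}$ is infinite-dimensional. Whenever $S$ anticommutes with $P^+-P^-$ on the range, the argument of Lemma~\ref{lemma:AIII_unitary_equivalence} shows that $P^{\uparrow}$ and $P^{\downarrow}$ are Murray--von Neumann equivalent. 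Hence both have infinite rank, and both eigenspaces are infinite-dimensional.

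Once the symmetric homotopy $V(t,\cdot)$ from $\Id_{\mathcal{H}}$ to $V$ is available, Principle~\ref{pri:unitary_eq_to_homotopy} produces the homotopy $V(t,k)P_0^{\pm}(k)V(t,k)^{-1}$. This preserves the orthogonality~\eqref{standing_orthogonality} and the pair symmetries (particle-hole exchanging $P^{\pm}$ is preserved because $V(t,\cdot)$ commutes with $C$ up to $k\mapsto -k$). The main obstacle I expect is making sure that Kuiper-type contractibility is used in the correct real/quaternionic form on each block, together with the dimension count for $\mathcal{H}^{\uparrow}$ and $\mathcal{H}^{\downarrow}$ that this requires. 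I would verify that count carefully via the partial isometry $P^{+}-P^{-}$, which intertwines $P^{\uparrow}$ and $P^{\downarrow}$.
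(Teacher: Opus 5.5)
Your proposal is correct and follows the paper's proof. The paper obtains unitary equivalence from Theorem~\ref{thm:unitary_eq_infinite-rank}, contracts the symmetric intertwiner to $\Id_{\mathcal{H}}$ using the reduction and the Kuiper-type statement of Section~\ref{sec:Homotopies_unitary}, and concludes with Principle~\ref{pri:unitary_eq_to_homotopy}. Your additional check that both $\mathcal{H}^{\uparrow}$ and $\mathcal{H}^{\downarrow}$ are infinite-dimensional, via the partial isometry $P^{+}-P^{-}$ intertwining $P^{\uparrow}$ and $P^{\downarrow}$, is left implicit in the paper, but it is genuinely needed for the block-wise contraction in the chiral classes.
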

\begin{proof}
Consider two such pairs $P_{0}^{\pm}, P_{1}^{\pm} \colon \mathbb{T}^{d} \to \Proj_{\infty}(\mathcal{H})$: they are unitarily equivalent in view of Theorem~\ref{thm:unitary_eq_infinite-rank}. This unitary equivalence can be continuously deformed to the map which is constantly equal to~$\Id_{\mathcal{H}}$, as discussed in Section~\ref{sec:Homotopies_unitary}. The conclusion follows from Principle~\ref{pri:unitary_eq_to_homotopy}.
\end{proof}

Henceforth, we thus consider pairs of finite-rank projection-valued maps. As anticipated in the Introduction, the minimality of the ambient Hilbert space, that is, the condition for which the ranges of the orthogonal projections $P^{-}(k)$ and $P^{+}(k)$ together exhaust the whole~$\mathcal{H}$ (compare Definition~\ref{def:minimal}), influences the final results. We will therefore distinguish between the minimal and non-minimal settings.

\subsubsection{Non-minimal ambient Hilbert space}

In a non-minimal ambient Hilbert space, homotopy equivalence is again reduced to Murray--von Neumann equivalence. Notice that this setting covers in particular  finite-rank projections acting on an infinite-dimensional ambient space; the same proof presented below would also cover the case of infinite-rank projections onto proper infinite-dimensional subspaces of the ambient Hilbert space, for which we have however already proved Theorem~\ref{thm:homotopy_iff_MvN_infinite-rank}.

\begin{theorem} \label{thm:homotopy_iff_MvN_non-minimal}
Two particle-hole and/or chiral symmetric pairs of projection-valued maps over the $d$-dimensional torus, $d \le 2$, acting on a non-minimal ambient Hilbert space are homotopically equivalent if and only if they are Murray--von Neumann equivalent.
\end{theorem}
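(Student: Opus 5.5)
The plan is to follow the template of Theorem~\ref{thm:Homotopies_AAIAII}. One direction is immediate: by Principle~\ref{pri:unitary_eq_to_homotopy}, homotopic pairs are unitarily equivalent, hence Murray--von Neumann equivalent. For the converse, Theorem~\ref{thm:unitary_iff_MvN} turns a Murray--von Neumann equivalence between $P_0^{\pm}$ and $P_1^{\pm}$ into a symmetric unitary equivalence $V \colon \mathbb{T}^{d} \to \U(\mathcal{H})$. By Principle~\ref{pri:unitary_eq_to_homotopy} again, it then suffices to produce a symmetric unitary-valued map $V_0$ that commutes with $P_0^{\pm}(k)$ and lies in the same symmetric homotopy class as $V$.

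If $\mathcal{H}$ is infinite-dimensional, Section~\ref{sec:Homotopies_unitary} already lets us deform $V$ to $\Id_{\mathcal{H}}$, and we are done. So assume $\dim \mathcal{H} < \infty$. The case $Q_0 := \Id_{\mathcal{H}} - P_0^- - P_0^+ \neq 0$ is where non-minimality enters.

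The idea is to put all the homotopy content of $V$ into the complementary range, where $V_0$ is unconstrained by the pair. We take $V_0 := (P_0^- + P_0^+) + V_0^Q$, with $V_0^Q$ a symmetric unitary on $\operatorname{Ran} Q_0$. Note that $Q_0$ is covariant and not paired: it is particle-hole covariant, chiral covariant, or both, as in Section~\ref{sec:unitary_equivalences}. Such a $V_0$ commutes with $P_0^{\pm}$ automatically. It satisfies~\eqref{MvNsymmetry} as soon as $V_0^Q$ does, because $C$, $S$ and $T$ preserve $\operatorname{Ran}(P_0^-+P_0^+)$ and $\operatorname{Ran} Q_0$.

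Next we reduce $V$ to invariants via Section~\ref{sec:Homotopies_unitary}. Chiral symmetry of $V$ (commutation $SV=VS$) splits $V = V^{\uparrow} \oplus V^{\downarrow}$ along the eigenspaces of $S$. An antiunitary symmetry ($C$ alone, or $T$ commuting with $S$) makes each block orthogonal or symplectic in the sense of Theorem~\ref{thm:TRS_unitary_homotopy}. If $T$ anticommutes with $S$, the symmetry exchanges the two blocks, and only $V^{\uparrow}$ is free. By Theorem~\ref{thm:TRS_unitary_homotopy}, the symmetric homotopy class is then fixed by the determinants of the relevant blocks: their values in $\{\pm1\}$ at the fixed points $k_\star$ in the real case, and their winding numbers along $\mathbb{T}^1_1$ and $\mathbb{T}^1_2$ when $d \ge 1$.

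To match these invariants, we use $Q_0 = Q_0^{\uparrow} + Q_0^{\downarrow}$ from~\eqref{eq:decomposition_chiral_covariant}. In each relevant block of $Q_0$ we choose a symmetric periodic, or pseudo-periodic, basis, whose existence is guaranteed by Propositions~\ref{prop:PHCovariant}, \ref{prop:SCovariant}, \ref{prop:PHSCovariant_commute} and~\ref{prop:PHSCovariant_anticommute}. On a single basis vector (or on a single Kramers pair) we let $V_0^Q$ act by the phase $\det V^{\uparrow/\downarrow}(k)$, or by $e^{i(\ell_1k_1+\ell_2k_2)}$, exactly as in~\eqref{eq:auxiliary_equation_30}--\eqref{eq:auxiliary_equation_32}. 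On everything else in that block it acts as the identity. Periodicity of $V_0$ follows as in Principle~\ref{pri:Periodicity_from_pseudo-periodicity}, and symmetry follows from the same computation as in the proof of Theorem~\ref{thm:Homotopies_AAIAII}. By construction the block determinants of $V_0$ and $V$ coincide, so they are symmetrically homotopic by Theorem~\ref{thm:TRS_unitary_homotopy}.

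The main obstacle is a block where $Q_0$ has \emph{zero} rank, for instance $\dim\mathcal{H}^{\uparrow} = n$ while $\dim \mathcal{H}^{\downarrow} > n$. That block has no room for the phase, so its determinant cannot be prescribed in this way. My first approach would be to show that the determinant of $V$ on such a block is already forced by the pair. Since $V$ intertwines $P_0^{\pm}$ with $P_1^{\pm}$ and anticommutation-type relations link the blocks, one can compare $\det V^{\uparrow}$ with $\det V^{\downarrow}$ through $V|_{\operatorname{Ran}(P^-+P^+)}$. If that fails, the fallback is to precompose $V$ with a symmetric unitary built from the pair's own bases, acting as a phase on $v_1$ and compensating on $v_{2n}$. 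This is the one place where the chiral and particle-hole structure must be tracked class by class.
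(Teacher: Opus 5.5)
Your construction is the one behind the paper's proof. There, $V_0$ is obtained by applying Theorem~\ref{thm:homotopy_iff_MvN_covariant} to the complementary maps $Q_0,Q_1$, which are intertwined by the same $V$. The paper then observes that the resulting $V(0,k)$ is the identity off $\operatorname{Ran}Q_0(k)$. That is exactly your $V_0=(P_0^-+P_0^+)+V_0^Q$, with the determinant matching of \eqref{eq:auxiliary_equation_30}--\eqref{eq:auxiliary_equation_32} carried out in each block of $Q_0$.

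The obstacle you isolate is genuine, and the paper's proof does not address it. If $Q_0^{\uparrow}\equiv 0$, any operator that is the identity off $\operatorname{Ran}Q_0$ has trivial $\uparrow$-block. Such an operator can be chirally homotopic to $V$ only if $\det V^{\uparrow}$ has trivial data in the sense of Theorem~\ref{thm:TRS_unitary_homotopy}. This can occur only in classes AIII, BDI and CII:
\begin{itemize}
\item in CI and DIII, $T$ maps $\operatorname{Ran}Q_0^\uparrow(k)$ onto $\operatorname{Ran}Q_0^\downarrow(-k)$, so the two blocks of $Q_0\neq 0$ have equal, positive rank;
\item in C and D there is no splitting.
\end{itemize}

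Your first idea fails, because $\det V^{\uparrow}$ is not determined by the pair. Take class AIII, $d=1$, $\mathcal{H}=\C^n\oplus\C^{n+1}$ and $S=\Id_n\oplus(-\Id_{n+1})$, with the constant pair
\[ P_0^{\pm}=P_1^{\pm}=\frac12\begin{pmatrix}\Id_n & \pm W\\ \pm W^* & W^*W\end{pmatrix},\qquad W=\begin{pmatrix}\Id_n & 0\end{pmatrix}, \]
for which $Q_0^\uparrow=0$. Set $V(k)=A(k)\oplus\bigl(A(k)\oplus 1\bigr)$ with $A(k)=\operatorname{diag}(e^{ik},1,\dots,1)$. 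This $V$ is a chiral symmetric self-equivalence of the pair with $[\det V^\uparrow]=1$. Composing any given equivalence with it shifts $[\det V^\uparrow]$ by one.

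Your fallback is the right fix, but it needs one correction. Chiral symmetry forces the \emph{same} phase $\phi$ on $v_1$ and on $v_{2n}=Sv_1$, not a compensating one. Since $\operatorname{span}\{v_1,Sv_1\}$ meets each of $\mathcal{H}^\uparrow$ and $\mathcal{H}^\downarrow$ in a line, this multiplies both block determinants by $\phi$. So:
\begin{itemize}
\item choose $\phi=\det V^\uparrow$, with $v_1$ time-reversal invariant in BDI; in CII, use $e^{i\ell\cdot k}$ on a Kramers pair and on its $S$-image;
\item absorb the remaining discrepancy in $\det V^\downarrow$ on a vector (a Kramers pair in CII) of $Q_0^\downarrow$, exactly as in your main construction; $Q_0^\downarrow$ is nonzero by non-minimality.
\end{itemize}
This yields a symmetric $V_0$ that commutes with $P_0^{\pm}$ and has the same block-determinant data as $V$.

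Your infinite-dimensional shortcut has the same defect. $\dim\mathcal{H}=\infty$ does not make $\mathcal{H}^\uparrow$ infinite-dimensional; the case $\dim\mathcal{H}^\uparrow=n$ is allowed. Kuiper's theorem must therefore be applied block by block, using the construction above in any finite-dimensional block.
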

\begin{proof}
As before, the non-trivial implication is proving that Murray--von Neumann equivalent maps are also homotopic. Let $P_0^{\pm}, P_1^{\pm} \colon \mathbb{T}^{d} \to \Proj_n(\mathcal{H})$ be two Murray--von Neumann equivalent symmetric pairs of projection-valued maps, and assume that the span of the ranges of $P_0^{-}(k)$ and $P_0^{+}(k)$, as well as that of the ranges of $P_1^{-}(k)$ and $P_1^{+}(k)$, are proper subspaces of $\mathcal{H}$. Consider then the covariant complementary projection-valued maps $Q_0, Q_1 \colon \mathbb{T}^{d} \to \Proj_{\dim(\mathcal{H})-n}(\mathcal{H})$, as in Definition~\ref{def:Complementary}. Since $P_0^{\pm}$ and $P_1^{\pm}$ are Murray--von Neumann equivalent, they are unitarily equivalent by Theorem~\ref{thm:unitary_iff_MvN}: the unitary equivalence $V$ also implements Murray--von Neumann equivalence among the complementary maps $Q_0$ and $Q_1$. In turn, Theorem~\ref{thm:homotopy_iff_MvN_covariant} yields that they are homotopic via a homotopy $Q \colon [0,1] \times \mathbb{T}^{d} \to \Proj_{\dim(\mathcal{H})-n}(\mathcal{H})$, which is implemented by unitary conjugation via unitaries $V(t,k)$ such that $V(1,k) \equiv V(k)$ and $\left[V(0,k), Q_0(k)\right] \equiv 0$ for all $k \in \mathbb{T}^{d}$, in accordance with Principle~\ref{pri:unitary_eq_to_homotopy}. Upon close inspection of the proof of Theorem~\ref{thm:Homotopies_AAIAII}, on which Theorem~\ref{thm:homotopy_iff_MvN_covariant} relies, we see that $V(0,k)$ actually acts as the identity on the orthogonal complement to the range of $Q_0(k)$ (cf.~\eqref{eq:auxiliary_equation_30},~\eqref{eq:auxiliary_equation_31}, and~\eqref{eq:auxiliary_equation_32}): in particular, it commutes separately with $P_0^{-}(k)$ and $P_0^{+}(k)$ for $k \in \mathbb{T}^{d}$. This shows that $V \colon [0,1] \times \mathbb{T}^{d} \to \U(\mathcal{H})$ implements a homotopy of unitary-valued maps of the type prescribed by Principle~\ref{pri:unitary_eq_to_homotopy}, which then yields that $P_0^{\pm}$ and $P_1^{\pm}$ are homotopic, as desired.
\end{proof}

\subsubsection{Minimal ambient Hilbert space} \label{sec:Homotopy_minimal}

Finally, we discuss the homotopy classification for pairs of finite-rank projection-valued maps $P^\pm(k):\T^d \to \Proj_n(\Hi)$ in a minimal ambient Hilbert space of dimension $\dim(\Hi)=2n$, which belong to the seven AZC classes other than A, AI and AII. The minimality assumption justifies the identification $\mathcal{H} \simeq \C^{2n}$ upon choosing an appropriate basis of the ambient Hilbert space, which in turn allows to interpret the projection-valued maps $P^\pm$ as families of matrices in $\Proj_n(\C^{2n})$. With this minimality assumption, the homotopy classification of such families of matrices has been conducted in~\cite{gontier2022symmetric} in $d \le 1$, and topological indices characterizing the homotopy classes have been proposed. In the following, we will recover these results albeit at times in a slightly reformulated (but equivalent) version, which is suited for the extension to $d=2$. A more complete classification, valid in any dimension and which highlights the difference between \emph{strong} and \emph{weak} invariants (i.e.\ top-dimensional and lower-dimensional with respect to the CW complex decomposition of the $d$-dimensional torus), has been carried out in~\cite{Kennedy_2015}: it assumes minimality in chiral classes by a version of our Principle~\ref{pri:chiral_classes_min_dim}, but uses abstract algebraic-topological arguments to construct sets of equivariant homotopy classes of functions on $\mathbb{T}^{d}$ (with values in an appropriate classifying space depending on the AZC symmetry class) without discussing specific indices to be computed to identify the relevant homotopy invariants.

Once again, we'll carry out the analysis class by class, and highlight the new homotopy invariants that arise in each symmetry class. These invariants are not necessarily independent of each other or from other invariants appearing, for example, in the classification up to Murray--von Neumann equivalences; these possible further relations will be discussed in the next Section~\ref{sec:Properties_of_the_invariants}.

\medskip

\paragraph{\textsl{Class AIII}}

We choose an eigenbasis of the chiral symmetry operator $S$ for the ambient Hilbert space $\mathcal{H}$, so that the first vectors belong to $\mathcal{H}^{\uparrow} = \ker(S - \Id_{\mathcal{H}})$ and the last vectors belong to $\mathcal{H}^{\downarrow} = \ker(S + \Id_{\mathcal{H}})$. As dictated by Principle~\ref{pri:chiral_classes_min_dim}, the presence of a chiral pair of rank-$n$ projection-valued maps forces $\mathcal{H}^{\uparrow}$ and $\mathcal{H}^{\downarrow}$ to have the same dimension $n$, and in the decomposition $\C^{2n} \simeq \mathcal{H} = \mathcal{H}^{\uparrow} \oplus \mathcal{H}^{\downarrow} \simeq \C^n \oplus \C^n$ induced by the choice of the basis the chiral symmetry operator acts as
\begin{equation} \label{eq:S_normal_form}
S = \begin{pmatrix}
    \Id_n & 0 \\
    0 & - \Id_n
\end{pmatrix}.
\end{equation}
The same Principle also reduces the study of chiral pairs of projection-valued maps to that of unitary-valued maps $W \colon \mathbb{T}^{d} \to U(n)$. Indeed, looking at the block decomposition~\eqref{chiral_symm_P+-P-} for the reflection $P^{+} - P^{-} = \Id - 2 P^{-}$, it is clear that any homotopy $P^{-}(t, \cdot)$ induces a corresponding homotopy $P^{+}(t, \cdot) := S P^{-}(t, \cdot) S$ as well as a homotopy $W(t,\cdot)$ of the corresponding unitary block, $t \in [0,1]$; viceversa, any homotopy of the unitary block $W$ can be used to reconstruct one for the pair $P^{\pm}$ which is compatible with chiral symmetry.

The characterization of homotopy classes of unitary-valued maps has been carried out in Section~\ref{sec:Homotopies_unitary}. In particular, Theorem~\ref{thm:TRS_unitary_homotopy} motivates the following
\begin{definition}[Chiral invariant] \label{def:W_invariant}
Let $P^{\pm} \colon \mathbb{T}^{d} \to \Proj_n(\C^{2n})$ be a chiral symmetric pair of projection-valued maps. Let also $W \colon \mathbb{T}^{d} \to U(n)$ be as in~\eqref{chiral_symm_P+-P-}. We define the \emph{chiral invariant} of $P^{\pm}$ as follows:
\begin{itemize}
    \item if $d=0$, we set%
    \footnote{In this case, there are no homotopy classes to be distinguished. For the sake of a cleaner statement below, we still introduce a ``trivial'' invariant also for $0$-dimensional chiral pairs of rank-$n$ projections in $\C^{2n}$.}
    \[ \W(P^{\pm}) := 0; \]
    \item if $d=1$, we set
    \[ \W(P^{\pm}) := [\det W(\cdot)] \in \mathbb{Z}; \]
    \item if $d=2$, we set
    \[ \W(P^{\pm}) := \left( \left[\det W(\cdot) \big|_{\mathbb{T}^{1}_{1}}\right], \left[\det W(\cdot) \big|_{\mathbb{T}^{1}_{2}}\right] \right) \in \mathbb{Z} \times \mathbb{Z}, \]
    where $\mathbb{T}^{1}_{1}, \mathbb{T}^{1}_{2} \subset \mathbb{T}^{2}$ are as in~\eqref{eq:T^1_1,2}.
\end{itemize}
\end{definition}

Together with the previous considerations, an immediate application of Theorem~\ref{thm:TRS_unitary_homotopy} gives

\begin{theorem}[Homotopies of class-AIII pairs of projections on minimal spaces] \label{thm:Homotopy_AIII_minimal}
Two pairs of finite-rank projection-valued maps in class AIII on the $d$-dimensional torus, $d \le 2$, acting on a minimal ambient Hilbert space are homotopically equivalent if and only if their chiral invariants $\W(\cdot)$ coincide.
\end{theorem}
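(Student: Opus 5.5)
The plan is to reduce everything, via Principle~\ref{pri:chiral_classes_min_dim}, to the homotopy classification of unrestricted unitary-valued maps $W\colon\mathbb{T}^{d}\to U(n)$, and then to invoke Theorem~\ref{thm:TRS_unitary_homotopy} with no time-reversal symmetry.

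First, fix the normal form~\eqref{eq:S_normal_form} for $S$ in an eigenbasis adapted to $\mathcal{H}=\mathcal{H}^{\uparrow}\oplus\mathcal{H}^{\downarrow}$, with $\dim\mathcal{H}^{\uparrow}=\dim\mathcal{H}^{\downarrow}=n$. For any chiral symmetric pair $P^{\pm}$ on $\mathbb{C}^{2n}$ write $P^{+}-P^{-}$ as in~\eqref{chiral_symm_P+-P-}. This gives a bijection
\[ P^{\pm}\longleftrightarrow W, \qquad P^{\pm}(k)=\tfrac12\begin{pmatrix}\Id_n & \pm W(k)\\ \pm W(k)^* & \Id_n\end{pmatrix}, \]
between chiral symmetric pairs (for this fixed $S$) and continuous maps $\mathbb{T}^{d}\to U(n)$. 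The bijection is continuous in both directions in the sup-norm. Since $S$ is fixed throughout a homotopy of chiral pairs, homotopies of pairs $P^{\pm}(t,\cdot)$ correspond exactly to homotopies $W(t,\cdot)$ of unitary-valued maps, and conversely. So the two pairs are homotopic if and only if $W_0$ and $W_1$ are homotopic as maps into $U(n)$.

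One subtlety must be checked: the two pairs may come with their own chiral operators. The setting, however, fixes a single symmetry operator $S$ for the class, so both $W_0$ and $W_1$ are taken relative to the same decomposition and the same basis. The eigenbasis of $S$ is chosen once, and changing it within $\mathcal{H}^{\uparrow}$ and $\mathcal{H}^{\downarrow}$ multiplies $W$ by constant unitaries on both sides. Because $U(n)$ is connected, this changes neither the homotopy class of $W$ nor the windings of $\det W$, so $\W$ is well defined.

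Second, apply Theorem~\ref{thm:TRS_unitary_homotopy} without symmetry, case by case in $d$:
\begin{itemize}
\item for $d=0$, any two unitaries are connected, matching $\W\equiv 0$;
\item for $d=1$, $W_0\simeq W_1$ if and only if $[\det W_0]=[\det W_1]$;
\item for $d=2$, homotopy holds if and only if the restrictions to $\mathbb{T}^1_1$ and $\mathbb{T}^1_2$ are homotopic, which by the $d=1$ case means the two windings of Definition~\ref{def:W_invariant} agree.
\end{itemize}
Combining this with the first step proves the claim.

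The main (mild) obstacle is the $d=2$ step. It relies on $\pi_2(U(n))=0$ to extend a homotopy from the $1$-skeleton $\mathbb{T}^1_1\cup\mathbb{T}^1_2$ over the $2$-cell. One also needs that homotopies on the two circles, which meet at $(0,0)$, can be chosen to agree at that point. Since $U(n)$ is connected, one homotopy can be concatenated with a loop correction at the basepoint, and $\pi_1(U(n))\cong\mathbb{Z}$ is abelian, so this adjustment does not spoil the windings. I would follow the sketch already given in Theorem~\ref{thm:TRS_unitary_homotopy} for this point rather than redo it.
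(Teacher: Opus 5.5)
Your proposal is correct and follows the paper's own route. The paper also fixes $S$ in the normal form~\eqref{eq:S_normal_form} and uses Principle~\ref{pri:chiral_classes_min_dim} to identify homotopies of chiral pairs with homotopies of the off-diagonal block $W\colon\mathbb{T}^d\to U(n)$, then obtains the statement as an immediate application of the symmetry-free case of Theorem~\ref{thm:TRS_unitary_homotopy}. Your extra remarks only make explicit details the paper leaves implicit: the independence of $\W$ from the choice of eigenbasis, and the matching of the basepoint tracks $\gamma_1,\gamma_2$ of the two circle homotopies. For the latter, the group structure of $U(n)$ already suffices: replacing $H_2(t,k)$ by $\gamma_1(t)\gamma_2(t)^{-1}H_2(t,k)$ aligns the tracks without appealing to abelianness of $\pi_1$.
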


\begin{remark}[Homotopy vs Murray--von Neumann equivalence in chiral classes] \label{rmk:Hom_vs_MvN_in_chiral}
The previous Theorem, when compared with the results of Sections~\ref{sec:Bases_class_AIIICD} and~\ref{sec:Unitary_AIII} (in particular with Theorem~\ref{thm:unitary_iff_MvN}) and with Theorem~\ref{thm:homotopy_iff_MvN_non-minimal}, highlights a remarkable property of class-AIII pairs of projection-valued maps acting on minimal and non-minimal ambient spaces: their classification under Murray--von Neumann (or unitary) equivalence and under homotopy equivalence appear uncorrelated. Indeed, if the ambient Hilbert space is non-minimal, i.e.\ if $\dim(\mathcal{H}) > 2n$, then any two chiral symmetric pairs $P_0^{\pm}, P_1^{\pm} \colon \mathbb{T}^{d} \to \Proj_n(\mathcal{H})$ are homotopic if and only if they are Murray--von Neumann equivalent, that is, always when $d \le 1$, and if and only if $\Ch(P_0^{-}) = \Ch(P_1^{-}) \in \mathbb{Z}$ when $d=2$ (provided the rank $n$ is finite). In contrast, when $d=2$, $\dim(\mathcal{H}) = 2n$, and $P_0^{+}(k) + P_0^{-}(k) \equiv \Id_{\mathcal{H}}$, then on the one hand by additivity of the Chern number $\Ch(P_0^{+}) = - \Ch(P_0^{-})$; on the other hand, $P_0^{+}$ and $P_0^{-}$ are unitarily intertwined by the chiral symmetry operator by Definition~\ref{def:Symmetries}, and so $\Ch(P_0^{+}) = \Ch(P_0^{-})$. These identities together imply that
\[ \Ch(P_0^{+}) = \Ch(P_0^{-}) = 0, \]
and the same conclusion holds for any other pair $P_1^{\pm}$. So, on minimal ambient Hilbert spaces, all class-AIII pairs of projection-valued maps on the $d$-dimensional torus, $d \le 2$, are Murray--von Neumann equivalent to each other.

Theorem~\ref{thm:Homotopy_AIII_minimal} shows instead that two such pairs of projection-valued maps may very well be non-homotopic, at least if $d \in \{1,2\}$, as the chiral invariant characterizes their classification under homotopy equivalence: the class-AIII pairs therefore fall in countably-many distinct homotopy classes.

This decoupling between the classification under Murray--von Neumann equivalence and under homotopy equivalences in minimal vs non-minimal ambient spaces will be a recurring feature in all AZC classes comprising a chiral symmetry, as we'll see below.
\end{remark}

\medskip

\paragraph{\textsl{Classes C and D}}

We now treat symmetry classes with a single particle-hole symmetry, of even (class D) or odd (class C) type. Recall that an antiunitary symmetry operator $C$ admits the normal form $C = \mathcal{K}$, the complex conjugation operator, if it squares to $+\Id_{\mathcal{H}}$, and $C = \mathcal{K} J$, with $J$ the standard symplectic matrix, if it squares to $-\Id_{\mathcal{H}}$: compare~\eqref{eq:TRS_normal_form_even} and~\eqref{eq:TRS_normal_form_odd}. The identification $\mathcal{H} \simeq \C^{2n}$ will therefore be done using the basis in which the particle-hole symmetry operator assumes this normal form.

Let us start from the even case, class D. For this type of pairs of projections, we need to introduce a new homotopy invariant.

\begin{definition}[Even-particle-hole invariant] \label{def:P_invariant}
Let $P^{\pm} \colon \mathbb{T}^{d} \to \Proj_n(\C^{2n})$ be an even-par\-ti\-cle-hole symmetric pair of projection-valued maps. Set
\begin{equation} \label{eq:L(k)}
L(k) := i \left( P^{+}(k) - P^{-}(k) \right) = i \left( \Id_{\mathcal{H}}  - 2 P^{-}(k) \right) , \quad k \in \mathbb{T}^{d}. 
\end{equation}
We define the \emph{even-particle-hole invariant} of $P^{\pm}$ as follows:
\begin{itemize}
    \item if $d=0$, we set
    \[ \PI(P^{\pm}) := \Pf(L) \in \{\pm 1\} \simeq \mathbb{Z}_2; \]
    \item if $d=1$, we set
    \[ \PI(P^{\pm}) := \big( \Pf(L(0)), \Pf(L(\pi)) \big) \in \mathbb{Z}_2 \times \mathbb{Z}_2; \]
    \item if $d=2$, we set
    \[ \PI(P^{\pm}) := \big( \Pf(L(0,0)), \Pf(L(0,\pi)), \Pf(L(\pi,0)), \Pf(L(\pi,\pi)) \big) \in \mathbb{Z}_2 \times \mathbb{Z}_2 \times \mathbb{Z}_2 \times \mathbb{Z}_2. \]
\end{itemize}
\end{definition}

\begin{remark}[On the definition of the even-particle-hole invariant] \label{rmk:P_invariant}
By definition, the even-particle-hole invariant of a pair of projection-valued maps on the $d$-dimensional torus in class D is a collection of Pfaffians, computed from certain matrices defined at the $2^{d}$ fixed points $k_{\star}$ for the involution $k \mapsto -k$ on $\mathbb{T}^{d}$, $d \le 2$: compare Remark~\ref{rmk:Fixed_points}. The fact that, at any fixed point $k_{\star} \in \mathbb{T}^{d}$, the matrix $L_{\star} := L(k_{\star})$ is orthogonal and skew-symmetric is an easy consequence of the particle-hole symmetry constraint $\mathcal{K} P^{+}(k_{\star}) = P^{-}(k_{\star}) \mathcal{K}$ (i.e.\ $\overline{ P^{+}(k_{\star}) }= P^{-}(k_{\star})$) on the two orthogonal projections $P^{\pm}(k_{\star})$: compare~\cite[Lemma III.6]{gontier2022symmetric}. Since $L_{\star}$ is skew-symmetric and real-valued, it admits a Pfaffian $\Pf(L_{\star}) \in \mathbb{R}$ such that $\Pf(L_{\star})^2 = \det(L_{\star})$; and since $L_{\star}$ is orthogonal, its determinant $\det(L_{\star}) \in \{\pm 1\}$ is necessarily equal to $+1$, being a square. It follows that $\Pf(L_{\star}) \in \mathbb{Z}_2$ as claimed in Definition~\ref{def:P_invariant}.

Notice that, changing the basis in which the particle-hole symmetry operator acts as the complex conjugation or changing the order of the pair $P^{\pm} \equiv (P^{-}, P^{+})$ may result in an overall change of sign for the Pfaffians entering the definition of the even-particle-hole invariant. 
\end{remark}

We are ready to identify homotopy classes of finite-rank projections in class D.

\begin{theorem}[Homotopies of class-D pairs of projections on minimal spaces] \label{thm:Homotopy_D_minimal}
Two pairs of projection-valued maps $P_0^{\pm}, P_1^{\pm} \colon \mathbb{T}^{d} \to \Proj_n(\C^{2n})$, $d \le 2$, $n < \infty$, in class D are homotopically equivalent if and only if they are Murray--von Neumann equivalent and $\PI(P_0^{\pm}) = \PI(P_1^{\pm})$.
\end{theorem}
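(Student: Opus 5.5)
Necessity is the easy direction. By Principle~\ref{pri:unitary_eq_to_homotopy}, homotopic pairs are unitarily equivalent, hence Murray--von Neumann equivalent. Moreover, along a homotopy $P^{\pm}(t,\cdot)$, each matrix $L(t,k_\star)$ stays real, orthogonal and skew-symmetric (Remark~\ref{rmk:P_invariant}). So $t \mapsto \Pf(L(t,k_\star)) \in \{\pm1\}$ is continuous, hence constant, and $\PI$ is a homotopy invariant.

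For sufficiency, I would use the normal form $C=\mathcal{K}$ on $\mathcal{H}\simeq\C^{2n}$. The key device is to encode the class-D pair in a single map $L(k)=i(\Id-2P^-(k))$, which is unitary, squares to $-\Id$, and satisfies $\overline{L(k)}=L(-k)$. This reduces homotopy of pairs to homotopy of maps into the space of complex structures, equivariant under $k\mapsto -k$ and conjugation. Equivalently, I would work with $P^-$ as a class-A map subject to the constraint $\overline{P^-(k)}=\Id-P^-(-k)$. I then follow the CW-skeleton strategy used in Theorem~\ref{thm:TRS_unitary_homotopy}, applied equivariantly.

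\textbf{Step 1: fixed points.} At each $k_\star$, $L_\star$ lies in the space of real orthogonal complex structures $O(2n)/U(n)$. This space has exactly two components, distinguished by $\Pf$. Equality of $\PI$ therefore lets me connect $L_0(k_\star)$ to $L_1(k_\star)$ by real paths.

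\textbf{Step 2: half-edges.} Next I extend over the 1-skeleton. Each edge of $\mathbb{T}^1_1$, $\mathbb{T}^1_2$ (and of the boundary of the square) joining two fixed points, say $[0,\pi]$, determines its mirror image by the symmetry. I therefore only need to homotope paths in $\Proj_n(\C^{2n})\simeq U(2n)/(U(n)\times U(n))$ relative to the endpoints already fixed. The obstruction lives in $\pi_1(\mathrm{Gr}_n(\C^{2n}))=0$, so it vanishes. For $d\le1$ this already finishes the proof, and the Murray--von Neumann hypothesis is automatic there.

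\textbf{Step 3: the 2-cell.} For $d=2$, I extend over the 2-cell. Equivariantly, only half of the square is free, for instance $k_1\in[0,\pi]$, with boundary behavior constrained on the lines $k_1=0,\pi$ by the symmetry. The obstruction to extending the homotopy lies in $\pi_2(\mathrm{Gr}_n(\C^{2n}))\simeq\Z$ and is measured by the difference of Chern numbers of $P^-_0$ and $P^-_1$. Here I would argue that, after Steps 1–2, the glued map on the doubled torus has Chern number $\Ch(P_1^-)-\Ch(P_0^-)$, and that Murray--von Neumann equivalence forces this to vanish via Proposition~\ref{prop:Class_AIIICD_frames}.

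I expect this last step to be the main obstacle. Two things need care: the symmetric half-cell bookkeeping, showing that the obstruction is exactly $\Ch$ and not, say, $\Ch$ modulo a contribution from the half-edge choices; and the fact that different choices of paths in Step 2 could shift the 2-cell obstruction. Since $\pi_1$ of the Grassmannian vanishes, the choices in Step 2 are unique up to homotopy, and I expect no shift. As a fallback I would instead try Principle~\ref{pri:unitary_eq_to_homotopy}: take the symmetric unitary equivalence $V$ provided by Theorem~\ref{thm:unitary_iff_MvN}. Then correct it by a symmetric unitary commuting with $P_0^{\pm}$, built from a particle-hole symmetric basis as in~\eqref{eq:auxiliary_equation_30}, so that the corrected map is homotopic to $\Id$. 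The Pfaffian data would control $\det V(k_\star)\in\{\pm1\}$ at the fixed points, which is exactly the $d=0$ obstruction for even antiunitary symmetry in Theorem~\ref{thm:TRS_unitary_homotopy}.
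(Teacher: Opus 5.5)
Your necessity argument and Steps 1--2 are fine. The gap is in Step 3, the only place where $d=2$ and the Murray--von Neumann hypothesis enter: you name the obstruction but never compute it, and the one justification you give is wrong. The fact $\pi_1(\mathrm{Gr}_n(\C^{2n}))=0$ gives \emph{existence} of an extension over $[0,1]\times[0,\pi]$ with prescribed boundary in Step 2. It does not give uniqueness: two such extensions rel boundary differ by an element of $\pi_2(\mathrm{Gr}_n(\C^{2n}))\simeq\mathbb{Z}$, which is exactly the group where your 2-cell obstruction lives. So a priori the Step-2 choices \emph{can} shift the obstruction.

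What closes the gap is equivariance. Write the constraint as $P^-(-k)=\sigma(P^-(k))$ with $\sigma(P):=\Id-\overline{P}$.
\begin{itemize}
\item Complement and complex conjugation each reverse $c_1$ of the tautological bundle, so $\sigma$ acts trivially on $\pi_2(\mathrm{Gr}_n(\C^{2n}))$.
\item The involution $k\mapsto -k$ reverses the orientation of each invariant circle but preserves that of $\mathbb{T}^2$.
\end{itemize}
Consequently, changing a half-edge homotopy by $m\in\pi_2$ changes its mirror image by $-m$. The faces $[0,1]\times\{k_1=0\}\times\mathbb{T}^1$ and $[0,1]\times\{k_1=\pi\}\times\mathbb{T}^1$ of the free cell $[0,1]\times[0,\pi]\times\mathbb{T}^1$ therefore contribute nothing; in the smooth setting the Berry curvature integral over them vanishes identically. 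The two copies of the seam $k_2=\pm\pi$ cancel. The faces $t=0,1$ contribute $\pm\Ch(P_s^-)/2$, because the curvature is invariant under $k\mapsto-k$. The obstruction is thus $(\Ch(P_1^-)-\Ch(P_0^-))/2$. This is an integer because equal $\PI$ forces equal parity of Chern numbers (Proposition~\ref{lem:Class_D_relation_Chern-P}), and it vanishes iff the pairs are Murray--von Neumann equivalent. With this supplied, your route is a valid alternative in the style of the paper's CI/DIII proofs. The new feature is that $\pi_2$ of the Grassmannian, unlike $\pi_2(U(n))$, is non-trivial.

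The paper instead takes your fallback, which avoids all 2-cell bookkeeping:
\begin{itemize}
\item Theorem~\ref{thm:unitary_iff_MvN} supplies a particle-hole symmetric unitary $V$; this is where $\Ch$ is used.
\item Since $V(k_\star)$ is real orthogonal, $\Pf(L_1(k_\star))=\Pf(L_0(k_\star))\det V(k_\star)$ gives $\det V(k_\star)=1$. Hence the windings of $\det V$ along the cardinal circles are even, say $2\ell$ (resp.\ $2\ell_1,2\ell_2$).
\item $V$ is compared with $V_0$, which multiplies $v_1$ and $v_{2n}$ of a symmetric pseudo-periodic basis of $P_0^\pm$ by $e^{i\ell\cdot k}$.
\item Because $\pi_2(U(2n))=0$, Theorem~\ref{thm:TRS_unitary_homotopy} connects $V$ to $V_0$ with no further obstruction, and Principle~\ref{pri:unitary_eq_to_homotopy} concludes.
\end{itemize}
Your sketch of this fallback needs one correction. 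The modified unitary cannot literally be the one in \eqref{eq:auxiliary_equation_30}: multiplying only $v_1(k)$ by $\det V(k)$ violates $CV_0(k)=V_0(-k)C$, since $Cv_1(k)=v_{2n}(-k)$. The winding has to be split evenly over the $C$-related pair. That is precisely where $\det V(k_\star)=1$, and not merely $\pm1$, is needed.
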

\begin{proof}
The fact that the even-particle-hole invariant is indeed a homotopy invariant follows from the fact that, by definition, it is a continuous function of the pair of projection-valued maps, and takes values in the discrete space $\mathbb{Z}_2$: along continuous deformations (i.e.\ homotopies) of its argument, it must therefore stay constant. Recall now that, in view of the results of Sections~\ref{sec:Bases_class_AIIICD} and~\ref{sec:Unitary_CD}, the Chern number $\Ch(P^{-}) \in \mathbb{Z}$ characterizes the Murray--von Neumann and unitary equivalence classes of a pair of projection-valued maps in class D and in $d=2$: that the Chern number is also a homotopy invariant was stated in Remark~\ref{rmk:Chern}. 

We need to show that, conversely, two pairs of projection-valued maps $P_0^{\pm}, P_1^{\pm} \colon \mathbb{T}^{d} \to \Proj_n(\C^{2n})$ with the same Chern numbers (if $d=2$) and even-particle-hole invariants are homotopic. As usual, we resort to Principle~\ref{pri:unitary_eq_to_homotopy}. Notice first of all that, thanks to Theorem~\ref{thm:unitary_iff_MvN} and the hypothesis that $P_0^{\pm}$ and $P_1^{-}$ are Murray--von Neumann equivalent, there exists a unitary equivalence $V \colon \mathbb{T}^{d} \to U(2n)$ intertwining $P_0^{\pm}$ and $P_1^{\pm}$. We will now show that the equality of the even-particle-hole invariants is the condition that guarantees that $V$ can be deformed continuously to a unitary-valued map commuting with $P_0^{\pm}$.

Indeed, for $k \in \mathbb{T}^{d}$, the unitary operator $V(k)$ is such that $P_1^\pm(k)=V(k)P_0^\pm(k)V(k)^{-1}$ and satisfies the particle-hole symmetry constraint $\K V(k)=V(-k)\K$. At the fixed points $k_{\star} = -k_{\star} \in \mathbb{T}^{d}$, these two identities force
\[ L_1(k_{\star}) = V(k_{\star}) L_0(k_{\star}) V(k_{\star})^{\mathrm{t}} \quad \text{and} \quad V(k_\star) \in O(2n), \]
where $L_s(k)$ is as in~\eqref{eq:L(k)} with $P^{\pm} = P_s^{\pm}$, $s \in \{0,1\}$. Since the Pfaffian changes under congruences as $\Pf(BAB^{\mathrm{t}})=\Pf(A) \det(B)$, we conclude that
\[ \Pf\big( L_1(k_{\star}) \big) = \Pf\big( L_0(k_{\star}) \big) \, \det(V(k_{\star})). \]
If the two pairs $P_0^{\pm}$ and $P_1^{\pm}$ have the same even-particle-hole invariant, by its Definition~\ref{def:P_invariant} the Pfaffians on both sides of the above identity are equal, yielding that $\det(V(k_\star))=1$ on all fixed points $k_{\star} \in \mathbb{T}^{d}$. 

After these general considerations, we now specialize the dimension. If $d=0$, then $V$ with $\det(V)=1$ is in the path-connected component of the identity in $O(2n)$, and can therefore be continuously deformed to $\Id_{2n}$, which certainly commutes with $P_0^{\pm}$. If instead $d=1$, to study the homotopy class of $V \colon \mathbb{T}^{1} \to U(2n)$ we invoke Theorem~\ref{thm:TRS_unitary_homotopy}: since $\det(V(0)) = 1 = \det(V(\pi))$ and $\det(V(-k)) = \overline{\det(V(k))}$ by the particle-hole symmetry constraint~\eqref{MvNsymmetry}, the winding number of $\det(V(\cdot)) \colon \mathbb{T}^{1} \to U(1)$ is even, say equal to $2 \ell \in 2\mathbb{Z}$. If $d=2$, the same conclusion holds for the winding numbers of $\det(V(\cdot)) \big|_{\mathbb{T}^{1}_{1}}$ and $\det(V(\cdot)) \big|_{\mathbb{T}^{1}_{2}}$, which are then both even, say equal to $2 \ell_1 \in 2\mathbb{Z}$ and $2 \ell_2 \in 2\mathbb{Z}$, respectively. Let now $\{v_j(k)\}_{j \in \{1,\ldots,2n\}}$ be a particle-hole symmetric pseudo-periodic basis for $P_0^{\pm}$, whose existence is guaranteed by Proposition~\ref{prop:Class_AIIICD_frames}. Define $V_0(k) \in U(2n)$ as the linear operator such that
\[ V_0(k)\, v_j(k) := 
\begin{cases}
    e^{i \ell \cdot k} v_j(k) & \text{if } j \in \{1,2n\},\\
    v_j(k) & \text{otherwise},
\end{cases} \]
where $\ell \cdot k := \ell k$ in $d=1$ and $\ell \cdot k := \ell_1 k_1 + \ell_2 k_2$ in $d=2$. The map $V_0 \colon \mathbb{T}^{d} \to U(2n)$ is periodic and particle-hole symmetric in view of Principle~\ref{pri:Periodicity_from_pseudo-periodicity}, commutes with $P_0^{\pm}$ by construction, and is homotopic to $V$ thanks to Theorem~\ref{thm:TRS_unitary_homotopy}. Principle~\ref{pri:unitary_eq_to_homotopy} concludes the proof.
\end{proof}
    
For class C, the proof above can be applied without resorting to additional topological invariants. Indeed, on a fixed point $k_{\star} = - k_{\star} \in \mathbb{T}^{d}$, the odd-particle-symmetry constraint on a unitary equivalence between a pair of projection-valued maps reads $\K J V(k_\star) = V(k_\star) \mathcal{K} J$ or $V(k_{\star})^{\mathrm{t}} J V(k_{\star}) = J$, meaning that $V(k_\star)$ is symplectic and therefore has unit determinant. We conclude that

\begin{theorem}[Homotopies of class-C pairs of projections on minimal spaces] \label{thm:Homotopy_C_minimal}
Two pairs of finite-rank projection-valued maps in class C on the $d$-dimensional torus, $d \le 2$, acting on a minimal ambient Hilbert space are homotopically equivalent if and only if they are Murray--von Neumann equivalent.
\end{theorem}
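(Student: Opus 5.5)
The plan is to follow the proof of Theorem~\ref{thm:Homotopy_D_minimal} verbatim, with the even-particle-hole invariant playing no role. The implication ``homotopic $\Rightarrow$ Murray--von Neumann equivalent'' is immediate: Principle~\ref{pri:unitary_eq_to_homotopy} gives a unitary equivalence, and hence a Murray--von Neumann one. For the converse, take $P_0^{\pm}, P_1^{\pm} \colon \mathbb{T}^{d} \to \Proj_n(\C^{2n})$ in class C that are Murray--von Neumann equivalent. By Theorem~\ref{thm:unitary_iff_MvN} there is a particle-hole symmetric unitary equivalence $V \colon \mathbb{T}^{d} \to U(2n)$ with $\mathcal{K}J\,V(k) = V(-k)\,\mathcal{K}J$. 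By Principle~\ref{pri:unitary_eq_to_homotopy}, it then suffices to deform $V$, through particle-hole symmetric unitary-valued maps, into some $V_0$ commuting with $P_0^{\pm}$.

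\textbf{Step 1: constraints at the fixed points.} At each fixed point $k_\star$ the symmetry constraint says $V(k_\star)^{\mathrm{t}} J V(k_\star) = J$, so $V(k_\star)$ lies in the unitary symplectic group and $\det V(k_\star) = 1$. Away from the fixed points, the constraint gives $\det V(-k) = \overline{\det V(k)}$.

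\textbf{Step 2: parity of the windings.} For $d=1$, the two facts in Step~1 make the winding of $\det V$ over $[0,\pi]$ an integer, and the winding over $[-\pi,0]$ equals it. The total winding is therefore even, say $2\ell$. For $d=2$, the same argument on $\mathbb{T}^1_1$ and $\mathbb{T}^1_2$ gives windings $2\ell_1$ and $2\ell_2$. This matches the odd case of Theorem~\ref{thm:TRS_unitary_homotopy}: conjugating by $\mathcal{K}J$ is an odd time-reversal type constraint, so homotopy classes of such $V$ are detected exactly by these windings on the $1$-skeleton. For $d=0$, the symplectic group is connected, so $V$ deforms directly to $\Id$.

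\textbf{Step 3: construction of $V_0$.} Take a particle-hole symmetric pseudo-periodic basis $\{v_j(k)\}$ for $P_0^{\pm}$, as provided by Proposition~\ref{prop:Class_AIIICD_frames}. Define $V_0(k)$ to multiply $v_1(k)$ and $v_{2n}(k)$ by $e^{i\ell\cdot k}$ and to act as the identity on the remaining $v_j(k)$. Since $\mathcal{H}$ is minimal, these vectors span $\C^{2n}$, so $V_0(k)$ is a well-defined unitary.

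The following checks are then needed. $V_0$ is periodic, because it is diagonal in the basis with phases that do not interact with the pseudo-periodic matching phases (Principle~\ref{pri:Periodicity_from_pseudo-periodicity}). It commutes with $P_0^{\pm}$ by construction. It is particle-hole symmetric: $C v_1(k) = v_{2n}(-k)$, and $C\,e^{i\ell\cdot k} = e^{-i\ell\cdot k}\,C = e^{i\ell\cdot(-k)}\,C$, so the phases are mapped consistently. Finally, $\det V_0 = e^{2i\ell\cdot k}$, so its windings equal those of $\det V$, and Theorem~\ref{thm:TRS_unitary_homotopy} gives a symmetric homotopy from $V$ to $V_0$.

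\textbf{Main obstacle.} The delicate point is the symmetry check for $V_0$. The relation $C v_j(k) = v_{2n-j+1}(-k)$ must intertwine the phases correctly, which works precisely because the antiunitary $C$ conjugates the phase while $k$ flips sign. A second point needs care: Theorem~\ref{thm:TRS_unitary_homotopy} is stated for time-reversal maps. Here it is applied with $C$ in place of $T$, which is legitimate by the formal equivalence noted after~\eqref{MvNsymmetry}.
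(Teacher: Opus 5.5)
Your proposal is correct and follows the paper's proof. The paper establishes this theorem by rerunning the class-D argument. It notes that at the fixed points the odd particle-hole constraint makes $V(k_\star)$ symplectic, so $\det V(k_\star)=1$ automatically and no Pfaffian-type invariant is needed. The windings are then even, and the phase-twisted $V_0$ is built from a symmetric pseudo-periodic basis. Your explicit checks of periodicity and particle-hole symmetry of $V_0$, and your use of the odd case of Theorem~\ref{thm:TRS_unitary_homotopy} with $C$ in place of $T$, fill in details the paper leaves implicit.
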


\medskip

\paragraph{\textsl{Classes BDI and CII}}

In these classes, the symmetries commute, and we can impose normal forms for all of them in the same basis: \eqref{eq:TRS_normal_form_even} and~\eqref{eq:TRS_normal_form_odd} for antiunitary symmetries, and~\eqref{eq:S_normal_form} for the unitary chiral symmetry. In this basis, the minimal ambient Hilbert space is then identified as $\mathcal{H} \simeq \C^{2n}$. As already noticed, in presence of an odd antiunitary symmetry, $n$ is itself necessarily even.

\begin{theorem}[Homotopies of class-BDI pairs of projections on minimal spaces] \label{thm:Homotopy_BDI_minimal}
Two pairs of finite-rank projection-valued maps in class BDI on the $d$-dimensional torus, $d \le 2$, acting on a minimal ambient Hilbert space are homotopically equivalent if and only if both their chiral invariants $\W(\cdot)$, which take values in even integers, and their even-particle-hole invariants $\PI(\cdot)$ coincide.
\end{theorem}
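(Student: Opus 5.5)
Since the three symmetries commute and can be put simultaneously in normal form, I would work with $\mathcal{H}\simeq\C^{n}\oplus\C^{n}$, $S=\operatorname{diag}(\Id_n,-\Id_n)$ and $T=C=\mathcal{K}$ (up to the harmless choice of making them commute). By Principle~\ref{pri:chiral_classes_min_dim}, a BDI pair corresponds to a unitary block $W\colon\mathbb{T}^d\to U(n)$ in~\eqref{chiral_symm_P+-P-}. The first step is to translate the remaining symmetries into a constraint on $W$. Since $T$ commutes with $S$ and with $P^{\pm}$ up to $k\mapsto -k$, the relation $\mathcal{K}(P^+-P^-)(k)=(P^+-P^-)(-k)\mathcal{K}$ gives $\overline{W(k)}=W(-k)$, and $C$ gives the same condition. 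Conversely, any $W$ with $\overline{W(k)}=W(-k)$ yields a BDI pair. A BDI homotopy of pairs is therefore the same as a homotopy of $W$ among maps with $\overline{W(k)}=W(-k)$, which is exactly the even time-reversal setting of Theorem~\ref{thm:TRS_unitary_homotopy}, with $W$ in place of $V$.

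The second step is to apply Theorem~\ref{thm:TRS_unitary_homotopy} in this setting. At each fixed point $k_\star$ the matrix $W(k_\star)$ is real orthogonal, so $\det W(k_\star)\in\{\pm1\}$. In $d=0$ the homotopy class is $\det W$. In $d=1$ it is the winding of $\det W$, and in $d=2$ it is the pair of windings on $\mathbb{T}^1_1,\mathbb{T}^1_2$, which is $\W(P^\pm)$. The constraint $\det W(-k)=\overline{\det W(k)}$ forces the winding on each loop to have parity fixed by the signs at its two fixed points. So the winding is even exactly when $\det W(0)=\det W(\pi)$ along that loop.

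Here I would take care, since the theorem as I recall it records only windings in $d\ge1$. I expect the \emph{even} time-reversal case actually needs the determinant signs at all fixed points as well: two orthogonal-endpoint paths with the same winding but different $\det$ at $k_\star$ cannot be homotopic, because the endpoints live in different components of $O(n)$. So I would prove, or read off from the proof sketch of Theorem~\ref{thm:TRS_unitary_homotopy}, that the complete invariant for even-TRS maps $\mathbb{T}^d\to U(n)$ is the collection of $\det W(k_\star)\in\Z_2$ together with the windings. The argument is to deform $W$ on the half-circle $[0,\pi]$ rel fixed points (on $[0,\pi]$ the group $U(n)$ is connected and $\pi_1$ is detected by $\det$), and then extend by symmetry. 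The 2-cell is handled as in the sketch via $\pi_2(U(n))=0$, applied to an equivariant cell decomposition.

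The final and main step is to relate these data to the invariants in the statement. At $k_\star$ we have $L(k_\star)=i(P^+-P^-)(k_\star)$, written in the basis where $C=\mathcal{K}$. Since here $C=T$ acts diagonally while $S$ is diagonal, I need to pass to the basis in which $C$ is complex conjugation; if $C=\mathcal{K}$ already, then $L_\star=i\begin{pmatrix}0&W_\star\\W_\star^*&0\end{pmatrix}$. This is imaginary rather than real, so I would check the normal form. With the real representative $\begin{pmatrix}0&W_\star\\-W_\star^{\mathrm t}&0\end{pmatrix}$ (after absorbing the factor $i$ by a fixed unitary change of basis), the Pfaffian identity $\Pf\begin{pmatrix}0&A\\-A^{\mathrm t}&0\end{pmatrix}=(-1)^{n(n-1)/2}\det A$ gives $\PI$ at $k_\star$ equal to $\pm\det W(k_\star)$, with a fixed global sign. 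Hence $\PI(P^\pm)$ encodes exactly the signs $\det W(k_\star)$, and $\W$ encodes the windings. This proves the ``if and only if''.

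For the parenthetical claim, I would observe that if two pairs share $\PI$, then the difference of their windings on each loop is even. The statement that $\W$ ``takes values in even integers'' presumably refers to this relative normalization, or to the normalization forced by the convention on $T$. Pinning down this sign and normalization bookkeeping, and in particular the factor $i$ in the normal form, is where I expect the real difficulty to lie.
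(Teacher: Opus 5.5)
Your argument for the equivalence is sound. It follows the paper's reduction: Principle~\ref{pri:chiral_classes_min_dim} turns a BDI pair into an off-diagonal unitary block obeying an even time-reversal constraint, one classifies such maps, and one matches the data with $\W$ and $\PI$. At the decisive step, however, you do something different, and your version is the correct one.

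The paper takes $C=\K$ and sets $V=iW$, so that $\overline{V(k)}=V(-k)$. It then applies Theorem~\ref{thm:TRS_unitary_homotopy} as stated, i.e.\ with windings alone for $d\ge1$. The fixed points are handled by the chain $\det V=\Pf(L)^2=\dots$ and by the claim that equal $\PI$ forces $\det V(k_\star)=1$. As you suspected, the even case of that theorem is incomplete. For $n=1$, the maps $V\equiv1$ and $V\equiv-1$ have the same winding but cannot be joined symmetrically, since the value at $k=0$ must stay in $O(1)$. Moreover, the Pfaffian relation is linear, not quadratic. In that basis $L_\star=\begin{pmatrix}0&V_\star\\-V_\star^{\mathrm t}&0\end{pmatrix}$ with $V_\star$ real orthogonal, so $\Pf(L_\star)=(-1)^{n(n-1)/2}\det V_\star$.

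Your half-circle construction is the right replacement:
\begin{itemize}
\item join $W_0(k_\star)$ to $W_1(k_\star)$ inside one component of $O(n)$, so the sides carry no winding;
\item fill the rectangle over $[0,\pi]$, which is possible because the windings agree;
\item extend by $W(t,-k)=\overline{W(t,k)}$.
\end{itemize}
This is the paper's class-CI argument with $O(n)$ in place of symmetric unitaries. Here the winding on the sides can no longer be adjusted, and that is exactly why $\W$ survives as an invariant.

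In $d=2$, say explicitly that the equivariant $1$-skeleton also contains the half-edges on $\{k_1=\pi\}$ and $\{k_2=\pi\}$. Their windings equal those along the parallel loops $\mathbb{T}^1_2$ and $\mathbb{T}^1_1$, so $(\W,\PI)$ controls all the $1$-skeleton data. Then $\pi_2(U(n))=0$ fills the two free orbits of $2$-cells in $[0,\pi]\times\mathbb{T}^1$.

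Two corrections are needed.

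\textbf{Normal form.} $T$ and $C$ cannot both be $\K$. Then $S=TC=\Id$, and the constraints $\K P^+(k)=P^+(-k)\K$ and $\K P^+(k)=P^-(-k)\K$ would force $P^+=P^-$. Relatedly, with $C=\K$ you would get $\overline{W(k)}=-W(-k)$, which is not the same condition as the one from $T$. Instead take $T=\K$ and $C=\K S$, so that $S=TC$. Then both symmetries give $\overline{W(k)}=W(-k)$. Conjugating by the constant unitary $\operatorname{diag}(\Id_n,i\Id_n)$ brings $C$ to $\K$ and $L$ to $\begin{pmatrix}0&W\\-W^*&0\end{pmatrix}$, which is real and skew at the fixed points. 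This disposes of the factor $i$. The global sign is fixed once the basis is fixed, as in Remark~\ref{rmk:P_invariant}.

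\textbf{Evenness.} Do not look for a normalization that makes the parenthetical claim true: it is false. Your own parity computation shows that the winding along a loop through two fixed points is even if and only if the two values of $\det W(k_\star)$ agree, and they need not agree. Take $n=1$, $S=\operatorname{diag}(1,-1)$, $T=\K$ and $W(k)=e^{ik}$ (the flat-band SSH chain). This is a BDI pair with $\W=1$ and Pfaffians $+1,-1$ at $k=0,\pi$. That is consistent with $(-1)^{\W}=\prod_{k_\star}\Pf(L(k_\star))$ from Propositions~\ref{lem:relation_W-I} and~\ref{lem:relation_I-P}. The evenness in the statement, and the entries $2\Z$ in the tables, stem from the erroneous step $\det V(k_\star)=1$. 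What your proof establishes is the equivalence with $\W$ integer-valued, with the parity of each component fixed by $\PI$.
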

\begin{proof}
Notice how in class BDI both chiral symmetry and an even particle-hole symmetry are present, so that one can define the corresponding invariants as in the statement, and both are homotopy invariants. These two symmetry operators will be taken of the form
\[ S=\begin{pmatrix}
        \Id_n & 0 \\ 0 & -\Id_n
    \end{pmatrix}, \quad C=\mathcal{K}, \]
compare~\eqref{eq:S_normal_form} and~\eqref{eq:TRS_normal_form_even}. For a map $P^{\pm}$ in class BDI (in particular, chiral symmetric), Principle~\ref{pri:chiral_classes_min_dim} still applies and yields the following block decomposition:
\begin{equation} \label{eq:L(k)_block}
L(k) = i \begin{pmatrix}
0 & W(k) \\ W(k)^* & 0 
\end{pmatrix},    
\end{equation}
where $L(k)$ is as in~\eqref{eq:L(k)} and $W \colon \mathbb{T}^{d} \to U(n)$ is a unitary-valued map. The particle-hole symmetry satisfied by $P^{\pm}$ further implies
\[ C L(k) = L(-k) C \iff \overline{L(k)} = - L(-k) \iff \overline{i W(k)} = i W(-k) \quad \forall\: k \in \mathbb{T}^{d}. \]
In what follows, we'll denote $V(k) := i W(k)$, $k \in \mathbb{T}^{d}$, the unitary off-diagonal block associated to the matrix $L(k)$ (and thus, in turn, to the pair of projection-valued maps $P^{\pm}$). Then, the condition $\overline{V(k)} = V(-k)$ qualifies $V \colon \mathbb{T}^{d} \to U(n)$ as a time-reversal symmetric unitary-valued map with respect to an even time-reversal symmetry given by the complex conjugation $\mathcal{K}$.

As before, we only need to prove that maps $P_0^{\pm}, P_1^{\pm} \colon \mathbb{T}^{d} \to \Proj_n(\C^{2n})$ with the same invariants are homotopic. In view of the previous considerations and of the discussion in class AIII, in order to do so it suffices to construct a corresponding homotopy between their unitary-valued off-diagonal blocks $W_0$ and $W_1$, or, equivalently, between $V_0 = i W_0$ and $V_1 = i W_1$, which further preserves the constraint coming from particle-hole symmetry, namely the time-reversal symmetry constraint on the $V$'s discussed above. We show that, under the hypothesis that $\W(P_0^{\pm}) = \W(P_1^{\pm})$ and $\PI(P_0^{\pm}) = \PI(P_1^{\pm})$, this can be achieved. We will rely on Theorem~\ref{thm:TRS_unitary_homotopy}, which characterizes homotopy classes of (time-reversal symmetric) unitary-valued maps.

Let first $d=0$. The condition for $V_0$ and $V_1$ to be homotopic (i.e.\ path-connected) as even-time-reversal symmetric unitary matrices is that $\det(V_0) = \det(V_1)$. By Definition~\ref{def:P_invariant}, we have
\[ \det(V_0) = \Pf(L_0)^2 = \PI(P_0^{\pm})^2 = \PI(P_1^{\pm})^2 = \Pf(L_1) = \det(V_1), \]
with an obvious meaning of $L_0$ and $L_1$, and we are done. In particular, both determinants are equal to $1$.

Let now $d=1$. The condition for $V_0$ and $V_1$ to be homotopic as even-time-reversal symmetry unitary-valued maps on $\mathbb{T}^{1}$ is that the winding numbers of their determinants coincide. By Definition~\ref{def:W_invariant},  we have
\[ [\det(V_0(\cdot))] = [\det(W_0(\cdot))] = \W(P_0^{\pm}) = \W(P_1^{\pm}) = [\det(W_1(\cdot))] = [\det(V_1(\cdot))] \]
so that the conclusion follows by the equality of the chiral invariants. Notice that the equality of the even-particle-hole invariants implies this time that $\det(V_0(k_{\star})) = 1 = \det(V_1(k_{\star}))$ at the fixed points $k_{\star} \in \{0, \pm \pi\}$, so that the winding numbers above are also even: the determinant is already periodic on the sub-interval $[0,\pi] \subset \mathbb{T}^{1}$, and the time-reversal symmetry condition implies $\overline{\det(V(k))} = \det(V(-k))$, so that the two ``halves'' of the torus $[-\pi,0]$ and $[0,\pi]$ give the same integer contribution to the winding number.

Finally, if $d=2$, a similar argument holds when considering winding numbers for the restriction of the determinats of $V_0$ and $V_1$ on the two $1$-dimensional sub-tori $\mathbb{T}^{1}_{1}, \mathbb{T}^{1}_{2} \subset \mathbb{T}^{2}$, as those winding numbers compute the chiral invariant of the pairs of projection-valued maps. Since the equality of these winding numbers is the condition which guarantees that the two time-reversal symmetric unitary-valued maps are homotopic, this concludes the proof.
\end{proof}

\begin{remark}
It appears from the proof that, as far as homotopies among class-BDI pairs of maps are concerned, the even-particle-hole invariant plays a role in $d=0$, while the chiral invariant plays a role in $d=1$. Requiring that both chiral invariants and even-particle-hole invariant agree in all dimensions $d \le 2$ is indeed redundant, and the two sets of invariants are related among each other. This redundancy will be explored in Section~\ref{sec:Properties_of_the_invariants}.
\end{remark}

Coming to class CII, the above proof can be replicated with minor modifications. First of all, the relevant symmetries here are commuting chiral and odd time-reversal symmetries, acting on the same basis as
\[ S=\begin{pmatrix}
        \Id_n & 0 \\ 0 & -\Id_n
    \end{pmatrix}, \quad T=\K \begin{pmatrix}
        J&0\\0 &J
    \end{pmatrix}, \] 
where $J$ is the standard $n \times n$ symplectic matrix: compare~\eqref{eq:S_normal_form} and~\eqref{eq:TRS_normal_form_odd}. This time, the time-reversal symmetric constraint $T P^\pm(k) = P^\pm(-k) T$ implies that $T(P^+(k)-P^-(k))=(P^+(-k)-P^-(-k))T$. It follows then from Principle~\ref{pri:chiral_classes_min_dim} that
    \begin{equation}\label{eq:Class_CII_symmetry_structure}
        \begin{pmatrix}
            J&0\\ 0 & J
        \end{pmatrix}\overline{\begin{pmatrix}
            0 & W(k) \\ W(k)^* & 0 
        \end{pmatrix}} = \begin{pmatrix}
            0 & W(-k) \\ W(-k)^* & 0
        \end{pmatrix} \begin{pmatrix}
            J&0\\ 0 & J
        \end{pmatrix}.
    \end{equation}
Hence the unitary-valued map $W:\T^d \to U(n)$ must satisfy $J\overline{W(k)}=W(-k)J$, which is indeed an odd-time-reversal symmetry constraint. Comparing with the classification up to homotopy equivalence for such maps established in Theorem~\ref{thm:TRS_unitary_homotopy}, one can conclude the following

\begin{theorem}[Homotopies of class-CII pairs of projections on minimal spaces] \label{thm:Homotopy_CII_minimal}
Two pairs of finite-rank projection-valued maps in class CII on the $d$-dimensional torus, $d \le 2$, acting on a minimal ambient Hilbert space are homotopically equivalent if and only if their chiral invariants $\W(\cdot)$, which take values in even integers, coincide.
\end{theorem}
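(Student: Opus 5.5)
The plan is to follow the proof of Theorem~\ref{thm:Homotopy_BDI_minimal} almost verbatim, replacing the even time-reversal constraint on the off-diagonal block by an odd one. Fix the normal forms $S=\mathrm{diag}(\Id_n,-\Id_n)$ and $T=\K\,\mathrm{diag}(J,J)$ recalled above. By Principle~\ref{pri:chiral_classes_min_dim}, a class-CII pair $P^{\pm}\colon\T^d\to\Proj_n(\C^{2n})$ is equivalent to a unitary block $W\colon\T^d\to U(n)$ as in~\eqref{chiral_symm_P+-P-}. By~\eqref{eq:Class_CII_symmetry_structure}, this block satisfies $J\overline{W(k)}=W(-k)J$, which is exactly the odd-time-reversal symmetry constraint for unitary-valued maps with respect to $\K J$.

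The first point to verify is that the dictionary is bijective at the level of homotopies. A chiral- and time-reversal-preserving homotopy of pairs $P^{\pm}(t,\cdot)$ yields a homotopy $W(t,\cdot)$ preserving the odd constraint. Conversely, such a homotopy of blocks reconstructs, through the converse part of Principle~\ref{pri:chiral_classes_min_dim}, a homotopy of chiral symmetric pairs. This reconstructed homotopy is also $T$-symmetric, by reading~\eqref{eq:Class_CII_symmetry_structure} backwards, since $T$ commutes with $S$ and is block-diagonal. Particle-hole symmetry $C=TS$ is then automatic. Homotopy classification of class-CII pairs on minimal spaces is therefore identical to homotopy classification of odd-time-reversal symmetric maps $\T^d\to U(n)$.

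Next, apply Theorem~\ref{thm:TRS_unitary_homotopy}. For $d=0$, odd-symmetric (symplectic) unitaries form a connected set, so all pairs are homotopic, consistent with $\W:=0$. For $d=1$, homotopy classes are labeled by $[\det W(\cdot)]$, which equals $\W(P^{\pm})$ by Definition~\ref{def:W_invariant}. For $d=2$, homotopy reduces to the restrictions to $\T^1_1$ and $\T^1_2$. These restrictions remain odd-time-reversal symmetric, since both subtori are invariant under $k\mapsto-k$, so the classes are labeled by the pair of windings, which is again $\W(P^{\pm})$.

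Evenness of the invariant comes from the corresponding clause of Theorem~\ref{thm:TRS_unitary_homotopy}: at the fixed points $W(k_\star)$ is symplectic, hence has unit determinant, so each winding is even. Necessity, namely that $\W$ is a homotopy invariant, is clear because the winding number is invariant under homotopies.

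The only delicate step is the bijection between homotopies. There one must check that the reconstructed $P^{\pm}(t,k)$ satisfy the $T$-constraint on both diagonal blocks, not only on the off-diagonal ones. Since the diagonal blocks are constant multiples of the identity, this reduces to $\mathrm{diag}(J,J)$ commuting with $\K$ up to the identity blocks, which holds. The rest is direct bookkeeping.
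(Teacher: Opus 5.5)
Your proposal is correct and follows essentially the same route as the paper: Principle~\ref{pri:chiral_classes_min_dim} reduces the pair to its off-diagonal block $W$, and \eqref{eq:Class_CII_symmetry_structure} turns time-reversal symmetry into the odd constraint $J\overline{W(k)}=W(-k)J$, so that Theorem~\ref{thm:TRS_unitary_homotopy} classifies the pairs by the (necessarily even) windings of $\det W$. Your explicit check that block homotopies lift back to $T$-symmetric pair homotopies only spells out what the paper leaves implicit; particle-hole symmetry is indeed automatic, although in the paper's convention $S=TC$ one has $C=T^{-1}S=-TS$ rather than $TS$, a sign that does not affect the constraint.
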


\medskip

\paragraph{\textsl{Classes CI and DIII}} 

In these classes, all three types of symmetries are present, but the symmetry operators anticommute among each other. In both AZC classes, we will choose to work in the eigenbasis of the chiral symmetry operator $S$, namely to represent it as in~\eqref{eq:S_normal_form} in the form
\[ S=\begin{pmatrix}
        \Id_n & 0 \\ 0 & -\Id_n
    \end{pmatrix} \]
and use this basis to identify $\mathcal{H} \simeq \C^{2n}$. For class CI, we focus on the anticommuting even time-reversal symmetry operator $T$, which antiunitarily intertwines the two eigenspaces of~$S$: as such, it can be assumed to take the form
\[ T=\mathcal{K} \begin{pmatrix}
        0 & \Id_n \\ \Id_n & 0
    \end{pmatrix}, \]
where $\mathcal{K}$ denotes the complex conjugation operator, compare~\eqref{eq:TRS_normal_form_even}. For class DIII, where the time-reversal symmetry operator is odd instead, we will assume 
\[ T=\K J \]
where $J$ is the standard $2n \times 2n$ symplectic matrix, compare~\eqref{eq:TRS_normal_form_odd}. Recall that $n$ itself must be even in this case, as each element of a pair of symmetric rank-$n$ projection-valued map on $\C^{2n}$ is separately an odd-time-reversal symmetric projection-valued map. Under these assumptions and with the usual convention $S=TC$, it is rather the even particle-hole symmetry operator of class DIII that takes the form
\begin{equation} \label{C_DIII} 
C = \mathcal{K} \begin{pmatrix}
        0 & \Id_n \\ \Id_n & 0
\end{pmatrix}.
\end{equation}

\begin{theorem}[Homotopies of class-CI pairs of projections on minimal spaces] \label{thm:Homotopy_CI_minimal}
Any two pairs of finite-rank projection-valued maps in class CI on the $d$-dimensional torus, $d \le 2$, acting on a minimal ambient Hilbert space are homotopically equivalent.
\end{theorem}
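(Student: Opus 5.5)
Throughout we work in the normal forms fixed above, $S=\operatorname{diag}(\Id_n,-\Id_n)$ and $T=\mathcal{K}\begin{pmatrix}0&\Id_n\\ \Id_n&0\end{pmatrix}$. The plan is to reduce to unitary-valued maps via Principle~\ref{pri:chiral_classes_min_dim} and then show that the resulting equivariant mapping space is path-connected. The reduction works as for classes AIII and BDI. Writing $P^+(k)-P^-(k)=\begin{pmatrix}0&W(k)\\ W(k)^*&0\end{pmatrix}$, the time-reversal constraint $T(P^+(k)-P^-(k))=(P^+(-k)-P^-(-k))T$ becomes, after conjugating by the swap matrix and taking complex conjugates, the single condition
\[ W(-k) = W(k)^{\mathrm{t}} \quad \forall\, k\in\mathbb{T}^d . \]
(I will double-check this sign and transpose bookkeeping directly.) A homotopy of pairs in class CI is then the same as a homotopy of maps $W\colon\mathbb{T}^d\to U(n)$ preserving this constraint. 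The statement therefore reduces to showing that any such $W$ can be deformed, within the constraint, to the constant map $\Id_n$.

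For $d=0$ the constraint says that $W$ is a symmetric unitary matrix. By the Takagi factorization, $W=VV^{\mathrm{t}}$ with $V\in U(n)$, so the set of symmetric unitaries is the image of the connected group $U(n)$ and is path-connected (it is $U(n)/O(n)$). This settles $d=0$, and it also shows that at every fixed point $k_\star$ the value $W(k_\star)$ can be moved to $\Id_n$ through symmetric unitaries.

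For $d=1$, the map $W$ is determined by its restriction to the fundamental domain $[0,\pi]$, with symmetric endpoint values $W(0)$ and $W(\pi)$; the values on $[-\pi,0]$ follow from $W(-k)=W(k)^{\mathrm{t}}$. Any homotopy of the restriction that keeps the endpoints symmetric extends equivariantly. First, using the $d=0$ step and the homotopy extension property, I deform so that $W(0)=W(\pi)=\Id_n$. The restriction is then a loop in $U(n)$, whose class is the winding of $\det W$ on $[0,\pi]$, say $m\in\mathbb{Z}$. To kill $m$, I let the endpoint $W(\pi)$ travel once (or $|m|$ times) around the loop of symmetric unitaries $\theta\mapsto\operatorname{diag}(e^{\mp i\theta},1,\dots,1)$, dragging the path along. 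This changes the winding on $[0,\pi]$ by $\mp1$ per turn, while the endpoint stays symmetric throughout. Once the winding is zero, the path is null-homotopic rel endpoints in $U(n)$, and so $W$ is homotopic to $\Id_n$. Here the freedom of the endpoints inside $U(n)/O(n)$, whose fundamental group is detected by $\det$, is exactly what removes the would-be $\mathbb{Z}$ invariant. This contrasts with class BDI, where the endpoints are orthogonal and the winding survives.

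For $d=2$ I would use the equivariant CW structure of $\mathbb{T}^2$. First, by the $d=1$ step applied to the two invariant circles $\mathbb{T}^1_1$ and $\mathbb{T}^1_2$, and also to the invariant circles $\{k_1=\pi\}$ and $\{k_2=\pi\}$ (all four fixed points being handled consistently), I deform $W$ to be $\equiv\Id_n$ on the $1$-skeleton. The remaining freedom lives on a fundamental domain for the involution, the half-square $[0,\pi]\times[-\pi,\pi]$, whose boundary lies in the (now trivialized) invariant circles together with edges glued to each other by $k\mapsto-k$. Extending the trivialization over the open $2$-cell is obstructed only by an element of $\pi_2(U(n))=0$, so no obstruction arises. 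I expect this last step to be the main obstacle: one must check that the successive deformations on the four invariant circles are compatible at the shared fixed points, and that the $2$-cell argument respects the symmetry. My first attempt is to deform on the half-square relative to its boundary, where a straight-line obstruction count using $\pi_1(U(n))$ on the boundary (already trivialized) and $\pi_2(U(n))=0$ should suffice; then I extend to the other half by $W(-k):=W(k)^{\mathrm{t}}$. Finally, Principle~\ref{pri:chiral_classes_min_dim} converts the homotopy of the $W$'s back into a homotopy of class-CI pairs.
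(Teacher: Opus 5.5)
Your proposal is correct and follows essentially the paper's proof. Both arguments proceed in the same steps:
\begin{itemize}
\item reduce through Principle~\ref{pri:chiral_classes_min_dim} to maps with $W(k)^{\mathrm{t}}=W(-k)$;
\item use path-connectedness of symmetric unitaries at the fixed points (you use Takagi, the paper writes $W=e^{iL}$ with $L^{\mathrm{t}}=L$);
\item kill the half-circle winding of $\det W$ by letting a fixed-point value run around a loop of symmetric unitaries;
\item use $\pi_2(U(n))=0$ on the remaining $2$-cells and extend by $W(-k)=W(k)^{\mathrm{t}}$.
\end{itemize}
The paper connects $W_0$ to $W_1$ directly rather than contracting each to $\Id_n$, which is equivalent.

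The compatibility issue you flagged in $d=2$ is real for your choice of four circles, because in the quotient they form a $4$-cycle on the fixed points. It nevertheless resolves. Each of your four invariant circles is determined by one of the four half-edges forming $\partial([0,\pi]^2)$. With all fixed points at $\Id_n$, the signed sum of the determinant windings on these half-edges equals the winding of $\det W$ around $\partial([0,\pi]^2)$, which vanishes because $W$ is continuous on the square. Concretely, trivialize the circles one at a time, each time dragging only a fixed point not lying on an already trivialized circle. The fourth circle then automatically has zero winding, so it can be contracted rel endpoints before you treat the two $2$-cells of the half-square.

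The paper avoids the cycle altogether by using only $\{k_1=0\}$, $\{k_1=\pi\}$ and $\{k_2=\pi\}$, which form a tree on the four fixed points. It then cuts $[0,1]\times[0,\pi]\times\mathbb{T}^1$ open along $k_2=\pm\pi$ into a ball with spherical boundary.

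A minor correction: the edges $k_2=\pm\pi$ of your half-square are identified by periodicity, not by $k\mapsto -k$.
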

\begin{proof}
Let $P^{\pm} \colon \mathbb{T}^{d} \to \Proj_n(\C^{2n})$ be a pair of projection-valued maps in class CI. Invoking Principle~\ref{pri:chiral_classes_min_dim}, we look at the combination $P^+-P^-$, which satisfies the time-reversal symmetric constraint $T(P^+(k)-P^-(k))=(P^+(-k)-P^-(-k))T$, $k \in \mathbb{T}^{d}$. At the level of the unitary-valued map $W:\T^d\to U(n)$ from~\eqref{chiral_symm_P+-P-}, this constraint reads
\begin{equation} \label{eq:W^t=W} 
\overline{W(k)} = W(-k)^* \iff W(k)^{\mathrm{t}} = W(-k) \quad \forall\: k \in \mathbb{T}^{d}. 
\end{equation} 
Arguing as we did for class AIII, the statement follows if we can prove that any two unitary-valued maps satisfying the above condition are homotopic to each other%
\footnote{Compare~\cite[Theorem IV.5]{gontier2022symmetric} for a different proof in $d \le 1$.}.

Let $d=0$. The constraint~\eqref{eq:W^t=W} then reads $W^{\mathrm{t}} = W$, that is, $W$ is a symmetric unitary matrix. By the spectral theorem, it can be written as $W = e^{i L}$, where $L^{\mathrm{t}} = L$ is a complex symmetric matrix%
\footnote{Write the spectral decomposition of $W$ as $W = \sum_j e^{i \lambda_j} E_j$, where $\lambda_j \in \mathbb{R}$ and the spectral eigenprojections $E_j$ are mutually orthogonal for all $j$. The constraint $W^{\mathrm{t}} = W$ implies
\[ \sum_j e^{i \lambda_j} E_j^{\mathrm{t}} = \sum_j e^{i \lambda_j} E_j \iff E_j^{\mathrm{t}} = E_j \; \forall\: j. \]
Setting 
\[ L := \sum_j \lambda_j E_j = L^{\mathrm{t}} \]
one gets that $W = e^{i L}$ by functional calculus, as desired.}%
. Any two such matrices $W_0 = e^{i L_0}$ and $W_1 = e^{i L_1}$ are then homotopic via
\[ W \colon [0,1] \to U(n), \quad W(t) := \exp \big(i \big[(1-t) L_0 + t L_1\big]\big), \; t \in [0,1], \]
with $W(t)^{\mathrm{t}} = W(t)$ for all $t \in [0,1]$.

Let now $d=1$. Observe that~\eqref{eq:W^t=W} implies that $\det(W(k)) = \det(W(-k))$ for all $k \in \mathbb{T}^{1}$, so that in particular the winding number of $\det(W(\cdot))$ is zero as the two ``halves'' of the torus $[-\pi,0]$ and $[0,\pi]$ give opposite contributions to the winding itself; stated differently, the chiral invariant $\W(\cdot)$ of a pair of projection-valued maps in class CI vanishes. Consequently, $W$ can be continuously deformed to the map constantly equal to $\Id_{n}$ according to Theorem~\ref{thm:TRS_unitary_homotopy}. This is however not enough, as we also want the homotopy to preserve the constraint~\eqref{eq:W^t=W} along the whole deformation. We have therefore to argue differently.

Let $W_0, W_1 \colon \mathbb{T}^{1} \to U(n)$ be two unitary-valued maps which satisfy~\eqref{eq:W^t=W}. At the two fixed points $k_{\star} \in \{0, \pi\}$ for the involution $k \mapsto -k$ on $\mathbb{T}^{1}$, the matrices $W_0(k_{\star}), W_1(k_{\star})$ are unitary and symmetric, and by what was argued above they are connected by a path of symmetric and unitary matrices $\tilde{W}(t,k_{\star}) = e^{i L(t, k_{\star})}$, $L(t,k_{\star}) = L(t,k_{\star})^{\mathrm{t}}$, $t \in [0,1]$. We have thus defined a continuous map $\tilde{W} \colon \partial ([0,1] \times [0,\pi]) \to U(n)$; notice that, topologically, the boundary of the rectangle $[0,1] \times [0,\pi]$ is a circle, so it makes sense to compute $\tilde{w} := [\det(\tilde{W}(\cdot))] \in \mathbb{Z}$. Define
\[ W(t, 0) := \exp \left( i \left[ L(0,k_{\star}) + t \begin{pmatrix} -2 \pi \tilde{w} & 0 \\
0 & \Id_{n-1} \end{pmatrix} \right] \right), \quad t \in [0,1]. \]
The matrix $W(t, 0)$ is symmetric and unitary, depends continuously on $t \in [0,1]$, and is such that $W(0,0) = \tilde{W}(0,0) = W_0(0)$ and $W(1,0) = \tilde{W}(1,0) = W_1(0)$. With this redefinition on the side of the rectangle $\{(t,0): t \in [0,1]\} \subset \partial( [0,1] \times [0,\pi] )$, the map $W \colon \partial ([0,1] \times [0,\pi]) \to U(n)$ has vanishing winding of the determinant, and can therefore be extended continuously to the interior of the rectangle: we denote this extension again as $W \colon [0,1] \times [0,\pi] \to U(n)$, and further extend it to $[0, 1] \times [-\pi,\pi]$ by setting
\[ W(t, -k) := W(t, k)^{\mathrm{t}} \quad \forall\:k \in [0,\pi]. \]
Observe now that this definition yields
\[ W(t, -\pi) = W(t, \pi)^{\mathrm{t}} = W(t,\pi) \]
in view of the symmetry of $W(t,\pi) = \tilde{W}(t,\pi)$, $t \in [0,1]$. Therefore, the map $W(t,\cdot)$ is periodic over $[-\pi,\pi]$, and thus yields a well-defined symmetric homotopy $W \colon [0,1] \times \mathbb{T}^{1} \to U(n)$ between $W_0 \equiv W(0,\cdot)$ and $W_1 \equiv W(1,\cdot)$, as desired.

Finally, let $d=2$. Given two unitary-valued maps $W_0, W_1 \colon \mathbb{T}^{2} \to U(n)$ satisfying~\eqref{eq:W^t=W}, consider their restrictions to 
\begin{equation} \label{eqn:subtori} 
\mathbb{T}^{1}_{j,\,k_{\star}} := \{(k_1,k_2) \in \mathbb{T}^{2} : k_j = k_{\star}\}, \quad j \in \{1, 2\}, \; k_{\star} \in \{0,\pi\}.
\end{equation}
Due to the periodicity in $k_1$, it is easily realized that the restrictions $W_0 \big|_{\mathbb{T}^{1}_{1,\,k_{\star}}}$ and $W_1 \big|_{\mathbb{T}^{1}_{1,\,k_{\star}}}$ both satisfy~\eqref{eq:W^t=W}, as unitary-valued maps on the $1$-dimensional torus $\mathbb{T}^{1}_{1,\,k_{\star}}$, $k_{\star} \in \{0,\pi\}$. From what was discussed before in $d=1$, $W_0 \big|_{\mathbb{T}^{1}_{1,\,0}}$ is then homotopic to $W_1 \big|_{\mathbb{T}^{1}_{1,\,0}}$ and $W_0 \big|_{\mathbb{T}^{1}_{1,\,\pi}}$ is homotopic to $W_1 \big|_{\mathbb{T}^{1}_{1\,,\pi}}$. The same argument holds for the restrictions $W_0 \big|_{\mathbb{T}^{1}_{2\,,\pi}}$ and $W_1 \big|_{\mathbb{T}^{1}_{2\,,\pi}}$, which are then also homotopic as maps that satisfy~\eqref{eq:W^t=W}. The collection of all these homotopies yields a map $W \colon \partial \Pi \to U(n)$, where%
\footnote{The set $[0,\pi] \times \mathbb{T}^{1}$, consisting of ``half'' of the torus, is sometimes called the \emph{effective Brillouin zone}~\cite{moore2007topological}. It gives a set of representatives for the quotient $\mathbb{T}^{2} / (k \sim -k)$.}
\[ \Pi := [0,1] \times [0,\pi] \times \mathbb{T}^{1} \simeq [0,1] \times [0,\pi] \times [-\pi,\pi] / ((t,k_1,-\pi) \sim (t,k_1,\pi)). \]
Topologically, the boundary of the parallelepiped $[0,1] \times [0,\pi] \times [-\pi,\pi]$ is a $2$-sphere: as the second homotopy group of the space of unitary matrices is trivial, $\pi_2(U(n)) = \{0\}$ \cite[Chapter 8, Section 12]{Husemoller1994}, the map $W$ extends continuously to $\Pi$ (and again we denote the extension with the same symbol). Extend further the map to $[0,1] \times [-\pi,\pi] \times \mathbb{T}^{1}$ by setting
\[ W(t, -k_1, k_2) := W(t, k_1, -k_2)^{\mathrm{t}} \quad \forall\:k_1 \in [0,\pi], \; k_2 \in \mathbb{T}^{1}. \]
This extension is continuous at $k_1=0$ since the homotopy $W(t,0,\cdot)$, $t \in [0,1]$, connecting $W_0 \big|_{\mathbb{T}^{1}_{1,\,0}}$ and $W_1 \big|_{\mathbb{T}^{1}_{1,\,0}}$ preserves the constraint~\eqref{eq:W^t=W}, and it is periodic in $k_1$ (i.e.\ $W(t,-\pi,k_2) = W(t,\pi,k_2)$ for all $(t,k_2) \in [0,1] \times \mathbb{T}^{1}$) because so does the homotopy $W(t,\pi,\cdot)$, $t \in [0,1]$, connecting $W_0 \big|_{\mathbb{T}^{1}_{1,\,\pi}}$ and $W_1 \big|_{\mathbb{T}^{1}_{1,\,\pi}}$. The resulting map $W \colon [0,1] \times \mathbb{T}^{2} \to U(n)$ therefore exhibits the desired homotopy between $W_0$ and $W_1$.
\end{proof}

Before stating and proving the corresponding statement for class DIII, we make a few preliminary considerations. Arguing as at the beginning of the proof of Theorem~\ref{thm:Homotopy_CI_minimal}, one reduces the study of pairs of projection-valued maps in class DIII to that of unitary-valued maps $W \colon \mathbb{T}^{d} \to U(n)$ via Principle~\ref{pri:chiral_classes_min_dim}. This time, it is the particle-hole symmetry operator~$C$ that takes the form~\eqref{C_DIII}; in turn, the particle-hole symmetry constraint for the map $P^{+}-P^{-}$ reads this time $C(P^+(k)-P^-(k))=-(P^+(-k)-P^-(-k))C$, $k \in \mathbb{T}^{d}$. At the level of the unitary-valued map $W$ from~\eqref{chiral_symm_P+-P-}, this condition yields
\begin{equation} \label{eq:W^t=-W}
W(k)^{\mathrm{t}}=-W(-k) \quad \forall\: k \in \mathbb{T}^{d}.
\end{equation}
Notice how the above relation, together with the fact that $n$ is necessarily even, imply that $\det(W(k)) = (-1)^n \, \det(W(-k)) = \det(W(-k))$, so that, as in the previous proof, the map $\det(W(\cdot))$ has no winding along any cardinal direction in the torus: in particular, also in class DIII the chiral invariant from Definition~\ref{def:W_invariant} vanishes in any dimension $d \le 2$.

If $d=0$, a unitary-valued matrix satisfying~\eqref{eq:W^t=-W} is also skew-symmetric, and therefore admits a decomposition~\cite[Corollary A.4]{gontier2022symmetric}
\begin{equation} \label{eq:DIIId0}
 W = V^{\mathrm{t}} J V, \quad V \in U(n),
\end{equation}
where $J$ is the $n \times n$ standard symplectic matrix. If $d=1$, then the homotopy classes of unitary-valued maps satisfying~\eqref{eq:W^t=-W} have been studied in~\cite[Theorem 5.6]{Peluso2026}, and are fully characterized by the Teo--Kane invariant introduced in Remark~\ref{def:gamma_invariant}. This, together with the next statement, motivate the following definition for the homotopy invariant (compare also~\cite{de2022cohomology} for a cohomological study of the properties of projection-valued maps in this class).

\begin{definition}[DIII invariant] \label{def:DIII_invariant}
Let $P^{\pm} \colon \mathbb{T}^{d} \to \Proj_n(\C^{2n})$ be a pair of projection-valued maps in class DIII. Let $W \colon \mathbb{T}^{d} \to U(n)$ be as in~\eqref{chiral_symm_P+-P-}. We define the \emph{DIII invariant} of $P^{\pm}$ as follows:
\begin{itemize}
    \item if $d=0$, we set%
    \footnote{Compare the footnote to Definition~\ref{def:W_invariant}.}
    \[ \G(P^{\pm}) := 0; \]
    \item if $d=1$, we set
    \[ \G(P^{\pm}) := \TK(W(\cdot)) \in \mathbb{Z}_2; \]
    \item if $d=2$, we set
    \[ \G(P^{\pm}) := \left( \TK\left(W(\cdot) \big|_{\mathbb{T}^{1}_{1,\,\pi}}\right), \TK\left(W(\cdot) \big|_{\mathbb{T}^{1}_{2,\,\pi}}\right), \TK\left(W(\cdot) \big|_{\mathbb{T}^{1}_{\textup{diag}}}\right) \right) \in \mathbb{Z}_2 \times \mathbb{Z}_2 \times \mathbb{Z}_2 \]
    where $\mathbb{T}^{1}_{1,\,\pi}, \mathbb{T}^{1}_{2,\,\pi} \subset \mathbb{T}^{2}$ are as in~\eqref{eqn:subtori}, and where
    \[ \mathbb{T}^{1}_{\textup{diag}}:= \{ (k_1, k_2) \in \mathbb{T}^{2} : k_2 = -k_1 \} \subset \mathbb{T}^{2}. \]
\end{itemize}
\end{definition}

\begin{theorem}[Homotopies of class-DIII pairs of projections on minimal spaces] \label{thm:Homotopy_DIII_minimal}
Two pairs of finite-rank projection-valued maps in class DIII on the $d$-dimensional torus, $d \le 2$, acting on a minimal ambient Hilbert space are homotopically equivalent if and only if their DIII invariants $\G(\cdot)$ coincide.
\end{theorem}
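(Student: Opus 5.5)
By Principle~\ref{pri:chiral_classes_min_dim}, and arguing exactly as for classes AIII and CI, the statement reduces to a homotopy classification of unitary-valued maps $W \colon \mathbb{T}^{d} \to U(n)$, $n$ even, satisfying the constraint~\eqref{eq:W^t=-W}. Every homotopy of the pair $P^{\pm}$ induces a homotopy of the off-diagonal block $W$ that preserves~\eqref{eq:W^t=-W}. Conversely, every homotopy of $W$ preserving~\eqref{eq:W^t=-W} rebuilds a homotopy of class-DIII pairs: chiral symmetry is built in by Principle~\ref{pri:chiral_classes_min_dim}, particle-hole symmetry is encoded in~\eqref{eq:W^t=-W}, and $T=CS$ follows.

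The ``only if'' direction is then clear. In $d=1$, the Teo--Kane invariant is a homotopy invariant of maps satisfying~\eqref{eq:W^t=-W} by~\cite[Theorem 5.6]{Peluso2026}. In $d=2$, each of the three loops $\mathbb{T}^{1}_{1,\pi}$, $\mathbb{T}^{1}_{2,\pi}$, $\mathbb{T}^{1}_{\textup{diag}}$ is invariant under $k \mapsto -k$, and on each the restriction of $W$, parametrized appropriately, again satisfies~\eqref{eq:W^t=-W}. The restrictions of a homotopy are therefore constrained homotopies, and each $\TK$ is preserved.

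For the converse I would proceed by dimension, mimicking the proof of Theorem~\ref{thm:Homotopy_CI_minimal}. In $d=0$, use~\eqref{eq:DIIId0}: $W_s = V_s^{\mathrm{t}} J V_s$ for $s \in \{0,1\}$. Connect $V_0$ to $V_1$ by a path $V(t)$ in the connected group $U(n)$; then $V(t)^{\mathrm{t}} J V(t)$ is a skew-symmetric unitary path. In $d=1$, the statement is~\cite[Theorem 5.6]{Peluso2026}. In $d=2$, first connect $W_0$ and $W_1$ on the four symmetric loops $\mathbb{T}^{1}_{j,k_\star}$, for $j\in\{1,2\}$ and $k_\star\in\{0,\pi\}$; these suffice to build a map on $\partial\Pi$ in the effective Brillouin zone, as in class CI. By the $d=1$ case this requires equality of the Teo--Kane invariants on all four loops.

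The main obstacle is that only three $\mathbb{Z}_2$'s are prescribed: those on $\mathbb{T}^{1}_{1,\pi}$, $\mathbb{T}^{1}_{2,\pi}$ and the diagonal. I would recover the missing information from a product relation among the Teo--Kane invariants. Each $\TK$ is a product of $\sqrt{\det}/\Pf$ over the two fixed points on the loop. The four fixed points of $\mathbb{T}^2$ each appear in exactly two of the loops $\mathbb{T}^{1}_{1,0}$, $\mathbb{T}^{1}_{2,0}$, $\mathbb{T}^{1}_{\textup{diag}}$: the diagonal passes through $(0,0)$ and $(\pi,\pi)$. Since $\det W$ has no winding, a continuous global $\sqrt{\det W}$ exists on the whole torus, and the Pfaffians cancel in pairs. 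I therefore expect
\[\TK(\mathbb{T}^{1}_{1,0})\,\TK(\mathbb{T}^{1}_{2,0})\,\TK(\mathbb{T}^{1}_{\textup{diag}}) = 1\]
up to a choice-independent sign. Together with an analogous relation expressing $\TK(\mathbb{T}^1_{1,0})$ through $\TK(\mathbb{T}^1_{1,\pi})$ and the data at the fixed points, this should show that the three listed invariants determine all four loop invariants. I would check this carefully, and if needed I would compare with the strong (FKM-type) invariant, to make sure no fourth independent $\mathbb{Z}_2$ is missed.

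Once the four loop homotopies are fixed, the boundary of $\Pi$ is a $2$-sphere, so the map extends to $\Pi$ because $\pi_2(U(n))=0$. I would then reflect via $W(t,-k_1,k_2):=-W(t,k_1,-k_2)^{\mathrm{t}}$. Continuity at $k_1=0$ and periodicity at $k_1=\pi$ come from the constrained homotopies on $\mathbb{T}^{1}_{1,0}$ and $\mathbb{T}^{1}_{1,\pi}$, exactly as in class CI.
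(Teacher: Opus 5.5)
Your reduction to maps $W$ with $W(k)^{\mathrm{t}}=-W(-k)$, your ``only if'' direction, and your cases $d\in\{0,1\}$ match the paper. In $d=2$ you take a different route. You reuse the effective Brillouin zone $[0,\pi]\times\mathbb{T}^1$ from class CI. The paper instead works with the fundamental domain $\triangle=\{k_2\ge -k_1\}$ of the involution, whose boundary consists exactly of $\mathbb{T}^1_{1,\pi}$, $\mathbb{T}^1_{2,\pi}$ and $\mathbb{T}^1_{\textup{diag}}$. There, the three components of $\G$ are precisely the data needed to define $W$ on $\partial([0,1]\times\triangle)$, and no relation among loop invariants is used. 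On your $\Pi$, the faces $k_1=0$, $k_1=\pi$, $k_2=\pm\pi$ require constrained homotopies on $\mathbb{T}^1_{1,0}$, $\mathbb{T}^1_{1,\pi}$ and $\mathbb{T}^1_{2,\pi}$; $\mathbb{T}^1_{2,0}$ is not boundary data. Everything therefore hinges on deducing $\TK(W|_{\mathbb{T}^1_{1,0}})$ from $\G$. That is the step you flagged as the main obstacle, and the argument you give for it is wrong.

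The incidence count fails. $(0,0)$ lies on all three of $\mathbb{T}^1_{1,0}$, $\mathbb{T}^1_{2,0}$, $\mathbb{T}^1_{\textup{diag}}$, while $(0,\pi)$, $(\pi,0)$, $(\pi,\pi)$ each lie on exactly one (three loops give six incidences, not eight). Set $s(p):=\sqrt{\det W(p)}/\Pf(W(p))\in\{\pm1\}$, using a global continuous square root. Then every symmetric-loop invariant is $s(p)s(q)$ over its two fixed points, so
\[
\TK\bigl(W|_{\mathbb{T}^1_{1,0}}\bigr)\,\TK\bigl(W|_{\mathbb{T}^1_{2,0}}\bigr)\,\TK\bigl(W|_{\mathbb{T}^1_{\textup{diag}}}\bigr)=s(0,0)^3\,s(0,\pi)\,s(\pi,0)\,s(\pi,\pi)=\prod_{p\in F^2}s(p)=\FKM(P^-)
\]
by \eqref{eq:TRIM_W}. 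This is not a universal sign. The same bookkeeping, done correctly, gives what you actually need:
\[
\TK\bigl(W|_{\mathbb{T}^1_{1,0}}\bigr)=s(0,0)\,s(0,\pi)=\TK\bigl(W|_{\mathbb{T}^1_{\textup{diag}}}\bigr)\,\TK\bigl(W|_{\mathbb{T}^1_{2,\pi}}\bigr),\qquad \TK\bigl(W|_{\mathbb{T}^1_{2,0}}\bigr)=\TK\bigl(W|_{\mathbb{T}^1_{\textup{diag}}}\bigr)\,\TK\bigl(W|_{\mathbb{T}^1_{1,\pi}}\bigr).
\]
With this, your route closes, provided you also match the three one-dimensional homotopies at the shared fixed points $(0,\pi)$ and $(\pi,\pi)$. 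That matching is possible: in a constrained homotopy on a symmetric loop, the path of skew-symmetric unitaries at one fixed point can be prescribed freely. The path at the other fixed point is then corrected by full turns of the Pfaffian, each of which changes the $\det$-winding around the relevant rectangle by $2$. The paper's construction needs the same matching at $(\pi,\pi)$. In short, your effective-zone route lets you copy the class-CI argument verbatim, at the price of a relation among loop invariants (essentially Proposition~\ref{lem:relation_delta_gamma}). The paper's choice of $\triangle$ is what makes $\G$ the direct input.
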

\begin{proof}
As in the previous proof, it suffices to show that the corresponding unitary-valued maps $W_0, W_1 \colon \mathbb{T}^{d} \to U(n)$ from~\eqref{chiral_symm_P+-P-}, which satisfy the constraint~\eqref{eq:W^t=-W}, are homotopic, assuming the pairs of projection-valued maps have the same DIII invariants.

Let $d=0$. The decomposition~\eqref{eq:DIIId0} yields the fact that the space of skew-symmetric unitary matrices is path-connected, as one can connect the corresponding matrices $V \in U(n)$. Any two such matrices $W_0 = -W_0^{\mathrm{t}} \in U(n)$ and $W_1 = -W_1^{\mathrm{t}} \in U(n)$ are then homotopic.

In $d=1$, it was already recalled that the Teo--Kane invariant characterizes homotopy classes of maps $W \colon \mathbb{T}^{1} \to U(n)$ satisfying~\eqref{eq:W^t=-W}. By definition of the DIII invariant, this implies the existence of a homotopy between the corresponding pairs of projection-valued maps.

Finally, in $d=2$, one can perform a construction similar to the one from the previous proof. Notice first of all that the restrictions of the two maps $W_0, W_1 \colon \mathbb{T}^{2} \to U(n)$ to the sub-tori $\mathbb{T}^{1}_{1,\,\pi}$, $\mathbb{T}^{1}_{2,\,\pi}$ and $\mathbb{T}^{1}_{\textup{diag}}$, as in Definition~\ref{def:DIII_invariant}, all satisfy~\eqref{eq:W^t=-W}, as the three sub-tori are fixed by the involution $k \mapsto -k$. The condition for the restrictions of $W_0$ and $W_1$ to each sub-torus to be homotopic is exactly the equality of their Teo--Kane invariants, as proved in the discussion of the case $d=1$: this motivates the definition of the DIII invariant in dimension $d=2$. Assuming the equality of the DIII invariants of $P^{\pm}_0$ and $P^{\pm}_1$, one can therefore define a map $W \colon \partial \Sigma \to U(n)$, where
\[\Sigma := [0,1] \times \triangle \quad \text{with} \quad \triangle := \{ (k_1, k_2) \in \mathbb{T}^{2} : k_2 \ge -k_1 \}. \]
Notice how $\triangle$ gives a set of representatives for the quotient $\mathbb{T}^{2} / (k \sim -k)$ of the $2$-torus under the usual involution. Since the boundary of the prism $\Sigma$ is topologically a $2$-sphere, one can extend the map $W$ to its interior as in the previous proof, and then further to the whole $[0,1] \times \mathbb{T}^{2}$ by setting
\[ W(t,-k) := -W(t, k)^{\mathrm{t}}, \quad (t,k) \in \Sigma = [0,1] \times \triangle. \]
Continuity and periodicity of this extension are guaranteed by the properties of the homotopies among the restrictions of $W_0$ and $W_1$ to the sub-tori $\mathbb{T}^{1}_{1,\,\pi}$, $\mathbb{T}^{1}_{2,\,\pi}$ and $\mathbb{T}^{1}_{\textup{diag}}$. This produces the required homotopy between $W_0$ and $W_1$, and concludes the proof.
\end{proof}

\begin{remark} \label{rmk:DIII_no_invariants}
In class DIII, it is possible to compute the even-particle-hole invariant for a pair of projection-valued-maps, as in Definition~\ref{def:P_invariant}, and one may ask whether they have any role in the homotopy classification of such maps. We will see in Section~\ref{sec:Properties_of_the_invariants} that these invariants actually yield the same value for all pairs of projection-valued maps, and thus do not contain any relevant topological information.
\end{remark}

\section{Properties of the homotopy invariants} \label{sec:Properties_of_the_invariants}

In this Section, we discuss the properties of the homotopy invariants arising from the classification of pairs of projection-valued maps on minimal ambient Hilbert spaces: the chiral invariant (Definition~\ref{def:W_invariant}), the even-particle-hole invariant (Definition~\ref{def:P_invariant}) and the DIII invariant (Definition~\ref{def:DIII_invariant}). Specifically, we will first of all elucidate the possible redundancies among them and with the invariants arising in the corresponding classification up to Murray--von Neumann equivalences, in order to identify in each AZC class a minimal set of \emph{complete} homotopy invariants. Incidentally, the discussion below will also determine the ``fate'' of these invariants when one ``forgets'' about the presence of some symmetry, and regards an element of an AZC class with all three symmetries as a specific element of a different class with less symmetries. One-dimensional invariants (or two-dimensional invariants which are defined from one-dimensional restrictions) will also be compared with the well-known geometric object given by the \emph{Zak phase}~\cite{berry1984quantal, zak1989berry}, which for our purposes we define as follows.   

\begin{definition}[Zak phase]\label{def:I_invariant}
Let $P^{\pm} \colon \mathbb{T}^{1} \to \Proj_n(\mathcal{H})$ be a (particle-hole and/or chiral symmetric) pair of projection-valued maps acting on the minimal ambient space $\mathcal{H} \simeq \C^{2n}$. Let $Z \colon \mathbb{T}^{1} \to U(2n)$ be a (symmetric) unitary-valued map such that $Z(k)P^\pm(0)=P^\pm(k)Z(k)$ for all $k \in \mathbb{T}^{1}$. We define the \emph{Zak phase} as
\begin{equation} \label{eq:I_invariant} 
\I(P^{\pm}):= (-1)^{[\det(Z(\cdot))]} \in \{\pm 1\}\simeq\Z_2.
\end{equation}
\end{definition}
In~\cite{monaco2023z2, manzonimonacopeluso2026zak} it is proved that such $Z$ can always be exhibited (compare~\eqref{eq:modified_parallel}), and that the Zak phase can be computed using a (smooth) periodic basis $\{v_j(k)\}_{j \in \{1,\cdots, n\}}$ for $P^-(k)$ (compare Section~\ref{sec:Bases} and Remark~\ref{rmk:regularity_required}) through the customary expression involving the Berry connection:
\begin{equation}\label{eq:I_as_Berry_phase}
    \I(P^{\pm}) = e^{i \pi \mathcal{A}} \quad \text{with} \quad \mathcal{A} := \frac{1}{i\pi} \int_{\mathbb{T}^1} \sum_{j=1}^n \inn{v_j(k)}{\partial_k v_j(k)} \, dk.
\end{equation}
In particular, this representation allows to show that $\I(P^{\pm})$ is invariant under changes of gauge which preserve the symmetries, which in turn implies that the Zak phase does not depend on the choice of $Z$; moreover, it provides a homotopy invariant for the projection-valued maps.
In~\cite{manzonimonacopeluso2026zak} the role of the Zak phase as a topological marker was investigated in all AZC classes via the above formula~\eqref{eq:I_as_Berry_phase}, showing in particular how it vanishes in presence of any antiunitary symmetry operator squaring to minus the identity.

After having established the relations among the various invariants, we will pass to the discussion on how these quantities behave under unitary conjugation of the corresponding pairs of projection-valued maps. This will clarify the role of the homotopy invariants which appeared in the previous Section~\ref{sec:Homotopy_minimal} as \emph{relative invariants}, as anticipated in the Introduction. Further properties of these quantities, relating in particular to how they behave under direct sums and to how they can be computed from periodic bases  of the underlying projections, are presented in Appendix~\ref{app:Further_properties}.

\subsection{Relations among invariants}

Throughout this Section, $P^{\pm} \colon \mathbb{T}^{d} \to \Proj_n(\mathbb{C}^{2n})$ will be a particle-hole and/or chiral symmetric pair of projection-valued maps. We'll proceed in our analysis going through all seven relevant AZC symmetry classes one by one.

\subsubsection{Class AIII}

When $d=1$, chiral symmetric pairs of projection-valued maps on a minimal ambient Hilbert space admit both a chiral invariant $\W(P^{\pm}) \in \mathbb{Z}$ and the Zak phase $\I(P^{\pm}) \in \mathbb{Z}_2$: they are related by the following relation.

\begin{proposition}\label{lem:relation_W-I}
For a pair of projection-valued maps $P^\pm \colon \T^1 \to \Proj_n(\C^{2n})$ in  class AIII, it holds that
\begin{equation}\label{eq:relation_W-P}
        (-1)^{\W(P^{\pm})}=\I(P^{\pm}) \in \mathbb{Z}_2.
\end{equation}
\end{proposition}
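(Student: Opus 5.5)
Both sides of \eqref{eq:relation_W-P} are homotopy invariants of class-AIII pairs on minimal spaces: $\W$ by Theorem~\ref{thm:Homotopy_AIII_minimal}, and $\I$ by the discussion after Definition~\ref{def:I_invariant}. Theorem~\ref{thm:Homotopy_AIII_minimal} says the homotopy classes are labeled exactly by $\W(P^{\pm})\in\Z$, so it suffices to check the identity on one representative for each value $m\in\Z$. Via Principle~\ref{pri:chiral_classes_min_dim}, I take $S$ in the normal form~\eqref{eq:S_normal_form} and choose
\[
W_m(k)=\operatorname{diag}(e^{imk},1,\ldots,1),
\]
with $P^\pm_m$ reconstructed from $W_m$ as in the proof of that Principle. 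Then $[\det W_m]=m$, and it remains to show $\I(P^\pm_m)=(-1)^m$.

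For this model everything decouples. The pair is a direct sum of $n-1$ constant $2\times 2$ blocks and one nontrivial block on $\C^2$ (the first vector of $\mathcal{H}^\uparrow$ and the first of $\mathcal{H}^\downarrow$), where
\[
P^-(k)=\tfrac12\begin{pmatrix}1&-e^{imk}\\-e^{-imk}&1\end{pmatrix}.
\]
Take the unitary
\[
Z(k)=\operatorname{diag}(e^{imk/2},e^{-imk/2})\oplus\Id_{2n-2}.
\]
It conjugates $P^\pm(0)$ into $P^\pm(k)$ and commutes with $S$, so it is chiral symmetric. However, it is only periodic up to a sign when $m$ is odd. To repair this, multiply on the right by a $k$-dependent unitary commuting with $P^\pm(0)$ and with $S$. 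On the nontrivial block, $S$ exchanges the ranges of $P^-(0)$ and $P^+(0)$, so the natural candidate is the chiral-symmetric correction $e^{-imk/2}$ on the nontrivial block, i.e.\ acting as the same scalar on the ranges of $P^-(0)$ and $P^+(0)$ there. The corrected unitary is $\operatorname{diag}(1,e^{-imk})\oplus\Id$, which is periodic and chiral symmetric. Its determinant has winding $-m$, so $(-1)^{[\det Z]}=(-1)^m$.

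Alternatively, and as a cross-check, I would use the Berry-phase formula~\eqref{eq:I_as_Berry_phase}. The vector $v(k)=\tfrac{1}{\sqrt2}(e^{imk},-1)^{\mathrm t}$ is a periodic basis vector for the nontrivial block of $P^-$, and $\inn{v}{\partial_k v}=im/2$. This gives $\mathcal A=m$ and hence $\I=e^{i\pi m}=(-1)^m$. The constant blocks contribute nothing to either computation.

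The main obstacle is making sure the identity is gauge-independent. Definition~\ref{def:I_invariant} allows any symmetric $Z$, so I must rely on the cited facts from~\cite{monaco2023z2, manzonimonacopeluso2026zak}: $\I$ is independent of the choice of $Z$ and is a homotopy invariant. The point to check is that this independence holds within chiral-symmetric gauges. Using the Berry formula with the explicit periodic basis sidesteps the issue, since it is stated in the paper to give the same quantity. Combining the model computation with homotopy invariance of both sides and completeness of $\W$ proves the proposition.
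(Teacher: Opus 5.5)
Your proof is correct, but it takes a detour that the paper avoids.

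The paper proves the identity directly for an arbitrary $W$. In the splitting $\mathcal{H}^{\uparrow}\oplus\mathcal{H}^{\downarrow}$ it takes $Z(k)=\operatorname{diag}\big(W(k)W(0)^*,\Id_n\big)$. This map is periodic, commutes with $S$, and conjugates $P^\pm(0)$ into $P^\pm(k)$, so $[\det Z]=[\det W]$ immediately.

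Your repaired unitary $\operatorname{diag}(1,e^{-imk})\oplus\Id_{2n-2}$ on the model is exactly $\operatorname{diag}\big(\Id_n, W_m(k)^*W_m(0)\big)$. That formula works verbatim for any $W$: its winding is $-[\det W]$, which has the same parity. Once you see this, you no longer need to reduce to the models $W_m$ via homotopy invariance of $\I$ and completeness of $\W$ (Theorem~\ref{thm:Homotopy_AIII_minimal}). You also no longer need the half-phase $e^{\pm imk/2}$ and its sign repair.

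The direct route uses only the well-definedness of $\I$, i.e.\ its independence of the chiral-symmetric $Z$, which both proofs need anyway. Your route additionally relies on the homotopy invariance of the Zak phase, which the paper only cites from external work, and on the homotopy classification theorem. In return you get a Berry-phase cross-check on the model and a ``check on generators'' template that would work for any two homotopy invariants, but for this statement that is more machinery than needed.
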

\begin{proof}
We appeal to Principle~\ref{pri:chiral_classes_min_dim}. It suffices to notice that the unitary-valued map $Z$ from Definition~\ref{def:I_invariant} that intertwines 
\[
P^\pm(k) = \frac{1}{2}
\begin{pmatrix}
    \Id_n & \pm W(k) \\ 
    \pm W(k)^* & \Id_n
\end{pmatrix} 
\quad \text{and} \quad  
P^\pm(0) = \frac{1}{2} 
\begin{pmatrix}
    \Id_n & \pm W(0) \\ 
    \pm W(0)^* & \Id_n
\end{pmatrix}
\]
can be exhibited as 
\begin{equation} \label{eq:explicit_unitary_equivalence_in_chiral_symmetry}
Z(k) = 
\begin{pmatrix} 
W(k)W(0)^* & 0 \\ 
0 & \Id_n
\end{pmatrix}.
\end{equation}
In particular, $[\det(Z(\cdot))]=[\det(W(\cdot))] \in \mathbb{Z}$.   
\end{proof}

In dimension $d=2$, the above result still allows to compare the chiral invariant with Zak phases, as the former is defined upon restriction of the projection-valued maps to the sub-tori $\mathbb{T}^{1}_{1}, \mathbb{T}^{1}_{2} \subset \mathbb{T}^{2}$. We have also already discussed in Remark~\ref{rmk:Hom_vs_MvN_in_chiral} how the Chern numbers of the two elements of a pair of projection-valued maps in class AIII necessarily vanish, so that the invariant characterizing Murray--von Neumann equivalence plays no role in the homotopy classification.

\subsubsection{Class C}

Theorem~\ref{thm:Homotopy_C_minimal} states that, for pairs of projection-valued maps $P^{\pm} \colon \mathbb{T}^{d} \to \Proj_n(\mathbb{C}^{2n})$ in class C, the homotopy classification is equivalent to the classification up to Murray--von Neumann equivalences in $d \le 2$; in particular, the only non-trivial topological invariant present in this class is the Chern number $\Ch(P^-) \in \mathbb{Z}$ when $d=2$.

One can still compute the Zak phase $\I(P^{\pm})$ if $d=1$, and more generally on every 1-dimensional sub-torus of $\T^d$ if $d > 1$. However, due to the presence of an odd antiunitary symmetry,~\cite[Theorem~5.1]{manzonimonacopeluso2026zak} gives that

\begin{proposition}\label{lem:I_trivial_class_c}
For any pair of projection-valued maps $P^{\pm} \colon \mathbb{T}^{1} \to \Proj_n(\mathbb{C}^{2n})$ in class C, the Zak phase is trivial: $\I(P^{\pm}) = 1 \in \mathbb{Z}_2$.
\end{proposition}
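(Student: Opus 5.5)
The plan is to show that in class C one can always choose the intertwining unitary $Z$ of Definition~\ref{def:I_invariant} so that $\det(Z(\cdot))$ has even winding. The paper cites~\cite[Theorem~5.1]{manzonimonacopeluso2026zak} for this; I would reproduce the mechanism directly. Work in the normal form $C = \mathcal{K}J$ on $\mathbb{C}^{2n}$. By Definition~\ref{def:I_invariant} we may take $Z \colon \mathbb{T}^1 \to U(2n)$ particle-hole symmetric, i.e.\ $CZ(k) = Z(-k)C$, with $Z(k)P^\pm(0) = P^\pm(k)Z(k)$. Existence of such a $Z$ follows from the modified Kato--Nagy construction~\eqref{eq:modified_parallel} applied to $P^-$, transported to $P^+$ by $C$. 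Alternatively, map a symmetric periodic basis at $k=0$ to one at $k$, using Proposition~\ref{prop:Class_AIIICD_frames} in $d=1$, and complete by the identity on nothing, since the space is minimal. The Zak phase is independent of the choice of symmetric $Z$, so it suffices to compute it for one such $Z$.

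The key step is the parity of $[\det Z]$. The symmetry constraint gives $J\overline{Z(k)}J^{-1} = Z(-k)$, hence $\det Z(-k) = \overline{\det Z(k)}$. At the fixed points $k_\star \in \{0,\pi\}$ the constraint reads $Z(k_\star)^{\mathrm{t}} J Z(k_\star) = J$, so $Z(k_\star)$ is symplectic and $\det Z(k_\star) = 1$. Therefore $\det Z|_{[0,\pi]}$ is a closed loop with some integer winding $m$. The reflection relation makes the half $[-\pi,0]$ contribute the same winding $m$, since conjugation composed with $k \mapsto -k$ preserves orientation of winding. Thus $[\det Z(\cdot)] = 2m$, and~\eqref{eq:I_invariant} gives $\I(P^\pm) = 1$. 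This is exactly the argument already used for odd time-reversal symmetric unitaries in Theorem~\ref{thm:TRS_unitary_homotopy} and in the discussion of class C before Theorem~\ref{thm:Homotopy_C_minimal}.

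The main obstacle is making sure that a \emph{symmetric} $Z$ with $Z(k)P^\pm(0)Z(k)^{-1} = P^\pm(k)$ exists globally and periodically on $\mathbb{T}^1$. A generic Kato--Nagy intertwiner is not periodic, and the correction $e^{-ikL}$ must respect $C$. I would handle this via a particle-hole symmetric periodic basis $\{v_j(k)\}$ for $P^\pm$, as in Section~\ref{sec:Bases_class_AIIICD}, and set $Z(k)v_j(0) := v_j(k)$. Periodicity and the symmetry $CZ(k) = Z(-k)C$ then follow exactly as in Principle~\ref{pri:Periodicity_from_pseudo-periodicity}, since the basis vectors are related by $Cv_j(k) = v_{2n-j+1}(-k)$ with the same index permutation at $0$ and at $k$.
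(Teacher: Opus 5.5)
Your proposal is correct and is essentially the paper's own alternative argument: in the normal form $C=\mathcal{K}J$, the constraint gives $\det Z(-k)=\overline{\det Z(k)}$ and makes $Z(k_\star)$ symplectic at $k_\star\in\{0,\pi\}$, so the two half-tori contribute equal windings and $[\det Z(\cdot)]$ is even (the paper's primary justification is a citation of \cite[Theorem~5.1]{manzonimonacopeluso2026zak}). Your construction of a symmetric periodic $Z$ from a particle-hole symmetric periodic basis via Principle~\ref{pri:Periodicity_from_pseudo-periodicity} correctly fills in the existence claim that the paper leaves to references, and your parity argument applies to every symmetric $Z$, so you do not actually need independence of the choice.
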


Alternatively, one can argue that the odd particle-hole constraint $CZ(k)=Z(-k)C$, where $Z \colon \mathbb{T}^{1} \to U(2n)$ is as in Definition~\ref{def:I_invariant}, reads as $J\overline{Z(k)}=Z(-k)J$ in an appropriate basis (compare~\eqref{eq:TRS_normal_form_odd}), and thus implies $\overline{\det(Z(k))}=\det(Z(-k))$ for all $k \in \mathbb{T}^{1}$. It is easy to see that a map with this property has even winding number: at $k_{\star} \in \{0,\pm\pi\}$ the matrix $Z(k_{\star})$ is symplectic, and thus has unit determinant; the symmetry constraint yields that the two ``halves'' of the torus give the same integer contribution to the winding number.

\subsubsection{Class D}

When an even particle-hole symmetry is present, then one can compute the even-particle-hole invariant $\PI(P^{\pm})$ as the collection of the Pfaffians $\Pf(L(k_{\star})) \in \mathbb{Z}_2$, where $L(k) = i \left(P^{+}(k) - P^{-}(k) \right)$ is as in~\eqref{eq:L(k)} and where $k_{\star} \in \mathbb{T}^{d}$ ranges over the fixed points of the involution $k \mapsto -k$ (see Definition~\ref{def:P_invariant}). Let us denote by $F^d$ the set of such fixed points: explicitly
\begin{equation} \label{eq:Fixed}
F^0 = \{0\} \subseteq \mathbb{T}^{0}, \quad F^1 = \{0,\pi\} \subset \mathbb{T}^{1}, \quad F^2 = \{ (0,0), (0,\pi), (\pi,0), (\pi, \pi) \} \subset \mathbb{T}^{2}.
\end{equation}

In $d=1$, one can also compute the Zak phase $\I(P^{\pm}) \in \mathbb{Z}_2$ according to Definition~\ref{def:I_invariant}. In this setting, the two invariants are related by the following result.

\begin{proposition}\label{lem:relation_I-P}
For a pair of projection-valued maps $P^\pm \colon \T^1 \to \Proj_n(\C^{2n})$ in  class D, it holds that
\begin{equation}\label{eq:relation_I-P}
    \I(P^{\pm}) = \prod_{k_{\star} \in F^1} \Pf(L(k_{\star})) .
\end{equation}
\end{proposition}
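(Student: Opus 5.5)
We need to prove $\I(P^\pm)=\Pf(L(0))\Pf(L(\pi))$ in class D. The idea is to take a particle-hole symmetric $Z$ as in Definition~\ref{def:I_invariant} and read off the parity of its winding from its determinants at the two fixed points, using the same Pfaffian-congruence argument as in the proof of Theorem~\ref{thm:Homotopy_D_minimal}.

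\emph{Step 1: choice of $Z$ and the relation at fixed points.} We work in the basis where $C=\mathcal{K}$, so $\mathcal{H}\simeq\C^{2n}$. By~\cite{monaco2023z2, manzonimonacopeluso2026zak}, there is a particle-hole symmetric $Z\colon\mathbb{T}^1\to U(2n)$ with $Z(k)P^\pm(0)=P^\pm(k)Z(k)$ and $\overline{Z(k)}=Z(-k)$. We may normalize $Z(0)=\Id$, since replacing $Z$ by $Z\,Z(0)^{-1}$ preserves both properties: $Z(0)$ is real and commutes with $P^\pm(0)$. At a fixed point $k_\star$, the matrix $Z(k_\star)$ is real orthogonal, and the intertwining relation gives
\[ L(k_\star)=Z(k_\star)L(0)Z(k_\star)^{\mathrm t}. \]
Since $\Pf(BAB^{\mathrm t})=\det(B)\Pf(A)$, this yields $\Pf(L(k_\star))=\det(Z(k_\star))\,\Pf(L(0))$. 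Taking $k_\star=\pi$ and using $\Pf(L(0))^2=1$ (Remark~\ref{rmk:P_invariant}), we get
\[ \det Z(\pi)=\Pf(L(0))\Pf(L(\pi)). \]

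\emph{Step 2: parity of the winding.} Write $f(k):=\det Z(k)$. Then $f$ satisfies $f(-k)=\overline{f(k)}$, $f(0)=1$, and $f(\pi)=\pm1$. Lift $f(k)=e^{i\phi(k)}$ on $[0,\pi]$ with $\phi(0)=0$, so that $\phi(\pi)\in\pi\Z$. The symmetry lets us take $\phi(-k)=-\phi(k)$ as the lift on $[-\pi,0]$. The total winding is then
\[ [\det Z]=\frac{\phi(\pi)-\phi(-\pi)}{2\pi}=\frac{\phi(\pi)}{\pi}. \]
Hence $(-1)^{[\det Z]}=e^{i\phi(\pi)}=f(\pi)=\det Z(\pi)$. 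Combining with Step 1 and Definition~\ref{def:I_invariant} gives~\eqref{eq:relation_I-P}.

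\emph{Step 3: justifying the normalization.} The main point needing care is that the $Z$ produced by the cited construction is particle-hole symmetric. The Kato--Nagy/parallel transport unitary inherits the symmetry, and the modified version~\eqref{eq:modified_parallel} preserves it when the logarithm $L$ is chosen compatibly with $C$. If this fails, I would instead build $Z$ directly from a particle-hole symmetric periodic basis $\{v_j(k)\}$ for $P^\pm$, which exists in $d=1$ by Section~\ref{sec:Bases_class_AIIICD}. Set $Z(k)v_j(0):=v_j(k)$; this $Z$ is periodic, intertwines $P^\pm(0)$ with $P^\pm(k)$, and commutes with $C$ up to $k\mapsto-k$ by the argument of Principle~\ref{pri:Periodicity_from_pseudo-periodicity}. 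It also satisfies $Z(0)=\Id$ automatically. Independence of the Zak phase from the choice of $Z$ was already noted after Definition~\ref{def:I_invariant}, so any such $Z$ may be used.
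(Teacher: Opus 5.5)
Your proof is correct and follows essentially the same route as the paper's. You use Pfaffian congruence at the fixed points to get $\det Z(\pi)=\Pf(L(0))\Pf(L(\pi))$, then an odd lift of the phase of $\det Z$ to show $(-1)^{[\det Z]}=\det Z(\pi)$; the only cosmetic difference is that you normalize $Z(0)=\Id$, whereas the paper deduces $\det Z(0)=1$ from the same Pfaffian identity at $k_\star=0$.
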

\begin{proof}
Let $Z:\T^1 \to U(2n)$ be a unitary-valued map such that $CZ(k)=Z(-k)C$ and $Z(k)P^\pm(0)=P^\pm(k)Z(k)$, as in Definition~\ref{def:I_invariant}. By the definition of $L(k)$ we have that
\[
L(0) = Z(0) L(0) Z(0)^{-1} \quad \text{and} \quad L(\pi) = Z(\pi) L(0) Z(\pi)^{-1}.
\]
Assume a basis for $\mathcal{H} \simeq \C^{2n}$ is chosen so that $C=\mathcal{K}$ (compare~\eqref{eq:TRS_normal_form_even}). Then at $k_{\star} \in \{0,\pi\} = F^1$ we have $\overline{Z(k_{\star})}=Z(k_{\star})$, and consequently $Z(k_{\star})^{-1} = Z(k_{\star})^{\mathrm{t}}$. So, the previous identities imply that
\[
\Pf(L(0)) = \det(Z(0)) \, \Pf(L(0)) \quad \text{and} \quad \Pf(L(\pi)) = \det(Z(\pi)) \, \Pf(L(0))
\]
in view of how the Pfaffian changes under congruences. Since $\Pf(L(k_{\star})) \in \{\pm1\}$, we conclude that $\det(Z(0))=1$ and $\det(Z(\pi))= \Pf(L(0)) \, \Pf(L(\pi))$. 

Observe now that the particle-hole constraint for $Z$ yields $\det(Z(k))=\overline{\det (Z(-k))}$, $k \in \mathbb{T}^{1}$. Choose a continuous map $\mu:[-\pi,\pi] \to \R$ such that $\det(Z(k))=e^{2\pi i \mu(k)}$ with $\mu(k)=-\mu(-k)$; we can also fix $\mu(0)=0$ as $\det(Z(0))=1$ as we already proved. With this choice, we have 
\[ [\det(Z(\cdot))]=\mu(\pi)-\mu(-\pi)=2\mu (\pi) \]
and therefore
\[ \I(P^{\pm}) = e^{2 \pi i \mu(\pi)} = \det(Z(\pi)) = \prod_{k_{\star} \in F^1} \Pf(L(k_{\star})) \]
as wanted.
\end{proof}

Something interesting happens in $d=2$, where the homotopy and Murray--von Neumann classifications become intertwined, as stated in Theorem~\ref{thm:Homotopy_D_minimal}. The next result clarifies the relation between the even-particle-hole homotopy invariant $\PI(P^{\pm}) \in (\Z_2)^4$ and the Murray--von Neumann invariant given by the Chern number $\Ch(P^{-}) \in \Z$. 

\begin{proposition}\label{lem:Class_D_relation_Chern-P}
For a pair of projection-valued maps $P^\pm \colon \T^2 \to \Proj_n(\C^{2n})$ in class D, it holds that
\[
    (-1)^{\Ch(P^-)} = \prod_{k_{\star} \in F^2} \Pf(L(k_{\star})) .
\]
\end{proposition}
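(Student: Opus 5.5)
The plan is to reduce the statement to Proposition~\ref{lem:relation_I-P}, applied on the two lines $k_2=0$ and $k_2=\pi$, and then to show that the parity of $\Ch(P^-)$ equals the difference of the Zak phases along these two lines. First observe that for fixed $k_2^\star\in\{0,\pi\}$ the restriction $P^\pm|_{\{k_2=k_2^\star\}}\colon\T^1\to\Proj_n(\C^{2n})$ is again a class-D pair on a minimal space. This holds because the involution $k\mapsto -k$ maps the line $k_2=k_2^\star$ to itself with $k_1\mapsto -k_1$. Its fixed points on this line are $(0,k_2^\star)$ and $(\pi,k_2^\star)$. Proposition~\ref{lem:relation_I-P} then gives
\[ \I\big(P^\pm|_{k_2=0}\big)\,\I\big(P^\pm|_{k_2=\pi}\big)=\prod_{k_\star\in F^2}\Pf(L(k_\star)). \]
It therefore suffices to prove $(-1)^{\Ch(P^-)}=\I(P^\pm|_{k_2=0})\,\I(P^\pm|_{k_2=\pi})$.

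For this step I would use the Berry-phase representation~\eqref{eq:I_as_Berry_phase}. Take a pseudo-periodic particle-hole symmetric basis for $P^\pm$ as in Proposition~\ref{prop:Class_AIIICD_frames}, with the roles of $k_1,k_2$ swapped. Concretely, the basis is periodic in $k_1$, and along the $k_2$-loop only $v_1$ and $v_{2n}$ pick up the phases $e^{\pm i\Ch(P^-)k_1}$. This is the same construction with $\Ch(P^-)$ replaced by its sign-adjusted counterpart, and orientation conventions will not matter modulo~$2$. Restricting this basis to the line $k_2=k_2^\star$ gives a smooth periodic symmetric basis there, after smoothing, compare Remark~\ref{rmk:regularity_required}. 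Hence $\I(P^\pm|_{k_2=k_2^\star})=\exp\big(\int_{-\pi}^{\pi}\sum_{j\le n}\inn{v_j}{\partial_{k_1}v_j}\,dk_1\big)$ evaluated at $k_2=k_2^\star$.

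Next, apply Stokes' theorem on the half-cylinder $[-\pi,\pi]\times[0,\pi]$. The difference of the two Berry phases equals $\int$ of the Berry curvature of $P^-$ over the half torus, up to boundary terms on the sides $k_1=\pm\pi$, and those cancel by $k_1$-periodicity. The particle-hole symmetry $CP^-(k)C^{-1}=P^+(-k)$, combined with minimality $P^++P^-=\Id$, gives $\Omega_{P^+}=-\Omega_{P^-}$ and $\Omega_{P^+}(-k)=\ldots$. From these one gets that the curvature of $P^-$ is even under $k\mapsto -k$. Consequently the half-torus integral is exactly half of $2\pi i\,\Ch(P^-)$, and exponentiating yields $(-1)^{\Ch(P^-)}$.

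The main obstacle is precisely this symmetry bookkeeping. One has to verify that the curvature of $P^-$ is invariant under the involution, and that the particle-hole constraint is compatible with the Berry-phase formula. If the curvature route is awkward, a fallback is to work with matching matrices. The Zak phase is the parity of $[\det Z]$ along each line. The Kato--Nagy transport in the $k_2$ direction from the $k_2=0$ line to the $k_2=\pi$ line, together with its symmetric reflection to $[-\pi,0]$, builds a loop whose determinant winding is $\Ch(P^-)$. Half of that loop computes the difference of the two Zak windings, so the two parities agree, in the same spirit as the proof of Proposition~\ref{lem:relation_I-P}.
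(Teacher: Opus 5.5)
Your proof is correct, and its skeleton matches the paper's. Both reduce the claim to $(-1)^{\Ch(P^-)}$ being the product of the Zak phases on two parallel involution-invariant circles, and then apply Proposition~\ref{lem:relation_I-P} on each circle. The paper uses the circles $k_1\in\{0,\pi\}$ rather than $k_2\in\{0,\pi\}$, which is immaterial.

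The genuine difference is how you obtain that key identity. You use Stokes on the half-torus $[-\pi,\pi]\times[0,\pi]$ together with evenness of the Berry curvature. The step you left open closes quickly:
\begin{itemize}
\item From $P^+(-k)=C^{-1}P^-(k)C$ and antiunitarity, $\Omega_{P^+}(-k)=\overline{\Omega_{P^-}(k)}=-\Omega_{P^-}(k)$. The two sign flips from $\partial_i$ cancel, and $\Omega_P=\tr(P[\partial_1P,\partial_2P])$ is purely imaginary.
\item Minimality gives $\Omega_{P^+}=\Omega_{\Id-P^-}=-\Omega_{P^-}$ pointwise, because $\tr[\partial_1P^-,\partial_2P^-]=0$ in finite dimension.
\end{itemize}
Hence $\Omega_{P^-}$ is even. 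Also, the symmetric pseudo-periodic basis is more than you need. Any smooth basis on the strip that is periodic in $k_1$ works, since $e^{i\pi\mathcal{A}}$ in~\eqref{eq:I_as_Berry_phase} is unchanged by arbitrary periodic gauge changes.

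The paper never leaves the level of winding numbers. It takes a symmetric Kato--Nagy transport $U(t,k_2)$ with $CU(t,k_2)=U(-t,-k_2)C$ and rewrites $\det\alpha$ as the determinant of a $2n\times 2n$ matrix. Via the homotopy $[\det U(-\pi,\cdot)]=[\det U(0,\cdot)]$, it gets the exact integer identity $\Ch(P^-)=[\det Z_\pi]-[\det Z_0]$. Here $Z_\pi$ is spliced from $U(\pi,\cdot)$ on $\Imm P^-(\pi,0)$ and $U(-\pi,\cdot)$ on $\Imm P^+(\pi,0)$ precisely so that it is particle-hole symmetric. This splicing is the paper's counterpart of your evenness of the curvature.

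The trade-offs are as follows.
\begin{itemize}
\item The paper's route works with merely continuous maps and the matching-matrix definition of $\Ch$. It needs no smoothing, no Chern--Weil identification (Remark~\ref{rmk:Chern}), and no Berry-phase formula.
\item Your route is geometrically more transparent (half the flux through half the torus) and shows exactly where minimality enters. It pays for this with those regularity reductions.
\item Those reductions are legitimate. Both sides are homotopy invariants within class D on $\C^{2n}$. A symmetric smoothing exists, e.g.\ mollify $P^+-P^-$ with an even kernel and take spectral projections.
\end{itemize}
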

\begin{proof}
Recall from Section~\ref{sec:Bases} that, using a pseudo-periodic and symmetric basis for the pair $P^\pm$, we can define a unitary-valued map $U \colon [-\pi,\pi] \times \mathbb{T}^{1} \to U(2n)$, $(t,k_2) \mapsto U(t,k_2)$, such that $U(t,k_2)P^\pm(0,0)=P^\pm(t,k_2)U(t,k_2)$ and $CU(t,k_2)=U(-t,-k_2)C$. We write here $t$ and not $k_1$ to stress the fact that such $U$ is periodic only in the second variable, but in general not in the first. With that and a basis $\{v_j\}_{j\in \{1,\cdots ,n\}}$ for the range of the projection $P^-(0,0)$, we can introduce the $n \times n$ unitary matching matrix 
\[ [\alpha(k_2)]_{i,j}:=\inn{v_i}{U(-\pi,k_2)^{-1}U(\pi,k_2)v_j}, \quad i,j \in \{1,\ldots,n\},\]
as in \eqref{eq:standard_alpha}; in turn, this allows us to compute the Chern number $\Ch (P^-)=[\det(\alpha(\cdot))]$.
    
This definition of the Chern number concerns matrices of size $n\times n$, while we aim at establishing a relation with Pfaffians of $(2n\times 2n)$-matrices. To relate them more easily, we reinterpret the winding number defining the Chern number as the one produced from the determinant of a bigger matrix, namely we write $\det(\alpha(k_2))=\det(X(k_2))$ where
\begin{align*}
X(k_2) &:= P^+(\pi,0) + P^-(\pi,0) U(-\pi,0) U(-\pi,k_2)^{-1} U(\pi,k_2) U(\pi,0)^{-1} P^-(\pi,0) \\
    & = \begin{cases}
        \Id_n  & \text{on } \Imm(P^+(\pi,0)), \\
        U(-\pi,0) U(-\pi,k_2)^{-1} U(\pi,k_2) U(\pi,0)^{-1} & \text{on } \Imm(P^-(\pi,0)),
    \end{cases}
\end{align*}
for $k_2 \in \mathbb{T}^{1}$. Notice that by construction $X(k_2)=U(-\pi,0)U(-\pi,k_2)^{-1} Z_\pi(k_2)$, where
\[
Z_\pi(k_2):=\begin{cases}
    U(-\pi,k_2)U(-\pi,0)^{-1} & \text{on } \Imm(P^+(\pi,0)), \\
    U(\pi,k_2)U(\pi,0)^{-1} & \text{on } \Imm(P^-(\pi,0)).
\end{cases}
\]

The continuity of $U$ in $t \in [-\pi,\pi]$ implies that $[\det(U(-\pi,\cdot))]=[\det(U(0,\cdot))]$, as the winding number is a homotopy invariant, so we have 
\[
\Ch(P^-) = [\det(X(\cdot))] = [\det(Z_\pi(\cdot))]-[\det(U(0,\cdot))]
\]
and consequently
\begin{equation} \label{eallipichern}
(-1)^{\Ch(P^-)} = (-1)^{[\det(Z_\pi(\cdot))]} (-1)^{-[\det(U(0,\cdot))]}.
\end{equation}

Notice now that $Z_0(k_2) := U(0,k_2)$ defines a periodic unitary intertwiner between $P^{-}(0,0)$ and $P^{-}(0,k_2)$ in view of the defining properties of $U$; similarly
\begin{align*} 
Z_\pi(k_2) P^{-}(\pi,0) & = U(\pi,k_2) U(\pi,0)^{-1} P^{-}(\pi,0) = U(\pi,k_2) P^{-}(0,0) U(\pi,0)^{-1} \\
& = P^{-1}(\pi,k_2) U(\pi,k_2) U(\pi,0)^{-1} = P^{-1}(\pi,k_2) Z_\pi(k_2),
\end{align*}
so that $Z_\pi(k_2)$ gives a periodic unitary intertwiner between $P^{-}(\pi,0)$ and $P^{-}(\pi,k_2)$. A similar argument conducted on the `plus' component of the pair, spanning the orthogonal complement of $P^{-}$, allows to conclude that $Z_0, Z_\pi \colon \mathbb{T}^{1} \to U(2n)$ qualify as unitary-valued maps of the type described in Definition~\ref{def:I_invariant}, and can be used to compute the Zak phases of the restrictions $P^{\pm}(0,\cdot), P^{\pm}(\pi,\cdot) \colon \mathbb{T}^{1} \to \Proj_n(\mathbb{C}^{2n})$, respectively. Proposition~\ref{lem:relation_I-P} applies to both such restrictions, and yields
\begin{align*}
(-1)^{[\det(Z_\pi(\cdot))]} & = \I(P^{\pm}(\pi,\cdot)) = \Pf(L(\pi,0)) \, \Pf(L(\pi,\pi)) \\
(-1)^{-[\det(Z_0(\cdot))]} & = \I(P^{\pm}(0,\cdot))^{-1} = \Pf(L(0,0)) \, \Pf(L(0,\pi))
\end{align*}
(notice that quantities in $\Z_2 =  \{\pm1\}$ are equal to their inverses). Together with~\eqref{eallipichern}, the above identities give the desired statement.  
\end{proof}

The above result allows to slightly refine the statement of Theorem~\ref{thm:Homotopy_D_minimal}. The condition for two pairs of projection-valued maps in class D over the $2$-dimensional torus and acting on a minimal ambient Hilbert space to be homotopic is that of being Murray--von Neumann equivalent, and for three of the four Pfaffians appearing in the definition of the even-particle-hole invariant to agree; the value of the fourth one is then automatically derived from the knowledge of the Chern number of the `minus' component of the pair, which already determines the Murray--von Neumann equivalence of the two pairs.

\subsubsection{Class BDI}

The class features commuting chiral and even particle-hole symmetry operators: the considerations from classes AIII and D therefore apply here as well. In $d=1$ one can combine Propositions~\ref{lem:relation_W-I} and~\ref{lem:relation_I-P} and obtain the following relations among the chiral invariant, the even-particle-hole invariant, and the Zak phase of a pair of projection-valued maps $P^{\pm} \colon \mathbb{T}^{1} \to \Proj_n(\C^{2n})$:
\[
(-1)^{\W(P^{\pm})} = \I(P^{\pm}) = \prod_{k_{\star} \in F^1} \Pf(L(k_{\star})).
\]
A similar conclusion holds in $d=2$, since the chiral and even-particle-hole invariants are defined via the restrictions to the 1-dimensional sub-tori along cardinal directions in $\mathbb{T}^{2}$. These considerations allow to refine the statement of Theorem~\ref{thm:Homotopy_BDI_minimal}, and a minimal set of complete homotopy invariants in this class is given, for example, by the chiral invariant together with one of the Pfaffians defining the even-particle-hole invariant per direction in the torus.

\subsubsection{Class CII}

This class displays two commuting antiunitary symmetries of odd type: as mentioned above,~\cite[Theorem~5.1]{manzonimonacopeluso2026zak} yields that

\begin{proposition}\label{lem:I_trivial_class_cII}
For any pair of projection-valued maps $P^{\pm} \colon \mathbb{T}^{1} \to \Proj_n(\mathbb{C}^{2n})$ in class CII, the Zak phase is trivial: $\I(P^{\pm}) = 1 \in \mathbb{Z}_2$.
\end{proposition}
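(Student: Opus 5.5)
The plan is to mirror the argument given for class C after Proposition~\ref{lem:I_trivial_class_c}, and work directly with the intertwiner $Z$ from Definition~\ref{def:I_invariant}. In class CII both $T$ and $C$ are odd, so $Z$ can be taken to satisfy in particular the odd time-reversal constraint $T Z(k) = Z(-k) T$. This is allowed because \cite{monaco2023z2, manzonimonacopeluso2026zak} guarantee that a \emph{symmetric} $Z$ exists, and the Zak phase does not depend on the choice of symmetric $Z$. Using only this one odd antiunitary symmetry, we will show that the winding number $[\det(Z(\cdot))]$ is even, and hence $\I(P^{\pm}) = (-1)^{[\det Z]} = 1$.

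The key steps are as follows. First, we choose a basis of $\C^{2n}$ in which $T = \mathcal{K}J$, with $J$ the standard $2n\times 2n$ symplectic matrix, as in~\eqref{eq:TRS_normal_form_odd}. This is possible since $T$ is an odd antiunitary operator; the chiral normal form is irrelevant for this step. The constraint then reads $J\overline{Z(k)} = Z(-k)J$. Taking determinants gives $\overline{\det Z(k)} = \det Z(-k)$ for all $k\in\mathbb{T}^1$. At the fixed points $k_\star\in\{0,\pi\}$ the constraint says $Z(k_\star)^{\mathrm t} J Z(k_\star) = J$, so $Z(k_\star)$ is symplectic and $\det Z(k_\star) = 1$, exactly as in the proof of Theorem~\ref{thm:TRS_unitary_homotopy}. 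Next we write $\det Z(k) = e^{i\mu(k)}$ with $\mu$ continuous on $[-\pi,\pi]$, odd, and $\mu(0)=0$; oddness can be imposed because of the conjugation symmetry of $\det Z$. Since $\det Z(\pi)=1$, we get $\mu(\pi)\in 2\pi\Z$, and therefore $[\det Z] = (\mu(\pi)-\mu(-\pi))/2\pi = 2\mu(\pi)/2\pi \in 2\Z$. This is the ``two halves contribute equally'' argument already used in the paper.

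An alternative route is to quote \cite[Theorem~5.1]{manzonimonacopeluso2026zak} directly, since the pair carries an antiunitary symmetry squaring to $-\Id$. Yet another option is to observe that a class-CII pair is in particular a class-C pair if one forgets $T$, and then apply Proposition~\ref{lem:I_trivial_class_c}. The only care needed there is that the $Z$ used for the class-C computation need only be $C$-symmetric; this is legitimate because the Zak phase is independent of the symmetric gauge chosen.

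The main point to verify is the existence of a $Z$ that is symmetric under the relevant odd symmetry (or under all symmetries simultaneously), together with the independence of $\I$ from this choice. Both facts are supplied by the cited results and by formula~\eqref{eq:I_as_Berry_phase}. The parity computation itself is routine.
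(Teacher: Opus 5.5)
Your argument is correct, but its main line is not the one the paper uses for class CII.

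\textbf{The paper's route.} The paper either quotes \cite[Theorem~5.1]{manzonimonacopeluso2026zak} outright, which is your first alternative, or it argues through the chiral block. By Principle~\ref{pri:chiral_classes_min_dim}, the minimal-space pair is encoded in a map $W\colon\T^1\to U(n)$. In class CII this map satisfies the odd constraint $J\overline{W(k)}=W(-k)J$, so $\W(P^{\pm})=[\det W(\cdot)]$ is even by Theorem~\ref{thm:TRS_unitary_homotopy}, as recorded in Theorem~\ref{thm:Homotopy_CII_minimal}. Then $\I(P^{\pm})=(-1)^{\W(P^{\pm})}=1$ by Proposition~\ref{lem:relation_W-I}.

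\textbf{Your route.} You run the parity argument directly on the $2n\times 2n$ intertwiner $Z$, using $T$ exactly as the paper uses $C$ in the remark after Proposition~\ref{lem:I_trivial_class_c}. The mechanism is the same in both cases: the matrix is symplectic at the fixed points, hence has unit determinant, and the symmetry $\overline{\det Z(k)}=\det Z(-k)$ makes the two half-circles contribute equally. It is only applied to a different object. Your details check out: the change of basis to the normal form does not affect $\det Z$, and oddness of $\mu$ follows from uniqueness of continuous lifts with $\mu(0)=0$.

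\textbf{What each buys.} Your version uses neither chirality, nor minimality, nor Proposition~\ref{lem:relation_W-I}. It in fact shows that every $Z$ obeying an odd antiunitary constraint has even winding. So you need only the existence of one symmetric $Z$, not the independence of $\I$ from that choice. In substance this is the general content of the cited Theorem~5.1. The paper's route is shorter given the preceding results, and it identifies the Zak phase with the parity of the chiral (relative) invariant, placing the statement inside the homotopy classification.
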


For an alternative argument, one can invoke the fact, already proven in Theorem~\ref{thm:Homotopy_CII_minimal}, that the chiral invariant takes values in even integers for such projection-valued maps, and conclude the triviality of the Zak phase with Proposition~\ref{lem:relation_W-I}.

We compare now also the classification of pairs of projection-valued maps in this class and in $d=2$ up to homotopy and up to Murray--von Neumann equivalence, as the latter invokes the Fu--Kane--Mele invariant (Section~\ref{sec:Bases_class_BDIetal}) which is absent in the former (Theorem~\ref{thm:Homotopy_CII_minimal}), rather involving the chiral invariant. This absence is explained by the following statement (compare Remark~\ref{rmk:Hom_vs_MvN_in_chiral} for a similar conclusion in class AIII).

\begin{proposition}\label{lem:Class_CII_vanishing_delta}
For any pair of projection-valued maps $P^{\pm} \colon \mathbb{T}^{2} \to \Proj_n(\mathbb{C}^{2n})$ in class CII, the Fu--Kane--Mele invariant is trivial: $\FKM(P^{-}) = 1 \in \mathbb{Z}_2$. In particular, any two such pairs of projection-valued maps acting on a minimal ambient Hilbert space are Murray--von Neumann equivalent.
\end{proposition}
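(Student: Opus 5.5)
We mimic the class-AIII argument of Remark~\ref{rmk:Hom_vs_MvN_in_chiral}, replacing additivity of the Chern number by multiplicativity of the Fu--Kane--Mele invariant (Remark~\ref{rmk:FKM}). Let $P^{\pm} \colon \mathbb{T}^{2} \to \Proj_n(\C^{2n})$ be in class CII, so $n$ is even.

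\textbf{Step 1 (the two elements carry the same invariant).} The chiral symmetry $S$ is a $k$-independent unitary with $S P^{+}(k) S^{-1} = P^{-}(k)$. In class CII, $S$ commutes with the odd time-reversal operator $T$. Hence $S$ implements a unitary equivalence between $P^{+}$ and $P^{-}$ that respects the symmetry constraint~\eqref{MvNsymmetry} as a map in class AII. Both $P^{+}$ and $P^{-}$ are separately odd-time-reversal symmetric. By the invariance of $\FKM$ under symmetric unitary equivalence (Remark~\ref{rmk:FKM}), we get $\FKM(P^{+}) = \FKM(P^{-})$.

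\textbf{Step 2 (their product is trivial).} Minimality gives $P^{-}(k) + P^{+}(k) \equiv \Id_{\C^{2n}}$, which is a constant time-reversal symmetric projection. A constant map admits the constant symmetric basis, so its matching matrix is the identity and its FKM invariant is trivial. Alternatively, it has a symmetric periodic basis, so Proposition~\ref{prop:Class_AII_frames} gives triviality. Since $P^{-}P^{+} \equiv 0$, multiplicativity yields
\[ 1 = \FKM(P^{-} + P^{+}) = \FKM(P^{-}) \, \FKM(P^{+}). \]

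\textbf{Step 3 (conclusion).} Combining Steps 1 and 2 gives $\FKM(P^{-})^2 = 1$. This is automatic in $\Z_2$, so it gives no information. This is the main obstacle: unlike the $\Z$-valued Chern case, where $c = -c$ forces $c = 0$, the two relations $\FKM(P^{+}) = \FKM(P^{-})$ and $\FKM(P^{+})\FKM(P^{-}) = 1$ are compatible with $\FKM(P^{-}) = -1$. So this route must be abandoned and a finer argument used.

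I would work with the off-diagonal block $W$ of Principle~\ref{pri:chiral_classes_min_dim}. It satisfies the odd-time-reversal constraint $J \overline{W(k)} = W(-k) J$ from~\eqref{eq:Class_CII_symmetry_structure}. In that representation,
\[ P^{-}(k) = \tfrac12 \begin{pmatrix} \Id_n & -W(k) \\ -W(k)^* & \Id_n \end{pmatrix}. \]
The map $\varphi \mapsto \tfrac{1}{\sqrt2}(\varphi, -W(k)^*\varphi)$ is a periodic isometry from the constant space $\C^n$ onto the range of $P^{-}(k)$. Take the standard symplectic basis $\{e_j\}$ of $\C^n$, on which $T$ acts through $\K J$, and set $v_j(k) := \tfrac{1}{\sqrt2}(e_j, -W(k)^* e_j)$.

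The key check is that this family is a time-reversal symmetric basis in the sense of Definition~\ref{def:Symmetric_basis}. Using $T = \K\,\mathrm{diag}(J,J)$ and the constraint on $W$ (equivalently on $W^*$), one computes $T v_j(k) = \sum_i \varepsilon_{j,i} v_i(-k)$ with the symplectic $\varepsilon$. The main verification is that $J\overline{W(k)^*} = W(-k)^* J$, which follows from the constraint on $W$ by taking adjoints and using that $J$ is real orthogonal.

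Once this holds, $P^{-}$ admits a symmetric and fully periodic basis. Its matching matrix is then the identity, so $\TK(J\alpha) = \TK(J)$ is trivial, and Definition~\ref{def:FMK} gives $\FKM(P^{-}) = 1$. The final claim follows from the Murray--von Neumann classification of Section~\ref{sec:Bases_class_BDIetal}, since all such pairs then have the same rank and trivial FKM.
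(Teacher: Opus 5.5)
Your proof is correct and is essentially the paper's argument: the paper uses the time-reversal symmetric unitary $Z(k)=\begin{pmatrix} W(k)W(0)^* & 0\\ 0 & \Id_n\end{pmatrix}$ to conjugate the constant pair $P^{\pm}(0)$ to $P^{\pm}(k)$, and then invokes invariance of $\FKM$ under symmetric unitary equivalence. You instead use the same off-diagonal block $W$ and its constraint $J\overline{W(k)^*}=W(-k)^*J$ to write down a symmetric periodic basis directly, which by Principle~\ref{pri:Periodicity_from_pseudo-periodicity} carries the same information; your Steps 1--3 are a dead end (as you correctly diagnose, the $\mathbb{Z}_2$ relations carry no information) and can simply be dropped.
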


\begin{proof}
The unitary operator $Z$ defined in~\eqref{eq:explicit_unitary_equivalence_in_chiral_symmetry} provides a symmetric unitary equivalence between $P^\pm$ and the constant pair $Q^{\pm}(\cdot) \equiv P^\pm(0)$. Since the invariant $\FKM(\cdot)$ is preserved under symmetric unitary equivalences (Remark~\ref{rmk:FKM}), we find that it must be trivial.
\end{proof}

\subsubsection{Class CI}

There is not much to be discussed: Section~\ref{sec:Bases_class_BDIetal} and Theorem~\ref{thm:Homotopy_CI_minimal} allow to conclude that any two pairs of projection-valued maps in this class are both Murray--von Neumann equivalent and homotopic to each other.

\subsubsection{Class DIII}

Finally, class DIII is probably the richest, as the homotopy classification prompts to consider {\it a priori} chiral, even-particle-hole, and DIII invariants; in this Section we further introduced the Zak phase on 1-dimensional tori; and finally, the Fu--Kane--Mele invariant enters the classification up to Murray--von Neumann equivalence (Section~\ref{sec:Bases_class_BDIetal}). The first statement of this Section shows that the chiral and even-particle-hole invariants, and consequently Zak phases, play no role in this class: compare Remark~\ref{rmk:DIII_no_invariants}.

\begin{proposition}
For any pair of projection-valued maps $P^{\pm} \colon \mathbb{T}^{d} \to \Proj_n(\mathbb{C}^{2n})$ in class DIII, $d \le 2$, the Pfaffians entering the Definition~\ref{def:P_invariant} of the even-particle-hole invariant $\PI(P^{\pm})$ are all equal and independent on $P^{\pm}$:%
\footnote{Recall that $n$ is necessarily an even positive integer.}
\[ \Pf(L(k_{\star})) \equiv (-1)^{n/2} \quad \forall \, k_{\star} \in F^d. \]
Moreover, both the chiral invariant and the Zak phase, whenever defined, are trivial: $\W(P^{\pm}) = 0 \in \mathbb{Z}$ if $d=1$ (respectively $\W(P^{\pm}) = (0,0) \in \mathbb{Z}^2$ if $d=2$), and $\I(P^{\pm}) = 1 \in \mathbb{Z}_2$ along any $1$-dimensional (sub-)torus of $\mathbb{T}^{d}$.
\end{proposition}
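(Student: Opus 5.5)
The plan is to reduce everything to the unitary block $W$ from Principle~\ref{pri:chiral_classes_min_dim} and then use a connectedness argument at the fixed points. We work in the basis where $S$ has the form~\eqref{eq:S_normal_form}, $T=\mathcal{K}J$ and $C$ has the form~\eqref{C_DIII}. There $L(k)=i\begin{pmatrix}0&W(k)\\W(k)^*&0\end{pmatrix}$, and $W$ satisfies~\eqref{eq:W^t=-W}. At any fixed point $k_\star\in F^d$ this gives $W(k_\star)^{\mathrm t}=-W(k_\star)$, so $W(k_\star)$ is a skew-symmetric unitary $n\times n$ matrix.

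Definition~\ref{def:P_invariant} asks for the Pfaffian in a basis where $C$ acts as $\mathcal{K}$. So the first step is to fix once and for all a unitary change of basis $B$ adapted to~\eqref{C_DIII}, sending $C$ to $\mathcal{K}$. For instance, take the real orthonormal basis of $C$-fixed vectors
\[ \tfrac{1}{\sqrt2}(e_j+e_{n+j}), \qquad \tfrac{i}{\sqrt2}(e_j-e_{n+j}), \qquad j\in\{1,\ldots,n\}. \]
We then set $\tilde L(k_\star):=B^{*}L(k_\star)B$. By Remark~\ref{rmk:P_invariant} this is a real, orthogonal, skew-symmetric matrix, so $\Pf(\tilde L(k_\star))\in\{\pm1\}$. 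The key observation is that $\tilde L(k_\star)$ depends continuously on the skew-symmetric unitary matrix $W(k_\star)$ alone, through a fixed linear map independent of $P^\pm$ and $k_\star$. By the decomposition~\eqref{eq:DIIId0}, the space of skew-symmetric unitary $n\times n$ matrices is path-connected; this was already used in the $d=0$ case of Theorem~\ref{thm:Homotopy_DIII_minimal}. A continuous $\{\pm1\}$-valued function on a connected space is constant. Hence $\Pf(\tilde L(k_\star))$ takes the same value for every $P^\pm$ and every $k_\star$, and it can be computed on the single representative $W=J_n$, the standard $n\times n$ symplectic matrix.

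The main obstacle is the explicit sign computation for $W=J_n$, which should give $(-1)^{n/2}$. The plan is to exploit block structure. The matrix $J_n$ splits as a direct sum of $n/2$ copies of $\begin{pmatrix}0&1\\-1&0\end{pmatrix}$. Reordering the basis $B$ accordingly, $\tilde L$ splits as a direct sum of $n/2$ identical real $4\times4$ skew-symmetric orthogonal blocks. The reordering is by an orthogonal matrix, which may change the Pfaffian by its determinant; I would choose a reordering with determinant $1$, or else track its sign. Since the Pfaffian is multiplicative on direct sums, it remains to compute one $4\times4$ Pfaffian, $\Pf=a_{12}a_{34}-a_{13}a_{24}+a_{14}a_{23}$. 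We expect this to equal $-1$, giving $(-1)^{n/2}$. Remark~\ref{rmk:P_invariant} warns that the overall sign is convention-dependent (choice of real basis and ordering of the pair), so I would fix the basis above and verify the $4\times4$ entry directly.

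For the remaining claims, recall that~\eqref{eq:W^t=-W} together with $n$ even gives $\det W(k)=\det W(-k)$. Along $\mathbb{T}^1$, and along each of the sub-tori $\mathbb{T}^1_1,\mathbb{T}^1_2$ (which are invariant under $k\mapsto-k$), the two halves of the loop therefore contribute opposite amounts to the winding of $\det W$, so the winding vanishes. This gives $\W(P^\pm)=0$, respectively $(0,0)$. Proposition~\ref{lem:relation_W-I} then yields $\I(P^\pm)=(-1)^{0}=1$ along any such $1$-dimensional torus. As a consistency check, Proposition~\ref{lem:relation_I-P} gives the same answer: the product of two equal Pfaffians is $1$.
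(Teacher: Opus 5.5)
Your approach is correct. For the chiral invariant and the Zak phase it coincides with the paper's: $\det W(k)=\det W(-k)$ kills the winding, and then Proposition~\ref{lem:relation_W-I} applies.

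For the Pfaffians you take a different route. The paper never uses connectedness. Its change of basis $B$ satisfies $B^{-1}=XB^{\mathrm t}$ with $X:=\left(\begin{smallmatrix}0&\Id_n\\ \Id_n&0\end{smallmatrix}\right)$, so the conjugation becomes a congruence. Moreover $L(k_\star)X=\operatorname{diag}(iW,iW^*)$, and $iW^*=\overline{iW}$ at a fixed point by~\eqref{eq:W^t=-W}. Hence
$\Pf(BL(k_\star)B^{-1})=\det(B)\,\Pf(iW)\,\overline{\Pf(iW)}=\det(B)\,|\det(iW)|=(-1)^{n/2}$
for every $W$ at once. You instead treat $W\mapsto\Pf(\tilde L(W))$ as a continuous $\{\pm1\}$-valued function on the path-connected space of skew-symmetric unitaries. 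For this to work along the whole path, and not only at the matrices $W(k_\star)$ you start from, note that every such $W$ is the block of a constant class-DIII pair, so $\tilde L(W)$ is always real orthogonal skew-symmetric. Your argument is in effect the $d=0$ case of Theorem~\ref{thm:Homotopy_DIII_minimal}, and it explains conceptually why this invariant carries no information in class DIII. The price is an explicit evaluation at one representative, which the paper's identity makes unnecessary.

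When you do that evaluation, fix the ordering of your real basis, because the sign depends on it and your listing leaves it open. Set $x_j:=\tfrac{1}{\sqrt2}(e_j+e_{n+j})$ and $y_j:=\tfrac{i}{\sqrt2}(e_j-e_{n+j})$, ordered as $(x_1,\ldots,x_n,y_1,\ldots,y_n)$. This is the paper's basis (the columns of $B^{-1}$) up to the sign of the $y_j$'s, which does not affect the Pfaffian since $n$ is even.

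Take the representative $W=\left(\begin{smallmatrix}0&1\\-1&0\end{smallmatrix}\right)^{\oplus n/2}$; by constancy you do not need $J_n$ itself. On $(x_1,x_2,y_1,y_2)$ one finds $Lx_1=-y_2$, $Lx_2=y_1$, $Ly_1=-x_2$, $Ly_2=x_1$. So the $4\times4$ block has $\Pf=\tilde L_{14}\tilde L_{23}=-1$, as you expected. Regrouping $(x_1,\ldots,x_n,y_1,\ldots,y_n)$ into $(x_1,x_2,y_1,y_2,x_3,x_4,y_3,y_4,\ldots)$ only moves blocks of length two, which is an even permutation, so the total is $(-1)^{n/2}$.

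The interleaved ordering $(x_1,y_1,x_2,y_2,\ldots)$ differs by a permutation of sign $(-1)^{n(n-1)/2}=(-1)^{n/2}$ and would give $+1$. The constancy statement is convention-free, but the specific value $(-1)^{n/2}$ in the statement is tied to the paper's choice of $B$.
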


\begin{proof}
The considerations before the statement of Theorem~\ref{thm:Homotopy_DIII_minimal} exhibit $L(k)$ in the block-off-diagonal form~\eqref{eq:L(k)_block}, for a unitary-valued map $W \colon \mathbb{T}^{d} \to U(n)$ satisfying~\eqref{eq:W^t=-W}. However, for the Pfaffian $L(k_{\star})$ to be well-defined, Definition~\ref{def:P_invariant} requires this operator to be represented as an orthogonal and skew-symmetric matrix (compare also Remark~\ref{rmk:P_invariant}): this happens if the particle-hole symmetry operator $C$ takes the form of a conjugation operator, $C=\mathcal{K}$, while the choice adopted in class DIII has rather the operator $C$ in the form~\eqref{C_DIII}. We therefore change the basis of $\mathcal{H} \simeq \C^{2n}$ through the matrix
\[ B := \frac{1}{\sqrt{2}} \begin{pmatrix} 
\Id_n & \Id_n \\ 
i \Id_n & -i\Id_n
\end{pmatrix}. \]
Observe that
\[ B^{-1} = \frac{1}{\sqrt{2}} \begin{pmatrix} 
\Id_n & -i\Id_n \\ 
\Id_n & i\Id_n
\end{pmatrix} = \frac{1}{\sqrt{2}} \begin{pmatrix}
0 & \Id_n \\ \Id_n & 0
\end{pmatrix} \begin{pmatrix} 
\Id_n & i\Id_n \\ 
\Id_n & -i\Id_n
\end{pmatrix} = \begin{pmatrix}
0 & \Id_n \\ \Id_n & 0
\end{pmatrix} B^{\mathrm{t}}. \]
With this, and with the well-known properties of the Pfaffian, we can compute 
\begin{align*}
\Pf(L(k_\star)) & = \Pf\left[ B \begin{pmatrix}
            0 & i W(k_\star) \\ i W(k_\star)^* & 0 
        \end{pmatrix} B^{-1} \right] = \Pf\left[ B \begin{pmatrix}
            0 & i W(k_\star) \\ i W(k_\star)^* & 0 
        \end{pmatrix} \begin{pmatrix}
0 & \Id_n \\ \Id_n & 0
\end{pmatrix} B^{\mathrm{t}} \right] \\
        &= \det(B) \Pf\left[ \begin{pmatrix}
            iW(k_\star) &0 \\0& iW(k_\star)^*
        \end{pmatrix}\right] =\det(B) \Pf(i W(k_\star)) \Pf(iW(k_\star)^*) \\
        &= (-1)^{n/2} \Pf(iW(k_\star))\overline{\Pf(iW(k_\star))} = (-1)^{n/2} | \Pf(iW(k_\star)) |^2 \\
        &= (-1)^{n/2} | \det(iW(k_\star)) | = (-1)^{n/2}
\end{align*}
independently of $k_{\star} \in F^{d}$. Indeed, the condition~\eqref{eq:W^t=-W} on the unitary matrix $W(k_{\star})$, i.e.\ $W(k_\star)^{\mathrm{t}}=-W(k_\star)$, implies that $i W(k_{\star})$ is skew-symmetric (and thus has a Pfaffian) and that $i W(k_\star)^*=\overline{i W(k_\star)}$. Since $i W(k_{\star})$ is a unitary matrix, the absolute value of its determinant is $1$. This allows to conclude the first equality stated in the Theorem.

We have already argued below~\eqref{eq:W^t=-W} how the chiral invariant vanishes. The fact that the Zak phases are trivial follows then from Proposition~\ref{lem:relation_W-I}: compare also~\cite[Theorem~5.1]{manzonimonacopeluso2026zak}. 
\end{proof}

Next, we move to $d=2$. As was recalled above, Section~\ref{sec:Bases_class_BDIetal} identifies the Fu--Kane--Mele invariant $\FKM(P^{-}) \in \mathbb{Z}_2$ as the quantity classifying Murray--von Neumann equivalence classes of pairs of projection-valued maps $P^{\pm} \colon \mathbb{T}^{2} \to \Proj_n(\mathcal{H})$, while Theorem~\ref{thm:Homotopy_DIII_minimal} shows that their homotopy classes are characterized by the DIII invariant $\G(P^{\pm}) \in (\mathbb{Z}_{2})^{3}$ from Definition~\ref{def:DIII_invariant} if $\mathcal{H} \simeq \C^{2n}$. The relation among these two invariants is explained in the following result (compare~\cite[Equation~(3.69)]{chiu2016classification}).

\begin{proposition}\label{lem:relation_delta_gamma}
For a pair of projection-valued maps $P^\pm \colon \T^2 \to \Proj_n(\C^{2n})$ in class DIII, it holds that
\[
    \FKM(P^{-}) = \TK\left(W(\cdot) \big|_{\mathbb{T}^{1}_{1,\,\pi}}\right) \cdot \TK\left(W(\cdot) \big|_{\mathbb{T}^{1}_{2,\,\pi}}\right) \cdot \TK\left(W(\cdot) \big|_{\mathbb{T}^{1}_{\textup{diag}}}\right)
\]
where $\mathbb{T}^{1}_{1,\,\pi}, \mathbb{T}^{1}_{2,\,\pi}, \mathbb{T}^{1}_{\textup{diag}} \subset \mathbb{T}^{2}$ are as in Definition~\ref{def:DIII_invariant}.
\end{proposition}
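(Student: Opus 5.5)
The plan is to compute $\FKM(P^{-})$ directly from a time-reversal symmetric intertwiner built out of the block $W$, as the paper did for class AIII in Proposition~\ref{lem:relation_W-I} and in the proof of Proposition~\ref{lem:Class_CII_vanishing_delta}. We first note that for DIII the chiral constraint forces $\mathcal{H}^\uparrow$ and $\mathcal{H}^\downarrow$ to be exchanged by $T$. Hence $P^{-}(k)$ can be described as the graph $\{(x, -W(k)^* x)\}$ of a unitary map between the two chiral sectors.

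From this we produce a concrete orthonormal basis of $\Imm P^-(k)$: take $v_j(k) := \tfrac{1}{\sqrt 2}(e_j, -W(k)^*e_j)$, where $\{e_j\}$ is the standard basis of $\mathcal{H}^\uparrow \simeq \C^n$. This basis is \emph{periodic} on $\mathbb{T}^2$. It is not time-reversal symmetric, but its failure is measured by a sewing matrix $w(k)_{ij} = \inn{v_i(-k)}{T v_j(k)}$. With the normal form $T = \mathcal{K}J$ and the relation $W(k)^{\mathrm{t}} = -W(-k)$ from~\eqref{eq:W^t=-W}, we expect $w(k)$ to be expressed through $W(k)$ (up to constant factors such as $J$ or $i$). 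So the sewing matrix of $P^-$ is essentially $W$ itself, and the map $k\mapsto w(k)$ satisfies the same constraint as in Remark~\ref{def:gamma_invariant}.

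Next we invoke the identification, recalled in Remark~\ref{rmk:FKM} (via~\cite{Peluso2026}), of $\FKM$ with the original Fu--Kane formula. That formula is the product over the four fixed points $k_\star \in F^2$ of $\sqrt{\det w(k_\star)}/\Pf(w(k_\star))$, with square roots continued along paths connecting the fixed points, computed from any periodic basis. Substituting $w \leftrightarrow W$, we get
\[ \FKM(P^-) = \prod_{k_\star\in F^2} \frac{\sqrt{\det W(k_\star)}}{\Pf(W(k_\star))}, \]
with the continuation of $\sqrt{\det W}$ taken over $\mathbb{T}^2$. Since $\det W$ has no winding along any direction (see the discussion after~\eqref{eq:W^t=-W}), this continuation is globally consistent.

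Finally we regroup the product. $\TK(W|_{\mathbb{T}^1_{1,\pi}})$ involves the fixed points $(\pi,0)$ and $(\pi,\pi)$, and $\TK(W|_{\mathbb{T}^1_{2,\pi}})$ involves $(0,\pi)$ and $(\pi,\pi)$. $\TK(W|_{\mathbb{T}^1_{\mathrm{diag}}})$ involves $(0,0)$ and $(\pi,\pi)$, the latter identified with $(\pi,-\pi)$. In the product of the three, each of $(0,0),(0,\pi),(\pi,0)$ appears once and $(\pi,\pi)$ appears three times. Since each factor is $\pm1$, the triple occurrence reduces to a single one. What remains to check is that the continued square roots along the three sub-tori agree with the global continuation on $\mathbb{T}^2$. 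This follows from the absence of winding of $\det W$: a global continuous $\sqrt{\det W}$ exists on $\mathbb{T}^2$, and restricting it to each sub-torus is an admissible choice in~\eqref{eq:delta_invariant}.

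The main obstacle is the first step. There we must check carefully, with the specific normal forms ($S$ diagonal, $T=\mathcal{K}J$, $C$ as in~\eqref{C_DIII}), that the sewing matrix equals $W$ up to a \emph{constant} matrix. We must also check that this constant does not introduce a spurious sign, for instance a factor $(-1)^{n/2}$ from a Pfaffian of $J$. Such a sign would appear at all four fixed points and so cancel in a product of four factors, but we would verify this explicitly. If the sewing-matrix route proves too messy, the fallback is the method of Proposition~\ref{lem:Class_D_relation_Chern-P}. That is, we would build the matching matrix $\alpha$ from a Kato--Nagy extension of the form~\eqref{eq:explicit_unitary_equivalence_in_chiral_symmetry}, and deform the loop $k_2\mapsto J\alpha(k_2)$ into a concatenation of $W$ restricted to the three sub-tori.
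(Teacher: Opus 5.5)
Your proposal is correct and follows essentially the same route as the paper. The paper uses the periodic basis $\tilde v_j(k)=\frac{1}{\sqrt2}(-e_j, W(k)^*e_j)$ (yours up to sign) and finds the sewing matrix $w(k)=-W(-k)$, so the Fu--Kane product over the four fixed points becomes $\prod_{k_\star}\sqrt{\det W(k_\star)}/\Pf(W(k_\star))$ once the four factors $(-1)^{n/2}$ cancel; it then regroups exactly as you do, using that the $(\pi,\pi)$ factor squares to one, and your explicit remark that a global continuous $\sqrt{\det W}$ on $\mathbb{T}^2$ restricts to admissible choices on each sub-torus spells out a point the paper handles only in a footnote.
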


\begin{proof}
We resort to the original definition from~\cite{FuKane2006} of the Fu--Kane--Mele invariant for a projection-valued map $P^{-} \colon \mathbb{T}^{2} \to \Proj_n(\mathcal{H})$ satisfying a time-reversal symmetry constraint of odd type (compare~\cite{Cornean_2017, Peluso2026} for the equivalence with Definition~\ref{def:FMK}). Since $\Ch(P^{-})=0$~\cite{panati2007triviality,monaco2015symmetry}, Proposition~\ref{prop:Class_A_frames} grants the existence of a periodic (but possibly not time-reversal symmetric) basis $\{\tilde{v}_1(k), \ldots, \tilde{v}_n(k)\}$ for $P^{-}$. Define the \emph{sewing matrix} $w \colon \mathbb{T}^{2} \to U(n)$ as
\[ w(k)_{i,j} := \inn{\tilde{v}_i(-k)}{T \tilde{v}_j(k)}, \quad i,j \in \{1,\ldots, n\}, \; k \in \mathbb{T}^{2}, \]
where $T$ is the time-reversal symmetry operator. An immediate check shows that $w(k)^{\mathrm{t}} = -w(-k)$ for $k \in \mathbb{T}^{2}$. The Fu--Kane--Mele invariant of $P^{-}$ can then be computed from the sewing matrix as~\cite[Equation~(3.26)]{FuKane2006}
\begin{equation} \label{eq:TRIM}
\FKM(P^{-}) = \TK\left(w(\cdot) \big|_{\mathbb{T}^{1}_{1,\,0}}\right) \cdot \TK\left(w(\cdot) \big|_{\mathbb{T}^{1}_{1,\,\pi}}\right) = \prod_{k \in F^2} \frac{\sqrt{\det(w(k_{\star}))}}{\Pf(w(k_{\star}))},
\end{equation}
where $\mathbb{T}^{1}_{1,\,0}, \mathbb{T}^{1}_{1,\,\pi} \subset \mathbb{T}^{2}$ are as in~\eqref{eqn:subtori}, $F^2$ is the set of fixed points (or \emph{time-reversal invariant momenta}) in~\eqref{eq:Fixed}, and the square-root on the right-hand side is chosen in a continuous way%
\footnote{That this can be done can be argued as in Remark~\ref{def:gamma_invariant} and the corresponding footnote. Indeed, the condition $w(k)^{\mathrm{t}} = -w(-k)$, $k \in \mathbb{T}^{2}$, makes the unitary-valued map $w$ homotopically trivial, as the winding numbers of its determinant along the cardinal directions in the torus vanish: compare Theorem~\ref{sec:Homotopies_unitary}. This in turn guarantees the existence of a continuous and periodic function $\mu \colon \mathbb{T}^{2} \to \mathbb{R}$ such that $\det(w(k)) = e^{i \mu(k)}$, $k \in \mathbb{T}^{2}$. A choice for the square-root that is continuous across the whole torus can be performed setting $\sqrt{\det(w(k))} := e^{i \mu(k)/2}$, $k \in \mathbb{T}^{2}$.}.

To compare this definition of the Fu--Kane--Mele invariant with the DIII invariant, we first of all appeal to Principle~\ref{pri:chiral_classes_min_dim}, and write
\[ P^{-}(k) = \frac{1}{2} \begin{pmatrix} \Id_n & - W(k) \\ - W(k)^* & \Id_n \end{pmatrix} \]
where $W \colon \mathbb{T}^{2} \to U(n)$ is as in~\eqref{chiral_symm_P+-P-}. Here, $W$ is understood as a unitary-valued map between $\mathcal{H}^{\downarrow} = \ker(S + \Id_{\mathcal{H}})$ and $\mathcal{H}^{\uparrow} = \ker(S - \Id_{\mathcal{H}})$, where $S$ is the chiral symmetry operator. Choose now an orthonormal basis $\{e_1, \ldots, e_n\}$ for $\mathcal{H}^{\uparrow}$, which we identify as the vectors
\[ e_j^{\uparrow} := \begin{pmatrix} e_j \\ 0 \end{pmatrix} \in \mathcal{H}, \quad j \in \{1,\ldots,n\}. \]
Observe that, if we set
\[ e_j^{\downarrow} := T e_j^{\uparrow} =: \begin{pmatrix} 0 \\ f_j \end{pmatrix} = \begin{pmatrix} 0 \\ - \mathcal{K} e_j \end{pmatrix}, \]
then the collection $\{ f_1, \ldots, f_n \}$ gives an orthonormal basis for $\mathcal{H}^{\downarrow}$: the above identities are a consequence of the fact that $T$ and $S$ anticommute, together with the form $T = \mathcal{K} J$ chosen for the odd time-reversal symmetry operator (compare~\eqref{eq:TRS_normal_form_odd}). It is now easily checked that
\[
\tilde{v}_j(k) := \frac{1}{\sqrt{2}} \begin{pmatrix} -e_j \\ W(k)^* e_j \end{pmatrix}, \quad j \in \{1,\ldots,n\}, \; k \in \mathbb{T}^{2},
\]
exhibits a continuous and periodic basis for $P^{-}$: indeed, for $j \in \{1,\ldots,n\}$ and $k \in \mathbb{T}^{2}$,
\begin{equation} \label{eq:basis_from_W} 
\begin{aligned} 
P^{-}(k) \tilde{v}_j(k) & = \frac{1}{2 \sqrt{2}} \begin{pmatrix} \Id_n & - W(k) \\ - W(k)^* & \Id_n \end{pmatrix} \begin{pmatrix} -e_j \\ W(k)^* e_j \end{pmatrix} \\
& = \frac{1}{\sqrt{2}} \begin{pmatrix} \dfrac{-e_j - W(k) W(k)^* e_j}{2} \\ \dfrac{W(k)^* e_j + W(k)^* e_j}{2} \end{pmatrix} = \tilde{v}_j(k).
\end{aligned}
\end{equation}

To compute the sewing matrix from $\{\tilde{v}_j(k)\}_{j \in \{1,\ldots,n\}}$, let us first observe that
\begin{align*}
T \tilde{v}_j(k) &= \mathcal{K} J \tilde{v}_j(k) = \frac{1}{\sqrt{2}} \mathcal{K} \begin{pmatrix} 0 & \Id_n \\ -\Id_n & 0 \end{pmatrix} \begin{pmatrix} -e_j \\ W(k)^* e_j \end{pmatrix} = \frac{1}{\sqrt{2}} \mathcal{K} \begin{pmatrix} W(k)^* e_j \\ e_j \end{pmatrix} \\
&= \frac{1}{\sqrt{2}} \begin{pmatrix} W(k)^{\mathrm{t}} \mathcal{K} e_j \\ \mathcal{K} e_j \end{pmatrix} = \frac{1}{\sqrt{2}} \begin{pmatrix} -W(-k) \mathcal{K} e_j \\ \mathcal{K} e_j \end{pmatrix} = \frac{1}{\sqrt{2}} \begin{pmatrix} W(-k) f_j \\ -f_j \end{pmatrix}
\end{align*}
where we used the symmetry condition~\eqref{eq:W^t=-W} for $W$ in the second-to-last equality. We can then compute for $i,j \in \{1,\ldots,n\}$ and $k \in \mathbb{T}^{2}$
\begin{align*} 
w(k)_{i,j} &= \inn{\frac{1}{\sqrt{2}} \begin{pmatrix} -e_i \\ W(-k)^* e_i \end{pmatrix}}{\frac{1}{\sqrt{2}} \begin{pmatrix} W(-k) f_j \\ - f_j \end{pmatrix}} = \frac{1}{2} \left( -\inn{e_i}{W(-k) f_j} - \inn{W(-k)^* e_i}{f_j} \right) \\
&= -\inn{e_i}{W(-k) f_j}
\end{align*}
or, more compactly, $w(k) = - W(-k)$, $k \in \mathbb{T}^{2}$. This relation implies 
\begin{gather*} 
\det(w(k)) = (-1)^{n} \det(W(-k)) = \det(W(-k)) \quad \forall\:k \in \mathbb{T}^{2}, \\
\Pf((w(k_{\star})) = (-1)^{n/2} \Pf(W(k_{\star})) \quad \forall \:k_{\star} \in F^2,
\end{gather*}
since $n$ is necessarily even: plugging the above identities in~\eqref{eq:TRIM} we obtain
\begin{equation} \label{eq:TRIM_W} 
\FKM(P^{-}) = \prod_{k \in F^2} \frac{\sqrt{\det(W(k_{\star}))}}{\Pf(W(k_{\star}))}. 
\end{equation}

To conclude the proof, it suffices to compute the product of the three components of the $2$-dimensional DIII invariant, as defined in Definition~\ref{def:DIII_invariant}. We have, by the definition of the Teo--Kane invariant in Remark~\ref{def:gamma_invariant},
\begin{align*}
\TK\left(W(\cdot) \big|_{\mathbb{T}^{1}_{1,\,\pi}}\right) &= \frac{\sqrt{\det(W(\pi,0))}}{\Pf(W(\pi,0))} \, \frac{\sqrt{\det(W(\pi,\pi))}}{\Pf(W(\pi,\pi))}, \\
\TK\left(W(\cdot) \big|_{\mathbb{T}^{1}_{2,\,\pi}}\right) &= \frac{\sqrt{\det(W(0,\pi))}}{\Pf(W(0,\pi))} \, \frac{\sqrt{\det(W(\pi,\pi))}}{\Pf(W(\pi,\pi))}, \\
\TK\left(W(\cdot) \big|_{\mathbb{T}^{1}_{\textup{diag}}}\right) &= \frac{\sqrt{\det(W(0,0))}}{\Pf(W(0,0))} \, \frac{\sqrt{\det(W(\pi,-\pi))}}{\Pf(W(\pi,-\pi))} = \frac{\sqrt{\det(W(0,0))}}{\Pf(W(0,0))} \, \frac{\sqrt{\det(W(\pi,\pi))}}{\Pf(W(\pi,\pi))},
\end{align*}
where in the last equality we used the periodicity of $W(\pi, \cdot)$. Taking the product of these three expressions, we see that the ratio $\sqrt{\det(W(\pi,\pi))}/\Pf(W(\pi,\pi))$ appears thrice: however,
\[ \left(\frac{\sqrt{\det(W(\pi,\pi))}}{\Pf(W(\pi,\pi))}\right)^2 = \frac{\det(W(\pi,\pi))}{\Pf(W(\pi,\pi))^2} = 1. \]
Therefore, we deduce that
\[ \TK\left(W(\cdot) \big|_{\mathbb{T}^{1}_{1,\,\pi}}\right) \cdot \TK\left(W(\cdot) \big|_{\mathbb{T}^{1}_{2,\,\pi}}\right) \cdot \TK\left(W(\cdot) \big|_{\mathbb{T}^{1}_{\textup{diag}}}\right) = \prod_{k \in F^2} \frac{\sqrt{\det(W(k_{\star}))}}{\Pf(W(k_{\star}))} \]
and the comparison with~\eqref{eq:TRIM_W} concludes the proof.
\end{proof}

The above Proposition, together with the conclusions of Section~\ref{sec:Bases_class_BDIetal}, allows to rephrase Theorem~\ref{thm:Homotopy_DIII_minimal} as follows: Two pairs of projection-valued maps in class DIII acting on a minimal ambient Hilbert space are homotopically equivalent if and only if they are Murray--von Neumann equivalent and any two of the three components of their DIII invariants agree. The value of the third component is indeed determined, via the previous statement, by the other two and the Fu--Kane--Mele invariant which characterizes the Murray--von Neumann equivalence classes.

\medskip

In conclusion, a minimal set of homotopy invariants for all AZC classes of (pairs of) projection-valued maps on minimal Hilbert spaces is presented in the following Table~\ref{tabular:AZC_classes_Homotopy_minimal2}.

\begin{table}[!ht]
\centering
\caption{A minimal set of homotopy invariants for (pairs of) rank-$n$ projection-valued maps acting on minimal Hilbert spaces; the rank is omitted. In the columns labeled $d=1$ and $d=2$, we list the set of possible labels, together with the independent topological invariants that characterize the homotopy classes. The last column collects all non-trivial relations and constraints among the possible topological invariants defined in each class.} 
\label{tabular:AZC_classes_Homotopy_minimal2}

\renewcommand{\arraystretch}{1.15}
\begin{tabularx}{\textwidth}{
    ||>{\centering\arraybackslash}m{2.5em}
    ||>{\centering\arraybackslash}m{10.5em}
    ||>{\centering\arraybackslash}m{10.5em}
    ||>{\centering\arraybackslash}X||
}
    \hline
    AZC class
    & $d=1$
    & $d=2$
    & Relations and constraints
    \\
    \hline\hline

    A
    &
    $\text{trivial}$
    &
    $\displaystyle \begin{matrix} \mathbb Z \\
    [\operatorname{Ch}(P)] \end{matrix}$
    &
    --
    \\ \hline

    AIII
    &
    $\displaystyle \begin{matrix} \mathbb Z\\
    [\W(P^\pm)] \end{matrix}$
    &
    $\displaystyle \begin{matrix} \mathbb Z^2 \\
    [\W(P^\pm)] \end{matrix}$
    &
    $\operatorname{Ch}(P^\pm)=0$
    \\ \hline

    AI
    &
    $\text{trivial}$
    &
    $\text{trivial}$
    &
    $\operatorname{Ch}(P)=0$
    \\ \hline

    BDI
    &
    $\displaystyle \begin{matrix} 2\Z\times\Z_2 \\
    [\W(P^\pm)\text{, one component}\\
    \text{of }\PI(P^\pm)] \end{matrix}$
    &
    $\displaystyle \begin{matrix} (2\Z)^2\times\Z_2 \\
    [\W(P^\pm)\text{, one component} \\
    \text{of }\PI(P^\pm)] \end{matrix}$
    &
    The other components of $\PI(P^\pm)$ depend on $\W(P^\pm)$;
    $\operatorname{Ch}(P^\pm)=0$
    \\ \hline

    D
    &
    $\displaystyle \begin{matrix} (\Z_2)^2 \\
    [\PI(P^\pm)] \end{matrix}$
    &
    $\displaystyle \begin{matrix} (\Z_2)^3\times\mathbb Z \\
    [\text{three components of} \\
    \PI(P^\pm), \operatorname{Ch}(P^-)]\end{matrix}$ 
    &
    In $d=2$, the remaining component of $\PI(P^\pm)$
    depends on the other invariants
    \\ \hline

    DIII
    &
    $ \displaystyle \begin{matrix}\Z_2
    \\ [\G(P^\pm)] \end{matrix}$
    &
    $\displaystyle \begin{matrix}(\Z_2)^3 \\
    [\G(P^\pm)] \end{matrix}$
    &
    In $d=2$, $\FKM(P^\pm)$ depends on $\G(P^\pm)$;
    all other invariants are $0$
    \\ \hline

    AII
    &
    $\text{trivial}$
    &
    $\displaystyle \begin{matrix}\Z_2 \\
    [\FKM(P)] \end{matrix}$
    &
    $\operatorname{Ch}(P)=0$
    \\ \hline

    CII
    &
    $\displaystyle \begin{matrix} 2\Z \\
    [\W(P^\pm)] \end{matrix}$
    &
    $ \displaystyle \begin{matrix}(2\Z)^2 \\
    [\W(P^\pm)] \end{matrix}$
    &
    $\FKM(P^\pm)=1$;
    $\operatorname{Ch}(P^\pm)=0$
    \\ \hline

    C
    &
    $\text{trivial}$
    &
    $\displaystyle \begin{matrix} \mathbb Z \\
    [\operatorname{Ch}(P^-)] \end{matrix}$
    &
    --
    \\ \hline

    CI
    &
    $\text{trivial}$
    &
    $\text{trivial}$
    &
    $\operatorname{Ch}(P^\pm)=0$;
    $\W(P^\pm)=0$
    \\

    \hline
\end{tabularx}
\end{table}

\subsection{Change of the homotopy invariants under unitary conjugation}\label{section:Dimerization_ambiguity}

As we discussed in Sections~\ref{sec:Bases} and~\ref{sec:unitary_equivalences}, the invariants that characterize Murray--von Neumann and unitary equivalence classes are \emph{absolute invariants}, i.e.\ they do not change under homotopies and unitary equivalences of the corresponding projection-valued maps. In contrast, we will now show that the homotopy invariants introduced in Section~\ref{sec:Homotopy_minimal} for projections on a minimal Hilbert space \emph{do} change in general under unitary conjugation, albeit in a prescribed way which will be detailed below: this qualifies these homotopy invariants as \emph{relative}.

In what follows, we thus consider two pairs of projection-valued maps $P_0^\pm, P_1^\pm \colon \mathbb{T}^{d} \to \Proj_n(\C^{2n})$, $d \le 2$, together with a symmetric unitary-valued map intertwining them: $U (k)P_0^\pm(k)=P_1^\pm(k)U(k)$, $k \in \mathbb{T}^{d}$. We will express the homotopy invariants of $P_1^\pm$ in terms of the corresponding invariants for $P_0^\pm$ and appropriate homotopy invariants for $U$.

\subsubsection{Chiral invariant} 

Let us start with the chiral invariant $\W(P^{\pm})$ from Definition~\ref{def:W_invariant}. Since the invariant in $d=2$ is obtained from restriction to $1$-dimensional subtori in $\mathbb{T}^{2}$, we focus on projection-valued maps over the $1$-dimensional torus.

Let $P_0^\pm, P_1^\pm \colon \T^1 \to \Proj_n(\mathcal{H})$ be two pairs of projection-valued maps acting on a minimal Hilbert space $\mathcal{H} \simeq \C^{2n}$ which are symmetric under a chiral symmetry operator $S$. Assume that $U \colon \mathbb{T}^{1} \to U(2n)$ is a chiral symmetric unitary-valued map intertwining them. The chiral symmetry commutation relation $S U(k) = U(k) S$ implies that, in the decomposition $\C^{2n} = \mathcal{H}^{\uparrow} \oplus \mathcal{H}^{\downarrow} \equiv \ker(S - \Id_{2n}) \oplus \ker(S + \Id_{2n})$,
\begin{equation} \label{eq:UupUdown} 
U(k) = \begin{pmatrix}
U^{\uparrow}(k) & 0 \\ 0 & U^{\downarrow}(k)
\end{pmatrix}
\end{equation}
where $U^{\uparrow} \colon \mathbb{T}^{1} \to \mathcal{U}(\mathcal{H}^{\uparrow}) \simeq U(n)$ (respectively $U^{\downarrow} \colon \mathbb{T}^{1} \to \mathcal{U}(\mathcal{H}^{\downarrow}) \simeq U(n)$). Then we have

\begin{proposition}\label{prop:W_dimerization}
With the notation above,
\[ \W(P_1^{\pm})= \W(P_0^{\pm}) + [\det(U^\uparrow(\cdot))] - [\det(U^\downarrow(\cdot))]. \]
\end{proposition}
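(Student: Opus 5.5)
We plan to rewrite everything in the block form provided by Principle~\ref{pri:chiral_classes_min_dim}. Working in the eigenbasis of $S$, so that $S$ has the normal form~\eqref{eq:S_normal_form}, we write
\[ P_s^+(k) - P_s^-(k) = \begin{pmatrix} 0 & W_s(k) \\ W_s(k)^* & 0 \end{pmatrix}, \quad s \in \{0,1\}, \]
with $W_s \colon \mathbb{T}^{1} \to U(n)$. The intertwining relation $U(k)P_0^\pm(k) = P_1^\pm(k)U(k)$ holds for both signs, so subtracting the two identities gives $U(k)\,(P_0^+(k)-P_0^-(k))\,U(k)^{-1} = P_1^+(k)-P_1^-(k)$.

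Next we insert the block-diagonal form~\eqref{eq:UupUdown} of $U$ into this identity and multiply out the $2\times 2$ block matrices. The conjugate is
\[ \begin{pmatrix} 0 & U^\uparrow W_0 (U^\downarrow)^* \\ U^\downarrow W_0^* (U^\uparrow)^* & 0 \end{pmatrix}. \]
Comparing the upper-right blocks yields
\[ W_1(k) = U^\uparrow(k)\, W_0(k)\, U^\downarrow(k)^* \quad \forall\: k \in \mathbb{T}^{1}. \]
The lower-left blocks give the adjoint of the same identity, so they carry no extra information.

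We then take determinants: $\det W_1 = \det U^\uparrow \cdot \det W_0 \cdot \overline{\det U^\downarrow}$. Winding numbers of products of maps $\mathbb{T}^{1}\to U(1)$ add, and complex conjugation reverses the sign of a winding number. Hence $[\det W_1] = [\det W_0] + [\det U^\uparrow] - [\det U^\downarrow]$, which is exactly the claim by Definition~\ref{def:W_invariant} with $d=1$.

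There is essentially no obstacle. The only point needing care is that both pairs must be written with respect to the \emph{same} decomposition $\mathcal{H}^\uparrow \oplus \mathcal{H}^\downarrow$ and with the same identification $\mathcal{H}^\downarrow \to \mathcal{H}^\uparrow$, so that $W_0$ and $W_1$ are comparable. This holds because both pairs are chiral symmetric under the same operator $S$, and Principle~\ref{pri:chiral_classes_min_dim} defines $\mathcal{H}^{\uparrow\downarrow}$ as eigenspaces of $S$ alone. In $d=2$ the same computation applies to the restrictions to each sub-torus $\mathbb{T}^1_1$ and $\mathbb{T}^1_2$ separately.
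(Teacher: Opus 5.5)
Your proposal is correct and follows the paper's proof: the intertwining relation together with the block-diagonal form of $U$ gives $W_1(k)=U^\uparrow(k)\,W_0(k)\,U^\downarrow(k)^*$, and taking determinants and winding numbers yields the claim. You also spell out the block multiplication and the need to use the same $S$-eigenspace decomposition for both pairs, which the paper leaves implicit.
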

\begin{proof}
We recall that $\W(P_s^{\pm}) = [\det(W_s(\cdot))]$, where $s \in \{0,1\}$ and $W_s \colon \mathbb{T}^{1} \to U(n)$ is as in~\eqref{chiral_symm_P+-P-}. The intertwining relation $U(k)P_0^\pm(k)=P_1^\pm(k)U(k)$ readily implies that $W_1(k)=U^\uparrow(k)W_0(k)U^\downarrow(k)^*$ for all $k \in \mathbb{T}^{1}$, and the thesis follows.
\end{proof}

\subsubsection{Even-particle-hole invariant}

Let us pass now to the analysis of the even-particle-hole invariant $\PI(P^{\pm})$ from Definition~\ref{def:P_invariant}. This time, $P_0^\pm, P_1^\pm \colon \T^d \to \Proj_n(\mathcal{H})$, $d \le 2$, are pairs of projection-valued maps acting on a minimal Hilbert space $\mathcal{H} \simeq \C^{2n}$ which are symmetric under an even particle-hole symmetry operator $C = \mathcal{K}$ (compare~\eqref{eq:TRS_normal_form_even}), and $U \colon \mathbb{T}^{d} \to U(2n)$ is a particle-hole symmetric unitary-valued map intertwining them. The even-particle-hole invariant requires considering the Pfaffian of the matrix $L_s$ defined in~\eqref{eq:L(k)} from $P_s^{\pm}$, $s \in \{0,1\}$, when this is evaluated at the fixed points $k_{\star} = - k_{\star} \in F^{d} \subset \mathbb{T}^{d}$ (compare~\eqref{eq:Fixed}). We state therefore the following

\begin{proposition}
With the notation above,
\[ \Pf(L_1(k_{\star})) = \Pf(L_0(k_{\star})) \det(U(k_{\star})) \quad \forall \: k_{\star} \in F^d. \]
\end{proposition}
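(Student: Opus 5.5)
This is essentially the computation already carried out inside the proof of Theorem~\ref{thm:Homotopy_D_minimal}; I would isolate it and run it at each fixed point $k_{\star} \in F^d$ separately.

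\emph{Step 1: transport $L_0$ to $L_1$.} From the intertwining relation $U(k)P_0^\pm(k)=P_1^\pm(k)U(k)$ and the definition~\eqref{eq:L(k)}, I get $L_1(k) = i\,(P_1^+(k)-P_1^-(k)) = U(k)\, L_0(k)\, U(k)^{-1}$ for all $k \in \mathbb{T}^d$.

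\emph{Step 2: the conjugation is a congruence at $k_\star$.} The particle-hole constraint~\eqref{MvNsymmetry} on $U$ reads $\mathcal{K} U(k) = U(-k)\mathcal{K}$, that is, $\overline{U(k)} = U(-k)$. At $k_{\star}=-k_{\star}$ this gives $\overline{U(k_\star)}=U(k_\star)$. Hence $U(k_\star)$ is real and unitary, so it lies in $O(2n)$ and satisfies $U(k_\star)^{-1}=U(k_\star)^{\mathrm t}$. Step~1 then becomes the congruence $L_1(k_\star)=U(k_\star)L_0(k_\star)U(k_\star)^{\mathrm t}$.

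\emph{Step 3: apply the Pfaffian.} By Remark~\ref{rmk:P_invariant}, both $L_0(k_\star)$ and $L_1(k_\star)$ are real skew-symmetric orthogonal matrices, so their Pfaffians are defined. The standard identity $\Pf(BAB^{\mathrm t})=\det(B)\Pf(A)$ yields the claim directly. Since $U(k_\star)\in O(2n)$, we also have $\det(U(k_\star))\in\{\pm1\}$, which is consistent with both Pfaffians lying in $\mathbb{Z}_2$.

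There is essentially no obstacle. The only point needing care is that everything is written in the basis where $C=\mathcal{K}$. This is the same normal form in which Definition~\ref{def:P_invariant} is evaluated for both pairs, so the signs ambiguity noted in Remark~\ref{rmk:P_invariant} does not arise: both invariants and $\det U(k_\star)$ are computed in that common basis.
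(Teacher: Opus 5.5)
Your proposal is correct and follows essentially the same route as the paper's proof. Both arguments transport $L_0$ to $L_1$ by conjugation with $U$, use the particle-hole constraint to show that $U(k_\star)$ is real orthogonal at the fixed points, and then apply the congruence rule $\Pf(BAB^{\mathrm t})=\det(B)\Pf(A)$.
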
 
\begin{proof}
We clearly have $U(k) L_0(k) = L_1(k) U(k)$ for all $k \in \mathbb{T}^{2}$. The even particle-hole symmetry of $U$, namely $\mathcal{K} U(k) = U(-k) \mathcal{K}$ for $k \in \mathbb{T}^{2}$, implies that, at the fixed points $k_{\star} \in F^d$, the matrix $U(k_{\star})$ is real-valued and orthogonal: in particular $U(k_{\star})^{-1} = U(k_{\star})^*=U(k_{\star})^{\mathrm{t}}$. The conclusion follows from the behaviour of the Pfaffian under congruences.
\end{proof}

\subsubsection{DIII invariant} 

Finally, we study the change of the DIII invariant from Definition~\ref{def:DIII_invariant} under unitary conjugation. Also the DIII invariant is defined in $d=2$ by restriction to $1$-dimensional subtori, so we focus our attention to the case $d=1$. The setting is then that of two pairs of projection-valued maps $P_0^\pm, P_1^\pm \colon \T^1 \to \Proj_n(\C^{2n})$ in class DIII, together with a unitary-valued map $U \colon \mathbb{T}^{1} \to U(2n)$ which intertwines them and is time-reversal, particle-hole and chiral symmetric. With the customary choices of normal forms for the odd time-reversal symmetry operator $T$, even particle-hole symmetry operator $C$ and chiral operator~$S$, we can deduce a specific block-decomposition of the matrix $U(k)$, $k \in \mathbb{T}^{1}$: commutation with the chiral symmetry operator dictates the form~\eqref{eq:UupUdown}, while the time-reversal symmetry constraint $T U(k) = U(-k) T$ with $T = \mathcal{K} J$ yields, for $k \in \mathbb{T}^{1}$, 
\begin{equation} \label{eq:U_DIII} 
\begin{pmatrix}
0 & \Id_n \\ -\Id_n & 0
\end{pmatrix} \begin{pmatrix}
\overline{U^{\uparrow}(k)} & 0 \\ 0 & \overline{U^{\downarrow}(k)} 
\end{pmatrix} =  \begin{pmatrix}
U^{\uparrow}(-k) & 0 \\ 0 & U^{\downarrow}(-k)
\end{pmatrix} \begin{pmatrix}
0 & \Id_n \\ -\Id_n & 0
\end{pmatrix} \iff \overline{U^{\downarrow}(k)} = U^{\uparrow}(-k). 
\end{equation}

After these considerations, we are able to state
\begin{proposition}\label{prop:gamma_dimerization}
With the notation above,
\[ \G(P_1^{\pm}) = \G(P_0^{\pm}) \cdot (-1)^{-[\det(U^{\uparrow}(\cdot))]} = \G(P_0^{\pm}) \cdot (-1)^{[\det(U^{\downarrow}(\cdot))]}. \]
\end{proposition}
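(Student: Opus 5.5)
The plan is to reduce everything to the Teo--Kane invariant of the off-diagonal block and track how it changes under the conjugation. Since the DIII invariant in $d=2$ is the collection of Teo--Kane invariants of the restrictions of $W$ to the three sub-tori $\mathbb{T}^1_{1,\,\pi}$, $\mathbb{T}^1_{2,\,\pi}$, $\mathbb{T}^1_{\textup{diag}}$, I only treat $d=1$. Each of these sub-tori is invariant under $k\mapsto -k$, and the restrictions of $U$, $P_0^\pm$, $P_1^\pm$ to them keep all the symmetry constraints, so the $d=2$ statement then follows componentwise.

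First, I get the transformation law for the blocks. As in the proof of Proposition~\ref{prop:W_dimerization}, the intertwining relation and the block form~\eqref{eq:UupUdown} give $W_1(k)=U^\uparrow(k)\,W_0(k)\,U^\downarrow(k)^*$. By~\eqref{eq:U_DIII}, $U^\downarrow(k)=\overline{U^\uparrow(-k)}$, hence $U^\downarrow(k)^*=U^\uparrow(-k)^{\mathrm t}$. Setting $A:=U^\uparrow$, this gives
\[ W_1(k)=A(k)\,W_0(k)\,A(-k)^{\mathrm t}, \]
which is manifestly compatible with the constraint~\eqref{eq:W^t=-W}. Two consequences follow. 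At the fixed points $k_\star\in\{0,\pi\}$ we have $W_1(k_\star)=A(k_\star)W_0(k_\star)A(k_\star)^{\mathrm t}$, so the congruence rule gives $\Pf(W_1(k_\star))=\det(A(k_\star))\,\Pf(W_0(k_\star))$. Along the whole torus, $\det W_1(k)=\det A(k)\,\det A(-k)\,\det W_0(k)$.

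Next, I choose the continuous square roots. Write $\det W_0(k)=e^{i\mu_0(k)}$ with $\mu_0$ continuous and periodic, as in Remark~\ref{def:gamma_invariant}. Write $\det A(k)=e^{ia(k)}$ with $a$ continuous on $[-\pi,\pi]$, so that $a(\pi)-a(-\pi)=2\pi w$ with $w=[\det(U^\uparrow(\cdot))]$. Then $\mu_1(k):=\mu_0(k)+a(k)+a(-k)$ is a continuous and periodic phase for $\det W_1$. Since the Teo--Kane invariant does not depend on the admissible choices (Remark~\ref{rmk:FKM} and~\cite{Peluso2026}), I may use $\sqrt{\det W_1}=e^{i\mu_1/2}$. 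The ratio $\TK(W_1)/\TK(W_0)$ then factors over the two fixed points:
\begin{itemize}
\item at $k=0$ the factor is $e^{ia(0)}/\det A(0)=1$;
\item at $k=\pi$ the factor is $e^{i(a(\pi)+a(-\pi))/2}/e^{ia(\pi)}=e^{-i\pi w}=(-1)^{-w}$.
\end{itemize}
This gives the first equality. For the second, note that $\det U^\downarrow(k)=\overline{\det U^\uparrow(-k)}$. The reflection $k\mapsto-k$ and complex conjugation each reverse the sign of the winding, so $[\det U^\downarrow]=[\det U^\uparrow]$. In any case the exponents only matter mod $2$, so the sign conventions are irrelevant.

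The main obstacle I expect is the bookkeeping of the continuous square root. One has to justify that $\mu_1$ is a legitimate periodic choice, which holds because $a(k)+a(-k)$ takes the same value at $\pm\pi$. One also has to make sure the invariant is independent of the lift, including the $2\pi$-ambiguity in $a$, which shifts $\mu_1$ by $4\pi\mathbb{Z}$ and so leaves $e^{i\mu_1/2}$ unchanged. A smaller point is checking that the Pfaffian congruence rule applies with the transpose $A(k_\star)^{\mathrm t}$ rather than $A(k_\star)^{-1}$; the explicit formula $W_1=AW_0A(-\cdot)^{\mathrm t}$ settles this directly.
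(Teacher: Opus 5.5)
Your proof is correct and follows essentially the same route as the paper: you derive $W_1(k)=U^{\uparrow}(k)\,W_0(k)\,U^{\uparrow}(-k)^{\mathrm t}$, transfer the determinant and the fixed-point Pfaffians through this relation, shift the phase lifts by $\alpha(k)+\alpha(-k)$ and $\alpha(k_\star)$, and read off the factor $e^{-i\pi[\det(U^{\uparrow}(\cdot))]}$ at $k=\pi$. Your claim that $[\det(U^{\downarrow}(\cdot))]=[\det(U^{\uparrow}(\cdot))]$ is the accurate one (the paper says the two windings are opposite), but since only their parity enters, the second equality holds either way.
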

\begin{proof}
Let $W_0, W_1 \colon \mathbb{T}^{1} \to U(n)$ be as in~\eqref{chiral_symm_P+-P-}. As in the proof of Proposition~\ref{prop:W_dimerization}, we have that $W_0$ and $W_1$ are related as $W_1(k) = U^\uparrow(k) W_0(k) U^\downarrow(k)^* = U^\uparrow(k) W_0(k) U^\uparrow(-k)^{\mathrm{t}}$. In particular, this relation yields
\[ \det(W_1(k)) = \det(U^{\uparrow}(k)) \det(U^{\uparrow}(-k)) \det(W_0(k)), \quad k \in \mathbb{T}^{1}, \]
and at the fixed points
\[ \Pf(W_1(k_{\star})) = \det(U^{\uparrow}(k_{\star})) \Pf(W_0(k_{\star})), \quad k_{\star} \in \{0,\pi\}. \]

To compute the Teo--Kane invariants of $W_0$ and $W_1$, we proceed as indicated in Remark~\ref{def:gamma_invariant}. Consider $W_0$ first, and choose $\mu_0 \colon [-\pi,\pi] \to \R$ such that $\det(W_0(k)) = e^{i \mu_0(k)}$, as well as $\lambda_0(0), \lambda_0(\pi) \in \R$ so that $\Pf(W_0(k_{\star})) = e^{i \lambda_0(k_{\star})}$, $k_{\star} \in \{0,\pi\}$. Choose also a continuous map $\alpha \colon [-\pi,\pi] \to \R$ such that $\det(U^\uparrow(k))=e^{i\alpha(k)}$. The relations shown above imply that the following are appropriate choices for the computation of the Teo--Kane invariant of $W_1$:
\begin{gather*}
\mu_1(k) := \mu_0(k) + \alpha(k) + \alpha(-k) \quad \text{is such that} \quad \det(W_1(k)) = e^{i \mu_1(k)}, \quad k \in \mathbb{T}^{1}, \\
\lambda_1(k_{\star}) := \lambda_0(k_{\star}) + \alpha(k_{\star}) \quad \text{is such that} \quad \Pf(W_1(k_{\star})) = e^{i \lambda_1(k_{\star})}, \quad k_{\star} \in \{0,\pi\}.
\end{gather*}
By definition, we have now
\begin{align*}
\TK(W_1(\cdot)) & = \exp \left(i [\mu_1(0) - 2 \lambda_1(0)]/2 \right) \exp \left(i [\mu_1(\pi) - 2 \lambda_1(\pi)]/2 \right) \\
& = \exp \left(i [\mu_0(0) + 2 \alpha(0) - 2 \lambda_0(0) - 2 \alpha(0)]/2 \right) \\
& \quad \exp \left(i [\mu_0(\pi) + \alpha(\pi) + \alpha(-\pi) - 2 \lambda_0(\pi) - 2 \alpha(\pi)]/2 \right) \\
&= \exp \left(i [\mu_0(0) - 2 \lambda_0(0)]/2 \right) \exp \left(i [\mu_0(\pi) - 2 \lambda_0(\pi)]/2 \right) \exp \left(- i [\alpha(\pi) - \alpha(-\pi)]/2 \right) \\
& = \TK(W_0(\cdot)) \exp \left(- i \pi [\det(U^{\uparrow}(\cdot))] \right)
\end{align*}
as wanted. The last identity in the statement follows from the relation~\eqref{eq:U_DIII} between $U^{\uparrow}$ and $U^{\downarrow}$, which implies that the determinant of the latter has an opposite winding number with respect to the one of the former. (Notice that the distinction is anyway immaterial, as only the parity of the winding number contributes to the sign in the stated equality.)
\end{proof}

\section{Conclusion and outlook}

In this work we have developed a unified and explicit framework for the classification of low-dimensional projection-valued maps in the presence of the symmetries appearing in the Altland--Zirnbauer--Cartan classification. The guiding idea throughout the paper has been that the classification of topological phases should not be reduced to a single equivalence relation, but rather decomposed into a hierarchy of geometric problems. Symmetric bases, Murray--von Neumann equivalence, unitary equivalence and homotopy equivalence describe different levels of rigidity of the same underlying object. Each of them captures a different geometric feature of the family of projections and each of them has a distinct physical interpretation in terms of localized bases, changes of representation, and gap-preserving deformations (compare Appendix~\ref{hamepro}).

The main outcome of this analysis is the distinction between two types of topological information. The first one is represented by what we have called \emph{absolute invariants}. These invariants belong intrinsically to a projection-valued map, or to one distinguished component of a symmetric pair, and they measure the obstruction to constructing globally periodic symmetric bases. In dimension $2$, these obstructions are encoded by the familiar integer-valued Chern numbers and $\Z_2$-valued Fu--Kane--Mele invariants. These invariants then also govern Murray--von Neumann equivalence and unitary equivalence.

The most delicate and conceptually interesting part of the classification emerges only when one compares unitary equivalence with homotopy equivalence. In non-minimal Hilbert spaces, the two notions coincide for \(d \leq 2\): the unused part of the ambient Hilbert space provides enough ``room'' to ``unwind'' the topology of the intertwining unitary and to construct the required homotopy without breaking the symmetry constraints. In this regime, the classification is therefore completely controlled by the absolute invariants. The situation changes substantially in minimal Hilbert spaces, particularly in the chiral and particle-hole symmetric classes. When the pair \(P^\pm(k)\) spans the entire Hilbert space, there is no longer an auxiliary orthogonal sector available to absorb the topology of the unitary intertwiner. A second layer of topological data emerges: the \emph{relative} homotopy invariants. The fact that these topological quantities arise only in minimal, finite-dimensional ambient Hilbert spaces makes them amenable to be observed only in tight-binding models of topological insulators (which have a finite number of degrees of freedom per unit cell), and challenges their derivation from the continuum~\cite{shapiro2022continuum}. 

This clear distinction between absolute and relative invariants is one of the central conceptual contributions of the paper. Absolute invariants characterize the pair of projection-valued maps in itself. Relative invariants, by contrast, are intrinsically relational: the topological label does not describe one pair of projection in isolation. Rather, different ``observers'' may choose different unitary representation for the same model, and thus disagree on which phases are ``topological'' and which are not: they will only agree upon the \emph{change} of ``topological charge'' when the model switches from one phase to the other. This phenomenology is akin to that of macroscopic electric polarization in solids~\cite{resta1994}, where the polarization observable is ill-defined, but \emph{differences} of polarization are measurable and carry a geometric interpretation.

The basis construction developed in the paper also has an important interpretative value. In dimension $1$, symmetric periodic bases can always be constructed in the relevant cases. In dimension $2$, the obstruction to full periodicity is localized in a controlled pseudo-periodic behaviour. This means that the topology of the family can be seen directly at the level of the boundary mismatch of a basis over the fundamental domain of the torus. This provides a concrete bridge between the abstract topological classification and the construction of Wannier-type bases. In particular, the failure of periodic symmetric bases is closely related to the impossibility of constructing globally well-behaved localized orbitals compatible with the symmetries of the system. Conversely, the pseudo-periodic basis construction gives an explicit and computable way to represent non-trivial topology. The discussion of Parseval frames from Section~\ref{sec:Parseval} further enriches this picture. The obstruction to constructing an orthonormal periodic basis can be traded for a controlled loss of orthonormality. In this sense, pseudo-periodic orthonormal bases and periodic Parseval frames represent two complementary ways of managing the same topological obstruction. The former preserve orthonormality but sacrifice full periodicity; the latter preserve periodicity but relax the strict orthonormal basis condition. This dual perspective suggests that the classification results may be useful not only at the level of abstract topology but also in computational approaches to topological materials, where one often needs bases or frames that are sufficiently localized, symmetric and numerically stable.

As for possible developments, the methods used here rely heavily on the fact that the topology of the torus and of the relevant unitary groups is sufficiently controlled in low dimension. Nevertheless, the structure of the proof suggests a possible strategy for higher-dimensional generalizations. One should expect that the construction of pseudo-periodic bases can still be performed inductively, but that the matching data will no longer be exhausted by one-dimensional winding numbers or by the two-dimensional Fu--Kane--Mele obstruction. Higher homotopy classes of unitary groups, higher Chern classes and more refined equivariant invariants should enter the classification. Extending the present framework to dimensions higher than $2$ would therefore require new technical ingredients, but the conceptual architecture should remain the same: one first constructs symmetric pseudo-periodic bases, then identifies the matching invariants and finally compares Murray--von Neumann, unitary, and homotopy equivalence. Alternatively, at last for top-dimensional invariants, one could exploit the ``Bott-clock'' relations in the periodic table of topological insulators to track how the invariants arising in one AZC class in a certain dimension can be transferred to a ``neighbouring'' symmetry class in the next dimension~\cite{StoneChiuRoy2011, KennedyZirnbauer2015, SantiThesis}.

In conclusion, the paper provides a detailed geometric classification of low-dimensional AZC symmetric projection-valued maps and clarifies the relations between several notions of equivalence which are often implicitly conflated. The classification shows that absolute invariants govern bases, Murray--von Neumann equivalence and unitary equivalence, while relative invariants appear precisely when one studies homotopy in minimal symmetric settings. This separation gives a clearer mathematical picture of topological phases of matter. The resulting framework unifies explicit basis constructions, matching matrix techniques, symmetry-preserving intertwiners and homotopy-theoretic arguments into a single coherent picture.

The broader message is that the topology of quantum phases is not encoded only in stable labels but also in the geometry of how projections are represented, based, intertwined and deformed. By making these distinctions explicit, the present work provides both a concrete classification in dimensions \(d \leq 2\) and a conceptual foundation for future extensions to higher dimensions, crystalline symmetries, and systems with broken periodicity.

\appendix

\section{Geometry in gapped quantum systems} \label{app:A}

This Appendix is devoted to recalling how the mathematical description of topological quantum matter is formulated. As this topic is well-studied, we will be brief and highlight only some points which are relevant for the discussion in the main text: we refer the reader to~\cite{reed1978iv, Panati_2013, kuchment2016overview, Monaco_2018, lewin2024spectral} for further details and considerations.

\subsection{From gapped periodic Hamiltonians to projection-valued maps}\label{hamepro}

In condensed matter physics, the configuration space for the $d$-dimensional crystalline solid in which a quantum particle is moving is described by a set $X\subset \R^d$, which we'll call the \emph{crystal}, that is left invariant by the action of translations in a \emph{Bravais lattice} $\Lambda\simeq \Z^d$, i.e.\ such that $X=\lambda+X$ for all $\lambda\in\Lambda$. There are two typical classes of models that are used: \emph{continuum models}, in which $X=\mathbb{R}^d$, and \emph{tight-binding models}, in which $X$ is a discrete set of atomic sites, periodic under the action of $\Lambda$. We will treat both classes on the same footing.

The Hamiltonian $H$ for the quantum particle moving in the crystal acts on the space of square-integrable wavefunctions $L^2(X) \otimes \C^N$, where $\C^N$ encodes possible internal degrees of freedom, like spin. The action of translations by Bravais lattice vectors on the crystal lifts to a suitable family of pairwise-commuting translation operators $\{T_\lambda\}_{\lambda \in \Lambda} \subset \mathcal{U}(L^2(X) \otimes \C^N)$, with which $H$ is required to commute: we will say that the Hamiltonian $H$ is \emph{periodic}. One can exploit this form of discrete translation invariance and pass to a (\emph{quasi-}, or \emph{crystal}, or \emph{Bloch}) \emph{momentum} representation. In order to describe this passage, let us first introduce some further piece of notation. The choice of a \emph{Wigner--Seitz cell} $\mathbb{W}$, i.e.\ of a set of representatives for the quotient $X / \Lambda$, induces an isomorphism
\[
L^2(X) \otimes \C^N \simeq \ell^2(\Lambda) \otimes [L^2(\mathbb{W}) \otimes \C^N],
\]
which can be called a \emph{dimerization} of the crystal. Notice that in continuum models the Wigner--Seitz cell is still continuous (typically a box in $\mathbb{R}^{d}$), while in tight-binding models it consists of a finite set of points. We stress how different choices of this cell will produce unitarily equivalent representations of the quantum system at hand~\cite{yang2020unit}: this could be a first source of `unitary equivalence' whose relevance to topological properties of the material under scrutiny has been investigated in the main body of the paper (see Sections~\ref{sec:unitary_equivalences} and~\ref{section:Dimerization_ambiguity}). Let us also introduce the \emph{dual Bravais lattice}
\[
    \Lambda^*:=\{\lambda^*\in \R^d : \lambda^*\cdot \lambda \in 2\pi \Z \; \forall\lambda \in \Lambda\},
\]
and the corresponding periodicity cell $\mathbb{B} := \R^d / \Lambda^{*}$, the \emph{Brillouin zone}, which we regard topologically as a $d$-dimensional torus.

The momentum representation which aids the study of periodic Hamiltonians is achieved via the \emph{Bloch--Floquet--Zak transform}
\begin{equation} 
(\UZ \psi) (k,y) := \sum_{\lambda\in\Lambda} e^{-ik\cdot (y-\lambda)} (T_\lambda \psi) (y) = \hat{\psi}_k(y), \quad k \in \mathbb{R}^{d}, \; y \in \mathbb{R}^{d},
\label{eq:BFZ}
\end{equation}   
which is seen to satisfy the following pseudo-periodicity conditions:
\begin{equation} \label{tau_lambda}
(T_{\lambda} \hat{\psi}_k)(y)=\hat{\psi}_k(y) \text{ and } \hat{\psi}_{k-\lambda^*}(y)=e^{i\lambda^*\cdot y} \hat{\psi}_k(y) =: (\tau_{\lambda^*} \hat{\psi}_k)(y) \quad \forall \lambda\in\Lambda, \: \lambda^* \in \Lambda^*, \: k , \, y \in \R^d. 
\end{equation}
At fixed quasi-momentum $k \in \mathbb{R}^{d}$, the first equality above establishes the function $\hat{\psi}_k$ as an element of the \emph{fiber Hilbert space} $\mathcal{H} := L^2_{\textup{per}}(\mathbb{W}) \otimes \C^N$, which accounts for all degrees of freedom inside the Wigner--Seitz cell, while the second equality defines the set of commuting unitary operators $\{\tau_{\lambda^*}\}_{\lambda^* \in \Lambda^*}$ and states that the mapping $\mathbb{R}^{d} \ni k \mapsto \hat{\psi}_k \in \mathcal{H}$ is \emph{$\tau$-equivariant}: in particular, its values are completely determined by the ones it attains over the Brillouin zone $\mathbb{B}$. With these considerations, we can reinterpret $\UZ$ as a unitary operator
\[ \UZ \colon L^2(X) \otimes \C^{N} \to \int_{\mathbb{B}}^{\oplus} \mathcal{H} dk \]
where the arrival space is a constant-fiber direct integral of Hilbert spaces.

The usefulness of $\UZ$ consists in the fact that any (linear or anti-linear) operator that commutes with the translations $T_\lambda$ admits a fiber decomposition.
In particular, the Hamiltonian $H$ gets decomposed into a family of \emph{fiber Hamiltonians}, namely self-adjoint operators $\tilde{H}(k)$, $k \in\mathbb{B}$, each acting in $\mathcal{H}$:
\[
    \UZ H \UZ^{-1} = \int_\mathbb{B}^\oplus \tilde{H}(k) dk.
\]
Under quite general and mild assumptions on $H$, the fiber Hamiltonians satisfy a number of properties, listed below (compare the notion of \emph{analytic family of type (A)} from~\cite{kato2013perturbation}):
\begin{enumerate}[label={\it \roman*.}, leftmargin=*]
    \item they are a \emph{$\tau$-covariant} family of operators:
\begin{equation}
\label{eq:taucov}
    \tilde{H}(k-\lambda^*) = \tau_{\lambda^*} \tilde{H}(k) \tau_{\lambda^*}^{-1} \quad \forall k\in \R^d, \lambda^*\in\Lambda^*,
\end{equation}
    and as such, the knowledge of the operators $\tilde{H}(k)$ for $k \in \mathbb{B}$ is sufficient to reconstruct the whole family;
    \item the domain of $\tilde{H}(k)$ within $\mathcal{H}$ does not depend on $k \in \mathbb{B}$;
    \item the set $\mathcal{R}= \{ (k,z)\in \R^d\times \C \ | \ z\in \rho (\tilde{H}(k)) \}$ is open and the resolvent map $(k,z)\mapsto (\tilde{H}(k)-z \Id_{\mathcal{H}})^{-1}$ is analytic on $\mathcal{R}$, with values in the algebra of compact operators on~$\mathcal{H}$; in particular, each $\tilde{H}(k)$ has only eigenvalues, called \emph{Bloch bands}, accumulating at infinity;
    \item the spectrum of $\tilde{H}$ is reconstructed as the union of the spectra of the fiber Hamiltonians as 
    \[ \sigma(H) = \bigcup_{k\in\mathbb{B}} \sigma (\tilde{H}(k)); \]
    the ranges of the Bloch bands as functions of $k \in \mathbb{B}$ define spectral bands for $H$ which are possibly interspersed with \emph{spectral gaps}.
\end{enumerate}

In view of the last point, we'll assume that the Hamiltonian models a \emph{topological insulator}, that is, that the set $\Omega$ of relevant energy bands is a spectral island isolated by open spectral gaps from the rest of the spectrum. If $\mathcal{C}$ is a closed and simple complex curve around $\Omega$, then the \emph{Riesz formula} 
\[\tilde{P}_\Omega (k) := \frac{i}{2\pi}\int_{\mathcal{C}} \left(\tilde{H}(k) - z\Id_{\mathcal{H}} \right)^{-1} dz, \quad k \in \mathbb{B}, \] 
computes the spectral eigenprojection of $\tilde{H}(k)$ corresponding to Bloch bands inside $\Omega$. By functional calculus, $\tilde{P}_\Omega(k)$ inherits the regular and $\tau$-covariant dependence on $k$ from the resolvent map $(\tilde{H}(k)-z\Id_{\mathcal{H}})^{-1}$. In particular, we find that the rank of this projection is constant in $k$, since $\operatorname{rank}(\tilde{P}_{\Omega}(k)) = \tr_{\mathcal{H}}(\tilde{P}_{\Omega}(k)) \in \N$ defines a continuous map of $k \in \mathbb{R}^{d}$ towards a discrete set. Setting $n := \operatorname{rank}(\tilde{P}_{\Omega}(k))$, we therefore have $\tilde{P}_{\Omega}(k) \in \Proj_n (\Hi)$, $k \in \mathbb{R}^{d}$.

The property of $\tau$-covariance for fiber operators coming from the Bloch--Floquet--Zak representation can be traded for standard $\Lambda^*$-periodicity~\cite[Section~2.1]{cornean2016construction}. Indeed, choose a lattice basis $\{\lambda_j^*\}_{j\in\{1,\ldots, d\}}$ for $\Lambda^*$. We can use the spectral theorem and obtain bounded self-adjoint operators $L_1,\ldots, L_d$ such that $\tau_{\lambda_j^*}=e^{iL_j}$, $j \in \{1,\ldots, d\}$: since the $\tau_{\lambda_j^*}$'s commute among each other, the choice of the operators $L_j$ can be performed so to have also $[L_i, L_j] = 0$ for all $i, j \in \{1,\ldots, d\}$. For $k \in \mathbb{R}^{d}$, write now $k = k_1 \lambda_1^* +\cdots+k_d \lambda_d^*$ in the chosen lattice basis, and define the unitary operator
\begin{equation} \label{eq:tau(k)}
\tau(k):=\prod_{j=1}^{d} e^{ik_jL_j} = e^{i(k_1 L_1 + \cdots + k_d L_d)} \in \mathcal{U}(\mathcal{H}), \quad k \in \R^d. 
\end{equation} 
The dependence of $\tau(k)$ on $k$ is analytic. Then
\begin{equation} \label{eq:periodic_H(k)}
    H(k):= \tau(k)^{-1} \tilde{H}(k) \tau(k)
\end{equation}
defines a \emph{$\Lambda^*$-periodic} family of operators, which satisfies the same regularity conditions on $k$ as the ones of $\tilde{H}(k)$; correspondingly, $P_{\Omega}(k) := \tau(k)^{-1} \tilde{P}_{\Omega}(k) \tau(k)$ defines a periodic projection-valued map $P_{\Omega} \colon \mathbb{T}^{d} \to \Proj_n(\mathcal{H})$. This is exactly the type of mathematical object that has been investigated in this paper.

Let us highlight how this procedure of reduction of $\tau$-covariance to periodicity is another possible source of `unitary equivalence'. Indeed, the choice of $\tau(k)$ in~\eqref{eq:tau(k)} is not unique, and any other choice is related to it via a $\Lambda^*$-periodic unitary-valued map $U \colon \mathbb{T}^{d} \to \mathcal{U}(\mathcal{H})$.

\begin{remark}[Bloch bundle] \label{rmk:Bloch_bundle}
By the Serre--Swan theorem (see~\cite{swan1962vector} and~\cite[Section~3.3.7]{rordam2000introduction}), the above projection-valued map $P_{\Omega}$ corresponds to the \emph{Bloch bundle}, namely a vector sub-bundle $E_\Omega \xrightarrow{\pi} \mathbb{T}^{d}$ of the trivial bundle over the $d$-dimensional torus with typical fiber $\mathcal{H}$, whose fiber over $k \in \mathbb{T}^{d}$ is exactly $\Imm(P_\Omega(k)) \subset \mathcal{H}$: a concrete definition of the Bloch bundle can be found in~\cite{panati2007triviality, monaco2015symmetry} starting directly from the $\tau$-covariant projections $\tilde{P}_{\Omega}(k)$, $k \in \mathbb{R}^{d}$. Let us stress that, in these geometric terms, the notion of Murray--von Neumann equivalence for projection-valued maps $P, Q \colon \mathbb{T}^{d} \to \Proj_n(\mathcal{H})$ corresponds to that of \emph{isomorphism} for the corresponding bundles (compare e.g.~\cite[Theorem~6]{fiorenza2016z}): indeed, for $k \in \mathbb{T}^{d}$, a partial-isometry-valued intertwiner $V(k)$ yields a fiberwise unitary isomorphism between the fiber $E_k := \Imm(P(k))$ of the bundle associated to $P$ and the fiber $F_k := \Imm(Q(k))$ of the bundle associated to $Q$. Since fiberwise isomorphisms are bundle isomorphisms~\cite[Chapter~3, Theorem~2.5]{Husemoller1994}, the conclusion follows.

The results of Section~\ref{sec:Bases} can thus be seen as a characterization of isomorphism invariants for the Bloch bundles arising from spectral eigenprojections of gapped periodic Hamiltonians (endowed with further symmetries, see the next Section).
\end{remark}

\begin{remark}[On the regularity of projection-valued maps]\label{rmk:regularity_required}
As we have just reviewed, pro\-jec\-tion-valued maps arising from gapped quantum systems are usually \emph{analytic} in $k$. However, in this paper, we have just asked for \emph{continuity}, also in order to keep a wider generality. At any rate, no information would be lost or gained if one were to retain the original analytic regularity: in view of the previous Remark, this observation is a consequence of the \emph{Oka--Grauert principle}~\cite{oka1939fonctions, grauert1958analytische}, which roughly speaking states that the topological classification of vector bundles yields the same results as their analytic classification. More concretely, one can also invoke explicit algorithms which allow, for example, to promote a continuous basis for a projection-valued map to an analytic one (possibly preserving any symmetric constraints of the type detailed below): see e.g.~\cite[Lemma~2.3]{cornean2016construction},~\cite[Appendix~A]{fiorenza2016construction} and~\cite[Appendix~A]{fiorenza2016z}.
\end{remark}

\subsection{Symmetries in topological insulators}

The quantum system at hand could present some additional symmetries of the type described in the main body of the paper: time-reversal symmetry $\Theta$, particle-hole symmetry $\Gamma$, or chiral symmetry $\Sigma$. These are (anti)unitary operators commuting or anticommuting with the Hamiltonian: in particular,
\[ \Theta H = H \Theta, \quad \Gamma H = - H \Gamma, \quad \Sigma H = - H \Sigma. \]
We will make the following customary assumption on these symmetry operators:
\begin{enumerate}[leftmargin=*]
    \item each symmetry operator commutes with the translation operators $\{T_\lambda\}_{\lambda \in \Lambda}$, and as such admits a Bloch--Floquet--Zak decomposition:
    \[ \UZ \Theta \UZ^{-1} = \int_{\mathbb{B}}^{\oplus} \theta(k) dk, \quad \UZ \Gamma \UZ^{-1} = \int_{\mathbb{B}}^{\oplus} \gamma(k) dk, \quad \UZ \Sigma \UZ^{-1} = \int_{\mathbb{B}}^{\oplus} \sigma(k) dk; \]
    the fibers $\theta(k)$ and $\gamma(k)$ define antiunitary operators on $\mathcal{H}$, while the fibers $\sigma(k)$ define unitary operators on $\mathcal{H}$, $k \in \mathbb{B}$;
    \item the fibers of the symmetry operators are constant in $k$:
    \[ \theta(k) \equiv T, \quad \gamma(k) \equiv C, \quad \sigma(k) \equiv S. \]
\end{enumerate}

Observe that antiunitary operators conjugate the phase inside~\eqref{eq:BFZ} of $\UZ$: therefore, the fibers of antiunitary symmetry operators link the fiber over $k \in \mathbb{B}$ of the direct-integral arrival space of the Bloch--Floquet--Zak transform with the fiber over $-k$. Moreover, by a similar argument these fiber operators interact with the operators $\{\tau_{\lambda^*}\}_{\lambda^* \in \Lambda^*}$ as  follows: for $k\in \R^d$ and $\lambda^* \in \Lambda^*$,
\[
\theta(k-\lambda^*) \tau_{\lambda^*} = \tau_{-\lambda^*} \theta(k), \quad \gamma(k-\lambda^*) \tau_{\lambda^*} = \tau_{-\lambda^*} \gamma(k), \quad \sigma(k-\lambda^*) \tau_{\lambda^*} = \tau_{\lambda^*} \sigma(k).
\]

These observations, together with the assumption of constancy of the fibers of the symmetry operators, yield the following two conclusions. First of all, at the level of Bloch--Floquet--Zak fibers, the defining symmetry constraints read, for $k \in \mathbb{R}^{d}$, 
\begin{equation}\label{eq:fiber_symmetric_condition}
T \tilde{H}(k)=\tilde{H}(-k)T, \quad C \tilde{H}(k)=- \tilde{H}(-k)C, \quad S \tilde{H}(k)=-\tilde{H}(k)S,
\end{equation}
and the same relations are inherited by the spectral eigenprojections $\tilde{P}_{\Omega}(k)$. Secondly, the commutation relations among the symmetries' fibers and the operators $\{\tau_{\lambda^*}\}_{\lambda^* \in \Lambda^*}$ allow to refine the construction of the operators $\tau(k)$ in~\eqref{eq:tau(k)} in order to have
\[
    T \tau(k)=\tau(-k)T, \quad C\tau(k)=\tau(-k)C \quad S\tau(k)=\tau(k)S.
\]
The combination of the two sets of identities above allow to conclude that the periodic fibers $H(k)$ in~\eqref{eq:periodic_H(k)} still satisfy the symmetry constraints in~\eqref{eq:fiber_symmetric_condition}, and that any other choice of periodization of the family $\tilde{H}(k)$ is unitarily related to $H(k)$ via a unitary-valued map satisfying the symmetry constraints~\eqref{MvNsymmetry}.

These symmetry constraints have also natural implications at the level of the projection-valued map $P_\Omega \colon \mathbb{T}^{d} \to \Proj_n(\mathcal{H})$. Indeed, if the Hamiltonian is time-reversal symmetric, the Riesz formula gives
\[ TP_\Omega(k)=P_\Omega(-k)T \quad \forall k \in \T^d. \]
Instead, if the Hamiltonian is particle-hole and/or chiral symmetric, the anticommutation relations with the symmetry operators force the spectrum of $H$ to be symmetric around zero. This readily implies that
\[ C P_\Omega (k) = P_{-\Omega}(-k)C \quad \text{and/or} \quad SP_\Omega(k) =P_{-\Omega}(k)S \quad \forall k \in \T^d. \]
This gives that, on the one hand, if the spectral band $\Omega$ contains $0$ and $\Omega=-\Omega$, then $P := P_{\Omega} = P_{-\Omega}$ is a covariant projection-valued map. On the other hand, if the spectral island does not contain $0$, then the symmetries intertwine the elements of the pair of projection-valued maps $P^{\pm} := P_{\pm \Omega}$. We have thus recovered the notions presented in Definition~\ref{def:Symmetries}.

\section{Time-reversal symmetric periodic bases in infinite rank} \label{app:XYZ}

The following result details the construction of symmetric periodic bases for (time-reversal symmetric) infinite-rank projection-valued maps on the torus of any dimension.

\begin{proposition}
Any infinite-rank projection-valued map $P:\T^d \to \Proj_\infty (\Hi)$ admits a periodic basis for all $d\in\N$. If $P$ is time-reversal symmetric, the basis can also be taken to be time reversal symmetric.
\end{proposition}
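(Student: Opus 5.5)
We argue by induction on $d$, using Kuiper's theorem (in its complex, real or quaternionic version) to remove the holonomy at each step.

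\textbf{Base case $d=0$.} Choose any orthonormal basis of $\Imm P(0)$. If $T$ is present, choose a symmetric one. For $T^2=\Id$, this means a real orthonormal basis of the $T$-fixed real subspace of $\Imm P(0)$. For $T^2=-\Id$, this means a basis made of Kramers pairs $\{v_{2j+1}, -Tv_{2j+1}\}$.

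\textbf{Inductive step.} Assume a (symmetric) periodic basis $\{v_j(k')\}_{j\in\N}$ exists for the restriction $P|_{\{k_1=0\}}$, where $k'=(k_2,\dots,k_d)\in\T^{d-1}$.

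1. Build the Kato--Nagy-type unitary $U(t,k')$, $t\in[-\pi,\pi]$, intertwining $P(0,k')$ with $P(t,k')$, as in the proof of Principle~\ref{pri:unitary_eq_to_homotopy}. It is continuous in $(t,k')$ and periodic in $k'$. Since it is obtained from the projections by holomorphic functional calculus, it inherits the symmetry $TU(t,k')=U(-t,-k')T$ whenever $T$ is present.

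2. Set $\tilde v_j(t,k'):=U(t,k')v_j(k')$. This family fails to be periodic in $t$ exactly by the holonomy. Restricted to $\Imm P(0,k')$, the holonomy is the unitary $\alpha(k'):=U(-\pi,k')^{-1}U(\pi,k')$ on the infinite-dimensional Hilbert space $\Imm P(0,k')$. Using the periodic basis $v_j(k')$, identify $\Imm P(0,k')$ with a fixed separable Hilbert space $\mathcal K$, namely $\ell^2$, or its real or quaternionic version. This turns $\alpha$ into a continuous map $\T^{d-1}\to\U(\mathcal K)$.

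3. Impose the time-reversal constraint on $\alpha$. From $TU(t,k')=U(-t,-k')T$ one gets $T\alpha(k')=\alpha(-k')^{-1}T$, the infinite-rank analogue of~\eqref{eq:matching_matrices_symmetric_constraint}.

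4. Contract $\alpha$ to the identity by a homotopy $\alpha_s$, $s\in[0,1]$, with $\alpha_0=\alpha$ and $\alpha_1\equiv\Id$, preserving the constraint $T\alpha_s(k')=\alpha_s(-k')^{-1}T$. Then reparametrize by $s=(t+\pi)/2\pi$ and set
\[ v_j(t,k') := U(t,k')\,\alpha_{(t+\pi)/2\pi}(k')^{\sharp}\, v_j(k'), \]
where $\sharp$ denotes the appropriate inverse or adjoint. This choice is arranged so that the endpoint values at $t=\pm\pi$ agree, which gives periodicity in $t=k_1$. To make the family symmetric, one defines it on the half $t\in[0,\pi]$ and extends to $t<0$ via $v_j(-t,-k')=Tv_j(t,k')$, respectively the Kramers-paired version, exactly as in the class CI proof of Theorem~\ref{thm:Homotopy_CI_minimal}. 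This way only the half-holonomy $U(\pi,k')$ versus $TU(\pi,-k')T^{-1}$ on the invariant slice $t=\pi$ has to be trivialized.

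\textbf{Main obstacle.} The hardest step is step 4: the symmetric contraction. Without symmetry it is just Kuiper's theorem: $\U(\mathcal K)$ is contractible, so every map $\T^{d-1}\to\U(\mathcal K)$ is null-homotopic. With symmetry, the problem is an equivariant extension over the CW structure of $\T^{d-1}$ with its involution $k'\mapsto -k'$.

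1. At the fixed points $k'_\star$, the constraint forces $\alpha(k'_\star)$ into a unitary group over $\R$ or $\mathbb{H}$, i.e.\ an orthogonal or symplectic group of infinite rank. The real and quaternionic Kuiper theorem says this group is contractible, so we contract there first.

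2. We then extend cell by cell over free orbits of cells, choosing the homotopy freely on one representative cell and transporting it by $T$ to the paired cell.

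3. Every extension problem is solvable because all homotopy groups of $\U(\mathcal K)$ vanish, in the complex, real and quaternionic cases alike.

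This argument is dimension-independent, which is what the statement "for all $d\in\N$" requires, in contrast with the finite-rank arguments that rely on $\pi_2(U(n))=0$.
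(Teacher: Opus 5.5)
\textbf{Non-symmetric case.} Your induction works here, provided the inductive hypothesis includes norm-continuity of the isometry $\ell^2\ni\delta_j\mapsto v_j(k')$, so that $\alpha$ is a norm-continuous map into $\U(\ell^2)$. Setting $v_j(t,k'):=U(t,k')\,\alpha_{s(t)}(k')\,v_j(k')$ with $s(-\pi)=0$ and $s(\pi)=1$ makes the endpoint values match.

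\textbf{Symmetric case: the gap.} The same formula is not symmetric. For even $T$ you get
\[ Tv_j(t,k')=U(-t,-k')\,\alpha_{s(t)}(-k')^{-1}v_j(-k'), \]
and this equals $v_j(-t,-k')$ only if $\alpha_{1-s}\,\alpha_s=\Id$. That fails at $s=0$ unless $\alpha\equiv\Id$, so symmetrizing the homotopy of $\alpha$ does not symmetrize the basis. Your fallback (define on $t\in[0,\pi]$ and reflect with $T$) does reduce everything to the invariant slice $t=\pi$. But the task there is not to contract $\alpha$. Write $v_j(t,k')=U(t,k')\beta(t,k')v_j(k')$. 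Then symmetry plus periodicity in $t$ require a continuous $\beta_\pi\colon\T^{d-1}\to\U(\ell^2)$ solving the twisted equation
\[ T\,\beta_\pi(-k')\,T^{-1}=\alpha(k')\,\beta_\pi(k') \qquad \forall\,k'\in\T^{d-1}. \]
Once $\beta_\pi$ exists, $\beta$ on $[0,\pi]\times\T^{d-1}$ can be any unconstrained Kuiper homotopy from $\Id$ to $\beta_\pi$. A symmetric contraction $\alpha_s$ does not produce such a $\beta_\pi$ unless you also have a homotopy-lifting property for $\beta\mapsto T\beta(-\,\cdot\,)T^{-1}\beta(\cdot)^{-1}$. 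You neither state nor prove one.

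\textbf{Fixed points.} Your fixed-point analysis is also wrong. At $k'_\star$ the constraint $T\alpha T^{-1}=\alpha^{-1}$ does not place $\alpha(k'_\star)$ in an orthogonal or symplectic group. In the normal forms it reads $\alpha^{\mathrm t}=\alpha$ (even case) or $(\mathcal{J}\alpha)^{\mathrm t}=-\mathcal{J}\alpha$ (odd case). So real/quaternionic Kuiper is not the relevant tool there; what you need is a \emph{symmetric square root}. Write $\alpha(k'_\star)=e^{iX}$ with $\mathcal{T}X\mathcal{T}^{-1}=X$, for instance $X:=\arg\alpha(k'_\star)$ by functional calculus. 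Then $\beta_\pi(k'_\star):=e^{-iX/2}$ solves the twisted equation. This is the same square-root trick as the paper's vertex step $J(t_\star)=e^{iX/2}$ in Appendix~\ref{app:XYZ}.

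\textbf{Remaining cells and comparison.} Take a cell structure of $\T^{d-1}$ compatible with $k'\mapsto -k'$, so that positive-dimensional cells come in disjoint pairs $\pm\sigma$. Extend $\beta_\pi$ over one cell of each pair using weak contractibility of $\U(\ell^2)$, and define it on the partner through the equation. This is consistent because $T\alpha(k')T^{-1}=\alpha(-k')^{-1}$. With this repair, your slice-by-slice induction is a valid reorganization of the paper's proof. The paper instead triangulates the whole cube $[-\pi,\pi]^d$ compatibly with $t\mapsto -t$ and solves the square-root problem at all points with coordinates in $\{-\pi,0,\pi\}$ at once. It then extends skeleton by skeleton, pairing each simplex with its symmetric and periodic counterparts.
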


\begin{proof}
Let $\{v_j\}_{j\in\N}$ be an orthonormal basis for the range of $P(0)$: the latter is an infinite-dimensional subspace of $\mathcal{H}$, which is moreover fixed by the time-reversal symmetry operator~$T$ in case this is present. As such, the restriction of $T$ onto $\Imm(P(0))$ defines a time-reversal symmetry on this subspace, and as argued in Section~\ref{sec:Bases_class_AIAII} we can impose $Tv_j =v_j$ if the symmetry is even or $v_{2j+2}=-Tv_{2j+1}$ if the symmetry is odd, for all $j\in\N$.

To extend the basis to the whole torus, we first use the construction sketched in the proof of Principle \ref{pri:unitary_eq_to_homotopy} coordinate-wise, which produces a symmetric Kato--Nagy intertwining unitary $U \colon [-\pi,\pi]^d \to \U (\Hi)$ such that $P(t)U(t)=U(t)P(0)$, possibly satisfying the symmetry constraint $TU(t)=U(-t)T$: we write $t \in [-\pi,\pi]^d$ and not $k \in \mathbb{T}^{d}$ to emphasize that $U$ does not in general depend periodically on its parameters. Setting
\[ \tilde{v}_j(t) := U(t) v_j, \quad j \in \mathbb{N}, \]
defines a (symmetric) non-periodic basis for the range of $P(t)$, $t \in [-\pi,\pi]^{d}$. The choice of this basis can be used to define a family of unitary operators $I(t) \colon \Imm (P(t))\to \ell^2(\N)$ specified by the relation $I(t)\tilde{v}_j(t)=\delta_j$, $t \in [-\pi,\pi]^{d}$, where $\delta_j = \big(\delta_{j,l}\big)_{l \in \mathbb{N}} \in \ell^2(\mathbb{N})$ and $\delta_{j,l}$ is the Kronecker delta. The time-reversal symmetry operator on $\mathcal{H}$, whenever present, is mapped to a corresponding normal form on $\ell^2(\mathbb{N})$: denoting by $\mathcal{K}$ the standard complex conjugation on $\ell^2(\mathbb{N})$ and $\mathcal{J}\in\U(\ell^2(\Hi))$ the unitary such that 
\[ \mathcal{J}(\delta_{2j+2})=-\delta_{2j+1}, \quad \mathcal{J}(\delta_{2j+1})=\delta_{2j+2}, \quad j \in \mathbb{N}, \]
then if $T$ is even we have $TI(t)=I(-t)\mathcal{K}$, while if $T$ is odd we have $TI(t)=I(-t)\mathcal{KJ}$ (compare respectively~\eqref{eq:TRS_normal_form_even} and~\eqref{eq:TRS_normal_form_odd} in finite-dimensions). In the following, we therefore denote
\[ \mathcal{T} \colon \ell^2(\mathbb{N}) \to \ell^2(\mathbb{N}), \quad \mathcal{T} := \begin{cases}
\mathcal{K} & \text{if $T$ is even}, \\
\mathcal{K} \mathcal{J} & \text{if $T$ is odd}.
\end{cases} \]

In order to modify the basis $\{\tilde{v}_j(t)\}_{j \in \mathbb{N}}$ and enforce periodicity, we observe that the choice of a different basis for $\Imm(P(t))$ is equivalent to selecting a (non-periodic) unitary-valued map $J \colon [-\pi,\pi]^d\to \U(\ell^2(\N))$ and setting 
\[ v_j(t):=I(t)^{-1} J(t)\delta_j, \quad j \in \mathbb{N}, \; t \in [-\pi,\pi]^d . \]
This new basis will be periodic if and only if for every pairs of points $t\sim t'\in[-\pi,\pi]^d$, that is, points whose coordinates differ by integer multiples of $2\pi$, we have \begin{equation}\label{eq:lemma_inf_periodic_condition} J(t')^{-1}I(t')^{-1} I(t)J(t)=\Id_{\ell^2(\N)}. \end{equation}
Correspondingly, the new basis is symmetric if and only if \begin{equation}\label{eq:lemma_inf_symmetric_condition} 
\mathcal{T}J(t)=J(-t)\mathcal{T}.
\end{equation}
The thesis will then follow if we show that a unitary-valued map $J$ satisfying~\eqref{eq:lemma_inf_periodic_condition} and~\eqref{eq:lemma_inf_symmetric_condition} can be constructed.

To this end, we use a triangulation of the cube $[-\pi,\pi]^d$ into $d$-simplexes defined iteratively to be compatible with the inversion $t \mapsto -t$. The 0-skeleton of this triangulation contains the points $t_* \in [-\pi,\pi]^{d}$ with coordinates equal to $-\pi$, $0$ or $\pi$; the $1$-skeleton is obtained by joining these points, that is, each edge $[-\pi,\pi]$ is divided into $[-\pi,0]\cup [0,\pi]$; in general, every $(n-1)$-simplex of the triangulation is promoted to a $n$-simplex that has the center as an additional vertex, for $n \le d$. Notice that the points $t_*$ in the $0$-skeleton are the only ones on which both conditions~\eqref{eq:lemma_inf_periodic_condition} and~\eqref{eq:lemma_inf_symmetric_condition} are to be enforced simultaneously, since $t_* \sim -t_*$. We thus start the construction of $J$ on these points, and then show how to move to higher-dimensional simplexes.

In absence of time-reversal symmetry, for every orbit of the points $t_*$ as above under the equivalence relation $\sim$ (i.e.\ for every $k_{\star} = [t_*]_{\sim} \in \mathbb{T}^{d} = [-\pi,\pi]^{d} / \sim$), we choose a representative $t_\star$ and fix the value $J(t_\star)=\Id_{\ell^2(\N)}$. The condition~\eqref{eq:lemma_inf_periodic_condition} is enforced if, on the other points $t_*' \sim t_{\star}$ in the orbit, we set $J(t_*'):=I(t_*')^{-1} I(t_{\star})$.

In presence of time-reversal symmetry, we want to enforce both of the above conditions on $J$ on the points $t_*$. Let us then choose again a representative $t_\star$ in each orbit $[t_*]_{\sim}$. Writing $J(t_{\star}) = \mathcal{T} J(t_{\star}) \mathcal{T}^{-1}$ from~\eqref{eq:lemma_inf_symmetric_condition} and plugging this into~\eqref{eq:lemma_inf_periodic_condition} with $t' = - t_{\star}$, after some algebraic manipulations we see that $J(t_{\star})$ must solve the equation
\begin{equation} \label{eq:J(t_star)} 
\mathcal{I}(t_{\star}) = J(t_{\star}) \mathcal{T} J(t)^{-1} \mathcal{T}^{-1}, \quad \text{where} \quad \mathcal{I}(t_{\star}) := I(t_{\star})^{-1} T I(t) \mathcal{T}^{-1}.
\end{equation}
The spectral theorem allows us to write $\mathcal{I}(t_{\star})= e^{iX}$ with $X=X^*$ on $\ell^2(\mathbb{N})$. Since $\mathcal{T} \mathcal{I}(t_{\star}) \mathcal{T}^{-1} = \mathcal{I}(t_{\star})^{-1}$, as can be easily checked, the above $X$ can be chosen to satisfy $X= \mathcal{T} X \mathcal{T}^{-1}$, by antiunitarity of $\mathcal{T}$. It is easily seen that $J(t_*):=e^{iX/2}$ solves~\eqref{eq:J(t_star)}; Equation~\eqref{eq:lemma_inf_periodic_condition} then forces the values of $J$ at the other points $t_*' \sim t_{\star}$ in the orbit to be $J(t_*'):=I(t_*')^{-1} I(t_*) e^{iX/2}$.

The remainder of the construction of $J$ can be done inductively on the $n$-skeleton of $[-\pi,\pi]^d$, $n \in \{1, \ldots, d\}$, with repeated applications of Kuiper's theorem~\cite{kuiper1965homotopy}: the weak contractibility of $\U(\ell^2(\N))$ always allows to extend the value of a map $J \colon \partial \triangle\to\U(\ell^2(\N)) $, defined on the boundary of a $n$-simplex $\triangle$, to a continuous map $J \colon \triangle\to\U(\ell^2(\N))$. In order to enforce the symmetry constraints~\eqref{eq:lemma_inf_periodic_condition} and~\eqref{eq:lemma_inf_symmetric_condition}, one has to proceed carefully: if $J$ is defined on a simplex $\triangle$, the extension on the symmetric counterpart $-\triangle$ is governed by Equation~\eqref{eq:lemma_inf_symmetric_condition}, while the extension to any periodic counterpart $\triangle'$ is governed by Equation~\eqref{eq:lemma_inf_periodic_condition}. The combination of the two procedures forces the extension to the simplex $-(\triangle')$. The properties of $I(t)$ ensure compatibility between~\eqref{eq:lemma_inf_periodic_condition} and~\eqref{eq:lemma_inf_symmetric_condition} in defining these extensions: if $t'\sim t$ and $J(t)$ is given, then $J(-t')$ can be defined equivalently as
\begin{align*}
J(-t')&=I(-t')^{-1}I(-t)J(-t)=\mathcal{T}I(t')^{-1}I(t)J(t)\mathcal{T}^{-1} \quad \text{or} \\
J(-t')&=\mathcal{T} J(t') \mathcal{T}^{-1} = \mathcal {T } I(t')^{-1}I(t)J(t) \mathcal{T}^{-1}.
\end{align*} 
This concludes the proof.
\end{proof}

\section{Further properties of the relative invariants} \label{app:Further_properties}

This Appendix explores further properties of the relative invariants introduced in Section~\ref{sec:Homotopy_minimal}, namely the chiral invariant (Definition~\ref{def:W_invariant}), the even-particle-hole invariant (Definition~\ref{def:P_invariant}), and the DIII invariant (Definition~\ref{def:DIII_invariant}).

\subsection{Expressions for the relative invariants via symmetric bases}

First we present formul\ae\ to compute the relative invariants through symmetric periodic bases for the $1$-dimensional projection-valued maps involved, in the spirit of the expression~\eqref{eq:I_as_Berry_phase} for the Zak phase. The restriction to $d=1$ is needed to ensure that these symmetric periodic bases actually exist, in view of the results of Section~\ref{sec:Bases}; at any rate, the $2$-dimensional chiral and DIII invariants are anyway defined via restriction to $1$-dimensional sub-tori. The formul\ae\ obtained below enable the explicit calculation of relative invariants e.g.\ in existing codes for computational modeling of condensed matter systems~\cite{mostofi2008wannier90, marrazzo2024wannier}. As a byproduct, the statements below show that these homotopy invariant are also gauge-invariant under changes of gauge which preserve the appropriate symmetries.

Throughout this Section, we assume that the involved projection-valued maps and symmetric periodic bases are smooth (meaning at least continuously differentiable), rather than just continuous: in applications, this comes at no loss of generality in view of Remark~\ref{rmk:regularity_required}.

\subsubsection{Chiral invariant}

Let us start from the chiral invariant $\W(P^{\pm})$ of a pair of projection-valued maps.

\begin{proposition}\label{prop:W_frame_description}
Let $P^\pm \colon \T^1 \to \Proj_n(\C^{2n})$ be a chiral symmetric pair of smooth projection-valued maps. If $\{v_1(k), \ldots, v_{2n}(k)\}$ is a smooth symmetric periodic basis for $P^{\pm}$, then 
\begin{equation} \label{eq:chiral_simil_Zak}
\W(P^\pm) = \frac{1}{\pi i} \int_{\T^1} \sum_{j=1}^{n} \inn{v_j(k)}{S \partial_k v_j(k)} dk \in \Z. 
\end{equation}
\end{proposition}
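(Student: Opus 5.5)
The plan is to reduce the integrand to logarithmic derivatives of determinants, using the block structure of Principle~\ref{pri:chiral_classes_min_dim}. I will work in the eigenbasis of $S$, so that $\mathcal{H} = \mathcal{H}^{\uparrow}\oplus\mathcal{H}^{\downarrow} \simeq \C^n\oplus\C^n$ and $S$ takes the form~\eqref{eq:S_normal_form}. In this decomposition
\[ P^{\pm}(k)=\tfrac12\begin{pmatrix}\Id_n & \pm W(k)\\ \pm W(k)^* & \Id_n\end{pmatrix}, \]
with $W\colon\mathbb{T}^1\to U(n)$ and $\W(P^\pm)=[\det W(\cdot)]$.

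\textbf{Step 1: block form of the basis.} For $j\in\{1,\ldots,n\}$, write $v_j(k)=(a_j(k),b_j(k))$ with $a_j(k)\in\mathcal{H}^{\uparrow}$ and $b_j(k)\in\mathcal{H}^{\downarrow}$. By Definition~\ref{def:Symmetric_basis}, the vectors $v_{2n-j+1}(k)=Sv_j(k)=(a_j(k),-b_j(k))$ span the range of $P^{+}(k)$. Collect the $a_j$ and $b_j$ as columns of $n\times n$ matrices $A(k)$ and $B(k)$.

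\textbf{Step 2: unitarity of the blocks.} Orthonormality of the full basis gives $\inn{v_i}{v_j}=\delta_{ij}$ and $\inn{v_i}{Sv_j}=0$, that is,
\[ A^*A+B^*B=\Id_n \quad\text{and}\quad A^*A-B^*B=0. \]
Hence $\tilde A:=\sqrt2\,A$ and $\tilde B:=\sqrt2\,B$ are unitary. They are smooth and periodic because the basis is.

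\textbf{Step 3: recovering $W$.} The condition $P^-(k)v_j(k)=v_j(k)$, read in the block form above, gives $a_j=-Wb_j$. Therefore $W(k)=-\tilde A(k)\tilde B(k)^{-1}$, and so
\[ [\det W(\cdot)]=[\det\tilde A(\cdot)]-[\det\tilde B(\cdot)], \]
since the constant factor $(-1)^n$ contributes nothing to the winding.

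\textbf{Step 4: computing the integrand.} Since $S$ acts as $+\Id$ on $\mathcal{H}^{\uparrow}$ and $-\Id$ on $\mathcal{H}^{\downarrow}$,
\[ \sum_{j=1}^{n}\inn{v_j}{S\partial_k v_j}=\sum_{j=1}^{n}\bigl(\inn{a_j}{\partial_k a_j}-\inn{b_j}{\partial_k b_j}\bigr)=\tfrac12\tr(\tilde A^*\partial_k\tilde A)-\tfrac12\tr(\tilde B^*\partial_k\tilde B). \]
For a smooth periodic unitary-valued map $\tilde A$, Jacobi's formula gives $\tr(\tilde A^{-1}\partial_k\tilde A)=\partial_k\log\det\tilde A$. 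Integrating over $\mathbb{T}^1$ then yields $2\pi i[\det\tilde A]$, and likewise for $\tilde B$. The right-hand side of~\eqref{eq:chiral_simil_Zak} therefore equals $[\det\tilde A]-[\det\tilde B]$, which is $\W(P^\pm)$ by Step 3.

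The only delicate point is Step 2: one must use both the orthonormality of the basis and the orthogonality of the ranges of $P^-$ and $P^+$. This is what forces the two blocks to be separately proportional to unitaries, so that the determinants have well-defined windings. Everything else is bookkeeping.
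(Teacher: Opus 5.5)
Your proof is correct, and it takes a different route from the paper's.

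\textbf{The paper's route.} The paper fixes one reference basis $\tilde v_j(k)=\frac{1}{\sqrt2}\bigl(-W(k)e_j,\,e_j\bigr)$ for $P^-$. In your notation this is $\tilde B\equiv\Id_n$ and $\tilde A=-W$. It evaluates the integrand there as $\frac12\tr\bigl(W(k)^*\partial_kW(k)\bigr)$ and integrates to get $[\det W(\cdot)]$. It then shows, as a separate step, that the integrand does not change under a periodic change of basis $v_j=\sum_i G_{i,j}\tilde v_i$. That step uses $\inn{\tilde v_\ell}{S\tilde v_i}=0$ and the unitarity of $G$.

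\textbf{Your route.} You work with an arbitrary symmetric basis from the start. Splitting it into $S$-eigenspace components shows that both blocks are $\frac{1}{\sqrt2}$ times unitaries and that $W=-\tilde A\tilde B^{-1}$. The integral then becomes $[\det\tilde A]-[\det\tilde B]$. Independence of the basis is therefore built into the computation rather than checked afterwards.

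\textbf{What each buys.} The paper's route gives pointwise gauge invariance: the integrand equals $\frac12\tr(W^*\partial_kW)$ in every symmetric basis. You can recover this in one line from $\tilde A=-W\tilde B$, since $\tr(\tilde A^*\partial_k\tilde A)=\tr(W^*\partial_kW)+\tr(\tilde B^*\partial_k\tilde B)$. Your route gives a clean reading of the chiral invariant: it is the difference between the windings of the two sublattice components of any symmetric periodic basis.

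A small point on Step 2. The relation $A^*A=B^*B$ already follows from $A=-WB$ with $W$ unitary. So the orthogonality of the ranges of $P^\pm$ is not an independent input; orthonormality of the first $n$ vectors is all you need to get $2B^*B=\Id_n$.
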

\begin{proof}
Let 
\begin{equation} \label{eq:tildevj}
\tilde{v}_j(k) := \dfrac{1}{\sqrt{2}} \begin{pmatrix}
        - W(k) e_j \\ e_j
    \end{pmatrix}, \quad j \in \{1,\ldots, n\}, \; k \in \mathbb{T}^{d},
\end{equation}
where $\{e_1, \ldots, e_n\}$ is an orthonormal basis for the eigenspace $\mathcal{H}^{\downarrow} = \ker(S+\Id_{2n}) \simeq \C^{n}$, $W \colon \mathbb{T}^{1} \to U(n)$ is as in~\eqref{chiral_symm_P+-P-}, and the decomposition of $\mathcal{H} \simeq \C^{2n}$ is the usual one into $\ker(S-\Id_{2n}) \oplus \ker(S+\Id_{2n})$. Then
$\{\tilde{v}_1(k), \ldots, \tilde{v}_n(k)\}$ is a smooth periodic basis for $P^{-}$, by a computation similar to the one presented in~\eqref{eq:basis_from_W}. Notice how a symmetric basis $\{\tilde{v}_1(k), \ldots, \tilde{v}_{2n}(k)\}$ for the pair $P^{\pm}$, as in Definition~\ref{def:Symmetric_basis}, is obtained by defining
\[ \tilde{v}_j(k) := S \tilde{v}_{2n-j+1}(k) \quad \forall \: j \in \{n+1, \ldots, 2n\}. \]
If we compute the integrand on the right-hand side of~\eqref{eq:chiral_simil_Zak}, we obtain, for $j \in \{1, \ldots, n\}$,
\begin{align*}
\sum_{j=1}^{n} & \inn{\tilde{v}_j(k)}{\partial_k S \tilde{v}_j(k)} = \sum_{j=1}^{n} \frac{1}{2} \inn{ \begin{pmatrix}
        - W(k) e_j \\ e_j
    \end{pmatrix} }{ \partial_k \begin{pmatrix}
        - W(k) e_j \\ -e_j
    \end{pmatrix} } \\
&= \sum_{j=1}^{n} \frac{1}{2} \inn{ \begin{pmatrix}
        - W(k) e_j \\ e_j
    \end{pmatrix} }{ \begin{pmatrix}
        - [\partial_k W(k)] e_j \\ 0
    \end{pmatrix} } = \sum_{j=1}^{n} \frac{1}{2} \inn{e_j}{W(k)^*\,[\partial_k W(k)] e_j} \\
& = \frac{1}{2} \tr_{\mathcal{H}^{\downarrow}} \left( W(k)^*\,[\partial_k W(k)] \right)
\end{align*}
yielding in particular (compare e.g.~\cite[Proposition~A.1]{monaco2023z2})
\[
\frac{1}{\pi i} \int_{\mathbb{T}^{1}} \sum_{j=1}^{n} \inn{\tilde{v}_j(k)}{S \partial_k \tilde{v}_j(k)} dk =  \frac{1}{2\pi i} \int_{\mathbb{T}^{1}} \tr_{\mathcal{H}^{\downarrow}} \left( W(k)^*\,[\partial_k W(k)] \right) dk = [\det(W(\cdot))] = \W(P^{\pm}).
\]

To conclude the proof, we need to show that the right-hand side of~\eqref{eq:chiral_simil_Zak} is also gauge-invariant, i.e., invariant under changes of the periodic basis chosen for $P^{-}$: notice that, in fact, a change in the basis for $P^{-}$ immediately dictates how to correspondingly change the basis for $P^{+}$, that is, by imposing the chiral symmetry constraint. Therefore, assume that
\[ v_j(k) := \sum_{i=1}^{n} G(k)_{i,j} \tilde{v}_i(k), \quad j \in \{1,\ldots,n\}, \]
defines a new smooth and periodic basis for the range of $P^{-}(k)$, for some unitary-valued map $G \colon \mathbb{T}^1 \to U(n)$. Since
\[ \partial_k v_j(k) = \sum_{i=1}^{n} [\partial_k G(k)_{i,j}] \tilde{v}_i(k) + G(k)_{i,j} [\partial_k \tilde{v}_i(k)], \]
we can then compute
\begin{align*}
\inn{v_j(k)}{\partial_k S v_j(k)} & = \sum_{i,\ell=1}^{n} \inn{G(k)_{\ell,j} \tilde{v}_\ell(k)}{S [\partial_k G(k)_{i,j}] \tilde{v}_i(k)} + \inn{G(k)_{\ell,j} \tilde{v}_\ell(k)}{S G(k)_{i,j} [\partial_k \tilde{v}_i(k)]} \\
& = \sum_{i,\ell=1}^{n} \overline{G(k)_{\ell,j}} \, [\partial_k G(k)_{i,j}] \inn{ \tilde{v}_\ell(k)}{S  \tilde{v}_i(k)} + \overline{G(k)_{\ell,j}} G(k)_{i,j} \inn{\tilde{v}_\ell(k)}{S  [\partial_k \tilde{v}_i(k)]}.
\end{align*}
The first term vanishes, as $S\tilde{v}_i(k)$ is in $\Imm(P^{+}(k)) = \Imm(P^{-}(k))^{\perp}$. As for the second term, we observe that $\sum_{j=1}^{n} \overline{G(k)_{\ell,j}} G(k)_{i,j} = \delta_{i,\ell}$ due to the unitarity of $G(k)$, $k \in \mathbb{T}^{1}$. We obtain that
\[ \sum_{j=1}^{n} \inn{v_j(k)}{\partial_k S v_j(k)} =  \sum_{i=1}^{n} \inn{\tilde{v}_i(k)}{\partial_k S \tilde{v}_i(k)} \]
and the desired conclusion follows.
\end{proof}

\subsubsection{Even-particle-hole invariant}

The $1$-dimensional even-particle-hole invariant is a collection of two $0$-dimensional quantities computed at the fixed points of the involution $k \mapsto -k \in \mathbb{T}^{1}$. Still, Proposition~\ref{lem:relation_I-P} states that the product of these two quantities coincides with the Zak phase $\I(P^{\pm})$, which can be computed from a smooth periodic basis for $P^-$ via~\eqref{eq:I_as_Berry_phase}.

\subsubsection{DIII invariant}

For pairs of projection-valued maps $P^{\pm} \colon \mathbb{T}^{1} \to \Proj_{n}(\C^{2n})$ in class DIII, a symmetric basis consists of a time-reversal symmetric basis $\{v_1, \ldots, v_n\}$ for $P^{-}$ (hence made up of Kramers pairs, since the time-reversal symmetry operator is odd), together with the vectors $\{S v_1, \ldots, S v_n\}$ spanning the projection $P^{+}$ on the orthogonal complement. Write $n =: 2m \in 2 \mathbb{N}$. Coherently with~\eqref{eq:TRS_normal_form_odd}, we arrange then the vectors in the basis for $P^{-}$ so that the second $m$ vectors are the time-reversal symmetric images of the first $m$ vectors.

\begin{proposition}\label{prop:gamma_frame_description}
Let $P^\pm \colon \T^1 \to \Proj_n(\C^{2n})$ be a pair of smooth projection-valued maps in class DIII. If $\{v_1(k), \ldots, v_{2n}(k)\}$ is a smooth symmetric periodic basis for $P^{\pm}$, then 
\begin{equation} \label{eq:DIII_simil_Zak}
\G(P^\pm) = e^{i \pi \mathcal{A}^S} \in \Z_2 \quad \text{where } \mathcal{A}^S := \frac{1}{\pi i} \int_{\T^1} \sum_{j=1}^{m} \inn{v_j(k)}{S \partial_k v_j(k)} dk. 
\end{equation}
\end{proposition}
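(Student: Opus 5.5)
The plan is to follow the template of Proposition~\ref{prop:W_frame_description}, with the extra bookkeeping needed to respect Kramers pairing. The argument has three steps: an explicit gauge adapted to $W$, invariance of $e^{i\pi\mathcal{A}^S}$ under symmetric changes of gauge, and a comparison of how both sides behave under symmetric unitary conjugation.

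First I would exhibit a convenient symmetric periodic basis. Take the vectors $\tilde v_j(k)=\frac{1}{\sqrt2}\begin{pmatrix}-W(k)e_j\\ e_j\end{pmatrix}$ from~\eqref{eq:tildevj}, which form a periodic basis for $P^-$, and choose the orthonormal basis $\{e_j\}$ of $\mathcal{H}^{\downarrow}$ adapted to the normal form $T=\mathcal{K}J$. Using $T S=-ST$ and~\eqref{eq:W^t=-W}, I would check that $T\tilde v_j(k)$ is again of the same form with $k\mapsto -k$, exactly as in the computation following~\eqref{eq:basis_from_W}. The half $\{\tilde v_1,\dots,\tilde v_m\}$ is then spanned by the images under $W(k)$ of a fixed $m$-dimensional subspace $E\subset\mathcal{H}^{\downarrow}$, and $T$ sends it to the complementary half. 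With this gauge the integrand computation of Proposition~\ref{prop:W_frame_description} gives
\[ \sum_{j=1}^m \inn{\tilde v_j}{S\partial_k\tilde v_j} = \tfrac12\tr\bigl(\Pi_E W^*\partial_k W\,\Pi_E\bigr), \]
where $\Pi_E$ is the projection onto $E$. In general this basis is not Kramers-symmetric, so I would then multiply it by a unitary gauge $G\colon\mathbb{T}^1\to U(n)$ that makes it symmetric. That such a periodic gauge exists is guaranteed by the $d=1$ results of Section~\ref{sec:Bases_class_AIAII}.

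Second, I would prove that $e^{i\pi\mathcal{A}^S}$ is invariant under symmetric gauge changes, that is, under $G$ with $J\overline{G(k)}=G(-k)J$. As in the proof of Proposition~\ref{prop:W_frame_description}, the cross terms vanish because $S v_i\in\Imm(P^+)$. What remains is the connection term $\frac{1}{\pi i}\int\tr(\Pi_m G^*\partial_k G\,\Pi_m)$, where $\Pi_m$ projects onto the first $m$ basis vectors. Writing $G$ in $m\times m$ blocks and using the time-reversal constraint to identify the second diagonal block with the conjugate-reflected first block, I would show that this term equals $\frac{1}{2\pi i}\int\tr(G^*\partial_k G)$, which is the winding $[\det G]$. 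That winding is even by Theorem~\ref{thm:TRS_unitary_homotopy} (odd $T$). Hence $\mathcal{A}^S$ shifts by an integer and $e^{i\pi\mathcal{A}^S}$ changes by a sign exactly when that integer is odd.

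This is the step I expect to be the main obstacle. A naive estimate only gives a shift by half the full connection, which is not obviously integral, and one must also show that $\mathcal{A}^S$ itself is an integer. If the direct block computation fails, I would fall back on continuity: $e^{i\pi\mathcal{A}^S}$ depends continuously on the data, so once integrality of $\mathcal{A}^S$ is known, the symmetric gauge group being connected up to the parity of $[\det G]/2$ reduces the check to a single explicit gauge, $G(k)=\operatorname{diag}(e^{ik},\Id,e^{-ik},\Id)$.

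Third, to identify $e^{i\pi\mathcal{A}^S}$ with $\TK(W)$, I would avoid a direct Pfaffian computation and argue by invariance. Both sides are homotopy invariants, the right-hand side by the gauge invariance just established. Under symmetric unitary conjugation by $U=U^{\uparrow}\oplus U^{\downarrow}$, the left side picks up $(-1)^{[\det U^{\uparrow}]}$ by Proposition~\ref{prop:gamma_dimerization}. For the right side, transporting the basis by $U$ multiplies the half-sum integrand by the contribution $\tr(\Pi_E U^{\downarrow *}\partial U^{\downarrow})$, which by~\eqref{eq:U_DIII} yields the same sign. By Theorem~\ref{thm:Homotopy_DIII_minimal} there are only two homotopy classes. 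So it suffices to verify equality on the constant map $W\equiv J$, where both sides equal $1$, and to note that a conjugation with odd $[\det U^{\uparrow}]$ connects the two classes.
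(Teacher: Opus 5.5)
Your proposal breaks down at Step~2, and the problem is not a technicality: $e^{i\pi\mathcal{A}^S}$ is not invariant under arbitrary symmetric gauge changes. You have the two terms of the gauge computation reversed. If $v_j'=\sum_i G_{ij}v_i$, the term containing $\partial_kG$ is multiplied by $\inn{v_\ell}{Sv_i}=0$, so that is the term that vanishes (as in the proof of Proposition~\ref{prop:W_frame_description}). What survives is $\sum_{j\le m}(G^*MG)_{jj}=\tr(G\Pi_mG^*M)$, with $M_{\ell i}:=\inn{v_\ell}{S\partial_k v_i}$. Summing over all $j\le n$ would collapse this to $\tr M$ by unitarity. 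The half-sum, however, equals $\tr(\Pi_m M)$ only when $G$ preserves the span of the first $m$ vectors. Consistently with this, the gauges that carry winding, e.g.\ $v_1\mapsto e^{i\ell k}v_1$, $v_{1+m}\mapsto e^{i\ell k}v_{1+m}$, leave $\mathcal{A}^S$ unchanged, because $\inn{v}{Sv}=0$ for $v\in\Imm P^-(k)$. (Note also that your $\operatorname{diag}(e^{ik},\Id,e^{-ik},\Id)$ violates $J\overline{G(k)}=G(-k)J$.) By contrast, constant symplectic rotations that mix a vector with its Kramers partner are homotopic to the identity, yet they do change $\mathcal{A}^S$.

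Here is an explicit example, in the paper's normal forms $S=\operatorname{diag}(\Id_2,-\Id_2)$, $T=\mathcal{K}J$. Take $n=2$, $m=1$, $\gamma(k)=\operatorname{diag}(e^{ik},1)$ and $W(k)=\gamma(-k)^{\mathrm t}J_2\gamma(k)=\bigl(\begin{smallmatrix}0&e^{-ik}\\-e^{ik}&0\end{smallmatrix}\bigr)$.
Then $\det W\equiv1$, $\Pf W(0)=1$ and $\Pf W(\pi)=-1$, so $\G(P^\pm)=-1$.
The Kramers pair $v_1=\tfrac{1}{\sqrt2}(0,e^{ik},1,0)^{\mathrm t}$, $v_2=Tv_1(-\cdot)=\tfrac{1}{\sqrt2}(1,0,0,-e^{ik})^{\mathrm t}$ spans $\Imm P^-$ and gives $M=\operatorname{diag}(i/2,-i/2)$.
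For every $\theta$, the pair $v_1^\theta=\cos\theta\,v_1-\sin\theta\,v_2$, $v_2^\theta=Tv_1^\theta(-\cdot)=\sin\theta\,v_1+\cos\theta\,v_2$ is a smooth periodic symmetric basis. It yields $\mathcal{A}^S=\cos 2\theta$.
At $\theta=\pi/4$ this gives $e^{i\pi\mathcal{A}^S}=1\neq\G(P^\pm)$, and for generic $\theta$ the value is not even a sign.

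This also rules out the integrality that your continuity fallback needs. Step~3, which treats $e^{i\pi\mathcal{A}^S}$ as a function of $P^\pm$ alone, then has nothing to stand on.

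What is true, and what the paper's computation actually proves, is the identity in an adapted basis. Using the factorization $W(k)=\gamma(-k)^{\mathrm t}J_n\gamma(k)$:
\begin{itemize}
\item conjugation by $\overline{\gamma(-\cdot)}\oplus\gamma(\cdot)$ together with Proposition~\ref{prop:gamma_dimerization} gives $\G(P^\pm)=(-1)^{[\det\gamma]}$;
\item the explicit Kramers basis built from $\gamma$ gives $\mathcal{A}^S=[\det\gamma]$.
\end{itemize}
This equality persists for all symmetric bases whose first $m$ vectors span the same subbundle, i.e.\ gauges $\operatorname{diag}(g(k),\overline{g(-k)})$. That is exactly what the paper's closing remark on ``reshuffling of the first $m$ vectors'' covers. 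The paper does not justify the extension to every symmetric basis either, and by the example above the statement must be restricted accordingly. The missing idea in your proposal is therefore to fix such an adapted basis and compute in it, rather than to seek invariance over all symmetric bases.
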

\begin{proof}
We first make an observation concerning the computation of $\mathcal{A}^S$ as in the statement. Assume that $\{v_1, \ldots, v_n\}$ is a time-reversal symmetric basis for $P^{-}$, i.e., $v_{j+m}(k) = T v_{j}(-k)$ for all $j \in \{1, \ldots, m\}$. This relation implies in particular
\[ \partial_k v_{j+m}(k) = \partial_k [T v_{j}(-k)] = - T (\partial_k v_{j})(-k), \quad j \in \{1, \ldots, m\}, \; k \in \mathbb{T}^{1}. \]
Notice that, for $g \colon \mathbb{T}^{1} \to U(1)$,
\begin{equation} \label{g(-k)}
\int_{0}^{\pi} g(-k) dk = - \int_{0}^{-\pi} g(k) dk = \int_{-\pi}^{0} g(k) dk.
\end{equation}
Using the antiunitarity of the time-reversal symmetry operator $T$ and the anticommutation relation $TS = - ST$, we can then compute, for $j \in \{1,\ldots, m\}$
\begin{align*}
\int_{-\pi}^{0} \inn{v_j(k)}{S \partial_k v_j(k)} dk & = \int_{0}^{\pi} \inn{v_j(-k)}{S (\partial_k v_j)(-k)} dk = \int_{0}^{\pi} \inn{TS (\partial_k v_j)(-k)}{T v_j(-k)} dk \\
& = - \int_{0}^{\pi} \inn{S T (\partial_k v_j)(-k)}{T v_j(-k)} dk = \int_{0}^{\pi} \inn{S \partial_k v_{j+m}(k)}{v_{j+m}(k)} dk \\
& = - \int_{0}^{\pi} \inn{v_{j+m}(k)}{S \partial_k v_{j+m}(k)} dk
\end{align*}
where in the last equality we used the identity
\[ \inn{v_{j+m}(k)}{S v_{j+m}(k)} \equiv 0 \quad \text{(and hence } \partial_k \inn{v_{j+m}(k)}{S v_{j+m}(k)} \equiv 0 \text{)} \]
as $S v_{j+m}(k) \in \Imm(P^{+}(k)) = \Imm(P^{-}(k))^{\perp}$. We conclude that
\begin{equation} \label{eq:AS}
\mathcal{A}^S = \frac{1}{\pi i} \int_{0}^{\pi} \sum_{j=1}^{m} \big( \inn{v_j(k)}{S \partial_k v_j(k)} - \inn{v_{j+m}(k)}{S \partial_k v_{j+m}(k)} \big) dk .
\end{equation}

After this general consideration, valid for any symmetric basis for $P^{-}$, we follow a similar strategy to that of Proposition~\ref{prop:W_frame_description} and specify to a particular basis. To this end, let us first of all fix an orthonormal basis $\{e_1, \ldots, e_n\}$ for $\mathcal{H}^{\downarrow} = \ker(S+\Id_{2n})$; without loss of generality, we choose the basis to be real, i.e., $\mathcal{K} e_j = e_j$ for all $j \in \{1, \ldots, n\}$. Acting with the time-reversal symmetry operator $T$ on the vector $\begin{pmatrix} 0 & e_j \end{pmatrix}^{\mathrm{t}} \in \{0\} \oplus \mathcal{H}^{\downarrow}$ yields the vector $\begin{pmatrix} \mathcal{K} e_j & 0 \end{pmatrix}^{\mathrm{t}} \in \mathcal{H}^{\uparrow} \oplus \{0\}$ where $\mathcal{H}^{\uparrow} = \ker(S-\Id_{2n})$, in view of the anticommutation relation $T S = - S T$; as in the proof of Proposition~\ref{lem:relation_delta_gamma}, upon this identification we can regard the collection $\{f_j := \mathcal{K} e_j = e_j\}_{j \in \{1, \ldots, n\}}$ as an orthonormal basis for $\mathcal{H}^{\uparrow}$ as well. With this choice of bases, the usual map $W$ from~\ref{chiral_symm_P+-P-} can be represented as a family of matrices $W(k) \in U(n)$, $k \in \mathbb{T}^1$, which further satisfy the symmetry relation~\eqref{eq:W^t=-W}. We now borrow the result of~\cite[Lemma~5.1]{Cornean_2017}, which states the existence of a unitary-valued map $\gamma \colon \mathbb{T}^{1} \to U(n)$ such that
\[ W(k) = \gamma(-k)^{\mathrm{t}} J_{n} \gamma(k), \quad k \in \mathbb{T}^{1}, \]
where $J_n$ is the $n \times n$ standard symplectic matrix, or, up to a redefinition of $\gamma$ by multiplication with a constant matrix, any unitary and skew-symmetric matrix (compare~\eqref{eq:DIIId0}). Without loss of generality, we can therefore assume that the chosen bases for $\mathcal{H}^{\downarrow}$ and $\mathcal{H}^{\uparrow}$ are tailored to the standard matrix $J_n$, i.e.\ that $J_n e_j = -e_{j+m}$ for $j \in \{1, \ldots, m\}$. With $\gamma$ as above, it is immediate to check that
\begin{equation} \label{eq:UPU=PJ}
\begin{pmatrix} \overline{\gamma(-k)} & 0 \\ 0 & \gamma(k) \end{pmatrix} 
\left[ \frac{1}{2} \begin{pmatrix} \Id_n & -W(k) \\ -W(k) & \Id_n \end{pmatrix} \right] 
\begin{pmatrix} \gamma(-k)^{\mathrm{t}} & 0 \\ 0 & \gamma(k)^* \end{pmatrix} = \frac{1}{2} \begin{pmatrix} \Id_n & -J_n \\ -J_n & \Id_n \end{pmatrix},
\end{equation}
that is, that $P^{-}(k)$ is unitarily conjugated, via 
\[ U(k) = \begin{pmatrix} U^{\uparrow}(k) & 0 \\ 0 & U^{\downarrow}(k) \end{pmatrix} := \begin{pmatrix} \overline{\gamma(-k)} & 0 \\ 0 & \gamma(k) \end{pmatrix}, \] 
to a constant projection-valued map. Notice also that this $U(k)$ has the structure required in~\eqref{eq:U_DIII}. Thus, Proposition~\ref{prop:gamma_dimerization} applies and yields
\begin{equation} \label{eq:DIII_gamma} 
\G(P^{\pm}) = (-1)^{[\det(U^{\downarrow}(\cdot))]} = (-1)^{[\det( \gamma(\cdot))]}.
\end{equation}
The relation~\eqref{eq:UPU=PJ} and a computation similar to~\eqref{eq:basis_from_W} allow also to check that the vectors
\[ \tilde{v}_j(k) := \frac{1}{\sqrt{2}} \begin{pmatrix} -\gamma(-k)^{\mathrm{t}} J_n e_j \\ \gamma(k)^* e_j \end{pmatrix} = \frac{1}{\sqrt{2}} \begin{pmatrix} \gamma(-k)^{\mathrm{t}} e_{j+m} \\ \gamma(k)^* e_j \end{pmatrix}, \quad j \in \{1,\ldots,m\}, \; k \in \mathbb{T}^{1}, \]
are orthonormal and in the range of $P^{-}(k)$. This collection of vectors can be complemented to a time-reversal symmetric basis for the projection-valued map $P^{-}$ setting
\[
\tilde{v}_{j+m}(k) := T \tilde{v}_{j}(-k) = \frac{1}{\sqrt{2}} \begin{pmatrix} \gamma(-k)^{\mathrm{t}} f_{j} \\ -\gamma(k)^* f_{j+m} \end{pmatrix}.
\]
In turn, the collection $\{\tilde{v}_1, \ldots, \tilde{v}_{2m}\}$ can be extended to a symmetric basis for the pair $P^{\pm}$ by applying the chiral symmetry operator $S$.

Let us now compute $\mathcal{A}^S$ via~\eqref{eq:AS} in this basis. We obtain
\begin{align*}
i \pi \mathcal{A}^S & = \frac{1}{2} \int_{0}^{\pi} \sum_{j=1}^{m} \left( \inn{\begin{pmatrix} \gamma(-k)^{\mathrm{t}} e_{j+m} \\ \gamma(k)^* e_j \end{pmatrix}}{\partial_k \begin{pmatrix} \gamma(-k)^{\mathrm{t}} e_{j+m} \\ -\gamma(k)^* e_j \end{pmatrix}} \right. \\
& \qquad\qquad - \left. \inn{\begin{pmatrix} \gamma(-k)^{\mathrm{t}} f_{j} \\ -\gamma(k)^* f_{j+m} \end{pmatrix}}{\partial_k \begin{pmatrix} \gamma(-k)^{\mathrm{t}} f_{j} \\ \gamma(k)^* f_{j+m} \end{pmatrix}} \right) dk \\
& = \frac{1}{2} \int_{0}^{\pi} \sum_{j=1}^{m} \big( \inn{\gamma(-k)^{\mathrm{t}} e_{j+m}}{[-(\partial_k \gamma)(-k)]^{\mathrm{t}} e_{j+m}} + \inn{\gamma(k)^* e_j}{[-(\partial_k \gamma)(k)]^* e_j}  \\
& \qquad\qquad - \inn{\gamma(-k)^{\mathrm{t}} f_{j}}{[-(\partial_k \gamma)(-k)]^{\mathrm{t}} f_{j}} - \inn{-\gamma(k)^* f_{j+m}}{[(\partial_k \gamma)(k)]^* f_{j+m}} \big) dk \\
& = \frac{1}{2} \int_{0}^{\pi} \sum_{j=1}^{m} \left( \inn{e_j}{[-\gamma(k)(\partial_k \gamma)(k)^*] e_j} + \inn{e_{j+m}}{[-\overline{\gamma(-k)}(\partial_k \gamma)(-k)^{\mathrm{t}}] e_{j+m}} \right.  \\
& \qquad\qquad - \left. \inn{[-\overline{(\partial_k \gamma)(-k)}\gamma(-k)^{\mathrm{t}}] \mathcal{K} e_{j}}{\mathcal{K} e_{j}} - \inn{[-(\partial_k \gamma)(k) \gamma(k)^*] \mathcal{K} e_{j+m}}{ \mathcal{K} e_{j+m}} \right) dk \\
& = \frac{1}{2} \int_{0}^{\pi} \sum_{j=1}^{m} \left( \inn{e_j}{[(\partial_k\gamma)(k) \gamma(k)^*] e_j} + \inn{e_{j+m}}{[\overline{(\partial_k \gamma)(-k)}\gamma(-k)^{\mathrm{t}}] e_{j+m}} \right.  \\
& \qquad\qquad + \left. \inn{e_{j}}{[(\partial_k \gamma)(-k) \gamma(-k)^{*}] e_{j}} + \inn{e_{j+m}}{[\overline{(\partial_k \gamma)(k)} \gamma(k)^{\mathrm{t}}] e_{j+m}} \right) dk \\
\end{align*}
where in the last equality we used the identity $(\partial_k \gamma)(k) \gamma(k)^* + \gamma(k) (\partial_k \gamma)(k)^* \equiv 0$, which follows upon differentiation of $\gamma(k) \gamma(k)^* \equiv \Id_n$, and the antiunitarity of the complex conjugation operator $\mathcal{K}$. With \eqref{g(-k)}, the above expression can be simplified to
\begin{align*} 
\mathcal{A}^S & = \frac{1}{2\pi i} \int_{\mathbb{T}^{1}} \sum_{j=1}^{m} \left( \inn{e_j}{[(\partial_k\gamma)(k) \gamma(k)^*] e_j} + \overline{\inn{e_{j+m}}{[(\partial_k \gamma)(k)\gamma(k)^{*}] e_{j+m}}} \right) dk \\
& = \frac{1}{2\pi i} \int_{\mathbb{T}^{1}} \sum_{j=1}^{m} \left( \inn{e_j}{[(\partial_k\gamma)(k) \gamma(k)^*] e_j} - \inn{e_{j+m}}{[(\partial_k \gamma)(k)\gamma(k)^{*}] e_{j+m}} \right) dk \\
\end{align*}
since, as was already shown, $(\partial_k\gamma)(k) \gamma(k)^* = -\gamma(k) (\partial_k\gamma)(k)^* = -[(\partial_k\gamma)(k) \gamma(k)^*]^*$ is a skew-adjoint operator. Notice now that
\begin{align*}
\sum_{j=1}^{m} & \inn{e_j}{[(\partial_k\gamma)(k) \gamma(k)^*] e_j} - \inn{e_{j+m}}{[(\partial_k \gamma)(k)\gamma(k)^{*}] e_{j+m}} \\
& = \sum_{j=1}^{2m} \inn{e_j}{[(\partial_k\gamma)(k) \gamma(k)^*] e_j} - 2 \sum_{j=1}^{m} \inn{e_{j+m}}{[(\partial_k \gamma)(k)\gamma(k)^{*}] e_{j+m}} \\
& = \tr_{\mathcal{H}^{\downarrow}} \left((\partial_k\gamma)(k) \gamma(k)^*\right) - 2 \tr_{\mathcal{H}^{\downarrow}} \left(\Pi_> (\partial_k\gamma)(k) \gamma(k)^* \Pi_> \right),
\end{align*}
where 
\[ \Pi_> := \begin{pmatrix} 0 & 0 \\ 0 & \Id_m \end{pmatrix} = \Pi_>^* = \Pi_>^2 \]
is the projection onto $\operatorname{Span}\{e_{m+1}, \ldots, e_{2m}\} \subset \mathcal{H}^{\downarrow}$. Again by~\cite[Proposition~A.1]{monaco2023z2} we have
\[ \tr_{\mathcal{H}^{\downarrow}} \left(\partial_k [\Pi_> \gamma(k)] [\Pi_>  \gamma(k)]^*\right) = \overline{\det(\Pi_> \gamma(k))} \, \partial_k [\det(\Pi_> \gamma(k))] \equiv 0, \quad k \in \mathbb{T}^{1}, \]
as $\det(\Pi_>\gamma(k)) = \det(\Pi_>) \det(\gamma(k)) \equiv 0$. Resuming the above computation thus yields
\[ 
\mathcal{A}^S = \frac{1}{2\pi i} \int_{\mathbb{T}^{1}} \tr_{\mathcal{H}^{\downarrow}} \left((\partial_k\gamma)(k) \gamma(k)^*\right) dk = [\det(\gamma(\cdot))]
\]
and comparing the above with~\eqref{eq:DIII_gamma} allows to conclude that $\G(P^{\pm}) = e^{i \pi \mathcal{A}^S}$, as desired.

The same argument as in the proof of Proposition~\ref{prop:W_dimerization} shows that the integrand in $\mathcal{A}^S$ is gauge-invariant, or, more specifically, that it does not change under unitary reshuffling of the first $m$ vectors in the basis for the projection-valued map $P^{-}$ (the changes of the remaining vectors in the range of $P^{-}$ and in the range of $P^{+}$ constituting a symmetric basis can be inferred from time-reversal and chiral symmetry, respectively). Therefore, the above computation actually holds in any symmetric basis for $P^{\pm}$, and the proof is complete.
\end{proof}

\begin{remark}
It follows from Proposition~\ref{prop:Class_AII_frames} that a projection-valued map $P \colon \mathbb{T}^{2} \to \Proj_n(\mathcal{H})$, $n = 2m \in \mathbb{N}$, in class AII can be decomposed as $P = P^{\uparrow} + P^{\downarrow}$, with $P^{\uparrow}, P^{\downarrow} \colon \mathbb{T}^{2} \to \Proj_{m}(\mathcal{H})$ related by $T P^{\uparrow}(k) = P^{\downarrow}(-k) T$, $k \in \mathbb{T}^{2}$, and such that $\FKM(P) = (-1)^{\Ch(P^{\uparrow})}$: see also~\cite{Peluso2026} and the discussion in \cite{kellendonk2019cyclic} after Corollary 7.4 for a similar statement in the context of torsion invariants in $K$-theory. The above Proposition suggests a similar connection between the DIII invariant and a decomposition of $P^{-}$ into $P^{-, \uparrow} + P^{-, \downarrow}$, in which $P^{-, \uparrow}$ is spanned by the first $m$ vectors in a time-reversal symmetric basis and, together with $P^{+, \uparrow}(\cdot) := S P^{-, \uparrow}(\cdot) S$, enters in a pair of projection-valued maps over $\mathbb{T}^{1}$ in class AIII (albeit acting on a non-minimal Hilbert space): the DIII invariant is expressed as the parity of the quantity $\mathcal{A}^S$, which appears also in the statement of Proposition~\ref{prop:W_frame_description}.
\end{remark}

\subsection{Direct sum and relative invariants}\label{sec:Stable_equivalence}
We now discuss how the relative invariants behave under direct sums of projection-valued maps. This is a natural operation from the physical point of view, since it corresponds to stacking two independent quantum systems, and from the geometric point of view, since it corresponds to taking the direct (or Whitney) sum of the associated vector bundles (compare Remark~\ref{rmk:Bloch_bundle}). The following Proposition shows that the relative invariants behave in a natural way: integer-valued invariants are additive, while $\mathbb Z_2$-valued invariants are multiplicative (recall that all through the main text we identified $\Z_2 = (\{\pm 1\}, \cdot)$ as a group).

\begin{proposition}\label{prop:realtive_invariants_additive}
Let $P^\pm \colon \T^{d} \to \Proj_n(\C^{2n})$ and $Q^\pm \colon \T^{d} \to \Proj_m(\C^{2m})$, $d \le 2$, be two pairs of projection-valued maps. Define their direct sum as
\[ (P \oplus Q)^{\pm}(k) := \begin{pmatrix}
        P^\pm(k) & 0 \\ 0 & Q^\pm(k)
    \end{pmatrix}, \quad k \in \mathbb{T}^{d}, \]
which constitutes a pair of projection-valued maps $(P \oplus Q)^{\pm} \colon \T^{d} \to \Proj_{n+m}(\C^{2n+2m})$. Then
\begin{itemize}
    \item if $P^\pm$ and $Q^\pm$ are chiral symmetric, then so is $(P \oplus Q)^\pm$ and 
    \[ \W((P \oplus Q)^\pm) = \W(P^\pm) + \W(Q^\pm); \]
    \item if $P^\pm$ and $Q^\pm$ are even-particle-hole symmetric, then so is $(P \oplus Q)^\pm$ and 
    \[ \PI((P\oplus Q)^\pm) = \PI(P^\pm) \cdot \PI(Q^\pm); \]
    \item if $P^\pm$ and $Q^\pm$ are in class DIII, then so is $(P \oplus Q)^\pm$ and
    \[ \G((P \oplus Q)^\pm) = \G(P^\pm) \cdot \G(Q^\pm). \]
    \end{itemize}
\end{proposition}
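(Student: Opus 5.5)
The plan is to reduce each item to the block structure of the objects defining the invariants, and then use the multiplicativity of $\det$ and $\Pf$ on block-diagonal matrices. First, the symmetries of the direct sum. If $S_P$, $S_Q$ (respectively $C_P$, $C_Q$, $T_P$, $T_Q$) are the symmetry operators for the two pairs, then $S_P\oplus S_Q$ (respectively $C_P\oplus C_Q$, $T_P\oplus T_Q$) is a symmetry operator of the same parity. It has the same (anti)commutation relations, since these hold blockwise, and the defining relations of Definition~\ref{def:Symmetries} hold blockwise as well. Orthogonality~\eqref{standing_orthogonality} and minimality $(P\oplus Q)^-+(P\oplus Q)^+=\Id$ are also immediate.

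For the chiral invariant, reorder the basis of $\C^{2n}\oplus\C^{2m}$ so that the $+1$ eigenspace of $S_P\oplus S_Q$ comes first. This is the orthogonal sum $\mathcal{H}^\uparrow_P\oplus\mathcal{H}^\uparrow_Q$, and similarly for $\downarrow$. In this basis $P^+-P^-\oplus Q^+-Q^-$ takes the form~\eqref{chiral_symm_P+-P-} with off-diagonal block $W_{P\oplus Q}(k)=W_P(k)\oplus W_Q(k)$. Then $\det W_{P\oplus Q}=\det W_P\cdot\det W_Q$, and winding numbers of products of circle maps add. Applying this on $\mathbb{T}^1$, or on each of $\mathbb{T}^1_1,\mathbb{T}^1_2$ when $d=2$, gives additivity. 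The reordering is a permutation and so does not affect the invariant, since $\W$ only depends on $W$ up to the choice of bases of $\mathcal{H}^{\uparrow},\mathcal{H}^{\downarrow}$. A $k$-independent change of basis changes $\det W$ by a constant and leaves the winding unchanged.

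For the even-particle-hole invariant, choose the normal form $C=\mathcal{K}$ on each summand; then $C_P\oplus C_Q=\mathcal{K}$ on the sum. We get $L_{P\oplus Q}(k_\star)=L_P(k_\star)\oplus L_Q(k_\star)$, and $\Pf(A\oplus B)=\Pf(A)\Pf(B)$ for block-diagonal skew-symmetric matrices. This gives the multiplicativity componentwise at each $k_\star\in F^d$. As noted in Remark~\ref{rmk:P_invariant}, the sign of these Pfaffians depends on the chosen normal form. The statement is therefore to be read with the normal form on the sum induced by those on the summands, and I would make this convention explicit.

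For DIII, work in the chiral eigenbasis with $T=\mathcal{K}J$, and check that the blockwise normal forms assemble, after a permutation, into the normal form for the sum. Then $W_{P\oplus Q}=W_P\oplus W_Q$ satisfies~\eqref{eq:W^t=-W}. In the Teo--Kane invariant of Remark~\ref{def:gamma_invariant}, continuous phases can be taken as $\mu_{P\oplus Q}=\mu_P+\mu_Q$ and $\lambda_{P\oplus Q}=\lambda_P+\lambda_Q$, because both $\det$ and $\Pf$ are multiplicative on direct sums. Hence $\TK(W_P\oplus W_Q)=\TK(W_P)\TK(W_Q)$. Applying this on $\mathbb{T}^1$, or on each of the three subtori of Definition~\ref{def:DIII_invariant}, concludes.

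The main obstacle is bookkeeping of normal forms. The permutation that regroups the $\uparrow/\downarrow$ (or symplectic) blocks of the two summands into those of the sum must be shown not to introduce a sign in the Pfaffians. For DIII this is harmless: $\TK$ is a ratio in which the Pfaffian sign convention, that is, the choice of $J_n$, cancels between the two fixed points. For $\PI$, by contrast, the convention must be fixed as the induced one, as discussed above.
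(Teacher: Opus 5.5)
Your proposal is correct and follows essentially the same route as the paper's proof. The paper also regroups the chiral eigenspaces by a constant permutation so that the off-diagonal block becomes $W_P\oplus W_Q$, and justifies invariance under this constant conjugation via Propositions~\ref{prop:W_dimerization} and~\ref{prop:gamma_dimerization}, which is the same as your constant-change-of-basis argument. It then concludes from the multiplicativity of $\det$ and $\Pf$ under direct sums, and your remark that $\PI$ must be computed in the induced normal form $C_P\oplus C_Q=\mathcal{K}$ makes explicit a convention the paper uses implicitly.
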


\begin{proof}
Throughout the proof, we'll denote $R^{\pm} := (P \oplus Q)^{\pm}$ for brevity.

Let us start from the chiral case. We can assume that $\C^{2n}$, the Hilbert space on which $P^{\pm}$ acts, admits a chiral symmetry of the form
\[ S_P = \begin{pmatrix}
\Id_n & 0 \\ 0 & -\Id_n
\end{pmatrix} \]
and similarly that $\C^{2m}$, on which $Q^{\pm}$ acts, is endowed with the chiral symmetry operator
\[ S_Q=\begin{pmatrix}
\Id_m & 0 \\ 0 & -\Id_m
\end{pmatrix}. \]
Then, according to Principle~\ref{pri:chiral_classes_min_dim},
\[P^\pm(k)=\frac{1}{2}\begin{pmatrix}
\Id_n & \pm W_P(k) \\ \pm W_P(k)^* & -\Id_n
\end{pmatrix}, \quad Q^\pm(k)=\frac{1}{2}\begin{pmatrix}
\Id_m & \pm W_Q(k) \\ \pm W_Q(k)^* & -\Id_m
\end{pmatrix}, \] 
for some unitary-valued maps $W_P \colon \mathbb{T}^{d} \to U(n)$ and $W_Q \colon \mathbb{T}^{d} \to U(m)$. Clearly
\[S_P \oplus S_Q =\begin{pmatrix}
\Id_n & 0 & 0 & 0  \\ 0 & -\Id_n & 0 &0 \\ 0 &0 &\Id_m & 0 \\ 0&0&0 & -\Id_m
\end{pmatrix}\] 
defines a chiral symmetry operator on $\C^{2n} \oplus \C^{2m}$, but in order to compute the chiral invariant of $R$ it is more convenient to bring it again to the form prescribed by Principle~\ref{pri:chiral_classes_min_dim}. This can be achieved by a change of basis performed via conjugation with the following matrix:
\[U:=\begin{pmatrix}
\Id_n & 0 & 0 & 0  \\ 0 &0 &\Id_n  &0 \\ 0 &\Id_m &0 & 0 \\ 0&0&0 & \Id_m
\end{pmatrix}.\]
Then
\[U^{-1}(S_P \oplus S_Q) U = \begin{pmatrix}
\Id_{n+m} & 0  \\ 0 & -\Id_{n+m}
\end{pmatrix}\] 
and a straightforward computation also reveals that
\begin{equation} \label{eq:W_direct_sum}
U^{-1}(R^+(k)-R^-(k))U=\begin{pmatrix}
0 & W_R(k) \\ W_R(k)^* &0
\end{pmatrix} \quad \text{with} \quad W_R(k) := W_P(k) \oplus W_Q(k).
\end{equation}
Since the change-of-basis matrix $U$ is constant in $k$, Proposition~\ref{prop:W_dimerization} guarantees that the chiral invariant of $U^{-1} R^{\pm} U$ and of $R^{\pm}$ coincide. We can therefore compute, if $d=1$,
\[\W(R^\pm)= \left[\det((W_P \oplus W_Q)(\cdot)\right] = [\det(W_P(\cdot)) \cdot \det(W_Q(\cdot))] = \W(P^\pm) + \W (Q^\pm) \in \Z.\]
Since the chiral invariant in $d=2$ is computed upon restriction to $1$-dimensional sub-tori, the above allows to conclude the desired statement.

Next, we move to the case of even-particle-hole symmetric pairs of projection valued maps. Comparing with~\eqref{eq:L(k)}, it suffices to notice that
\[ i(R^+(k)-R^-(k))=
\begin{pmatrix}
i(P^+(k)-P^-(k)) & 0 \\ 0 & i(Q^+(k)-Q^-(k))
\end{pmatrix}, \quad k \in \mathbb{T}^{d}, \] 
and the multiplicativity of the Pfaffian under direct sums ensures the claimed multiplicativity of the even-particle-hole invariant.

Finally, if the pairs are in class DIII, we can assume that the chiral and time-reversal symmetries on $\C^{2n}$ and $\C^{2m}$ are respectively
\[ S_P=\begin{pmatrix}
\Id_n & 0 \\ 0 & -\Id_n
\end{pmatrix}, \quad S_Q=\begin{pmatrix}
\Id_m & 0 \\ 0 & -\Id_m
\end{pmatrix}, \quad T_P= \mathcal{K}\begin{pmatrix}
0 & \Id_n  \\ -\Id_n & 0
\end{pmatrix},\quad T_Q= \mathcal{K} \begin{pmatrix}
0& \Id_m \\ -\Id_m & 0
\end{pmatrix}.\]
In order to bring also $S_P \oplus S_Q$ and $T_P \oplus T_Q$ to the same normal form, we conjugate through the change-of-basis matrix $U$ given by
\[ U:=\begin{pmatrix}
0 & \Id_{n+m} \\ -\Id_{n+m} & 0
\end{pmatrix}. \]
Once again,~\eqref{eq:W_direct_sum} holds, and, similarly to the chiral case, Proposition~\ref{prop:gamma_dimerization} guarantees that this constant change of basis does not affect the value of the DIII invariant of $R^{\pm}$. Upon inspection of Definition~\ref{def:DIII_invariant}, we are lead to conclude that the desired multiplicativity property of the DIII invariant holds if we can show that the Teo--Kane invariant of $1$-dimensional unitary-valued maps from Remark~\ref{def:gamma_invariant} is itself multiplicative:
\[ \TK(W_P(\cdot) \oplus W_Q(\cdot)) = \TK(W_P(\cdot)) \cdot \TK(W_Q(\cdot)) \in \mathbb{Z}_2. \]
But since both the determinant and the Pfaffian behave multiplicatively under direct sums, this is certainly the case.
\end{proof}

The above Proposition allows to promote the relative invariants to $K$-theoretic invariants. Indeed, by Remark~\ref{rmk:Bloch_bundle} a pair of projection-valued maps over the $d$-dimensional torus determines a pair of vector bundles over $\mathbb T^d$, linked by some further symmetries which act fiberwise. As mentioned previously, the direct sum of projection-valued maps corresponds to the Whitney sum of the associated bundles. This is the fundamental operation in the definition of the topological $K_0$-group: one starts from the semigroup of vector bundles under direct sum and then passes to its Grothendieck group. The statement above shows that the relative invariants behave as semigroup homomorphisms from the appropriate semigroup of symmetric projection-valued maps to the corresponding group of topological invariants (integers, integers modulo $2$, or products thereof), and can be thus naturally extended to group homomorphisms in $K$-theory. Notice that, while the relative invariants define complete homotopy invariants in minimal ambient Hilbert spaces, they probably lose their ability to characterize $K$-theory classes, as the latter is inherently related to the stabilization of (virtual) vector bundles, and thus to the addition of further (topologically trivial) dimensions to the ambient Hilbert space. 

\section*{Conflict of Interest, Funding and Data Availability}
The authors declare no conflict of interest. No new data were created or analyzed in this study. D.\ M.\ acknowledges financial support from  Sapienza Università di Roma within Progetto di Ricerca di Ateneo 2023, 2024, and 2025. G.\ P.\ acknowledges support from Grant DFF 5281-00046B of the Independent Research Fund Denmark | Natural Sciences. This work has been carried out under the auspices of the GNFM-INdAM (Gruppo Nazionale per la Fisica Matematica --- Istituto Nazionale di Alta Matematica).

\bibliographystyle{plain}
\bibliography{biblio}
\end{document}